\renewcommand\footnotetextcopyrightpermission[1]{} % removes footnote with conference information in first column
\DeclareRobustCommand*\cal{\@fontswitch\relax\mathcal}
  \providecommand\BibTeX{{%
    \normalfont B\kern-0.5em{\scshape i\kern-0.25em b}\kern-0.8em\TeX}}}
\renewcommand{\dom}{\mn{adom}}
\newtheorem{claim}{Claim}[section]
\newtheorem*{lemma*}{Lemma}
\newcommand{\hideAlgorithm}[1]{}
\renewcommand{\hideAlgorithm}[1]{#1}
\tikzstyle{quantified}=[circle, draw, scale=.6]
\tikzstyle{answer} = [circle, fill, scale=.6]
\begin{document}

%%
%% The "title" command has an optional parameter,
%% allowing the author to define a "short title" to be used in page headers.
\title{Efficiently Enumerating Answers to Ontology-Mediated Queries}

%%
%% The "author" command and its associated commands are used to define
%% the authors and their affiliations.
%% Of note is the shared affiliation of the first two authors, and the
%% "authornote" and "authornotemark" commands
%% used to denote shared contribution to the research.

%
%\author{Anonymous}
%
%\author{Anonymous}

\author{Carsten Lutz}
\affiliation{%
 \institution{Institute of Computer Science\\University of Leipzig}
 \streetaddress{1 Th{\o}rv{\"a}ld Circle}
\city{Leipzig}
\country{Germany}}
\email{clu@informatik.uni-leipzig.de}

\author{Marcin Przyby{\l}ko}
\affiliation{%
 \institution{Institute of Computer Science\\University of Leipzig}
 \streetaddress{1 Th{\o}rv{\"a}ld Circle}
 \city{Leipzig}
 \country{Germany}}
\email{przybyl@informatik.uni-leipzig.de}

%\author{\ }
%\affiliation{\ }
% \affiliation{%
%     \institution{Unknown}
%     %  \streetaddress{1 Th{\o}rv{\"a}ld Circle}
% %    \city{Bremen}
% %    \country{Unknown}
% }

% \author{Author 2}
% \affiliation{%
%     \institution{Unknown}
%     %  \streetaddress{1 Th{\o}rv{\"a}ld Circle}
%     %    \city{Bremen}
%  %       \country{Unknown}
% }

%%
%% By default, the full list of authors will be used in the page
%% headers. Often, this list is too long, and will overlap
%% other information printed in the page headers. This command allows
%% the author to define a more concise list
%% of authors' names for this purpose.
\renewcommand{\shortauthors}{Lutz and Przyby{\l}ko}
%\renewcommand{\shortauthors}{Authors}

%%
%% The abstract is a short summary of the work to be presented in the
%% article.
%\begin{abstract}
%  tbd
%\end{abstract}

%%
%% The code below is generated by the tool at http://dl.acm.org/ccs.cfm.
%% Please copy and paste the code instead of the example below.
%%
% \begin{CCSXML}
% <ccs2012>
%  <concept>
%   <concept_id>10010520.10010553.10010562</concept_id>
%   <concept_desc>Computer systems organization~Embedded systems</concept_desc>
%   <concept_significance>500</concept_significance>
%  </concept>
%  <concept>
%   <concept_id>10010520.10010575.10010755</concept_id>
%   <concept_desc>Computer systems organization~Redundancy</concept_desc>
%   <concept_significance>300</concept_significance>
%  </concept>
%  <concept>
%   <concept_id>10010520.10010553.10010554</concept_id>
%   <concept_desc>Computer systems organization~Robotics</concept_desc>
%   <concept_significance>100</concept_significance>
%  </concept>
%  <concept>
%   <concept_id>10003033.10003083.10003095</concept_id>
%   <concept_desc>Networks~Network reliability</concept_desc>
%   <concept_significance>100</concept_significance>
%  </concept>
% </ccs2012>
% \end{CCSXML}

% \ccsdesc[500]{Computer systems organization~Embedded systems}
% \ccsdesc[300]{Computer systems organization~Redundancy}
% \ccsdesc{Computer systems organization~Robotics}
% \ccsdesc[100]{Networks~Network reliability}

\begin{abstract}
  We study the enumeration of answers to ontology-mediated queries
  (OMQs) where the ontology is a set of guarded TGDs or formulated in
  the description logic \ELI and the query is a conjunctive query
  (CQ). In addition to the traditional notion of an answer, we propose and
  study two novel notions of partial answers that can take into
  account nulls generated by existential quantifiers in the
  ontology.  % We also consider the
  % related problems of testing a single answer in linear time and of
  % testing multiple answers in constant time after linear time
  % preprocessing (`all-testing in CD$\circ$Lin').
  Our main result is that enumeration of the traditional complete answers and of both
  kinds of partial answers is possible with linear-time preprocessing
  and constant delay for OMQs that are both acyclic and free-connex
  acyclic. We also provide partially matching lower bounds.  Similar
  results are obtained for the related problems of testing a single
  answer in linear time and of testing multiple answers in constant
  time after linear time preprocessing. In both cases, the border
  between tractability and intractability is characterized by
  similar, but slightly different acyclicity properties.
\end{abstract}

\maketitle

\section{Introduction}
\label{sect:intro}

In knowledge representation, ontologies are an important means for
injecting domain knowledge into an application. In the context of
databases, they give rise to ontology\=/mediated queries (OMQs) which
enrich a traditional database query such as a conjunctive query (CQ)
with an ontology. OMQs aim at querying incomplete data, using the
domain knowledge provided by the ontology to derive additional
answers. In addition, they may enrich the vocabulary available for
query formulation with relation symbols that are not used explicitly
in the data. Popular choices for the ontology language include
(restricted forms of) tuple-generating dependencies (TDGs), also
dubbed existential rules \cite{DBLP:conf/ijcai/BagetMRT11} and
Datalog$^\pm$ \cite{DBLP:journals/ws/CaliGL12}, as well as various
description logics \cite{baader-introduction-to-dl}.

The complexity of evaluating OMQs has been the subject of intense
study, % that has unveiled several interesting aspects such as a tight
% connection to the complexity of constraint satisfaction problems
% (CSPs) with a fixed template
with a focus on \emph{single-testing} as the mode
of query evaluation: given an ontology-mediated query (OMQ)
$Q$, a database~$D$, and a candidate answer $\bar a$, decide
whether $\bar a \in Q(D)$ \cite{AbHV95,barcelo_omq_limits-g,
  bienvenu-answering-omq, bienvenu-ontology-disjunctive-datalog}.
In many applications, however, it is not realistic to
assume that a candidate answer is available. This has led database
theoreticians and practitioners to investigate more relevant modes of
query evaluation such as \emph{enumeration}: given $Q$ and
$D$, generate all answers in $Q(D)$, one after the other and without
repetition.

\newcommand{\cdlin}{\mn{CD}$\circ$\mn{Lin}\xspace}
\newcommand{\dlc}{$\mn{DelayC}_{\mn{lin}}$\xspace} 
The first main aim of this paper is to initiate a study of efficiently
enumerating answers to OMQs. We consider enumeration algorithms that
have a preprocessing phase in which data structures are built that are
used in the subsequent enumeration phase to produce the actual
output. With `efficient enumeration', we mean that preprocessing may
only take time linear in $O(||D||)$ while the delay between two
answers must be constant, that is, independent of $D$. One may or may
not impose the additional requirement that, in the enumeration phase,
the algorithm may consume only a constant amount of memory on top of
the data structures precomputed in the preprocessing phase. We refer
to the resulting enumeration complexity classes as \dlc and \cdlin,
the former admitting unrestricted (polynomial) memory consumption; the use of
these names in the literature is not consistent, we follow
\cite{segoufin-enum,carmeli-enum-ucqs}.
%
% These requirements give rise to
% the enumeration complexity class \dlc, see for example
% \cite{carmeli-enum-ucqs}.  One may impose the additional requirement
% that, in the enumeration phase, the algorithm can use the data
% structures precomputed in the enumeration phase, but otherwise consume
% only a constant amount of memory. The more restricted enumeration
% complexity class obtained in this way is called \cdlin
% \cite{segoufin-enum}.
%
% the
% enumeration complexity class CD$\circ$Lin which requires that
% preprocessing takes linear time $O(||D||)$ while the delay between two answers
% must be constant, that is, independent of $D$.
% Note that there is no restriction on how the running time of the
% preprocessing and how the delay depends on $Q$, which corresponds to
% \emph{data complexity} in single-testing where $Q$ is fixed and thus
% of constant size.
Without ontologies, answer enumeration in \cdlin and in \dlc has received
significant attention
\cite{DBLP:journals/dagstuhl-reports/BorosKPS19,bagan-enum-cdlin,
  berkholz-enum-fpt, carmeli-enum-ucqs, carmeli-enum-rand,
  carmeli-enum-func, segoufin-enum, deep-enum-alg, deep-enum-ranked},
see also the survey~\cite{berkholz-enum-tutorial}.  A landmark
result is that a CQ $q(\bar x)$ admits enumeration in \cdlin if
it is \emph{acyclic} and \emph{free-connex acyclic} where the former
means that $q$ has a join tree and the latter that the extension of
$q$ with an atom $R(\bar x)$ that `guards' the answer variables is
acyclic~\cite{bagan-enum-cdlin}. Partially matching lower bounds 
pertain to self-join free CQs~\cite{bagan-enum-cdlin,BraultBaron}.

The second aim of this paper is to introduce a novel notion of partial
answers to OMQs. In the traditional \emph{certain answers},
$\bar a \in Q(D)$ if and only if $\bar a$ is a tuple of constants from
$D$ such that $\bar a \in Q(I)$ for every model $I$ of $D$ and the
ontology \Omc used in $Q$. In contrast, a \emph{partial answer} may
contain, apart from constants from $D$, also the wildcard symbol
`$\ast$' to indicate a constant that we know must exists, but whose
identity is unknown. Such \emph{labeled nulls} may be
introduced by existential quantifiers in the ontology \Omc. To avoid
redundancy as in the partial answers $(a,\ast)$ and $(a,b)$, we are
interested in \emph{minimal} partial answers that cannot be `improved' by
replacing a wildcard with a constant from $D$ while still remaining a
partial answer. The following simple example illustrates that minimal partial
answers may provide useful information that is not provided by the
traditional answers, from now called \emph{complete
  answers}.
\begin{example}
  \label{ex:1}
  Consider the ontology \Omc that contains % the tuple-generating
  % dependencies
  %
  $$
  \begin{array}{rcl}
    \mn{Researcher}(x) &\rightarrow& \exists y \, \mn{HasOffice}(x,y)
    \\%[1mm]
    \mn{HasOffice}(x,y)&\rightarrow& \mn{Office}(y) \\%[1mm]
    \mn{Office}(x) &\rightarrow& \exists y \, \mn{InBuilding}(x,y), % \\[1mm]
       % \mn{InBuilding}(x,y)&\rightarrow& \mn{Building}(y),
    % \mn{Researcher}(x) &\rightarrow& \exists y \, \mn{HasOffice}(x,y) 
    %                                  \wedge \mn{Office}(y) \\[1mm]
    % \mn{Office}(x) &\rightarrow& \exists y \, \mn{InBuilding}(x,y) 
    %                                  \wedge \mn{Building}(y),
  \end{array}
  $$
  and the CQ
$     q(x_1,x_2,x_3) = %\mn{Researcher}(x_1) \wedge 
                        \mn{HasOffice}(x_1,x_2)  \wedge %\\[1mm]
                        \mn{InBuilding}(x_2,x_3)
                        $
  % $$
  % \begin{array}{rcl}
  %    q(x_1,x_2,x_3) &=& %\mn{Researcher}(x_1) \wedge 
  %                       \mn{HasOffice}(x_1,x_2)  \wedge %\\[1mm]
  %    \mn{InBuilding}(x_2,x_3) 
  % \end{array}
  % $$
  giving rise to the OMQ $Q(x_1,x_2,x_3)$. Take the following database~$D$:
  $$
  \begin{array}{@{}c@{}}
  \begin{array}{lll}
    \mn{Researcher}(\mn{mary}) &
    \mn{Researcher}(\mn{john}) &
    \mn{Researcher}(\mn{mike}) 
  \end{array} \\%[1mm]
  \begin{array}{ll}
    \mn{HasOffice}(\mn{mary},\mn{room1}) & 
    \mn{HasOffice}(\mn{john},\mn{room4}) 
  \end{array} \\%[1mm]
    \mn{InBuilding}(\mn{room1},\mn{main1})
  \end{array}
  $$
  The minimal partial answers to $Q$ on $D$ are 
  $$
  \begin{array}{ccc}
    (\mn{mary},\mn{room1},\mn{main1}) &
    (\mn{john},\mn{room4},\ast) &
    (\mn{mike},\ast,\ast).
  \end{array}
  $$
\end{example}
We also introduce and study \emph{minimal partial answers with multiple
  wildcards} $\ast_1,\ast_2,\dots$. Distinct occurences of the same
wildcard in an answer indicate the same null, while different
wildcards may or may not correspond to different nulls.  Multiple
wildcards may thus be viewed as adding equality on wildcards, but not
inequality. We note that there are certain similarities between minimal
partial answer to OMQs and answers to SPARQL queries with the
`optional' operator
\cite{DBLP:conf/pods/BarceloPS15,DBLP:conf/icdt/KrollPS16}, but also
many dissimilarities.

% It is particularly important for
% this paper that for CQs, enumeration in CD$\circ$Lin has been closely
% tied to certain notions of acyclicity.

The third aim of this paper is to study two problems for OMQs that are
closely related to constant delay enumeration: single-testing in
linear time (in data complexity) and \emph{all-testing} in
\cdlin or \dlc. Note that for Boolean queries, single-testing in linear
time coincides with enumeration in \cdlin and in \dlc.  An all-testing
algorithm has a prepocessing phase followed by a testing phase where
it repeatedly receives candidate answers $\bar a$ and returns `yes' or
'no' depending on whether
$\bar a \in Q(D)$~\cite{berkholz-enum-tutorial}. All-testing in
\dlc grants preprocessing time $O(||D||)$ while
the time spent per test must be independent of $D$, and all-testing
in \cdlin is defined accordingly.

An ontology-mediated query takes the form $Q(\bar x)=(\Omc,\Sbf,q)$
where \Omc is an ontology, \Sbf a schema for the databases on which
$Q$ is evaluated, and $q(\bar x)$ a conjunctive query.  In this paper,
we consider ontologies that are sets of guarded tuple-generating
dependencies (TGDs) or formulated in the description logic
$\mathcal{ELI}$. We remind the reader that a TGD takes the form
$\forall \bar x \forall \bar y \, \big(\phi(\bar x,\bar y) \rightarrow
\exists \bar z \, \psi(\bar x,\bar z)\big) $ where $\phi$ and $\psi$
are CQs, and that it is \emph{guarded} if $\phi$ has an atom that
mentions all variables from $\bar x$ and~$\bar y$. Up to
normalization, an $\mathcal{ELI}$-ontology may be viewed as a finite
set of guarded TGDs of a restricted form, using in particular only
unary and binary relation symbols. Both guarded TGDs and $\ELI$ are
natural and popular choices for the ontology language
\cite{cali-more-expressove-onto,cali-taming-chase,baader-introduction-to-dl}. We
use $(\class{G},\class{CQ})$ to denote the language of all OMQs that
use a set of guarded TGDs as the ontology and a CQ as the actual
query, and likewise for $(\class{ELI},\class{CQ})$ and
$\mathcal{ELI}$-ontologies.

We next summarize our results. In Section~\ref{sect:singletesting}, we
start with showing that in $(\class{G},\class{CQ})$, single-testing
complete answers is in linear time for OMQs that are weakly acyclic. A
CQ is % \emph{acyclic} if it has a join tree and it is
\emph{weakly acyclic} if it is acyclic after replacing the answer
variables with constants and an OMQ is weakly acyclic if the CQ in it
is; in what follows, we lift other properties of CQs to OMQs in the
same way without further notice.  Our proof relies on the construction
of a `query-directed' fragment of the chase and a reduction to the
generation of minimal models of propositional Horn formulas. We also
give a lower bound for OMQs from $(\class{ELI},\class{CQ})$ that are
self-join free: every such OMQ that admits single-testing in linear
time is weakly acyclic unless the triangle conjecture from
fine-grained complexity theory fails.  This generalizes a result for
the case of CQs without ontologies~\cite{BraultBaron}.  We observe
that it is not easily possible to replace $\class{ELI}$ by $\class{G}$
in our lower bound as this would allow us to remove also `self-join
free' while it is open whether this is possible even in the case
without ontologies. We also show that single-testing minimal partial
answers with a single wildcard is in linear time for OMQs from
$(\class{G},\class{CQ})$ that are acyclic and that the same is true
for multiple wildcards and acyclic OMQs from
$(\class{ELI},\class{CQ})$. We also observe that these (stronger) requirements
cannot easily be relaxed.

In Section~\ref{sect:enumallcomplete}, we turn to enumeration and
all-testing of complete answers. We first show that in
$(\class{G},\class{CQ})$, enumerating complete answers is in
\cdlin for OMQs that are acyclic and free\=/connex acyclic while
all-testing complete answers is in \cdlin for OMQs that are
free\=/connex acyclic (but not necessarily acyclic). % Here, a CQ
% $q(\bar x)$ is \emph{free-connex acyclic} if the extension of $q$ with
% an atom $R(\bar x)$ that `guards' the answer variables is acyclic.
The proof again uses the careful chase construction and a reduction to
the case without ontologies. The lower bound for single testing
conditional on the triangle conjecture can be adapted to enumeration,
with `not weakly acyclic' replaced by `not acyclic'. For enumeration,
it thus remains to consider OMQs that are acyclic, but not free-connex
acyclic.  % Again adapting a lower bound
% for CQs \cite{bagan-enum-cdlin,berkholz-enum-tutorial} to the case
% with ontologies, 
We show that for every self-join free OMQ from $(\class{ELI},\class{CQ})$
that is acyclic, connected, and admits enumeration in \cdlin,
the query is free-connex acyclic, unless sparse Boolean matrix multiplication
(BMM) is possible in time linear in the size of the input plus the
size of the ouput; this would imply a considerable advance in
algorithm theory and currently seems to be out of reach. We also show
that it is not possible to drop the requirement that the query is
connected, which is not present in the corresponding lower bound for the
case without ontologies 
\cite{bagan-enum-cdlin,berkholz-enum-tutorial}. We prove a similar
lower bound for all-testing complete answers, subject to a condition
regarding non-sparse BMM. All mentioned lower bounds also apply to
both kinds of partial answers.

In Section~\ref{sect:LPAsingleWildcardUpper}, we then prove that
enumerating minimal partial answers with a single wildcard is in
\dlc for OMQs from $(\class{G},\class{CQ})$ that are acyclic
and free\=/connex acyclic. This is one of the main results of this
paper, based on a non-trivial enumeration algorithm. Here, we only
highlight two of its features. First, the algorithm precomputes
certain data structures that describe `excursions' that a homomorphism
from $q$ into the chase of $D$ with $\Omc$ may make into the parts of
the chase that has been generated by the existential quantifiers in the
ontology. And second, it involves subtle sorting and pruning
techniques to ensure that only \emph{minimal} partial answers are
output. We also observe that all-testing minimal partial answers is less
well-behaved than enumeration as there is an OMQ
$Q \in(\class{ELI},\class{CQ})$ that is acyclic and free\=/connex
acyclic, but for which all-testing is not in \cdlin unless the
triangle conjecture fails.

Finally, Section~\ref{sect:enummulti} extends the upper bound from
Section~\ref{sect:LPAsingleWildcardUpper} to minimal partial answers
with multiple wildcards. We first show that all-testing (not
necessarily minimal!) partial answers with multiple wildcards is in
\dlc for OMQs that are acyclic and free\=/connex acyclic and
then reduce enumeration of minimal partial answers with multiple
wildcards to this, combined with the enumeration algorithm of minimal partial answers
with a single wildcard obtained in the previous section.

\ifbool{arxive}{
  Proof details are deferred to the appendix.
}
{
Most proof details are deferred to the long version {\color{red}cite
  arxive}.
}

% \bigskip 
% \bigskip 
% \bigskip 

% A landmark result is that a CQ
% $q(\bar x)$ admits enumeration in \cdlin if it is \emph{acyclic}
% and \emph{free-connex acyclic} where the former means that $q$ has a
% join tree and the latter that the extension of $q$ with an atom
% $R(\bar x)$ that `guards' the answer variables is acyclic
% \cite{bagan-enum-cdlin}. This is accompanied by two partial lower
% bounds.  The first one states that if a CQ that is self-join free, but
% not acyclic admits enumeration in CD$\circ$Lin, then (a generalization
% of) the triangle conjecture from fine-grained complexity theory fails
% \cite{BraultBaron}. The second lower bound says that if a CQ that is
% self-join free and acyclic, but not free-connex acyclic admits
% enumeration in CD$\circ$Lin, then this implies dramatic improvements
% in algorithm theory that seem to be for out of the reach of the
% current state of the art; more precisely, a sparse version of Boolean
% matrix multiplication would be possible in time linear in the size of
% the input plus the size of the ouput. \cite{bagan-enum-cdlin,
%   berkholz-enum-tutorial}.

\section{Preliminaries}
\label{sect:prelims}

\paragraph{\bf Relational Databases.}
% We consider a class $\Sbf_f$ of
% relation symbols $R$ with associated arity $\mn{ar}(R) \geq 0$ that contains
% countably infinitely many symbols of every arity. A {\em schema} \Sbf
% is a set $\Sbf \subseteq \Sbf_f$.
{%\color{blue}
Fix countably infinite and disjoint  
sets of constants \Cbf and \Nbf. We refer to the constants in \Nbf as
\emph{nulls}.}
A {\em schema} \Sbf
is a set of relation symbols~$R$ with associated arity \mbox{$\mn{ar}(R) \geq 0$}.
% We write $\mn{ar}(\Sbf)$ for
% $\max_{R \in \Sbf} \{\mn{ar}(R)\}$.
% and call $\Sbf_f$ the \emph{full
% schema}.
An {\em \Sbf-fact} is an expression of
the form $R(\bar c)$, where $R \in \Sbf$ and $\bar c$ is an
$\mn{ar}(R)$-tuple of constants from $\Cbf \cup \Nbf$.
An {\em $\Sbf$-instance} is a %(possibly infinite) 
set of \Sbf-facts
and an {\em \Sbf-database} is a finite $\Sbf$-instance
{%\color{blue}
that uses only constants from \Cbf.} We write
$\mn{adom}(I)$ for the set of constants used in instance $I$.
For a set $S \subseteq \Cbf \cup \Nbf$, $I_{|S}$ denotes the
restriction of $I$ to
facts that mention only constants from $S$.
%
%As usual, we write $\size{I}$ for the size of $I$.
%
A {\em homomorphism} from $I$ to an instance $J$ is a function $h :
\mn{adom}(I) \rightarrow \mn{adom}(J)$ such that $R(h(\bar c)) \in J$
for every $R(\bar c) \in I$.
% We write $I \rightarrow J$ for the fact that there is a homomorphism from $I$ to $J$.
A set $S \subseteq \mn{adom}(I)$ is a \emph{guarded set in $I$} if
there is a fact $R(\bar c ) \in I$ such that all constants from $S$
are in $\bar c$. The \emph{Gaifman graph} of a database $D$ is the
 undirected graph with vertices $\mn{adom}(D)$ and an edge $\{c_1,c_2\}$
whenever $c_1,c_2$ co-occur in a fact in $D$.

\paragraph{\bf Conjunctive Queries.}
A \emph{term} is a variable or a constant
{%\color{blue}
from~\Cbf}.
A {\em conjunctive query} (CQ) $q(\bar x)$ over a schema $\Sbf$ takes
the form $q(\bar x) \leftarrow \phi(\bar x, \bar y)$
%\end{equation}
where $\bar x$ and $\bar y$ are tuples of variables, $\phi$ is a
conjunction of \emph{relational atoms} $R_i(\bar t_i)$ with
$R_i \in \Sbf$ and $\bar t_i$ a tuple of terms of length
$\mn{ar}(R_i)$. %  and \emph{equality atoms} $t_1 = t_2$ with $t_1,t_2$ terms.
% We % refer to the variables in $\bar x$ as the {\em answer
%   % variables} of $q$ and
% require that only variables from $\bar x$ appear in
% equality atoms.
We refer to the variables in $\bar x$ as the \emph{answer variables}
of $q$ and to the variables in $\bar y$ as the \emph{quantified
  variables}.  With $\mn{var}(q)$, we denote the set of all variables
in $q$ and with $\mn{con}(q)$ the set of constants. Whenever
convenient, we identify a conjunction of atoms with a set of
atoms. % {\color{blue}When we are not interested in order and multiplicity, we treat
% tuples such as $\bar x$ and $\bar y$ as a set. }
The {\em arity} of $q$
is defined as the number of its answer variables and $q$ is Boolean if
it is of arity~0. When we do not want to make $\phi(\bar x, \bar
y)$ explicit, we may denote $q(\bar x) \leftarrow \phi(\bar x, \bar
y)$ simply with $q(\bar x)$.
%
%If $\bar x$ is empty, then $q$ is a \emph{Boolean CQ}.
%
%We write $\var{q}$ for the set of variables occurring in $q$. By abuse of notation, we may write $\alpha \in q$ to indicate that the atom $\alpha$ occurs in $q$.
%
%The evaluation of CQs is defined in terms of homomorphisms.
We say that $q(\bar x)$ is \emph{self\=/join free} if no relation symbol
occurs in more than one atom in it.
We write $\class{CQ}$ for the class of CQs. 

 Every CQ $q(\bar x)$ can be naturally seen as a database $D_q$, known as the
 {\em canonical database} of~$q$, obtained by viewing variables as
 constants
 {%\color{blue}
from \Cbf}.
 The \emph{Gaifman graph} of $q$ is that of $D_q$.
%We may simply write $q$ instead of $D[q]$.
%
  A \emph{homomorphism}
$h$  from $q$ to an instance $I$ is a homomorphism from $D_q$ to
 $I$ that is the identity on all constants that appear in $q$.
 A tuple $\bar c \in \mn{adom}(I)^{|\bar x|}$ is an {\em answer} to
 $q$ on $I$ if there is a homomorphism $h$ from $q$ to $I$ with
 $h(\bar x) = \bar c$.
The {\em evaluation of $q(\bar x)$ on $I$}, denoted $q(I)$, is the set of all answers to $q$ on~$I$.
%
% The only
% possible answer to a Boolean query is the empty tuple. For a Boolean
% CQ $q$, we may write $I \models q$ if $q(I) = \{()\}$ and
% $I \not\models q$ otherwise.

For a CQ $q$, but also for any other syntactic object $q$, we use
$||q||$ to denote the number of symbols needed to write $q$ as
a word over a suitable alphabet.

\paragraph{\bf Acyclic CQs.} Let $q(\bar x)
\leftarrow \phi(\bar x, \bar y)$ be a CQ. A \emph{join tree} for
$q(\bar x)$ is an undirected tree $T=(V,E)$ where $V$ is the set of
atoms in $\phi$ and for each variable $x \in \mn{var}(q)$, the set $\{
\alpha \in V \mid x \text{ occurs in } \alpha \}$ is a connected
subtree of $T$. % Note that for constant occurring in $q$, there is no
% connectedness condition.
Then $q(\bar x)$ is \emph{acyclic} if it has a join
tree. 
% SHORT
%  This is equivalent to $q(\bar x)$ being of generalized
% hypertree width~one \cite{berkholz-enum-tutorial}.
Note that constants
need not satisfy the connectedness condition imposed on variables. 
 We say that $q(\bar x)$ is \emph{weakly acyclic} if $q$
becomes acyclic after consistently replacing all answer variables with
fresh constants.
% SHORT
%\footnote{With `consistently', we mean that different
%  occurrences of the same answer variable must be replaced by the same
%  constant.}
%
A CQ $q(\bar x)$ is \emph{free\=/connex acyclic} if adding an atom
$R(\bar x)$ that `guards' the answer variables, where $R$ is a
relation symbol of arity $|\bar x|$, results in an acyclic CQ. Note
that other authors have called a CQ $q$ free\=/connex acyclic (or even
just free\=/connex) if $q$ is both acyclic and (in our sense)
free\=/connex acyclic
\cite{berkholz-enum-tutorial}. % We do not
% require the former.
%
Acyclicity and free-connex acyclicity are independent
properties, that is, neither of them implies the other. Each of them
implies weak acyclicity while the converse is
false. Figure~\ref{fig:examples_measures} shows (the Gaifman graphs
of) simple example CQs that illustrate the differences. Hollow nodes
indicate quantified variables, \mn{ac} stands for acyclic,
   	\mn{fc} for free\=/connex acyclic, and \mn{wac} for weakly acyclic.

\newcommand{\soutthick}[1]{%
    \renewcommand{\ULthickness}{1.5pt}%
       \sout{#1}%
    \renewcommand{\ULthickness}{.4pt}% Resetting to ulem default
}

\begin{figure}
    %\begin{wrapfigure}{r}{0.45\textwidth}
    \centering
    \begin{tikzpicture}[scale=.5]

    \node (top) at (0,2) {};
    \node (bottom) at (0,-.7) {};
    
    \node (a1) at (0,0) {};
    \node (a2) at (2,0) {};
    \node (a3) at (1, 1.73) {};
    \node (a4) at (1, -.5) {};
    
    \node (x0) at (0, 0) {};
    \node (y0) at (3, 0) {};
    \node (z0) at (6, 0) {};
    \node (t0) at (9, 0) {};
    \node (v0) at (12, 0) {};
    \node (of) at (-.5,0) {};

    \node[quantified] (x1) at ($(x0) + (a1)$) {};
    \node[quantified] (x2) at ($(x0) + (a2)$) {};
    \node[answer] (x3) at ($(x0) + (a3)$) {};
    \node         (x4) at ($(x0) + (a4)$) {\footnotesize \mn{ac}\ \mn{fc}\ \mn{wac} };

    \node[answer] (y1) at ($(y0) + (a1)$) {};
    \node[quantified] (y2) at ($(y0) + (a2)$) {};
    \node[answer] (y3) at ($(y0) + (a3)$) {}; 
    \node         (x4) at ($(y0) + (a4)$) {\footnotesize \mn{ac}\  \soutthick{\mn{fc}}\  \mn{wac} };
%    \node[circle, draw] (y5) at (4,1) {};

    \node[answer] (z1) at ($(z0) + (a1)$) {};
    \node[quantified] (z2) at ($(z0) + (a2)$) {};
    \node[answer] (z3) at ($(z0) + (a3)$) {}; 
    \node         (z4) at ($(z0) + (a4)$) {\footnotesize \soutthick{\mn{ac}}\ \soutthick{\mn{fc}}\ \mn{wac} };

    \node[answer] (t1) at ($(t0) + (a1)$) {};
    \node[answer] (t2) at ($(t0) + (a2)$) {};
    \node[answer] (t3) at ($(t0) + (a3)$) {}; 
    \node         (t4) at ($(t0) + (a4)$) {\footnotesize \soutthick{\mn{ac}}\ \mn{fc}\ \mn{wac} };
    
    \node[quantified] (v1) at ($(v0) + (a1)$) {};
    \node[quantified] (v2) at ($(v0) + (a2)$) {};
    \node[quantified] (v3) at ($(v0) + (a3)$) {}; 
    \node         (v4) at ($(v0) + (a4)$) {\footnotesize \soutthick{\mn{ac}}\ \soutthick{\mn{fc}}\ \soutthick{\mn{wac}} };
    
    \draw

    (x1) -> (x2) -> (x3)
    (y1) -> (y2) -> (y3)
    (z1) -> (z2) -> (z3) -> (z1)
    (t1) -> (t2) -> (t3) -> (t1)
    (v1) -> (v2) -> (v3) -> (v1);
    
    \draw[dashed]
    ($(y0) + (of) + (top)$) -- ($(y0) + (of) + (bottom)$)
    ($(z0) + (of) + (top)$) -- ($(z0) + (of) + (bottom)$)
    ($(t0) + (of) + (top)$) -- ($(t0) + (of) + (bottom)$)
    ($(v0) + (of) + (top)$) -- ($(v0) + (of) + (bottom)$);

    \end{tikzpicture}
   \caption{Different forms of acyclicity}
    \label{fig:examples_measures}
\end{figure}
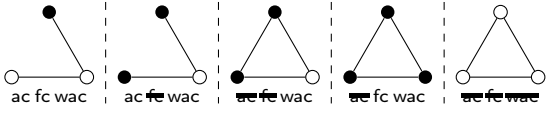
%
% Also note
% that even when the arity of relation symbols is at most binary such as
% in description logics, the fresh relation symbol $R$ may have any
% arity.
% There are several characterizations of when an acyclic CQ is 
% free\=/connex \cite{berkholz-enum-tutorial}. 
% % SHORT
% % Without giving
% % details, we mention that this is the case if and only if
% % it has a generalized hypertree decomposition of width one in which the
% % variables from $\bar x$ form a connected subtree. 
% A characterization
% that we use for the lower bounds in this paper is via bad paths.  A
% \emph{bad path} in a CQ $q$ is a sequence of variables
% $y_1, \dots, y_n$, $n \geq 3$, such that $y_1$ and $y_n$ are distinct
% answer variables, $y_2, \dots, y_{n-1}$ are quantified variables,
% and $\{y_i,y_{i+1}\}$ is an edge in the Gaifman graph of $q$ while
% $\{y_1,y_n\}$ is not. It was shown in
% \cite{bagan-enum-cdlin} that an acyclic CQ is free\=/connex if and
% only if it has no bad path, see also \cite{berkholz-enum-tutorial}. % We
% % will also consider bad paths in the more general context of weakly
% % acyclic CQs. Note that the notion of free\=/connexity makes no sense
% % for weakly acyclic CQs.

%\paragraph{TGDs, Guardedness.}  
\paragraph{\bf TGDs, Guardedness, Chase.}  
A {\em tuple-generating dependency} (TGD) $T$ over $\Sbf$ is a
first-order sentence $ \forall \bar x \forall \bar y \,
\big(\phi(\bar x,\bar y) \rightarrow \exists \bar z \, \psi(\bar
x,\bar z)\big) $ such that $q_\phi(\bar x) \leftarrow %\exists \bar y \,
\phi(\bar x,\bar y)$ and $q_\psi(\bar x) \leftarrow %\exists \bar z \,
\psi(\bar x,\bar z)$ are CQs that do not contain constants.  We call
$\phi$ and $\psi$ the {\em body} and {\em head} of $T$.  The body may
be the empty conjunction, i.e.~logical truth, denoted by \mn{true}.
The variables in $\bar x$ are the 
\emph{frontier variables}. 
For simplicity, we write $T$ as $\phi(\bar x,\bar y) \rightarrow
\exists \bar z \, \psi(\bar x,\bar z)$.  % We call $\phi$ and $\psi$ the
% {\em body} and {\em head} of $T$, denoted $\mn{body}(T)$ and
% $\mn{head}(T)$, respectively. 
An instance $I$ over $\Sbf$
\emph{satisfies}~$T$, denoted $I \models T$, if $q_\phi(I) \subseteq
q_\psi(I)$.  It {\em satisfies} a set of TGDs~$\Omc$, denoted $I
\models \Omc$, if $I \models T$ for each $T \in \Omc$. We then also
say that $I$ is a \emph{model} of $\Omc$.
%We work with finite sets of TGDs. 
% finite
% \emph{sets} of TGDs. {\color{blue} if we drop the closed world
%   stuff, we would introduce ontologies HERE!}
%
A TGD $T$ is {\em guarded} if its body is \mn{true} or contains a
\emph{guard atom} $\alpha$ that contains all variables in the
body~\cite{cali-taming-chase}. % Such an atom~$\alpha$ is the {\em
  % guard} of $T$, denoted $\mn{guard}(T)$.
%In case there are more than one atoms that can serve as the guard of $\sigma$, then we fix the left-most such atom in $\body{\sigma}$ as the guard. 
We write $\class{TGD}$ to denote the class of all TGDs
and $\class{G}$ for the class of guarded TGDs. % A~TGD $T$ is
% \emph{full} if the tuple $\bar z$ of variables is empty. We use
% $\class{FULL}$ to denote the class of full TGDs and shall often refer
% to $\class{G} \cap \class{FULL}$, the class of TGDs that are both
% guarded and full. Note that this class is essentially the class of
% Datalog programs with guarded rule bodies.

The well-known chase procedure makes explicit in an instance the
consequences of a set of TGDs \cite{MaMS79,JoKl84, FKMP05,
  cali-taming-chase}.
% We use the chase throughout this paper, a
% detailed definition can be found in the appendix. Here, we only mention 
% that our chase is oblivious, that is, a TGD is eventually applied
% whenever its body is satisfied, even if also its head is already
% satisfied. As a consequence, every fair chase sequence for an instance
% $I$ with a set of TGDs $\Omc$ produces the same instance, up to
% isomorphism.  We denote this instance with $\mn{ch}_\Omc(I)$.
Let $I$ be an instance and
$\Omc$ be a set of TGDs.  A TGD
\mbox{$T= \phi(\bar x,\bar y) \rightarrow \exists \bar z \, \psi(\bar x,\bar
z) \in \Omc$} is \emph{applicable} to a tuple $(\bar c,\bar c')$ of
constants in $I$ if $\phi(\bar c,\bar c') \subseteq I$. In this case,
the {\em result of applying $T$ in $I$ at $(\bar c,\bar c')$} is the
instance $I \cup \{\psi(\bar c,\bar c'')\}$ where $\bar c''$ is the
tuple obtained from $\bar z$ by simultaneously replacing each variable
$z$ with a fresh distinct null that does not occur in~$I$. We 
refer to such an application as a \emph{chase step}. %  and to the fresh
% constants as \emph{nulls}.  % We
% describe such a single chase step by writing
% $I \xrightarrow{T, \,(\bar c,\bar c')} J$.
A {\em chase sequence for $I$ with $\Omc$} is a sequence of instances
$I_0,I_1,\dots$ such that $I_0=I$ and each $I_{i+1}$ is the result of
applying some TGD from \Omc at some tuple $(\bar c, \bar c')$ of
constants in $I_i$. The \emph{result} of this chase sequence is the
instance $J = \bigcup_{i \geq 0} I_i$. The chase sequence is
\emph{fair} if whenever a TGD $T \in \Omc$ is applicable to a tuple
$(\bar c,\bar c')$ in some $I_i$, then this application is a chase
step in the sequence. Fair chase sequences are oblivious in that a TGD
is eventually applied whenever its body is satisfied, even if also its
head is already satisfied. As a consequence, every fair chase sequence
for $I$ with $\Omc$ leads to the same result, up to isomorphism.
We denote this result with
$\mn{ch}_\Omc(I)$.

\paragraph{\bf Ontology-Mediated Query, Description Logic.}
An \emph{ontology} is a finite set of TGDs.  An
\emph{ontology-mediated query (OMQ)} takes the form $Q= (\Omc,\Sbf,q)$
where $\Omc$ is an ontology, $\Sbf$ is a finite schema called the
\emph{data schema}, and $q$ is a CQ.  Both \Omc and $q$ can use
symbols from~\Sbf, but also additional symbols, and in particular \Omc
 can `introduce' symbols to enrich the vocabulary available
 for querying.
We assume w.l.o.g. that \Sbf contains only relation
symbols that occur in $\Omc$ or~$q$.
% We assume w.l.o.g.\ that all
% relation symbols in $q$ that are not from \Sbf occur also in \Omc.
% This can always be achieved by introducing dummy TGDs $R(\bar x)
% \rightarrow R(\bar x)$.
%
% When \Omc and $q$ only use symbols from \Sbf, then we say that the
 % data schema of $Q$ is \emph{full}. 
The {\em arity} of $Q$ is defined
as the arity of $q$. We write $Q(\bar x)$ to emphasize that the answer
variables of $q$ are $\bar x$
% and for brevity often refer to the
% data schema simply as the schema.
% In fact, use of ontologies
% is to enrich the schema that is available for formulating queries.
and say that $Q$ is \emph{acyclic} if $q$ is and likewise for
\emph{weakly acyclic},
\emph{free\=/connex acyclic}, \emph{self\=/join free}, and so on.

A tuple $\bar c \in \mn{adom}(D)^{|\bar x|}$ is a {\em (certain) answer} to $Q$
on $D$ if $\bar c \in q(I)$ for every model $I$ of \Omc with
$I \supseteq D$.
The {\em evaluation of $Q(\bar x)$} over $D$, denoted $Q(D)$, is the
set of all answers to $Q$ over~$D$.  Importantly, $Q(D) =
q(\mn{ch}_\Omc(D))$ for every OMQ $Q = (\Omc,\Sbf,q)$ and
\Sbf-database $D$. When convenient, we may write $D \cup
  \Omc\models q(\bar c)$ in place of $\bar c \in Q(D)$. We say that
$Q$ is \emph{empty} if $Q(D)=\emptyset$ for all \Sbf-databases $D$.

Let us remark that a CQ $q$ can be semantically acyclic in the sense
that it is equivalent to an acyclic CQ, but not acyclic itself
\cite{DBLP:conf/cp/DalmauKV02,barcelo-semantic-optimization}.
It is
known that this is the case if and only if the homomorphism core of $q$ is
acyclic. An OMQ can be semantically acyclic (in the same sense) even
if the homomorphism core of the CQ in it is not acyclic, that is, the
ontology has an impact on semantic acyclicity; see
\cite{DBLP:conf/lics/BarceloFLP19,barcelo_omq_limits-g} for very
similar effects that pertain to bounded treewidth. Since we are
concerned with data complexity in this article, we can simply
replace an OMQ with any equivalent one and thus w.l.o.g.\ refrain
from considering semantic acyclicity.

We next introduce the widely known description logic
\ELI~\cite{baader-introduction-to-dl}. Traditionally, description
logics come with their own variable-free syntax.  Here, we introduce
\ELI using TGD syntax. A guarded TGD
$\phi(\bar x,\bar y) \rightarrow \exists \bar z \, \psi(\bar x,\bar
z)$ is an \emph{\ELI TGD} if it uses only unary and binary relation
symbols, has only a single frontier variable, contains no reflexive
loops and multi-edges in body or head, and has a head that is acyclic
{%\color{blue}
and connected}.
% SHORT
%is a tree without reflexive loops and multi-edges.
% , that is, the
% undirected graph $(\mn{var}(q),\{\{x,y\} \mid R(x,y) \in \psi\})$ is a
% tree and $\psi$ contains no two binary atoms with the same set of
% variables. 
Note that the original definition of \ELI is more liberal in that it
restricts the body in the same way as the head in our definition, thus
encompassing also unguarded TGDs. However, the restricted form used
here can be attained by syntactic
normalization~\cite{baader-introduction-to-dl}.
{%\color{blue}
  Since the normalization of an ontology inside an OMQ does not affect query
  answers, all results in this paper apply also to the more liberal
  definition of \ELI.
}
% When applied
% to an ontology inside an OMQ, then the normalization preserves all
% (non-)answers as long as the fresh relation symbols that it introduces
% are not included in the data schema. Our more restricted form of \ELI
% TGDs is thus without loss of generality. An extension of \ELI is
% \ELIH. An \emph{\ELIH TGD} is a TGD that is an \ELI TGD or of the form
% $R(x,y) \rightarrow S(x,y)$. 
We use $\class{ELI}$ to denote the set
of all \ELI TGDs. %, and likewise for $\class{ELHI}$.

An \emph{OMQ language} is a class of OMQs. For a class of TGDs
$\class{C}$ and a class of CQs $\class{Q}$, we write
$(\class{C},\class{Q})$ to denote the OMQ language that consists of
all OMQs $(\Omc,\Sbf,q)$ where \Omc is a set of TGDs from $\class{C}$
and $q \in \class{Q}$. For example, we may write
$(\class{G},\class{CQ})$ and $(\class{ELI},\class{CQ})$.

Let $Q_i(\bar x)=(\Omc_i,\Sbf,q_i)$ for $i \in \{1,2\}$.  Then
OMQ $Q_1$ is \emph{contained} in OMQ $Q_2$, written $Q_1 \subseteq Q_2$,
if $Q_1(D) \subseteq Q_2(D)$ for every \Sbf-database $D$. Moreover,
$Q_1$ and $Q_2$ are \emph{equivalent}, written $Q_1 \equiv Q_2$,
if $Q_1 \subseteq Q_2$ and $Q_2 \subseteq Q_1$.

\medskip 

%\paragraph{\bf Machine Model.} 
{\bf Machine Model.} 
As our computational model, we use RAMs
under the uniform cost model \cite{DBLP:journals/jcss/CookR73}, see
\cite{Grandjean-RAM} for a formalization. Such a RAM has a
one-way read-only input tape, a write-only output tape, and
an unbounded number of registers that store non-negative integers of
$O(\log n)$ bits, $n$ the input size; this is called a DRAM %DLINRAM
in \cite{Grandjean-RAM}, used there to define the complexity class
DLINEAR. Adding, subtracting, and comparing the values
of two registers as well as bit shift takes time~$O(1)$. On a DRAM,
sorting is possible in linear time and we can use and access lookup
tables indexed by constants from $\mn{adom}(D)$ or by tuples of
constants of length $O(1)$ \cite{Grandjean-RAM}.  This model is 
standard in the context of constant delay enumeration, see also
\cite{segoufin-enum, bagan-enum-cdlin,carmeli-enum-func,
  berkholz-enum-tutorial} and the
\ifbool{arxive}{
  appendix}
{
  long version
}
for more details.

\medskip 

%\paragraph{\bf Modes of Query Evaluation.}
{\bf Modes of Query Evaluation.}
% The main modes of query
% evaluation studied in this paper are the enumeration of all answers
% and all-testing.
% , that is, after a preprocessing phase we
% repeatedly receive tuples of elements as an answer candidate and then
% have to check whether the candidate is indeed an answer. We refer to
% this as \emph{all-testing}. In contrast,
\emph{Single-testing} means to decide, given an OMQ $Q(\bar
x)=(\Omc,\Sbf,q)$, an \Sbf-database $D$, and an answer candidate $\bar
c \in \mn{adom}(D)^{|\bar x|}$, whether $\bar c \in Q(D)$. We
generally consider data
complexity, where the OMQ $Q$ is fixed and thus of constant size and the
only remaining inputs are $D$ and $\bar c$. % Unless otherwise
% specified,
%When giving complexity results for single-testing we
%generally mean data complexity.

An \emph{enumeration algorithm} for a class of OMQs $\class{C}$ is
given as inputs an OMQ $Q(\bar x) =(\Omc,\Sbf,q)\in \class{C}$ and an \Sbf-database $D$.
% It works in two phases.
In the \emph{preprocessing phase}, it may produce data
structures, but no output. In the subsequent \emph{enumeration phase},
it enumerates all tuples from $Q(D)$, without repetition, followed by
an end of enumeration signal.  An \emph{all-testing algorithm} for
$\class{C}$ is defined similarly.  It takes the same two inputs, and
has the same preprocessing phase, followed by a \emph{testing phase}
where it repeatedly receives tuples
$\bar a \in \mn{adom}(D)^{|\bar x|}$ and returns `yes' or 'no'
depending on whether $\bar a \in Q(D)$.

Let $(\class{L},\class{Q})$ be an OMQ language.  We say that
\emph{answer enumeration for $(\class{L},\class{Q})$ is possible with
  linear preprocessing and constant delay}, or \emph{in \dlc} for
short, if there is an enumeration algorithm for
$(\class{L},\class{Q})$ in which preprocessing takes time
$f(||Q||) \cdot \Omc(||D||)$, $f$ a computable function, while the
delay between the output of two consecutive answers depends only on
$||Q||$, but not on $||D||$. Enumeration in \cdlin is defined
likewise, except that in the enumeration phase, the algorithm may
consume only a constant amount of memory. Accessing the data
structures computed in the preprocessing phase does not count
as memory usage. It is not clear
whether \dlc and \cdlin coincide or not, see e.g.\ \cite{kazana-phd}.
% Note that this corresponds to data
% complexity: if $||Q||$ is bounded by a constant, then the time required
%   for preprocessing is linear and the delay is
%   $O(1)$.
  % Also note that the time requirement for
% the preprocessing phase can be described as 
% fixed-parameter tractability with linear runtime, whence the term FPL.
The definition of \emph{\dlc} and \emph{\cdlin} for all-testing is analogous,
except that the enumeration delay is replaced with the time needed for
testing. 

% Sometimes, CD$\circ$Lin is defined to additionally
% require only a constant number of memory writes during
% the enumeration phase \cite{kazana-phd}. We do not require this condition.
% (and it is indeed not satisfied by our algorithms enumerating least
% partial answers).

%\paragraph{\bf Partial Answers.}
{\bf Partial Answers.}
% The existential quantifiers in the
% heads of TGDs used in ontologies introduce constants that we know must
% exist, but whose exact identity is unknown; they may be viewed as
% \emph{(labeled) nulls} \cite{DBLP:journals/jacm/ImielinskiL84}. In the
% standard notion of certain answers to OMQs as defined above, such null
% constants are never part of an answer. We introduce a notion of
% partial answers in which nulls may be returned in the form of a
% wildcard symbol.
%
% We are then interested in least partial
% answers, meaning that we never want to return a wildcard in a position
% where also a non-null constant could appear.
We first introduce partial answers with a single wildcard symbol
`$\ast$' (that is not
{%\color{blue}
in $\Cbf \cup \Nbf$}).
A~\emph{wildcard tuple} for an instance $I$ takes the form
$(c_1,\dots,c_n) \in (\mn{adom}(I) \cup \{\ast\})^n$, $n \geq 0$.  For
wildcard tuples $\bar c = (c_{1},\dots,c_{n})$ and $\bar c' =
(c'_{1},\dots,c'_{n})$, we write $\bar c \preceq \bar c'$ if $c'_{i}
\in \{ c_i, \ast \}$ for $1 \leq i \leq n$.  Moreover, $\bar c \prec
\bar c'$ if $\bar c \preceq \bar c'$ and $\bar c \neq \bar c'$. For
example, $(a,b) \prec (a,\ast)$ and $(a,\ast) \prec
(\ast,\ast)$. Informally, $\bar c \prec \bar c'$ expresses that tuple
$\bar c$ is preferred over tuple $\bar c'$ as it carries more
information.  A \emph{partial answer} to OMQ
$Q(\bar x)=(\Omc,\Sbf,q)$ on \Sbf-database $D$ is a wildcard tuple
$\bar c$ for $D$ of length $|\bar x|$ such that for each model $I$ of
$\Omc$ with $I \supseteq D$, there is a $\bar c' \in q(I)$ such that
% {\color{blue}$\bar c$ can be obtained from $\bar c'$ by replacing every constant
% from $\mn{adom}(I) \setminus \mn{adom}(D)$ with `$\ast$'.}
$\bar c' \preceq \bar c$. Note that some positions in $\bar c'$ may
 contain constants from $\mn{adom}(I) \setminus \mn{adom}(D)$, and that
 the corresponding position in $\bar c$ must then have a wildcard.
%{\color{red} Are we sure $c$ and $c'$ do not have to agree on domain
%  of $D$?}
%since $\bar c$ is a wildcard tuple for $D$.
%
A
partial answer $\bar c$ to $Q$ on \Sbf-database $D$ is a
\emph{minimal partial answer} if there is no partial answer
$\bar c'$ to $Q$ on $D$ with $\bar c' \prec \bar c$.
The {\em partial evaluation of $Q(\bar x)$} on~$D$, denoted
$Q(D)^{\ast}$, is the set of all minimal partial answers to $Q$ on~$D$.
Note that $Q(D) \subseteq Q(D)^{\ast}$. An illustrating example
is provided
in Section~\ref{sect:intro}.

Minimal partial answers may provide valuable information not captured by
complete answers. However, one might argue that
complete answers are more important than minimal partial answers that
contain a wildcard, and should thus be output first by an enumeration
algorithm. We observe that this is always
possible if we are interested in \dlc (whereas it is not clear whether
an analogous statement for \cdlin holds).
\begin{restatable}{proposition}{propcompletefirst}
\label{prop:completefirst}
  Let $Q \in (\class{TGD},\class{CQ})$. If minimal partial answers to
  $Q$ can be enumerated in \dlc and the same is true for
  complete answers, then there is a \dlc enumeration algorithm
  for minimal partial answers to $Q$ that produces the complete
  answers first.
\end{restatable}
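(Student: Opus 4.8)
The plan is to combine the two hypothesised enumerators without recomputing anything, exploiting the fact that the complete answers are exactly the \emph{wildcard-free} minimal partial answers. Indeed $Q(D) \subseteq Q(D)^{\ast}$, and conversely any minimal partial answer $\bar c$ with no occurrence of $\ast$ already lies in $Q(D)$: if $\bar c' \in q(I)$ satisfies $\bar c' \preceq \bar c$ and $\bar c$ carries no wildcard, then necessarily $\bar c' = \bar c$, so $\bar c$ is a complete answer. Hence, given any tuple produced during enumeration, we can decide in time $O(1)$ whether it is complete simply by scanning it for the symbol $\ast$. Let $E_{\mn{c}}$ be the given \dlc enumerator for complete answers and $E_{\mn{p}}$ the given \dlc enumerator for minimal partial answers. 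In the preprocessing phase I would run the preprocessing of both; as each is linear, the total remains $O(\|D\|)$.

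The enumeration then runs in two phases. In the first phase I drive $E_{\mn{c}}$ and output everything it produces; these are precisely the complete answers, emitted first as required. In the second phase I want to emit exactly the minimal partial answers that \emph{do} contain a wildcard, which I obtain by running $E_{\mn{p}}$ and \emph{discarding} each wildcard-free tuple it produces — every such tuple is a complete answer already output, so no repetitions occur and completeness is immediate. Note that reading the complete answers off $E_{\mn{p}}$ instead of using $E_{\mn{c}}$ would face the very same delay obstacle described next, which is why both hypotheses are genuinely needed. The one real difficulty is delay: $E_{\mn{p}}$ may emit a long run of complete answers between two wildcard-carrying ones, and naive skipping would incur delay proportional to $\|D\|$ rather than $O(1)$.

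To repair this I would use a standard delay-smoothing buffer together with a pacing discipline that keeps consumption of $E_{\mn{p}}$ steady across the \emph{whole} run. Concretely, maintain a FIFO buffer of wildcard-carrying tuples extracted from $E_{\mn{p}}$ but not yet output. Already during the first phase, advance $E_{\mn{p}}$ by two steps per complete answer emitted, appending its wildcard-carrying outputs to the buffer and dropping its wildcard-free ones; this builds a head start while keeping each first-phase step at cost $O(1)$. In the second phase, output one buffered tuple per step while advancing $E_{\mn{p}}$ by one further step per output. Since at most $|Q(D)|$ of the tuples produced by $E_{\mn{p}}$ are wildcard-free, after consuming $s$ of its outputs the buffer has received at least $s - |Q(D)|$ wildcard-carrying tuples; the factor-two head start acquired in the first phase then guarantees, by a short counting argument, that the buffer is non-empty whenever the next partial answer is due. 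When $E_{\mn{p}}$ is exhausted the buffer holds all remaining wildcard-carrying answers and is simply drained. Crucially, because consumption of $E_{\mn{p}}$ never pauses until it is exhausted, even a block of complete answers sitting at the \emph{end} of $E_{\mn{p}}$'s order is skipped at a rate of $O(1)$ per emitted output, so the terminal end-of-enumeration signal is also produced within constant delay.

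The main obstacle is exactly this constant-delay guarantee in the second phase, i.e.\ verifying that the pacing prevents buffer underflow and that trailing skipped answers cannot accumulate before termination; correctness (no repetitions, and emitting precisely $Q(D)^{\ast}$ with the complete answers first) follows directly from $Q(D) \subseteq Q(D)^{\ast}$ and the syntactic wildcard test. The buffer may hold polynomially many tuples, which is why the construction yields \dlc rather than \cdlin, matching the statement and explaining why the analogous claim for \cdlin remains unclear, as the remark preceding the proposition anticipates.
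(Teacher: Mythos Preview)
Your proof is correct and takes essentially the same approach as the paper: interleave the two enumerators, output $E_{\mn{c}}$'s tuples first while buffering the wildcard-carrying outputs of $E_{\mn{p}}$, then emit the remaining wildcard answers. The only minor difference is the pacing discipline---the paper advances $E_{\mn{p}}$ at a $1{:}1$ ratio with $E_{\mn{c}}$ in the first phase and, in the second, outputs $E_{\mn{p}}$'s wildcard tuples directly while substituting a buffered tuple whenever $E_{\mn{p}}$ emits a complete one (so the buffer size exactly matches the number of substitutions needed), whereas you build a $2{:}1$ head start and route everything through the FIFO buffer.
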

%
%\cMP{just to note: not CDLin as we store in memory.}
We next introduce partial answers with multiple wildcards. Fix a
countably infinite set of \emph{wildcards}
$\Wmc = \{ \ast_1,\ast_2,\dots \}$ (that are not
{%\color{blue}
in $\Cbf \cup \Nbf$}). A \emph{multi-wildcard tuple} for an instance
$I$ is a tuple $(c_1,\dots,c_n) \in (\mn{adom}(I) \cup \Wmc)^n$,
$n \geq 0$, such that if $c_i = \ast_j$ with $j>1$, then there is an
$i' < i$ with $c_{i'}=\ast_{j-1}$. Examples for multi-wildcard tuples
are $(\ast_1,\ast_2)$ and $(a,\ast_1,b,a,\ast_2,\ast_1,\ast_2)$ and a
non-example is $(\ast_2,\ast_1)$. Occurrences of the same wildcard
represent occurrences of the same null while different wildcards
represent nulls that may or may not be
different. % In this sense, multi-wildcard extend
% wildcard-tuples with equality, but not with inequality. We disallow
% tuples such as $(\ast_2,\ast_1)$ to avoid redundancy, as they
% represent the same situation as the (valid) tuple $(\ast_1,\ast_2)$.
%
For multi-wildcard tuples $\bar c = (c_{1},\dots,c_{n})$ and
$\bar c' = (c'_{1},\dots,c'_{n})$, we write $\bar c \preceq \bar c'$ if 
\begin{enumerate}

\item $c_{i}=c'_{i}$ or $\Wmc \not\ni c_i \neq c'_{i} \in  \Wmc$ for $1 \leq i \leq n$ and

\item %$c_{i},c_{j} \in \Wmc$ and
  $c'_{i} = c'_{j}$
  implies $c_{i} = c_{j}$ for \mbox{$1 \leq i,j \leq n$}.

% \item one of the following is true:
%   %
%   \begin{itemize}
%
%   \item $c_{i,1} \notin \Wmc$ and $c_{i,2} \in \Wmc$ for some $i$;
%
%   \item $c_{i,1}=c_{j,1} \in \Wmc$ and $c_{i,1} \neq c_{j,2}$ for some $i,j$.
%
%   \end{itemize}
%   %
\end{enumerate}
Moreover, $\bar c \prec \bar c'$ if $\bar c \preceq \bar c'$ and
$\bar c \neq \bar c'$. For example, $(\ast_1,a) \prec (\ast_1,\ast_2)$
and $(a,\ast_1,\ast_2,\ast_1) \prec
(a,\ast_1,\ast_2,\ast_3)$. \emph{Partial answers with multi-wildcards}
and \emph{minimal partial answers with multi-wildcards} are defined
in exact analogy with (minimal) partial answers, but using
multi-wildcard tuples in place of wildcard tuples. The {\em partial
  evaluation of $Q(\bar x)$ with multi-wildcards} on $D$, denoted
$Q(D)^{\Wmc}$, is the set of all minimal partial answers with
multi-wildcards to $Q$ on~$D$. % Informally, we may view partial answers
% with multi-wildcards as partial answers with a single wildcard, but
% extended with `wildcard equalities'. Inequalities, in contrast, cannot
% be represented by multi-wildcards.
%
\begin{example}
  Reconsider the ontology OMQ $Q=(\Omc,\Sbf,q)$ and database $D$ from
  Example~\ref{ex:1}.  Then $Q(D)^{\Wmc}$
  contains the tuples
  $$
  \begin{array}{ccc}
    (\mn{mary},\mn{room1},\mn{main1}) &
    (\mn{john},\mn{room4},\ast_1) &
    (\mn{mike},\ast_1,\ast_2).
  \end{array}
  $$
  Let the ontology $\Omc'$ be obtained from \Omc by adding
  $$
  \mn{Prof}(x) \wedge \mn{HasOffice}(x,y) \quad \rightarrow \quad \mn{LargeOffice}(y)
  $$
  and $\Sbf'$ from \Sbf by adding \mn{LargeOffice}, consider the CQ
  $$
    \begin{array}{rcl}
    q'(x_1,x_2,x_3,x_4)  &=&
  \mn{HasOffice}(x_1,x_2) \wedge \mn{LargeOffice}(x_2)\,\wedge\\[1mm]
  && \mn{HasOffice}(x_1,x_3) \wedge\mn{InBuilding}(x_3,x_4),
    \end{array}
  $$
  and let $Q'=(\Omc',\Sbf',q')$. Moreover, let $D'$ be $D$ extended with fact
  $$
  \mn{Prof}(\mn{mike}).
  $$
  Then $Q'(D')^{\Wmc}$ contains, among others, the tuple
  $(\mn{mike},\ast_1,\ast_1,\ast_2)$, but not the tuple
  $(\mn{mike},\ast_1,\ast_2,\ast_3)$ which is a partial answer, but not a minimal one.
  
  Finally, let the ontology $\Omc''$ be obtained from \Omc by adding
  $$
  \mn{OfficeMate}(x,y) \quad \rightarrow \quad \exists z \, \mn{HasOffice}(x,z)
  \wedge \mn{HasOffice}(y,z)
  $$
  and $\Sbf''$ from \Sbf by adding $\mn{OfficeMate}$,
  consider the CQ
  $$
  \begin{array}{rcl}
    q''(x_1,x_2,x_3,x_4)  &=& \exists y \, \mn{Hasoffice}(x_1,x_3)
                             \wedge \mn{Hasoffice}(x_2,x_4) \, \wedge
    \\[1mm]
    && \phantom{\exists y \,} \mn{InBuilding}(x_3,y) \wedge \mn{InBuilding}(x_4,y),
  \end{array}
  $$
  and set $Q''(x_1,x_2,x_3,x_4)=(\Omc'',\Sbf'',q'')$.
  Moreover, let $D''$ be $D$ extended with fact
  $$
  \mn{OfficeMate}(\mn{mary},\mn{mike}).
  $$
  $Q''(D'')^{\Wmc}$ contains, among others, the tuple
  $
     (\mn{mary},\mn{mike},\ast_1,\ast_1).
  $
\end{example}
%
% One may also consider a variation of partial answers and minimal partial
% answers where only a single wildcard $\ast$ can be used and where
% repeated occurrences of $\ast$ do not indicate multiple occurrences of
% the same null. For a concrete definition, please see the extended
% abstract \cite{DLpaper} of an unpublished earlier version of this
% paper. Such minimal partial answers are strictly less informative than
% the ones considered in this paper. We stress that all results in this
% paper, lower bounds as well as upper bounds, remain valid in this
% variation.

% Single-testing, enumeration, and all-testing of partial answers is
% defined in the expected way.
% To the best of our knowledge, partial
% answers to OMQs have not been considered before in the literature.
% This might be due to the fact that enumeration of answers to OMQs
% has not been considered either, but instead almost all studies
% concentrate on single-testing. For single-testing, however, partial
% answers are not interesting as when given a concrete candidate
% answer we can always resort to quantifying away the variables
% for which a wildcard is prescribed in the answer.

{%\color{blue}
It should not be surprising that minimal partial answers can
equivalently be defined in terms of the chase.
Let $q(\bar x)$ be a CQ and $I$ an instance, possibly containing
nulls. For an answer $\bar a \in q(I)$, we use $\bar a^\ast_\Nbf$ to
denote the (unique) wildcard tuple for $I$ obtained from $\bar a$ by
replacing all nulls with `$\ast$'. We call such an $\bar a^\ast_\Nbf$
a \emph{partial answer} to $q$ on $I$ and say that it is a
\emph{minimal} partial answer if there is no $\bar b \in q(I)$ with
$\bar b^\ast_\Nbf \prec \bar a^\ast_\Nbf$. We use $q(I)^{\ast}_\Nbf$
to denote the set of minimal partial answers to $q$ on $I$. Similarly,
we use $\bar a^{\Wmc}_\Nbf$ to denote the (unique) multi-wildcard tuple
for $I$ obtained by consistently replacing all nulls with
wildcards from $\Wmc = \{ \ast_1, \ast_2,\dots
\}$. % such that positions that
% have the same variable in $\bar x$ carry the same wildcard.
We then
define \emph{minimal partial answer with multi-wildcards} to $q$ on
$I$, denoted $q(D)^{\Wmc}_\Nbf$, in the expected way.
\begin{restatable}{lemma}{lemtwosemantics}
\label{lem:twosemantics}
Let $Q(\bar x)=(\Omc,\Sbf,q) \in (\class{TGD},\class{CQ})$ and $D$ be an
\Sbf-database. Then $Q(D)^{\ast}=q(\mn{ch}_\Omc(D))^{\ast}_\Nbf$ and
$Q(D)^{\Wmc}=q(\mn{ch}_\Omc(D))^{\Wmc}_\Nbf$.
\end{restatable}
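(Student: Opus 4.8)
The plan is to reduce both identities to the \emph{universal model property} of the chase: $\mn{ch}_\Omc(D)$ is a model of $\Omc$ with $\mn{ch}_\Omc(D)\supseteq D$, and for every model $I$ of $\Omc$ with $I\supseteq D$ there is a homomorphism $h\colon \mn{ch}_\Omc(D)\to I$ that is the identity on $\mn{adom}(D)$. A preliminary remark makes the right\=/hand sides well\=/defined: since the TGDs in $\Omc$ contain no constants, the chase introduces only nulls, so $\mn{adom}(\mn{ch}_\Omc(D))\cap\Cbf=\mn{adom}(D)$; hence for every $\bar a\in q(\mn{ch}_\Omc(D))$ the tuples $\bar a^\ast_\Nbf$ and $\bar a^\Wmc_\Nbf$ are genuine (multi\=/)wildcard tuples for $D$.

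I treat the single\=/wildcard identity first. Let $P$ denote the set of (not necessarily minimal) partial answers to $Q$ on $D$, and set $A=\{\,\bar a^\ast_\Nbf\mid \bar a\in q(\mn{ch}_\Omc(D))\,\}$; then $Q(D)^\ast$ and $q(\mn{ch}_\Omc(D))^\ast_\Nbf$ are exactly the $\prec$\=/minimal elements of $P$ and of $A$. I would show $A\subseteq P$ and $P\subseteq{\uparrow}A$, where ${\uparrow}A$ is the set of wildcard tuples $\bar c$ for $D$ with $\bar a^\ast_\Nbf\preceq\bar c$ for some $\bar a$. For $A\subseteq P$ (soundness), fix $\bar a\in q(\mn{ch}_\Omc(D))$ and a model $I\supseteq D$; composing the homomorphism witnessing $\bar a$ with $h$ gives $\bar b=h(\bar a)\in q(I)$, and since $h$ fixes $\mn{adom}(D)$ one checks directly that $\bar b\preceq\bar a^\ast_\Nbf$, so $\bar a^\ast_\Nbf$ is a partial answer. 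For $P\subseteq{\uparrow}A$ (completeness), I instantiate the definition of partial answer at the model $I=\mn{ch}_\Omc(D)$ itself: any $\bar c\in P$ admits some $\bar c'\in q(\mn{ch}_\Omc(D))$ with $\bar c'\preceq\bar c$, and taking $\bar a:=\bar c'$ yields $\bar a^\ast_\Nbf\preceq\bar c$, i.e.\ $\bar c\in{\uparrow}A$.

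The coincidence of the minimal elements then follows from a purely order\=/theoretic step. Because $\preceq$ is transitive, $P$ is upward closed under $\preceq$, so $A\subseteq P\subseteq{\uparrow}A$ forces $P={\uparrow}A$. In the \emph{finite} poset of wildcard tuples for $D$ (finite since $\mn{adom}(D)$ is), the $\prec$\=/minimal elements of an upward\=/closed set coincide with the $\prec$\=/minimal elements of any generating set, so $Q(D)^\ast=\min_\prec P=\min_\prec A=q(\mn{ch}_\Omc(D))^\ast_\Nbf$.

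For the multi\=/wildcard identity the argument is structurally identical, replacing $\preceq$ by its analogue on multi\=/wildcard tuples and $\bar a^\ast_\Nbf$ by $\bar a^\Wmc_\Nbf$. The only genuine work is re\=/establishing soundness and completeness for the richer order, whose two defining conditions track, respectively, the constants and the equality type (the partition into equal nulls) of a tuple. The relevant facts about $h$ are that it fixes constants, which gives condition~1, and that it is a \emph{function}, so $\bar a_i=\bar a_j$ implies $h(\bar a_i)=h(\bar a_j)$, which gives condition~2; here I use that $\bar a^\Wmc_\Nbf$ records exactly the equality type of the nulls of $\bar a$, i.e.\ $(\bar a^\Wmc_\Nbf)_i=(\bar a^\Wmc_\Nbf)_j$ iff $\bar a_i=\bar a_j$. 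The hard part will be precisely this bookkeeping: one must match equalities among wildcards (up to the canonical renaming built into the definition of multi\=/wildcard tuples) rather than merely the positions of constants, and verify that upward closure and the minimal\=/elements argument survive for this order. Once soundness and completeness are in place, the finite\=/poset argument from the single\=/wildcard case applies verbatim.
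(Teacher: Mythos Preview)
Your approach is correct and rests on the same two ingredients as the paper's proof: that $\mn{ch}_\Omc(D)$ is itself a model (your completeness step) and that it maps homomorphically into every model while fixing $\mn{adom}(D)$ (your soundness step). The packaging differs slightly. The paper argues both inclusions between the \emph{minimal} sets directly, invoking minimality inside each direction and even appealing to the first direction in the middle of the second; you instead first establish $A\subseteq P\subseteq{\uparrow}A$ on the level of \emph{all} partial answers and then extract $\min_\prec P=\min_\prec A$ by a single poset observation. This cleanly separates the model\=/theoretic content from the order\=/theoretic content and avoids the mild circularity in the paper's presentation, but the underlying argument is the same.

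One point to spell out in the multi\=/wildcard completeness step: the passage from $\bar c'\preceq\bar c$ (with $\bar c'$ a tuple over the chase) to $(\bar c')^{\Wmc}_\Nbf\preceq\bar c$ is exactly where your flagged bookkeeping sits. The wildcards canonically assigned in $(\bar c')^{\Wmc}_\Nbf$ need not coincide position\=/by\=/position with those in $\bar c$; what carries the step is that condition~(2) of the multi\=/wildcard order only constrains the \emph{equality type}, and the equality type recorded by $(\bar c')^{\Wmc}_\Nbf$ refines that imposed by $\bar c$ (since $c_i=c_j$ forces $c'_i=c'_j$). You already anticipate this, so just make the verification explicit.
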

We remark that there is a subtlety here. In contrast to
Lemma~\ref{lem:twosemantics}, the (not necessarily minimal) partial
answers to $Q(\bar x)=(\Omc,\Sbf,q) $ on an \Sbf-database $D$ need not
coincide with the partial answers to $q$ on $\mn{ch}_\Omc(D)$. In
fact, $(\ast,\cdots,\ast)$ is a partial answer to
$Q(\bar x)=(\Omc,\Sbf,q) $ on $D$ if there is a partial answer at all,
but this is not the case for the partial answers to $q$ on
$\mn{ch}_\Omc(D)$, e.g.\ when $Q(x)=(\emptyset,\{A\},A(x))$ and
$D=\{ A(c) \}$.  }

\section{Single-Testing}
\label{sect:singletesting}

We consider the limits of single-testing in linear time for the OMQ
languages $(\class{G},\class{CQ})$ and $(\class{ELI},\class{CQ})$.
{\color{black}For complete answers, we establish a close link to weak
  acyclicity while minimal partial answers with a single wildcard are
  linked (in a more loose way) to acyclicity. The latter is also achieved
  for minimal partial answers with multi-wildcards, but only when the
  ontology is from $\class{ELI}$.
  {%\color{blue}
    To the best of our
    knowledge, these are the first results on linear time
    single-testing for ontology-mediated queries. Existing
    algorithms from the literature
    seem to require at least quadratic time (although authors
    typically
    do not analyse the degree of the polynomial explicitly).}
\begin{restatable}{theorem}{thmsinglelin}
  \label{thm:singlelin}
  Single-testing is in linear time for
  \begin{enumerate}

  \item weakly acyclic OMQs from $(\class{G},\class{CQ})$ in the case
    of complete answers;

  \item acyclic OMQs from $(\class{G},\class{CQ})$ in the case of
    minimal partial answers with single wildcards;

  \item acyclic OMQs from $(\class{ELI},\class{CQ})$ in the case of
    minimal partial answers with multi-wildcards.

  \end{enumerate}
\end{restatable}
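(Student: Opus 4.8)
The plan is to reduce, in all three cases, single-testing over the infinite chase to the evaluation of an acyclic CQ over a linearly enriched database. By Lemma~\ref{lem:twosemantics} it suffices to decide, respectively, whether $\bar c \in q(\mn{ch}_\Omc(D))$, whether $\bar c \in q(\mn{ch}_\Omc(D))^\ast_\Nbf$, or whether $\bar c \in q(\mn{ch}_\Omc(D))^\Wmc_\Nbf$. The central difficulty is that $\mn{ch}_\Omc(D)$ is infinite, so we cannot materialize it and then run a standard acyclic-CQ algorithm. I would exploit the guarded structure of the chase: $\mn{ch}_\Omc(D)$ admits a guarded tree decomposition in which $D$ forms the core and tree-shaped \emph{anonymous} parts, built from the nulls introduced by existential quantifiers, hang off guarded sets of $D$. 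For a fixed ontology there are only finitely many isomorphism types of these anonymous parts, each determined by the \emph{type} (set of entailed atoms) of the guarded set it is attached to.

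First I would build a query-directed finite fragment of the chase. Any homomorphism $h$ from $q$ into $\mn{ch}_\Omc(D)$ splits into a part meeting $D$ and several \emph{excursions} into the anonymous parts, each of which, because $q$ has constant size and the anonymous parts are trees, descends to depth at most $|q|$ in the chase tree. Hence it is enough to precompute, for each type $\tau$ of a guarded set and each connected sub-piece of $q$, whether that piece can be mapped into the anonymous part below a guarded set of type $\tau$; this is a finite computation depending only on $Q$. What must still be done per database is to materialize, for every guarded set of $D$, the type it receives in the chase. These types are exactly the consequences of a least-fixpoint computation, which I would encode as a propositional Horn theory whose minimal model yields the realized types; this model is computable by unit propagation in time $O(\|D\|)$ for a fixed ontology. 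This yields an enriched database $D^+$ recording the local types, computable in linear time.

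Next I would fold the excursions back into the query: replace each connected sub-piece of $q$ that an excursion would realize by a fresh atom over $D^+$ that asserts the corresponding type condition, obtaining a query $q^+$ whose homomorphisms into $D^+$ are in bijection with the relevant homomorphisms of $q$ into $\mn{ch}_\Omc(D)$. For complete answers (case~1), the answer variables $\bar x$ are instantiated to constants $\bar c$ from $D$; after this substitution $q^+$ inherits acyclicity from the \emph{weak} acyclicity of $Q$, so the homomorphism can be decided in linear time by Yannakakis' algorithm. For minimal partial answers with a single wildcard (case~2), a position of $\bar c$ equal to $\ast$ forces the corresponding answer variable to map to a null, i.e.\ into an anonymous part, so answer variables can no longer be pinned to constants; this is exactly why full acyclicity of $Q$ is required, after which the existence test (part~(a), that $\bar c$ is realized) is again a linear-time acyclic evaluation. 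Minimality (part~(b), that no $\ast$ can be replaced by a constant) is handled by a constant number of additional acyclic evaluations, one for each wildcard position, checking that no strictly more informative answer exists; since $q$ has constantly many positions this stays linear.

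Case~3, minimal partial answers with multi-wildcards, follows the same scheme, but a multi-wildcard $\bar c$ additionally prescribes which answer positions must be mapped to the \emph{same} null and which to distinct nulls. Enforcing these equalities across different excursions is the main obstacle, and it is the reason to restrict the ontology to $\class{ELI}$: there the anonymous parts are genuine trees in which every element carries a single null and shared nulls are ancestors, so null-equality between two excursions can be read off from the tree structure and encoded by suitable joins in $q^+$, keeping $q^+$ acyclic. With general guarded TGDs a single head may introduce several mutually connected nulls in guarded sets of arity greater than two, and tracking equalities among them across excursions would destroy acyclicity; this is precisely where the argument for case~3 would break down, which is consistent with the restriction to $\class{ELI}$ in the statement. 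As before, existence is one linear-time acyclic evaluation and minimality is a constant number of further such evaluations.
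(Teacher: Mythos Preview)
Your treatment of cases~(1) and~(2) is essentially the paper's: build a linear-size finite fragment of the chase via a Horn minimal-model computation, then substitute the answer constants (resp.\ existentially quantify the wildcard positions) and run Yannakakis on the resulting acyclic CQ, with minimality handled by constantly many further tests using a marker predicate $P_{\mn{db}}$. The paper materializes the fragment as a database $\mn{ch}^q_\Omc(D)$ rather than folding excursions into the query, but the difference is cosmetic.

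Case~(3) has a real gap. First, a semantic slip: distinct wildcards do \emph{not} prescribe distinct nulls; they only fail to prescribe equality. More importantly, you assert that in \ELI\ ``null-equality between two excursions can be \dots\ encoded by suitable joins in $q^+$, keeping $q^+$ acyclic,'' but identifying two answer variables that carry the same wildcard can create a cycle in a previously acyclic query (a path $x_1\!-\!y\!-\!x_2\!-\!z\!-\!x_3$ becomes a cycle when $x_1$ and $x_3$ are identified). You give no reason why acyclicity survives. The paper's key observation is that it does \emph{not} survive in general, but it does survive \emph{whenever the candidate tuple is actually a minimal partial answer}: minimality forces the witnessing homomorphism to be injective on the wildcard positions and to map them into the null part of the chase, which in \ELI\ is a forest; an injective map into a forest forces the identified subquery $\widehat q$ to be acyclic on those variables, hence weakly acyclic overall. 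This lets the algorithm first \emph{test} whether $\widehat q$ is weakly acyclic and return ``no'' if not, after which the problem reduces to the single-wildcard case. The same trick handles minimality: for each way of refining the tuple (adding $P_{\mn{db}}$ to some wildcard positions, or coarsening the wildcard partition), the algorithm skips the case if the corresponding $\widehat p$ is not weakly acyclic, and this is sound because any genuine smaller answer would, by the same claim, make its $\widehat p$ weakly acyclic. Your sketch does not contain this mechanism, and without it the reduction to linear-time acyclic evaluation is unjustified.
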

}
To prove
Theorem~\ref{thm:singlelin}, we first show that for every OMQ
$Q(\bar x)=(\Omc,\Sbf,q) \in (\class{G},\class{CQ})$ and \Sbf-database
$D$, one can compute in time linear in $||D||$ a (finite!) database
$\mn{ch}^q_\Omc(D)$ that enjoys all properties of the chase
$\mn{ch}_\Omc(D)$ which are important for enumerating answers
to $Q$, both complete and partial. Informally, $\mn{ch}^q_\Omc(D)$
contains only those parts of $\mn{ch}_\Omc(D)$ that are `relevant to $q$'.
%
% In particular,
% $Q(D)=q(\mn{ch}^q_\Omc(D)) \cap \mn{adom}(D)^{|\bar x|}$ and also
% $Q(D)^{\ast}$ and $Q(D)^{\Wmc}$ can be obtained from
% $\mn{ch}^q_\Omc(D)$ in a straightforward way. % Essentially,
% $\mn{ch}^q_\Omc(D)$ is constructed from $D$ by considering a certain
% set of relevant CQs $p(\bar y)$ and all tuples
% $\bar c \in \mn{adom}(D)^{|\bar y|}$ such that
% $D \cup \Omc \models p(\bar c)$ and the constants in $\bar c$
% form a guarded set in $D$, and then adding a copy of $D_p$ that
% uses the constants in $\bar c$ in place of the answer variables
% $\bar y$ of $p$ and only fresh constants otherwise. 
We refer to
$\mn{ch}^q_\Omc(D)$ as the \emph{query-directed chase}, similar
constructions have been used e.g.\ in
\cite{DBLP:conf/ijcai/BienvenuOSX13,barcelo_omq_limits-g}.

{%\color{blue}
%We now define $\mn{ch}^q_\Omc(D)$ in  detail.
Let $\mn{cl}(Q)$  denote the set of CQs that are connected
and use only relation symbols that occur in \Omc, no constants, and
only variables from a fixed set $V$ whose cardinality is the maximum
of $|\mn{var}(q)|$ and the arities of relation symbols in \Omc. Note
that the CQs in $\mn{cl}(Q)$ may have any arity, including zero,
and that the number of CQs in $\mn{cl}(Q)$ is independent of $D$. The
database $\mn{ch}^q_\Omc(D)$ is obtained from $D$ by adding, for every CQ
$p(\bar y) \in \mn{cl}(Q)$ and every
$\bar c \in \mn{adom}(D)^{|\bar y|}$ such that
$D \cup \Omc \models p(\bar c)$ and the constants in $\bar c$
constitute a guarded set in $D$, a copy of $D_p$ that uses the
constants in $\bar c$ in place of the answer variables $\bar y$ of $p$
and only fresh constants otherwise.
\begin{restatable}{lemma}{proprestrictedchaseworks}
  \label{prop:restrictedchaseworks}
  Let $Q(\bar x)=(\Omc,\Sbf,q) \in (\class{G},\class{CQ})$ and $D$
  be an {\Sbf}\=/database. Then $Q(D)=q(\mn{ch}^q_\Omc(D)) \cap
  \mn{adom}(D)^{|\bar x|}$, $Q(D)^{\ast}=
  q(\mn{ch}^q_\Omc(D))^{\ast}_\Nbf$, and $Q(D)^{\Wmc}=
  q(\mn{ch}^q_\Omc(D))^{\Wmc}_\Nbf$.
\end{restatable}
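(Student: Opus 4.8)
The plan is to compare the finite database $\mn{ch}^q_\Omc(D)$ with the (possibly infinite) chase $\mn{ch}_\Omc(D)$ and to show that $q$ has exactly the same answers, minimal partial answers, and minimal partial answers with multi-wildcards on both. Since $Q(D)=q(\mn{ch}_\Omc(D))$ and, by Lemma~\ref{lem:twosemantics}, $Q(D)^{\ast}=q(\mn{ch}_\Omc(D))^{\ast}_\Nbf$ and $Q(D)^{\Wmc}=q(\mn{ch}_\Omc(D))^{\Wmc}_\Nbf$, it suffices to prove $q(\mn{ch}^q_\Omc(D))\cap\mn{adom}(D)^{|\bar x|}=q(\mn{ch}_\Omc(D))$, and $q(\mn{ch}^q_\Omc(D))^{\ast}_\Nbf=q(\mn{ch}_\Omc(D))^{\ast}_\Nbf$, and likewise for $\Wmc$. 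I would obtain all of this from two homomorphism properties: (II) there is a homomorphism $h\colon \mn{ch}^q_\Omc(D)\to\mn{ch}_\Omc(D)$ that is the identity on $\mn{adom}(D)$; and (I) every homomorphism $g\colon q\to\mn{ch}_\Omc(D)$ can be \emph{mirrored} by a homomorphism $g^\dagger\colon q\to\mn{ch}^q_\Omc(D)$ with the \emph{same answer pattern}, that is, $g^\dagger(\bar x)$ and $g(\bar x)$ agree on all positions carrying a constant of $D$, carry nulls on exactly the same positions, and have identical equality type among those nulls.

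Direction (II) is straightforward. For each copy of $D_p$ that was added for a CQ $p(\bar y)\in\mn{cl}(Q)$ and a tuple $\bar c$ with $D\cup\Omc\models p(\bar c)$, the entailment yields a homomorphism $h_p$ from $D_p$ into $\mn{ch}_\Omc(D)$ with $\bar y\mapsto\bar c$. I would let $h$ be the identity on $\mn{adom}(D)$ and send the fresh nulls of this copy along $h_p$; since distinct copies use disjoint fresh nulls and all agree with the identity on $\bar c\subseteq\mn{adom}(D)$, these maps combine into one homomorphism. Composing any homomorphism $q\to\mn{ch}^q_\Omc(D)$ with $h$ then yields a homomorphism $q\to\mn{ch}_\Omc(D)$ whose answer tuple lies $\preceq$-below (is at least as informative as) the original one, because replacing a fresh null by its $h$-image can only turn a wildcard into a constant or merge two wildcards.

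The core of the argument is direction (I), which I expect to be the main obstacle. It rests on the well-known forest structure of the guarded chase, whose bags are guarded sets and whose root is $\mn{adom}(D)$. The crucial new ingredient is the invariant that \emph{for every fact $f\in\mn{ch}_\Omc(D)$ the constants of $\mn{adom}(D)$ occurring in $f$ form a guarded set in $D$}; this follows by a straightforward induction on chase steps, using that a generated head atom contains, apart from fresh nulls, only frontier constants, which lie in the guard atom of the triggering body (so it handles the constant facts produced by full TGDs). Given $g\colon q\to\mn{ch}_\Omc(D)$, every null–null edge of $\mn{ch}_\Omc(D)$ stays within a single branch of the forest, so each connected component of the subquery of $q$ mapped to nulls is sent into one branch anchored at a single guarded set $\bar c_0$ of $D$, and by the invariant all constants adjacent to that component also lie in $\bar c_0$. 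Taking $p$ to be the canonical CQ of such a component together with its constant boundary (the latter becoming the answer variables $\bar y$), $p$ is connected, uses only relation symbols of $\Omc$ and variables from $V$, hence $p\in\mn{cl}(Q)$, and $g$ restricted to $p$ witnesses $D\cup\Omc\models p(\bar c)$ with $\bar c=g(\bar y)$ a $D$-guarded set. Thus $\mn{ch}^q_\Omc(D)$ contains the corresponding copy of $D_p$, into which I route the component while fixing the boundary to $\bar c$. Because $p$ is the canonical CQ of the component, its quantified variables are in bijection with the component's distinct nulls and with the fresh nulls of the copy, so the rerouting preserves equalities among nulls exactly; this is precisely what makes the argument work for the multi-wildcard semantics. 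All-constant atoms of $q$ are handled the same way, the invariant guaranteeing that the (possibly full-TGD-derived) constant facts they hit are present in $\mn{ch}^q_\Omc(D)$ as degenerate copies.

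Finally I would combine (I) and (II) by a short order-theoretic argument. Writing $P$ and $P^q$ for the sets of answer patterns (wildcard tuples, resp.\ multi-wildcard tuples) realized by $q$ on $\mn{ch}_\Omc(D)$ and on $\mn{ch}^q_\Omc(D)$, direction (I) gives $P\subseteq P^q$, while (II) gives that every pattern in $P^q$ is $\succeq$ some pattern in $P$. From these two facts it follows directly that the $\preceq$-minimal elements of $P$ and of $P^q$ coincide: if $\bar a$ is minimal in $P$ then any strictly smaller element of $P^q$ would, via (II), give a strictly smaller element of $P$; and if $\bar b$ is minimal in $P^q$ then (II) and $P\subseteq P^q$ force $\bar b\in P$ and its minimality there. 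This yields the equalities for $Q(D)^{\ast}$ and $Q(D)^{\Wmc}$; restricting to tuples over $\mn{adom}(D)$, which are $\preceq$-minimal as they contain no wildcards, gives the equality for complete answers.
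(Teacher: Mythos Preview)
Your overall strategy is right, and for complete answers and single wildcards the argument goes through. The gap is in direction (I) for multi-wildcards. Your decomposition is into connected components of the null-mapped variables of $q$, and you route each component (with its constant boundary) into its own copy of some $D_p$ in $\mn{ch}^q_\Omc(D)$. You correctly note that this preserves null equalities \emph{within} a component, but two answer variables lying in \emph{distinct} components can still be sent by $g$ to the same null, and your $g^\dagger$ then sends them to distinct fresh nulls. Concretely, take $\Omc = \{A(x) \to \exists z\, R(x,z) \wedge S(x,z)\}$, $D = \{A(c)\}$, and $q(x_1,x_2) = \exists y\, R(y,x_1) \wedge S(y,x_2)$. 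The unique homomorphism into $\mn{ch}_\Omc(D)$ sends $x_1,x_2$ to the same null, giving pattern $(\ast_1,\ast_1)$; but $x_1$ and $x_2$ are isolated in your decomposition (every atom touches the constant-mapped $y$), so your $g^\dagger$ produces $(\ast_1,\ast_2)$. Since your final order-theoretic step needs the strong inclusion $P \subseteq P^q$, not merely ``every element of $P$ is $\preceq$ some element of $P^q$'' (the weak version together with (II) does \emph{not} force $\min(P)=\min(P^q)$), this is a genuine gap.

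A clean fix is to decompose the null-touching facts of the \emph{image} $g(q)$ into components where adjacency means sharing a \emph{null}. Any two facts in such a component then have the same source, so their $\mn{adom}(D)$-constants lie in one guarded set; the canonical CQ of the component is connected, has at most $|\mn{var}(q)|$ variables, and now its quantified variables really are in bijection with all nulls touched, so rerouting preserves every null equality among answer variables. The paper takes a different route: it first proves that for \emph{every} CQ $q'$ with $|\mn{var}(q')|\leq|\mn{var}(q)|$ complete answers on $\mn{ch}^q_\Omc(D)$ and on $\mn{ch}_\Omc(D)$ coincide (Lemma~\ref{lem:pseudochase}), and then, given a multi-wildcard tuple, encodes the wildcard equalities \emph{syntactically} by identifying the corresponding answer variables of $q$ into a single quantified variable, reducing to that complete-answer lemma. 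Your direct homomorphism-mirroring is arguably more transparent once patched; the paper's reduction is more modular and avoids the equality bookkeeping entirely.
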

}

As announced, the query-directed chase can be computed in linear
time. $Q$ is not required to be acyclic for this to hold.
\begin{restatable}{proposition}{propchaseinlineartime}
\label{prop:chaseinlineartime}
  Let $Q(\bar x)=(\Omc,\Sbf,q) \in (\class{G},\class{CQ})$ and let $D$
  be an \Sbf-database. Then $\mn{ch}^q_\Omc(D)$ can be computed
  in time linear in $||D||$, more precisely in time $2^{2^{O(||Q||^2)}}{\cdot} ||D||$.
\end{restatable}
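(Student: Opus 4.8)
The plan is to realise the construction of $\mn{ch}^q_\Omc(D)$ in three stages and to account for the cost of each. First, $\mn{cl}(Q)$ is computed; this set depends only on $Q$, not on $D$. Over the signature of $\Omc$, which has boundedly many relation symbols of arity at most $||Q||$, and with variables drawn from the fixed set $V$ of size $O(||Q||)$, the number of atoms is at most $2^{O(||Q||^2)}$, so $|\mn{cl}(Q)| \le 2^{2^{O(||Q||^2)}}$ and $\mn{cl}(Q)$ can be written down in time depending only on $||Q||$. Second, the candidate tuples are cheap to enumerate: every guarded set is a subset of the at most $||Q||$ constants of some fact of $D$, so $D$ has only $O(||D||)$ guarded sets and, since orderings are bounded by $||Q||!$, only $O(||D||)$ tuples $\bar c$ that form a guarded set, the hidden constant depending on $||Q||$. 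Third, once it is known for each such $\bar c$ and each $p(\bar y) \in \mn{cl}(Q)$ whether $D \cup \Omc \models p(\bar c)$, attaching the fresh copy of $D_p$ costs $O(1)$ per entailed pair, hence $O(||D||)$ in total. The whole difficulty therefore lies in deciding all of these entailments within a single linear-time budget.

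To do so I would compute, by a monotone fixpoint, the \emph{type} of every guarded set $S$ of $D$, recording which atoms over $S$ that use relation symbols of $\Omc$ hold in $\mn{ch}_\Omc(D)$. Guardedness is the crucial ingredient: the guarded chase is tree-like, so the part of $\mn{ch}_\Omc(D)$ generated below $S$ is determined up to isomorphism by the type of $S$ and the fixed ontology $\Omc$. This has three consequences. (i) Types can only grow under chase steps and live in a lattice whose height is bounded by the number of possible types, namely $2^{2^{O(||Q||^2)}}$, since a type is a set of atoms over a set of size at most the maximal arity. (ii) A single chase step fires on a bounded guarded set and can only change the types of the boundedly many guarded sets overlapping its guard, so propagation is purely local. (iii) From the fixpoint types one can decide, via a table precomputed from $\Omc$ and indexed by types, which $p \in \mn{cl}(Q)$ are entailed at a guarded tuple, since a connected $p$ mapped with its answer variables onto a guarded set has connected image that decomposes along the tree of guarded sets.

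The fixpoint is then evaluated in linear time by seminaive, worklist-based propagation on the DRAM model. I would keep a table, indexed by tuples of constants of length at most $||Q||$, storing the current type of each guarded set; by the DRAM lookup-table and linear-time sorting primitives this gives constant-time access to a guarded set's type and constant-time retrieval of the guarded sets overlapping a given one. Whenever a type grows, its boundedly many neighbours are re-examined. Because types live in a lattice of bounded height, each guarded set is re-processed only a bounded number of times, and each re-processing together with the subsequent lookup of entailed $p \in \mn{cl}(Q)$ and attachment of gadgets costs time depending only on $||Q||$. Summing over the $O(||D||)$ guarded sets yields the claimed bound $2^{2^{O(||Q||^2)}} \cdot ||D||$.

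\emph{The main obstacle} is the locality characterisation underlying the second stage: one must show that the tree-likeness of the guarded chase makes the type of each guarded set computable by purely local propagation and, moreover, that the entailments $D \cup \Omc \models p(\bar c)$ required by the definition of $\mn{ch}^q_\Omc(D)$ are all recoverable from these bounded types via a data-independent table. Once this is in place, the bounded lattice height together with the DRAM lookup and sorting primitives makes the linear-time bookkeeping routine.
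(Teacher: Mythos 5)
Your proposal is correct and rests on the same key insight as the paper's proof: the locality of the guarded chase, i.e., that everything derivable at a guarded set $S$ of $D$ is already derivable by chasing the facts of $\mn{ch}_\Omc(D)$ whose constants lie in $S$ (this is Lemma~\ref{lem:treesarenice} in the appendix, which is exactly the ``main obstacle'' you identify). Where you differ is in how the resulting monotone fixpoint is evaluated. The paper does not hand-roll a worklist algorithm; it encodes the fixpoint as a propositional Horn formula $\theta$ with one variable $x_{p(\bar c)}$ per CQ $p \in \mn{cl}(Q)$ and guarded tuple $\bar c$, with implications $\bigwedge_{S(\bar d)\in D'} x_{S(\bar d)} \rightarrow x_{p(\bar c)}$ for every small database $D'$ over a guarded set of $D$ with $D' \cup \Omc \models p(\bar c)$, and then invokes the classical linear-time algorithm of Dowling and Gallier for computing minimal models of Horn formulas. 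Your ``type of a guarded set $S$'' is precisely the collection of atomic variables $x_{R(\bar c)}$ with $\bar c \subseteq S$ that are true in the minimal model, and your data-independent table is the set of Horn implications instantiated at $S$. What the Horn encoding buys is that all propagation bookkeeping---in particular the amortized analysis of how often a guarded set is revisited---is outsourced to a known black-box result; what your direct implementation buys is that it avoids the detour through propositional logic, at the price of having to argue the DRAM-level details yourself.

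On that last point, one intermediate claim in your argument is false as stated, although your final accounting repairs it: in (ii) you assert that a chase step ``can only change the types of the boundedly many guarded sets overlapping its guard.'' A newly derived fact $R(\bar c)$ over database constants must be propagated to \emph{every} guarded set of $D$ containing the constants of $\bar c$, and there can be $\Theta(||D||)$ of these (e.g., when $\bar c$ mentions a single constant occurring in linearly many facts of $D$). Propagation is therefore not per-step local. The overall running time is still linear, but only by the amortized argument you give at the end: each fixed guarded set can receive at most $2^{O(||Q||^2)}$ distinct facts, so summing the propagation cost over \emph{target} guarded sets rather than over chase steps yields $2^{2^{O(||Q||^2)}} \cdot ||D||$. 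You should state the bound that way (and precompute, for each guarded tuple, the list of guarded sets of $D$ containing it, which is itself linear-time work) rather than via a bounded out-degree of a single step.
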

To prove Proposition~\ref{prop:chaseinlineartime},
% construct $\mn{ch}^q_\Omc(D)$ in linear time (actually in time
% $2^{2^{O(||Q||^2)}}\cdot ||D||$), 
we derive from $D$ and $Q$ a
satisfiable Horn formula $\theta$, make use of the fact that a minimal
model of $\theta$ can be computed in linear
time~\cite{dowling-gallier-horn}, and then read off
$\mn{ch}^q_\Omc(D)$ from the minimal model.
{%\color{blue}
We are not aware that such
an approach has been used before.}

For Point~(1) of Theorem~\ref{thm:singlelin}, we have to check whether
$\bar c \in Q(D)$ which can now be done in linear time a straightforward way. First compute
$\mn{ch}^q_\Omc(D)$.
% introduce a fresh unary
% relation symbol $P_{\mn{db}}$ to identify in $\mn{ch}^q_\Omc(D)$ the
% constants from $D$, to avoid that nulls are returned as answers. We
% also
Then replace the answer variables in $q$ by the constants
from~$\bar c$, turning the weakly acyclic $q$ into an acyclic CQ.
Finally, use an existing procedure such as Yannakakis' algorithm to
single-test the resulting CQ in linear
time~\cite{yannakakis-algotrithm}.  Points~(2) and~(3) of
Theorem~\ref{thm:singlelin} are proved by a (Turing) reduction to the
case of complete
answers. Details are provided in the
\ifbool{arxive}{
  appendix. }
{
  long version.
}
% , considering multiple OMQs that can be obtained from $Q$ by
% quantifying certain answer variables and (in the case of multiple
% wildcards) possibly identifying some of the freshly quantified
% variables.

We next prove a lower bound that partially matches
Theorem~\ref{thm:singlelin}. As in the case without ontologies, we do
not obtain a full dichotomy as the lower bound only applies to queries
that are self-join free. In addition (and related to this), it only
applies to OMQs where the ontology is formulated in the subclass
$\class{ELI}$ of $\class{G}$.  The lower bound is conditional on the
triangle conjecture, which we formulate next.  \emph{Triangle
  detection} is the problem to decide, given an undirected graph
$G=(V,E)$ as a list of edges, whether $G$ contains a 3-clique. The
triangle conjecture from fine-grained complexity theory
\cite{abboud-triangle} states that triangle detection cannot be solved
in linear time.
% {\color{blue}Well,
%   the actual conjecture is formulated in a more complex way; we might
%   want to properly understand it and maybe even state it}
%
% In the case of single testing, the weak acyclicty
% separates queries that can be tested in linear time 
% from those that cannot.
%
\begin{restatable}{theorem}{lemmasingletestinglowerbound}  
\label{lemma:single-testing-lower-bound}  
Let $Q \in (\class{ELI},\class{CQ})$ be non-empty and self\=/join
free. If $Q$ is not weakly acyclic, single-testing complete answers to
$Q$ is not in linear time unless the triangle conjecture fails. The
same is true for minimal partial answers and minimal partial answers with
multiple wildcards.
\end{restatable}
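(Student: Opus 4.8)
The plan is to reduce triangle detection to single-testing $Q$, so that a linear-time single-testing algorithm would refute the triangle conjecture. I first note that the candidate answer produced by the reduction is a tuple $\bar c$ of constants, and that for such a tuple the three notions coincide: being a complete answer, a minimal partial answer, and a minimal partial answer with multi-wildcards. Indeed, a complete answer is trivially a minimal partial answer in either sense, and conversely, since no tuple is $\prec \bar c$ when $\bar c$ contains no wildcard, a constant tuple that is a minimal partial answer must already be a complete answer (by definition of partial answer, every model of $\Omc$ extending $D$ then has $\bar c$ as an answer to $q$). A single reduction therefore yields all three lower bounds, and it suffices to treat complete answers.

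The engine of the reduction is a structural property of the chase under an $\class{ELI}$ ontology. Because every $\class{ELI}$ TGD has a single frontier variable and an acyclic, connected head, each chase step attaches a tree of fresh nulls to a single element. Consequently the nulls of $\mn{ch}_\Omc(D)$ (equivalently, of the query-directed chase $\mn{ch}^q_\Omc(D)$, which by Lemma~\ref{prop:restrictedchaseworks} computes the same answers) form a forest in which every tree meets $\mn{adom}(D)$ in exactly one constant, and no fact relating two distinct constants is ever generated. Hence every cyclic connectivity among constants in $\mn{ch}_\Omc(D)$ is already present in $D$, and a null-tree can only act as a detour from a constant back to itself. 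It is exactly this property that fails for general guarded TGDs, whose multiple frontier variables can create cycles among constants; this is why the lower bound is stated for $\class{ELI}$.

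Let $q'$ be $q$ with its answer variables replaced by fresh constants; by assumption $q'$ is cyclic and self-join free. Following the construction of Brault-Baron~\cite{BraultBaron} for cyclic self-join-free CQs, the cyclic core of $q'$ can host an encoding of an input graph $G$ into the relations along the core cycle, so that the core maps homomorphically into the encoding if and only if $G$ contains a triangle. The atoms of $q$ outside the core---in particular any atom whose relation is introduced by $\Omc$ or has arity at least three, and which can therefore never be realized among two distinct constants---are handled using non-emptiness of $Q$: I fix a witness database $D_0$ and a homomorphism $h_0 : q \to \mn{ch}_\Omc(D_0)$ with $h_0(\bar x)$ a tuple of constants. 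By the structural property, $h_0$ already maps the core onto a cycle among the constants of $D_0$. The database $D_G$ is then obtained, in linear time from $G$, by retaining $D_0$ to realize the non-core atoms through the chase and replacing the constant realization of the core by the $G$-gadget, glued to the rest along the constants that the core shares with it; the candidate is $\bar c := h_0(\bar x)$.

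For soundness, assume $\bar c \in Q(D_G)$, i.e.\ there is a homomorphism $h : q \to \mn{ch}^q_\Omc(D_G)$ with $h(\bar x) = \bar c$. By the structural property the image of the core is a genuine cycle lying among the constants of $D_G$ and using only database facts; self-join freeness together with the gadget design rules out degenerate collapses, which would require loops or facts between two distinct constants that $D_G$ does not contain. Thus the cycle is a triangle of $G$. Completeness is the converse gluing: a triangle of $G$ yields a homomorphism of the core into the gadget, which combines with $h_0$ on the remaining atoms. The main obstacle, and the point where the $\class{ELI}$ restriction, self-join freeness, and non-emptiness must all be used together, is precisely this gluing: one must guarantee that the $G$-gadget and the retained part of $D_0$ interact only through their shared boundary constants, so that in the no-triangle case no homomorphism can cheat by rerouting part of the core through the null forest or by identifying core variables. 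Checking that Brault-Baron's gadget stays robust against the additional unary facts and null-trees contributed by the chase is the crux of the argument.
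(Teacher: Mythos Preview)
Your high-level plan is sound, and the observation that the three notions of answer coincide on a wildcard-free candidate tuple is correct (the paper achieves the same effect by first reducing to a Boolean OMQ, where the empty tuple is the only candidate). The gap is in the construction of $D_G$.

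Retaining $D_0$ and ``replacing the constant realization of the core by the $G$-gadget'' does not work as stated. First, you do not say which facts are removed; any surviving $R_i$-fact in the $D_0$ part lets the cycle be satisfied without touching the gadget, and soundness fails. Second---and this is the real obstacle---completeness requires every triangle vertex $v\in V(G)$ to behave, \emph{with respect to the ontology}, like the constant $h_0(y_i)$ it replaces. If some atom adjacent to the cycle, say $S(y_i,z)$, has $h_0(z)$ a null, then $S$ is realized only through the chase, and after moving $y_i$ to $v$ you need the chase over $D_G$ to grow the same tree below $v$; that in turn requires $v$ to satisfy the same TGD bodies that $h_0(y_i)$ did. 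Nothing in ``gluing along shared constants'' provides this. The paper's device is not to keep $D_0$ at all but to build a fresh database with the \emph{completeness property}: every constant carries $A(c)$ for each unary $A\in\Sbf$ and has both an incoming and an outgoing $R$-edge for each binary $R\in\Sbf$. This forces $(D_0,c)\preceq(D_G,v)$ for all $c\in\mn{adom}(D_0)$ and all $v$, and the simulation lemma for $\class{ELI}$ (Lemma~\ref{lem:sim}) then transports every ELIQ-shaped subquery of $q$ from $h_0(y_i)$ to $v$. The cycle relations $R_4,\dots,R_n$ are interpreted as the identity to collapse the $n$-cycle to a triangle, and the non-cycle binary relations are interpreted essentially as the full relation, so that the completeness property is preserved while any $R_1,R_2,R_3$-triangle among three distinct constants is still forced to lie inside~$V$. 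You correctly name this robustness check as the crux, but you do not supply the mechanism that makes it go through.
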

The proof of Theorem~\ref{lemma:single-testing-lower-bound} is an
adaptation of the construction given in
\cite{BraultBaron,berkholz-enum-tutorial} where no ontologies are
considered. The challenge is to deal with the ontology and the fact
that the ontology may contain relation symbols that are not admitted
in the database. We address this by % using non-emptiness,
% and
modifying
the database construction from
\cite{BraultBaron,berkholz-enum-tutorial} so that every constant $c$
comes with fact $A(c)$ for every unary relation symbol $A\in \Sbf$ and
has an incoming and an outgoing $R$-edge for every binary relation
symbol $R \in \Sbf$. Informally,
this ensures that everything that could possibly be implied by the
ontology is indeed implied. Self-join freeness is important for this
approach to work.

While it would be desirable to replace $\class{ELI}$ with $\class{G}$
in Theorem~\ref{lemma:single-testing-lower-bound}, this seems hard to
achieve as it would also allow us to remove `self-join free' from
that theorem.  Even in the case without ontologies, it is currently
not known whether this is possible. 
\begin{example}
  \label{ex:sjf}
  Let $Q(\bar x)=(\Omc,\Sbf,q) \in (\class{G},\class{CQ})$ and let
  $Q'=(\Omc',\Sbf,q')$ be the OMQ that can be obtained from $Q$
  as follows: consider every atom $R(\bar z)$ in $q$, replace it
  with $R_{\bar z}(\bar z)$ where $R_{\bar z}$ is a fresh relation
  symbol of the same arity as $R$, and add to \Omc the TGDs
  $$
  R(\bar x) \rightarrow R_{\bar z}(\bar x) \text{ and }
  R_{\bar z}(\bar x) \rightarrow R(\bar x)
  $$
  where $\bar x$ is a tuple
  of $\mn{ar}(R)$ distinct variables. Then $Q \equiv Q'$, and $Q'$
  is self-join free. Moreover, $Q'$ is weakly acyclic if and only
  if $Q$ is.
\end{example}
More examples regarding Theorem~\ref{lemma:single-testing-lower-bound}
are given in the
\ifbool{arxive}{
  appendix. }
{
  long version.
}
{%\color{orange}
  We close with
  noting that the prerequisites given in Theorem~\ref{thm:singlelin}
  for the case of minimal partial answers cannot easily be relaxed.
\begin{restatable}{theorem}{anotherpartiallower}
  \label{thm:anotherpartiallower}
 (1) There is a weakly acyclic OMQ $Q \in (\class{ELI},\class{CQ})$ for
which single-testing minimal partial answers is not in linear
time unless the triangle conjecture fails and (2)~an %(2)~There is an
acyclic
  OMQ $Q \in (\class{G},\class{CQ})$ for which single-testing minimal
  partial answers with multi-wildcards is not in linear time
  unless the triangle conjecture fails.
\end{restatable}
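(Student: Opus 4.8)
The plan is to prove both statements by reductions from triangle detection, extending the database construction behind Theorem~\ref{lemma:single-testing-lower-bound}. Given an undirected graph $G=(V,E)$, I would build in time $O(||G||)$ a database $D_G$ together with a single (multi\=/)wildcard candidate $\bar c$ such that $\bar c$ is a \emph{minimal} partial answer to the fixed OMQ on $D_G$ if and only if $G$ is triangle\=/free. Since $\bar c$ and the OMQ have constant size, a linear\=/time single\=/test would then decide triangle\=/freeness in linear time, contradicting the triangle conjecture. As in the proof of Theorem~\ref{lemma:single-testing-lower-bound}, I would reuse the self\=/join\=/free, fully saturated database of \cite{BraultBaron,berkholz-enum-tutorial}, in which every constant carries all unary facts of \Sbf and has an incoming and an outgoing edge for every binary symbol of \Sbf, so that all consequences of the ontology are materialised harmlessly and unintended homomorphisms are controlled.

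For part~(1), I would pick an OMQ from $(\class{ELI},\class{CQ})$ whose CQ is weakly acyclic but \emph{not} acyclic, so that point~(2) of Theorem~\ref{thm:singlelin} does not apply, while point~(1) still guarantees that \emph{complete}\=/answer testing is linear; hence the hardness must come purely from the minimality semantics. The \ELI ontology is used only to attach, at a distinguished constant, a tree\=/shaped null witness that makes the chosen candidate $\bar c$ (which carries a wildcard) a partial answer on every $D_G$. The crux is that testing minimality of $\bar c$ amounts to deciding whether some wildcard of $\bar c$ can be replaced by a constant, and I would engineer the query so that this improvement\=/existence question is, over the saturated $D_G$, equivalent to ``does $G$ contain a triangle?''. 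Here the non\=/acyclicity of the CQ is precisely what makes the relevant improvement query genuinely cyclic, so that no Yannakakis\=/style linear evaluation is available.

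For part~(2), I would instead keep the CQ \emph{acyclic} and push all cyclicity into the ontology, now exploiting full guarded TGDs. Taking the guarded $\mn{OfficeMate}$ rule used earlier as a template, I would employ a guarded TGD whose head is a small non\=/tree gadget (a shared, or cyclically linked, null) -- something $\class{ELI}$ cannot express -- so that in $\mn{ch}^q_\Omc(D_G)$ the coincidences among the generated nulls mirror adjacency in $G$. The candidate $\bar c$ is then a multi\=/wildcard tuple that is a partial answer through this gadget, and I would arrange matters so that deciding whether $\bar c$ is minimal -- i.e.\ whether three pairwise wildcard positions can be simultaneously merged and/or grounded -- holds exactly when three vertices of $G$ are pairwise adjacent, i.e.\ when $G$ has a triangle. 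Because the cyclic structure lives inside the null\=/gadget and not in the CQ, the CQ stays acyclic, so this construction simultaneously witnesses that $\class{ELI}$ cannot be replaced by $\class{G}$ in point~(3) of Theorem~\ref{thm:singlelin}.

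The main obstacle, in both parts, is the gadget design together with the correctness of the reduction: I must ensure that $\bar c$ is \emph{always} a partial answer, so that the test is non\=/trivial, yet \emph{minimal} exactly when $G$ is triangle\=/free, and that no \emph{spurious} improvement of $\bar c$ exists beyond the intended triangle encoding. For part~(1) the genuine difficulty is to extract triangle\=/hardness from a merely weakly acyclic CQ even though the $\class{ELI}$\=/chase is tree\=/shaped (so that no wildcard position can lie on a cycle of the chase); the argument must therefore locate the cyclic behaviour in the \emph{re\=/quantification} performed by the minimality test rather than in the chase itself. For part~(2) the difficulty is dual: the cyclicity must be confined to the guarded null\=/gadget so that the CQ remains acyclic, while the multi\=/wildcard merging/grounding decision is still as hard as triangle detection. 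In both cases ruling out unintended homomorphisms is what forces the saturated, self\=/join\=/free database and mirrors the delicate case analysis in the proof of Theorem~\ref{lemma:single-testing-lower-bound}.
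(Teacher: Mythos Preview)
Your high-level plan is right: reduce from triangle detection so that a single fixed (multi\=/)wildcard candidate is always a partial answer and is \emph{minimal} exactly when the input graph is triangle\=/free. The paper does exactly this. Where you diverge is in the machinery you propose to invoke.

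You reach for the saturated, self\=/join\=/free database from the proof of Theorem~\ref{lemma:single-testing-lower-bound}. The paper does \emph{not} use that machinery here at all; it is much more direct. In both parts the candidate is essentially the all\=/wildcard tuple, so ``is it a partial answer?'' is trivially yes (the ontology's null structure witnesses it), and the entire content of the minimality test collapses to ``does the query have a better answer using database constants?'', i.e.\ to evaluating a fixed cyclic Boolean CQ on $D_G$, which is triangle detection. No saturation is needed because the ontology side is only used to \emph{guarantee} a partial answer, not to control homomorphisms into the database part.

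Concretely, for part~(2) the paper's CQ is simply three \emph{disjoint} $R$-atoms
\[
q(x_1,x'_1,x_2,x'_2,x_3,x'_3)=R(x_1,x'_1)\wedge R(x_2,x'_2)\wedge R(x_3,x'_3),
\]
which is trivially acyclic, and a guarded TGD produces a triangle of nulls. The candidate is $(\ast_1,\ast_2,\ast_2,\ast_3,\ast_3,\ast_1)$: the triangle is encoded entirely in the wildcard identification pattern, not in the CQ, and minimality fails iff $D_G$ contains an $R$-triangle. This is simpler and sharper than your ``OfficeMate-style'' gadget and your phrasing of minimality as ``whether positions can be simultaneously merged'': the candidate is \emph{already} fully merged; what is tested is whether any wildcard can be \emph{grounded}.

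For part~(1) the paper first does the easy $\class{G}$ case with the triangle CQ $R\{x,y\}\wedge R\{y,z\}\wedge R\{z,x\}$ and the all\=/wildcard candidate. For $\class{ELI}$ you correctly identify the obstacle (the chase is tree\=/shaped), but the fix is not the saturated database: the paper instead uses a directed $4$-cycle CQ with unary labels $A,B$ and a more careful encoding $D_G$ over $V\times[2]$ so that (i)~the tree\=/shaped \ELI\ chase already makes $(\ast,\ast,\ast,\ast)$ a partial answer, and (ii)~$q(D_G)\neq\emptyset$ iff $G$ has a triangle. So the cyclic hardness sits in the CQ (which is weakly acyclic since all variables are answer variables), not in the chase, and the argument is a direct database construction rather than the Theorem~\ref{lemma:single-testing-lower-bound} template.
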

%
% The OMQ used to prove the multi-wildcard part of
% Theorem~\ref{thm:anotherpartiallower} is in fact remarkably simple,
% with the CQ in it taking the form $q(x_1,x'_1,\dots,x_3,x'_3) = \bigwedge_{1 \leq i \leq 3} R(x_i,x'_i)$.
% }
}

% In the main body of the paper, we have defined minimal partial answers
% purely semantically. In view of Lemma~\ref{pro:chase}, it should not
% be surprising that they can equivalently be defined in terms of the
% chase.
% %
% Let $q(\bar x)$ be a CQ, $D$ a database, and
% $N \subseteq \mn{adom}(D)$ a set of constants, the \emph{nulls} in
% $D$. For an answer $\bar a \in q(D)$, we use $\bar a^\ast_N$ to denote
% the (unique) wildcard tuple for $D$ obtained by replacing all
% constants from $N$ with `$\ast$'. We call such an $\bar a^\ast_N$ a
% \emph{partial answer} to $q$ on $D$ w.r.t.\ $N$ and say that it is a
% \emph{minimal} partial answer if there is no $\bar b \in q(D)$ with
% $\bar b^\ast_N \prec \bar a^\ast_N$. We use $q(D)^{\ast}_N$ to denote
% the set of minimal partial answers to $q$ on $D$ w.r.t.\
% $N$. Similarly, we use $\bar a^{\Wmc}_N$ to denote the (unique)
% multi-wildcard tuple for $D$ obtained by consistently replacing
% constants from $N$ with wildcards using exactly a prefix of the
% ordered set $\Wmc = \{ \ast_1, \ast_2,\dots \}$ and respecting the
% order of variables in $\bar x$, that is, if the first occurrence of
% $x$ in $\bar x$ is before the first occurrence of $x'$, $x$ is
% replaced with $\ast_i$, and $x'$ is replaced with $\ast_j$, then
% $i < j$. We then define \emph{minimal partial answer with
%   multi-wildcards} to $q$ on $D$ w.r.t.\ $N$, denoted $q(D)^{\Wmc}_N$,
% in the expected way. The following is an immediate consequence of the
% definitions.

\section{Enumeration and All-Testing: Complete Answers}
\label{sect:enumallcomplete}

We consider the limits of enumeration and all-testing of complete
answers with constant delay for the OMQ languages
$(\class{G},\class{CQ})$ and $(\class{ELI},\class{CQ})$. While
enumeration is linked to the combination of acyclicity and free\=/connex acyclicity, we
link all-testing to free\=/connex acyclicity only. In the lower
bounds, we also consider minimal partial answers and minimal partial
answers with multiple wildcards.  We start with the upper bounds.
%\pagebreak
% 
\begin{restatable}{theorem}{thmupperG}
  \label{thm:upperG}
  In $(\class{G},\class{CQ})$,
  \begin{enumerate}

  \item enumerating complete answers is in \cdlin for OMQs that
    are acyclic and free\=/connex acyclic;

  \item all-testing complete answers is in \cdlin for OMQs that
    are free\=/connex acyclic.

%    weakly acyclic and have no bad path.

\end{enumerate}
\end{restatable}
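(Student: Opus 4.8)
The plan is to reduce both statements to the corresponding results for plain CQs without ontologies, using the query-directed chase as the bridge. Given $Q(\bar x)=(\Omc,\Sbf,q) \in (\class{G},\class{CQ})$ and an $\Sbf$-database $D$, I would first compute $\mn{ch}^q_\Omc(D)$, which by Proposition~\ref{prop:chaseinlineartime} takes time linear in $||D||$. By Lemma~\ref{prop:restrictedchaseworks}, $Q(D)=q(\mn{ch}^q_\Omc(D)) \cap \mn{adom}(D)^{|\bar x|}$, so after this preprocessing both enumeration and all-testing of complete answers to $Q$ on $D$ amount to enumeration and all-testing of the plain CQ $q$ on the finite database $\mn{ch}^q_\Omc(D)$, the only caveat being that answer variables must be bound to constants of $D$ rather than to the nulls freshly introduced by the chase. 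Since $||\mn{ch}^q_\Omc(D)||$ is linear in $||D||$ (with a constant depending only on $||Q||$), any linear-preprocessing/constant-delay (resp.\ constant-test) procedure for $q$ transfers to $Q$.

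For Point~(1), I would invoke the landmark enumeration result for acyclic and free-connex acyclic CQs~\cite{bagan-enum-cdlin}. The only issue is to discard the spurious answers that bind some answer variable to a null. Rather than filtering during enumeration---which risks breaking the constant-delay guarantee---I would build the restriction into the query: extend $\mn{ch}^q_\Omc(D)$ with a fresh unary relation $\mn{Dom}$ interpreted as $\mn{adom}(D)$ (computable in linear time) and let $q'$ be $q$ with an extra atom $\mn{Dom}(x)$ for every answer variable $x$. Then $q'(\mn{ch}^q_\Omc(D) \cup \{\mn{Dom}(c) \mid c \in \mn{adom}(D)\}) = Q(D)$. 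Adding unary atoms on answer variables preserves both acyclicity and free-connex acyclicity (each $\mn{Dom}(x)$ can be hung as a leaf next to an atom containing $x$, resp.\ next to the virtual guard atom $g(\bar x)$), so $q'$ remains acyclic and free-connex acyclic and the algorithm of~\cite{bagan-enum-cdlin} enumerates $Q(D)$ in \cdlin.

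For Point~(2), note that an all-testing candidate is already a tuple over $\mn{adom}(D)$, so $\bar a \in Q(D)$ iff $\bar a \in q(\mn{ch}^q_\Omc(D))$ and no restriction predicate is needed. It thus suffices to show that all-testing a free-connex acyclic CQ $q$ over a database $D'$ is in \cdlin, the subtle point being that $q$ need not be acyclic. Here I would use the structure of free-connex acyclicity directly. Fix a join tree $T^+$ of $q^+ = q \cup \{g(\bar x)\}$ and delete the node $g(\bar x)$; the real atoms of $q$ split into connected components $q_1,\dots,q_k$, each inheriting a join tree and hence acyclic. Two observations drive the algorithm. First, every quantified variable lies in exactly one component, since a variable occurring in two components would, by the connectedness condition, have to occur in $g(\bar x)$ and hence be an answer variable. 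Second, if $g(\bar x)$ was attached to $q_i$ via the atom $\alpha_i$, then, again by connectedness, every answer variable occurring in $q_i$ already occurs in $\alpha_i$. Consequently $\bar a \in q(D')$ iff, for each $i$, the restriction of $\bar a$ to $\bar x_i := \bar x \cap \mn{var}(q_i)$ extends to a homomorphism of $q_i$ into $D'$. In preprocessing I would, for each $i$, run a bottom-up Yannakakis semijoin reduction of the acyclic $q_i$ toward the root $\alpha_i$ in linear time~\cite{yannakakis-algotrithm}, take the surviving tuples at $\alpha_i$, project them onto $\bar x_i$ (all of $\bar x_i$ sits inside $\alpha_i$), and store the projection in a constant-time lookup table. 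A test for $\bar a$ is then $k=O(1)$ lookups, which is constant time and constant memory, giving \cdlin.

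The routine parts are the two closure observations that adding unary atoms preserves (free-connex) acyclicity and that semijoin reduction plus projection is linear. The main obstacle is Point~(2) for queries that are free-connex acyclic but not acyclic: there $q$ has no join tree of its own, so Yannakakis cannot be run on $q$ directly, and the whole argument hinges on the decomposition of a join tree of $q^+$ into acyclic components with single-atom, answer-only interfaces. Verifying that this decomposition is correct---in particular that the per-component tests are independent and jointly equivalent to $\bar a \in q(D')$---is the technical heart of the proof; the transfer through the query-directed chase and the reduction to~\cite{bagan-enum-cdlin} for Point~(1) are then comparatively direct.
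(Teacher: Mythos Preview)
Your proposal is correct and follows essentially the same route as the paper. For Point~(1) you do exactly what the paper does: compute the query-directed chase, add a unary ``domain'' predicate (the paper calls it $P_{\mn{db}}$) on every answer variable to filter out nulls, observe that acyclicity and free-connex acyclicity are preserved, and invoke~\cite{bagan-enum-cdlin}. For Point~(2) you use the same decomposition of the join tree of $q^+$ minus the guard atom into acyclic components $q_1,\dots,q_k$ and the same observation that distinct components share only answer variables; the only cosmetic difference is that the paper notes each $q_i$ is itself free-connex acyclic and invokes an existing all-testing black box per component, whereas you spell out the Yannakakis-plus-projection-plus-lookup implementation directly---these amount to the same thing, since your observation that all answer variables of $q_i$ sit in the interface atom $\alpha_i$ is precisely what makes $q_i$ free-connex acyclic.
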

Recall that for a CQ $q$ to be free-connex acyclic, we do \emph{not}
require~$q$ to be acyclic. Thus, the requirement for all-testing in
Theorem~\ref{thm:upperG} is significantly weaker than that for
enumeration and embraces, for example, every OMQ in which the CQ is
full, that is, has no quantified variables. The proof of Point~(1) of
Theorem~\ref{thm:upperG} uses the query-directed
chase also employed in Section~\ref{sect:singletesting} and a~reduction
to the \cdlin enumeration of answers to CQs (without ontologies)
that are acyclic and free\=/connex acyclic
\cite{bagan-enum-cdlin}. Point~(2) can be proved in the same way using
the following observation which, to our knowledge, is novel.
\begin{restatable}{proposition}{propallTestingCompleteUpper}
  \label{prop:allTestingCompleteUpper}
  For CQs (without ontologies) that are free-connex acyclic, 
  all-testing is in \cdlin. 
%    
  % For CQ (without ontologies) $q(\bar{x})$, if $\hat{q}(\bar{x}) \leftarrow q(\bar{x}) \land R(\bar{x})$, 
  % where $R$ is a fresh relation symbol, is acyclic 
  % then all-testing for $q$ is in CD$\circ$Lin. 
\end{restatable}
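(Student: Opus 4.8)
The plan is to exploit the acyclicity of the query $q^{+} := q \wedge R(\bar x)$, where $R$ is a fresh relation symbol of arity $|\bar x|$ guarding the answer variables; this query is acyclic precisely because $q$ is free\=/connex acyclic. Fix a join tree $T^{+}$ of $q^{+}$ and root it at the node $R(\bar x)$. The key structural observation is that, since $R(\bar x)$ contains \emph{only} answer variables, the connectedness condition of a join tree forces any variable shared between two distinct subtrees hanging off the root to occur in $R(\bar x)$, hence to be an answer variable. Consequently, once the answer variables are fixed to the values of a candidate $\bar a$, the homomorphism problems for the different child subtrees of the root become independent of one another.

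During preprocessing I would first build, for every relation symbol occurring in $q$, a hash index on the corresponding facts of $D$, which takes time linear in $||D||$. I would then run the bottom\=/up (child\=/to\=/parent) semi\=/join phase of Yannakakis' algorithm~\cite{yannakakis-algotrithm} on $q$ along $T^{+}$, ignoring the root $R(\bar x)$; this is the standard full\=/reducer pass and runs in linear time because there are only constantly many atoms and each semi\=/join of two relations of size $O(||D||)$ costs $O(||D||)$. After this pass, every tuple surviving in the reduced relation of a node extends to a homomorphism of the whole subtree rooted at that node, the usual invariant of the bottom\=/up phase, valid here because each subtree of the acyclic $T^{+}$ is itself a join tree of the corresponding sub\=/CQ. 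For each child $\beta$ of the root I would finally compute and store, again in a hash table, the projection $S_\beta$ of $\beta$'s reduced relation onto the answer variables $\bar x \cap \mn{var}(\beta)$ that it shares with the root.

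To test a candidate $\bar a$ I would, for each child $\beta$ of the root, look up whether the restriction of $\bar a$ to the positions $\bar x \cap \mn{var}(\beta)$ lies in $S_\beta$, and answer `yes' exactly if all these lookups succeed. Correctness follows from the independence observation above: $\bar a \in q(D)$ holds iff each child subtree admits a homomorphism sending its shared answer variables to the corresponding values of $\bar a$, and by the full\=/reducer invariant this is precisely membership in $S_\beta$; distinct child subtrees interact only through answer variables, whose values are already pinned down by $\bar a$, so the per\=/subtree conditions may be checked separately. Since the query is fixed, there are constantly many children, each test performs $O(1)$ constant\=/time hash lookups and uses only $O(1)$ memory on top of the precomputed tables, giving all\=/testing in \cdlin. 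The Boolean case $\bar x = \emptyset$ is the degenerate instance where $q^{+}$ forces $q$ to be acyclic; here one precomputes the single truth value in preprocessing and returns it in constant time.

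The main obstacle to get right is the correctness argument, specifically that free\=/connex acyclicity --- rather than plain acyclicity of $q$ --- is exactly what guarantees that the child subtrees of the root share only answer variables, so that the bottom\=/up reduction can be ``cut'' at the root into independent, separately testable pieces. Verifying the full\=/reducer invariant on each subtree and checking that every answer variable is covered by at least one child (which holds as long as answer variables occur in the body of $q$, via the connectedness of their occurrences through the root) are the routine points that complete the argument.
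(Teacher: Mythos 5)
Your proof is correct and follows essentially the same route as the paper's: both cut the join tree of $q^{+}$ at the guard atom over $\bar x$, observe that the resulting child subtrees pairwise share only answer variables, and test each component independently against the candidate tuple. The only difference is that you inline the per-component test via Yannakakis' semi-join reduction and a hashed projection, whereas the paper notes that each component is acyclic and free-connex acyclic and invokes the known \cdlin all-testing algorithm for such CQs as a black box (it also makes explicit the coherence check that $x_i = x_j$ implies $c_i = c_j$ for repeated answer variables, which your step of restricting $\bar a$ to the variables of a child implicitly requires).
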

To prove
Proposition~\ref{prop:allTestingCompleteUpper}, we decompose the given
CQ into CQs that are acyclic and free-connex acyclic, and then use
\cdlin all-testing algorithms for those component CQs in
parallel.   In the
\ifbool{arxive}{
  appendix, }
{
  long version,
}
we give a matching
 lower bound for self-join free CQs.

We next give lower bounds that partially match
Theorem~\ref{thm:upperG}, starting with the
requirement in Point~(1) of
Theorem~\ref{thm:upperG} that OMQs must be
acyclic. The following is a 
consequence of Theorem~\ref{lemma:single-testing-lower-bound}.
% As in Section~\ref{sect:singletesting}, we do
% not obtain a full dichotomy, paralleling the corresponding
% result for CQs without ontologies~\cite{BraultBaron}. Since the
% proofs are almost identical, we also cover minimal partial answers.
%
\begin{restatable}{theorem}{lemmaenumerationlowerbound} 
\label{lemma:enumeration-lower-bound} 
Let $Q \in (\class{ELI},\class{CQ})$ be non-empty, and self\=/join
free. If $Q$ is not acyclic, then enumerating complete answers to $Q$
is not in \dlc unless the triangle conjecture fails. The
same is true for minimal partial answers and for minimal partial answers
with multiple wildcards.
\end{restatable}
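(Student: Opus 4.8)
The plan is to reduce from single-testing, leveraging Theorem~\ref{lemma:single-testing-lower-bound} which already establishes a triangle-conjecture-conditional lower bound for linear-time single-testing of non-weakly-acyclic self-join-free OMQs from $(\class{ELI},\class{CQ})$. The key observation is that constant-delay enumeration with linear preprocessing is at least as strong as linear-time single-testing in the following sense: if we can enumerate $Q(D)$ in \dlc, then in particular we can produce the \emph{first} answer (or detect that there is none) in time $f(\|Q\|)\cdot O(\|D\|)$, since preprocessing is linear and the delay to the first output is constant.

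First I would make precise the relationship between acyclicity and weak acyclicity that lets me transfer the single-testing bound. Recall that acyclicity implies weak acyclicity but not conversely. So a query that is \emph{not} acyclic need not be \emph{not} weakly acyclic, and I cannot apply Theorem~\ref{lemma:single-testing-lower-bound} verbatim. The standard trick here (as in \cite{bagan-enum-cdlin,berkholz-enum-tutorial}) is to reduce single-testing of a \emph{weakly} acyclic but \emph{non-acyclic} CQ to Boolean enumeration: given the candidate answer $\bar c$, substitute the constants $\bar c$ for the answer variables of $Q$, obtaining an OMQ $Q_{\bar c}$ whose CQ is the result of that substitution. Since $Q$ is not acyclic, after freezing the answer variables the resulting (Boolean-ized) query inherits the cyclic structure on the quantified variables, so $Q_{\bar c}$ is a Boolean OMQ that is \emph{not} weakly acyclic (its CQ, with all variables now quantified, is not acyclic). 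For Boolean queries, enumeration in \dlc coincides with single-testing in linear time, as noted in the preliminaries. Thus a \dlc enumeration algorithm for $Q$ would yield a linear-time single-testing algorithm for the non-weakly-acyclic Boolean OMQ $Q_{\bar c}$, contradicting Theorem~\ref{lemma:single-testing-lower-bound}.

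The main obstacle is making this Boolean-ization preserve the hypotheses of Theorem~\ref{lemma:single-testing-lower-bound}: the reduced query must remain in $(\class{ELI},\class{CQ})$, must remain self-join free, must remain non-empty, and crucially its CQ must fail to be weakly acyclic. Substituting constants for answer variables keeps the ontology unchanged (so we stay in $\class{ELI}$) and does not introduce repeated relation symbols (so self-join freeness is preserved); non-emptiness transfers because $\bar c \in Q_{\bar c}(D)$ iff $\bar c$ extends to an answer of the original $Q$. The delicate point is the acyclicity bookkeeping: I must verify that if $q$ (the CQ of $Q$) is connected and not acyclic, then freezing the answer variables yields a Boolean CQ that is not acyclic, i.e.\ not weakly acyclic as a Boolean query. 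This requires that the cycle witnessing non-acyclicity of $q$ survives the substitution; since substitution only turns variables into constants and the acyclicity of a CQ is governed by its Gaifman/hypergraph structure, a cyclic obstruction among quantified variables is untouched, but one must rule out the degenerate case where the only obstruction passes through answer variables in a way that the freezing dissolves. Handling this cleanly may require a preliminary normalization (or invoking connectedness) to guarantee a witnessing cycle avoiding answer variables, or arguing via the semantic characterization that freezing answer variables is exactly the operation defining weak acyclicity.

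Finally, the extension to minimal partial answers and minimal partial answers with multiple wildcards follows the same template, since Theorem~\ref{lemma:single-testing-lower-bound} already covers both partial-answer semantics. Here I would note that for Boolean queries the three notions of answer (complete, partial, multi-wildcard) all collapse, so the reduction to the Boolean case is uniform; alternatively, one invokes that enumerating partial answers in \dlc produces a first partial answer within linear preprocessing plus constant delay, which single-tests membership. I expect the acyclicity-preservation step to be the one requiring genuine care, whereas the reduction from enumeration to single-testing and the transfer of the other syntactic hypotheses are routine.
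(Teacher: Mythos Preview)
Your high-level strategy (reduce to the single-testing lower bound of Theorem~\ref{lemma:single-testing-lower-bound} via a Boolean OMQ) is exactly right, and your remark that for Boolean queries the three answer semantics coincide handles the partial-answer variants. But the specific Boolean-ization you propose has two genuine problems.

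First, the reduction does not go through. Enumerating $Q(D)$ in \dlc\ lets you decide in linear time whether $Q(D)\neq\emptyset$; it does \emph{not} let you decide in linear time whether a \emph{specific} tuple $\bar c$ lies in $Q(D)$ (you might have to enumerate polynomially many answers before hitting $\bar c$). So an enumeration algorithm for $Q$ does not yield a linear-time single-testing algorithm for your $Q_{\bar c}$. The correct Boolean query is the one obtained by \emph{existentially quantifying} all answer variables rather than substituting constants: call it $Q'$. Then $()\in Q'(D)$ iff $Q(D)\neq\emptyset$, and this is precisely what enumeration decides in linear time (preprocessing plus wait for first output or end-of-enumeration).

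Second, your acyclicity worry is not a ``degenerate case'' but a genuine obstruction to your route. Substituting constants for the answer variables is exactly the transformation defining weak acyclicity, so $Q_{\bar c}$ is acyclic iff $Q$ is weakly acyclic. Since $Q$ can perfectly well be weakly acyclic without being acyclic (e.g.\ a triangle on three answer variables), your $Q_{\bar c}$ may be acyclic and Theorem~\ref{lemma:single-testing-lower-bound} does not apply. With the paper's $Q'$ this issue evaporates: quantifying answer variables leaves the set of atoms unchanged, so $Q'$ is acyclic iff $Q$ is; and for a Boolean CQ, weak acyclicity coincides with acyclicity (there are no answer variables to freeze). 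Hence $Q'$ is non-empty, self-join free, and not weakly acyclic, and Theorem~\ref{lemma:single-testing-lower-bound} finishes the proof directly.
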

In Theorem~\ref{lemma:enumeration-lower-bound}  and all other lower
bounds stated in this section, $\class{ELI}$ cannot easily be replaced
by $\class{G}$, see Example~\ref{ex:sjf}.

Staying with the requirements of Point~(1) of
Theorem~\ref{prop:allTestingCompleteUpper}, we next consider queries
that are acyclic, but not free\=/connex acyclic.  The lower bound that we
establish is conditional on an assumption regarding the problem of
Sparse Boolean matrix multiplication.  A Boolean $n \times n$ matrix
is a function $M:[n]^2 \rightarrow \{0,1\}$ where $[n]$ denotes the
set $\{1,\dots,n\}$.  The \emph{product} of two Boolean $n \times n$
matrices $M_1,M_2$ is the Boolean $n \times n$ matrix
$M_1M_2 := \sum_{c=1}^{n} M_1(a,c)\cdot M_2(c,b)$ where sum and
product are interpreted over the Boolean semiring. In (non-sparse)
\emph{Boolean matrix multiplication (BMM)}, one wants to compute
$M_1M_2$ given $M_1$ and $M_2$ as $n \times n$ arrays.  In
\emph{sparse Boolean matrix multiplication (spBMM)}, input and output
matrices $M$ are represented as lists of pairs $(a,b)$ with
$M(a,b)=1$. Our lower bound is conditional on the assumption that
spBMM is not possible in time $O(|M_1| + |M_2| + |M_1M_2|)$, that is,
in time linear in the size of the input and the output (represented as
lists). While it is not ruled out that such a running time can be
achieved, this would require dramatic progress in algorithm
theory. Informally, the conditioning on spBMM should be read as
`currently out of reach'.
\begin{restatable}{theorem}{thmlowerboundenumeli}
   \label{thm:lower-bound-enum-eli}
   Let $Q = (\Omc, \Sbf, q) \in (\class{ELI},\class{CQ})$ be acyclic,
   non\=/empty, self-join free, and connected. 
 If $Q$ is not
   free-connex acyclic, then enumerating complete answers to $Q$ is not in
   \dlc unless spBMM is possible in time
   $O(|M_1| + |M_2| + |M_1M_2|)$. The same is true for minimal
   partial answers and for minimal partial answers with multiple wildcards.
\end{restatable}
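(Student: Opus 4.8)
The plan is to reduce sparse Boolean matrix multiplication to enumerating the answers of $Q$. Since $Q$ is acyclic but not free\=/connex acyclic, I would first invoke the standard combinatorial characterization of this gap \cite{bagan-enum-cdlin,BraultBaron}: the CQ $q$ contains a \emph{free\=/path}, that is, a chordless path $x = z_0, z_1, \dots, z_k, z_{k+1} = y$ with $k \geq 1$ in the Gaifman graph of $q$ whose endpoints $x,y$ are answer variables and whose interior $z_1,\dots,z_k$ consists of quantified variables. Let $\alpha_0,\dots,\alpha_k$ be atoms witnessing the consecutive edges; by self\=/join freeness they carry pairwise distinct relation symbols, which is exactly what lets us encode the two input matrices into the two ends of the path independently.

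Given spBMM instances $M_1, M_2$, I would build in linear time an \Sbf\=/database $D = D_{M_1,M_2}$ as follows. Following the ontology\=/free construction, the end atom $\alpha_0$ is used to read off $M_1$ (a tuple populates the $x$\=/side iff $M_1(a,c)=1$) while the remainder of the path funnels, through a single existential contraction over $z_1,\dots,z_k$, the entries of $M_2$, so that a homomorphism of the free\=/path corresponds exactly to a witness $c$ with $M_1(a,c)=1 \wedge M_2(c,b)=1$. All answer variables other than $x,y$ are frozen to fixed dummy constants, and every atom of $q$ off the free\=/path is satisfied through a single dummy gadget; connectedness of $q$ is what guarantees that this off\=/path part can be satisfied consistently via one dummy element without detaching from the free\=/path or producing spurious answers, and this is precisely the point where connectedness (absent in the ontology\=/free lower bound) enters.

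The crux, and the main departure from the ontology\=/free setting, is controlling the chase. Here I would reuse the saturation idea from the proof of Theorem~\ref{lemma:single-testing-lower-bound}: augment every constant of $D$ with the fact $A(c)$ for each unary $A \in \Sbf$ and with an incoming and an outgoing $R$\=/edge for each binary $R \in \Sbf$, drawn from a fixed pool of dummies. This makes every constant look identical to the ontology, so the part of $\mn{ch}_\Omc(D)$ generated by existential quantifiers is uniform and predictable, and self\=/join freeness prevents those ontology\=/generated facts from supplying an alternative, shorter image of any free\=/path atom. Reasoning through the query\=/directed chase via Lemma~\ref{prop:restrictedchaseworks}, I would then verify that $Q(D)$, restricted to the coordinates $x,y$, is exactly the set of pairs $(a,b)$ with $(M_1M_2)(a,b)=1$, the remaining coordinates carrying the fixed dummies. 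I expect this verification to be the main obstacle: one must show that neither the saturation nor the ontology\=/introduced symbols create a homomorphism of $q$ yielding an answer outside the encoded product, nor destroy an intended one, and this is where non\=/emptiness, connectedness, and self\=/join freeness are jointly used.

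Finally, a \dlc enumeration algorithm for $Q$ would, after $O(||D||) = O(|M_1|+|M_2|)$ preprocessing, emit the answers with constant delay; reading the $x,y$ coordinates of each yields all of $M_1M_2$ in time $O(|M_1|+|M_2|+|M_1M_2|)$, contradicting the spBMM assumption. For the partial\=/answer variants, I would observe that every complete answer is automatically a minimal partial answer, being a maximally informative (wildcard\=/free) tuple, so each encoded product entry lies in $Q(D)^{\ast}$ and in $Q(D)^{\Wmc}$. It then suffices to argue that the construction keeps the number of \emph{additional} minimal partial answers (those carrying wildcards) within the $O(|M_1|+|M_2|+|M_1M_2|)$ budget, after which discarding any output tuple containing a wildcard, a constant\=/time test per answer, recovers $M_1M_2$ within the same bound.
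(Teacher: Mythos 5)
Your overall architecture matches the paper's: bad/free path, encoding $M_1$ and $M_2$ at the two ends with the interior contracted to a single witness, dummies for the off-path part, and the saturation trick from Theorem~\ref{lemma:single-testing-lower-bound} to tame the chase. But there is a genuine gap in the step where you conclude the time bound. An enumeration algorithm with constant delay outputs \emph{all} answers, so the reduction only runs in time $O(|M_1|+|M_2|+|M_1M_2|)$ if $|Q(D)|$ (and, for the other variants, $|Q(D)^\ast|$ and $|Q(D)^{\Wmc}|$) is itself bounded by $O(|M_1|+|M_2|+|M_1M_2|)$. You address this for the wildcard answers but for complete answers you rely on the claim that all answer variables other than $x,y$ are ``frozen to fixed dummy constants.'' That claim is in tension with the very saturation you need: to make every constant look identical to the ontology, every constant --- including the matrix indices --- must receive an incoming and an outgoing $R$-edge for each binary $R\in\Sbf$, and the auxiliary constants supplying these edges necessarily have $R$-edges into or out of all of $[n]$. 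Consequently an off-path answer variable can in general be mapped to matrix indices, not just to dummies, and without further argument the answer set could have size $\Theta(n\cdot|M_1M_2|)$ or worse, destroying the time bound. The paper closes exactly this hole with Lemma~\ref{lem:sizetechnical} (any homomorphism mapping two distinct answer variables into $[n]$ must map $\{x_1,x_2\}$ there, proved using the orientation of the join forest, self-join freeness, and the tree shape of the \ELI chase) and the counting Lemma~\ref{claim:lower-bound-lemma-eli-it-bound-answers}.

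Relatedly, you misplace the role of connectedness. Satisfying the off-path atoms consistently via dummies does not need connectedness; a disconnected query's components can be populated just as easily. Connectedness is needed precisely for the answer-counting argument above: it forces every answer variable to sit in the same oriented tree as $x_1,x_2$ (or be anchored to a constant), which is what makes Lemma~\ref{lem:sizetechnical} go through. Proposition~\ref{prop:connectedisneeded} shows this is not a technicality --- for disconnected $Q$ the answer set can be large enough that the enumeration algorithm gains the time to compute the hard join as a side effect, and the lower bound genuinely fails. Until you supply the counting argument for complete answers (and identify where connectedness enters it), the reduction does not yield the claimed spBMM bound.
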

There is a corresponding lower bound for CQs without ontologies, first
proved conditional on the assumption that Boolean $n \times n$
matrices cannot be multiplied in time $O(n^2)$ \cite{bagan-enum-cdlin}
and then improved to the condition on spBMM used in
Theorem~\ref{thm:lower-bound-enum-eli}
in~\cite{berkholz-enum-tutorial}. To prove
Theorem~\ref{thm:lower-bound-enum-eli}, we again have to deal with the
fact that the ontology may contain relation symbols that are not
admitted in the database. Here, this is done by first manipulating the
input matrices $M_1$ and $M_2$ in a suitable way. Note that we
require $Q$ to be connected while this is not a precondition in the
case without ontologies~\cite{berkholz-enum-tutorial}.
%
% because the ontology forces us to use more constants in the
% reduction than in the case without ontologies, and
% without connectedness this may result in too many answers to attain
% the desired running tume for spBMM.
% Connectedness is not required in
% the case without ontologies.
% In our case, however, it cannot easily be
% dropped as shown by
The following proposition shows that we cannot drop connectedness.
\begin{restatable}{proposition}{connectedisneeded}
  \label{prop:connectedisneeded}
  There is an OMQ $Q \in (\class{ELI},\class{CQ})$ that is acyclic,
  non\=/empty, self-join free, but neither free-connex acyclic nor
  connected, such that complete answers to $Q$ can be enumerated
  in \dlc. 
\end{restatable}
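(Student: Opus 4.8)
The plan is to exhibit one concrete OMQ and argue it has all the required properties, the only non-routine part being the enumeration algorithm. I would take the data schema $\Sbf=\{E,F\}$ with $E,F$ binary, the ontology $\Omc$ consisting, for each $\sigma\in\{E,F\}$, of the four TGDs $\sigma(x,y)\rightarrow C(x)$, $\sigma(x,y)\rightarrow C(y)$, $\sigma(x,y)\rightarrow C'(x)$, $\sigma(x,y)\rightarrow C'(y)$ with $C,C'$ fresh unary symbols, and the CQ
\[
  q(x_1,x_2,y_1,y_2)=E(x_1,z)\wedge F(z,x_2)\wedge C(y_1)\wedge C'(y_2),
\]
setting $Q=(\Omc,\Sbf,q)$. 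The verification of the syntactic properties is routine: each TGD has a single frontier variable, a guarded body, and a connected acyclic unary head, so $\Omc\in\class{ELI}$; $q$ is self\=/join free; its three connected components $\{x_1,z,x_2\}$, $\{y_1\}$, $\{y_2\}$ show it is not connected, and stringing them together gives a join tree, so it is acyclic; it is non\=/empty (e.g.\ $Q(\{E(a,b),F(b,c)\})\neq\emptyset$); and it is not free\=/connex acyclic, since adding the guard $R(x_1,x_2,y_1,y_2)$ and running ear removal first absorbs $C(y_1),C'(y_2)$ into $R$ and deletes $y_1,y_2$, leaving $E(x_1,z),F(z,x_2),R(x_1,x_2)$, the standard cyclic triple.

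The first key observation is that, because $\Sbf$ contains only $E$ and $F$, every constant of any \Sbf\=/database $D$ occurs in some $E$- or $F$-fact and hence $D\cup\Omc\models C(c)$ and $D\cup\Omc\models C'(c)$; moreover the chase adds no new $E$- or $F$-facts. Since $q$ is a disjoint union of its components, Lemma~\ref{prop:restrictedchaseworks} lets me factor the complete answers as a Cartesian product
\[
  Q(D)=Q_1(D)\times\mn{adom}(D)\times\mn{adom}(D),
\]
where $Q_1(D)=\{(c_1,c_2):\exists\,z\;E(c_1,z),F(z,c_2)\in D\}$ is exactly the (hard) Boolean product relation over constants. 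The point is that the output therefore has size $|Q_1(D)|\cdot|\mn{adom}(D)|^2$, and this quadratic second factor is the source of all the slack I will exploit.

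The enumeration algorithm I would give works as follows. In preprocessing (linear by Proposition~\ref{prop:chaseinlineartime}) I compute $\mn{ch}^q_\Omc(D)$, build adjacency structures for $E,F$ together with an array of $\mn{adom}(D)$, and detect in time $O(\|D\|)$ whether $Q_1(D)=\emptyset$. In the enumeration phase a producer generates the \emph{distinct} pairs of $Q_1(D)$ by scanning, for each $c_1$, the walks $c_1\to z\to c_2$ and discarding already\=/seen pairs via a reusable Boolean array, while a consumer expands each freshly produced pair $(c_1,c_2)$ into the $|\mn{adom}(D)|^2$ tuples $(c_1,c_2,b_1,b_2)$. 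Correctness and non\=/repetition are immediate from the product structure, and since \dlc permits polynomial memory in the enumeration phase the seen\=/array is unproblematic.

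The hard part will be establishing \emph{worst\=/case} constant delay, i.e.\ bounding the producer's work so the consumer never starves. The crucial inequality is that the number of walks out of a fixed $c_1$ is at most $|\mn{adom}(D)|$ times the number $k_{c_1}$ of distinct pairs it contributes: $\sum_{z:E(c_1,z)}\deg_F(z)\le|\mn{adom}(D)|\cdot k_{c_1}$, because $|N_E(c_1)|\le|\mn{adom}(D)|$ and $\max_{z\in N_E(c_1)}\deg_F(z)\le k_{c_1}$. Thus the producer's total work for $c_1$ is at most $|\mn{adom}(D)|\cdot k_{c_1}$, whereas the consumer emits $k_{c_1}\cdot|\mn{adom}(D)|^2$ tuples for it, so the per\=/pair search cost is dominated by a full factor of $|\mn{adom}(D)|$. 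Running the producer at a fixed constant rate relative to consumer outputs therefore keeps a non\=/empty buffer of ready pairs at every step, yielding worst\=/case constant delay. This gives \dlc enumeration and shows connectedness cannot be dropped from Theorem~\ref{thm:lower-bound-enum-eli}: the very same query \emph{without} the ontology (taking $C,C'$ as data predicates) may have a tiny second factor on precisely the instances where $Q_1$ is expensive, which is exactly why the lower bound survives there but fails here.
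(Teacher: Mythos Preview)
Your construction is correct: the OMQ satisfies all the stated properties, the factorisation $Q(D)=Q_1(D)\times\mn{adom}(D)\times\mn{adom}(D)$ is right, and your key inequality $\sum_{z:E(c_1,z)}\deg_F(z)\le|\mn{adom}(D)|\cdot k_{c_1}$ (via $|N_E(c_1)|\le|\mn{adom}(D)|$ and $\deg_F(z)\le k_{c_1}$ for $z\in N_E(c_1)$) is exactly what makes the producer-consumer scheme go through. The step from this amortised bound to worst-case constant delay could be spelled out more carefully---your walk count ignores the cost of scanning $z$'s with $\deg_F(z)=0$ and $c_1$'s with $k_{c_1}=0$, and ``keeps a non-empty buffer at every step'' is asserted rather than proved---but these overheads total $O(|\mn{adom}(D)|+|E|)\le O(|\mn{adom}(D)|^2)$, so the slack from a single pair absorbs them and the argument closes.

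The paper takes a blunter and simpler route. Its OMQ carries a \emph{cubic} disconnected factor: the hard component is $L(x_1,y_1)\wedge R(y_1,z_1)\wedge A_1(x_1)\wedge B_1(y_1)\wedge C_1(z_1)$ with the ontology copying $A_1,B_1,C_1$ to fresh $A_2,B_2,C_2$ that form the loose part $A_2(x_2)\wedge B_2(y_2)\wedge C_2(z_2)$. One answer found in preprocessing then yields $|A_1|\cdot|B_1|\cdot|C_1|$ outputs, which is \emph{exactly} the cost of brute-forcing every triple $(a,b,c)\in A_1\times B_1\times C_1$ to materialise $p(D)$ outright---no interleaving, no inequality, no buffer analysis. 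Your approach buys a smaller witness (arity~4 versus~5, quadratic versus cubic slack) at the price of the more delicate producer-consumer accounting; the paper buys a one-line enumeration argument by engineering the slack to coincide with the search space. Your closing remark about why the ontology is essential (it forces the slack factor to be large on every instance, whereas as data predicates $C,C'$ could be tiny precisely where $Q_1$ is hard) is exactly the point the construction is meant to make.
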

We next address the requirement in Point~(2) of
Theorem~\ref{prop:allTestingCompleteUpper} that OMQs must be
free-connex acyclic.
\begin{restatable}{theorem}{thmalltestinglower}
  \label{thm:alltestinglower}
  Let $Q \in (\class{ELI},\class{CQ})$ be non-empty and self\=/join
  free.  If $Q$ is not free-connex acyclic, then all-testing complete
  answers for $Q$ is not in linear time unless the triangle conjecture
  fails or Boolean $n \times n$ matrices can be multiplied
  in time $O(n^2)$. {%\color{red}
    The same is true for minimal partial
    answers and minimal partial answers with multiple wildcards.}
\end{restatable}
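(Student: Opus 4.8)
The plan is to split on weak acyclicity. Since free-connex acyclicity implies weak acyclicity, any $Q$ that is not free-connex acyclic is either (i)~not weakly acyclic, or (ii)~weakly acyclic but not free-connex acyclic; these two subcases will be responsible, respectively, for the two escape clauses ``the triangle conjecture fails'' and ``Boolean $n\times n$ matrices can be multiplied in time $O(n^2)$''. Subcase~(i) is immediate: all-testing subsumes single-testing, since one may run the preprocessing phase and then perform a single test, so if $Q$ is not weakly acyclic then Theorem~\ref{lemma:single-testing-lower-bound} already shows that all-testing is not in linear time unless the triangle conjecture fails, and that theorem covers complete answers as well as both notions of minimal partial answers. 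Hence nothing further is required there, and the triangle alternative is accounted for.

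Subcase~(ii) is the heart of the argument, and I would reduce (non-sparse) BMM to it. The point is that a query which is weakly acyclic but not free-connex acyclic must, by the standard structural characterization of free-connex acyclicity, contain a \emph{semijoin bottleneck}: two answer variables $x_i,x_j$ that are separated by a quantified variable in every join tree of the guarded extension $q\cup\{R(\bar x)\}$. Given $n\times n$ Boolean matrices $M_1,M_2$, I would encode $M_1$ and $M_2$ into the two atoms realizing this bottleneck, set any intermediate atoms on quantified variables to the diagonal relation, and set all remaining atoms, in particular a possible atom on $\{x_i,x_j\}$ (as arises in the cyclic-but-weakly-acyclic case), to the full relation $[n]\times[n]$; self-join freeness is exactly what lets me populate these relations independently. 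The all-constant candidate with $x_i\mapsto a$ and $x_j\mapsto b$ is then an answer to $q$ precisely when $(M_1M_2)(a,b)=1$. All-testing the $n^2$ all-constant candidates reconstructs the entire product $M_1M_2$; since the schema is fixed, preprocessing costs $O(\|D\|)=O(n^2)$ and each of the $n^2$ tests costs $O(1)$, so linear-time all-testing would yield BMM in time $O(n^2)$. It is precisely the need to probe all $n^2$ output cells that forces the non-sparse condition here, in contrast with the output-sensitive spBMM assumption of Theorem~\ref{thm:lower-bound-enum-eli}.

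The main obstacle, as in Theorems~\ref{lemma:single-testing-lower-bound} and~\ref{thm:lower-bound-enum-eli}, is the ontology: the bottleneck atoms may use relation symbols introduced by $\Omc$ rather than symbols of $\Sbf$, and the existential rules of $\Omc$ may generate nulls. I would handle this exactly as in those proofs, by first manipulating $M_1,M_2$ and then saturating the database so that every constant carries $A(c)$ for each unary $A\in\Sbf$ and an incoming and an outgoing $R$-edge for each binary $R\in\Sbf$; this renders the ontology ``transparent'', in that everything it could derive is derived, while the matrix structure survives on the $\Sbf$-facts that drive the critical homomorphisms. The delicate point to verify is that this saturation does not create spurious homomorphisms turning a non-product pair into an answer, and here self-join freeness is essential. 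Finally, for the partial-answer statements I would only ever test all-constant candidates: such a tuple admits no strictly $\prec$-smaller wildcard or multi-wildcard tuple, so, since $Q(D)\subseteq Q(D)^{\ast}$ and $Q(D)\subseteq Q(D)^{\Wmc}$, it lies in $Q(D)^{\ast}$ (resp.\ $Q(D)^{\Wmc}$) if and only if it lies in $Q(D)$; the same reduction therefore yields the lower bound for minimal partial answers and for minimal partial answers with multiple wildcards.
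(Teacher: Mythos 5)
Your subcase~(i) (not weakly acyclic) and your BMM reduction coincide with the paper's treatment of two of its cases, and your observation that all-constant candidates let the complete-answer reduction carry over verbatim to both partial-answer semantics is exactly the paper's closing remark. However, there is a genuine gap in subcase~(ii): it is \emph{not} true that every query that is weakly acyclic but not free-connex acyclic contains two answer variables separated by a quantified variable. The bad-path characterization you invoke is a characterization of free-connexness only for \emph{acyclic} CQs, and a weakly acyclic query need not be acyclic. Concretely, take $q(x) \leftarrow R_1(x,y_1) \wedge R_2(y_1,y_2) \wedge R_3(y_2,x)$: replacing $x$ by a constant yields an acyclic CQ, so $q$ is weakly acyclic; the triangle $x,y_1,y_2$ on binary atoms makes $\widehat q$ non-acyclic, so $q$ is not free-connex acyclic; yet there is only one answer variable, so no ``semijoin bottleneck'' between two answer variables exists and your BMM encoding has nothing to attach to. Your dichotomy therefore leaves this family of queries unhandled.

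The paper closes this hole with a third case. It extracts from $\widehat q$ a chordless cycle containing at most two answer variables (this bound comes from the guard atom $R(\bar x)$, which makes all answer variables pairwise adjacent) and splits on whether the cycle contains zero, one, or two answer variables. Zero and two correspond to your cases~(i) and~(ii). In the one-answer-variable case, quantifying that variable yields an OMQ $Q_0$ that is not weakly acyclic, and single-testing $Q_0$ reduces to all-testing $Q$ in linear time: to test a candidate $\bar a$ for $Q_0$, test $b\bar a$ against $Q$ for every $b \in \dom(D)$, which is linearly many constant-time tests after linear preprocessing. Theorem~\ref{lemma:single-testing-lower-bound} then puts this case, too, under the triangle conjecture. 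You would need to add this case (or an equivalent argument) for your proof to go through; the remainder of your proposal matches the paper's construction.
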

Note that Theorem~\ref{thm:alltestinglower} refers to the non-sparse
version of BMM and that spBMM in time $O(|M_1| + |M_2| + |M_1M_2|)$
implies BMM in time $O(n^2)$ while the converse is unknown. 
% In the
% converse, we prove the same result also for CQs that may use relation
% symbols of any arity (without ontologies).

% The following is an immediate consequence of
% Theorem~\ref{lemma:single-testing-lower-bound}.
% %
% \begin{theorem}
%   Let $Q \in (\class{ELI},\class{CQ})$ be non-empty, and self\=/join
%   free.  If $Q$ is not weakly acyclic, then all-testing for $Q$ is not
%   in linear time unless the triangle conjecture fails.
% \end{theorem}
%
% {\color{blue}define/discuss BMM, make sure that ``$n^2$'' is understood}.
%
% \begin{theorem}
%    \label{thm:lower-bound-enum-eli}
%    Let $Q = (\Omc, \Sbf, q) \in (\class{ELI},\class{CQ})$ be weakly
%    acyclic, non\=/empty, self-join free, and connected.  If $Q$ has a
%    bad path, then all-testing of complete answers to $Q$ is not in
%    CD$\circ$Lin unless BMM is possible in time $O(n^2)$, and the same
%    is true for minimal partial answers.
% \end{theorem}

\section{Enumeration with Single Wildcard}
\label{sect:LPAsingleWildcardUpper}

The main aim of this section is to prove that it is possible to
enumerate in \dlc the minimal partial answers with a single wildcard
to OMQs from $(\class{G},\class{CQ})$ that are acyclic and
free\=/connex acyclic. Thus, minimal partial answers are almost as
well-behaved as complete answers, except that for the former it
remains open whether enumeration is also possible in \cdlin. We start,
however, with observing that all-testing of minimal partial answers is
less well-behaved. The following should be contrasted with Point~(2) of
Theorem~\ref{thm:upperG}.
\begin{restatable}{theorem}{thmlowerGpartial}
  \label{thm:lowerGpartial}
  There is an OMQ $Q \in (\class{ELI},\class{CQ})$ that is acyclic and
  free-connex acyclic 
  such that all-testing minimal partial answers to $Q$ is
  not in \dlc unless the triangle conjecture fails.
  The same is true for minimal partial answers with multiple
    wildcards.
\end{restatable}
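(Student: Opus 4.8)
The plan is to exhibit a single fixed OMQ $Q=(\Omc,\Sbf,q)\in(\class{ELI},\class{CQ})$ that is both acyclic and free\=/connex acyclic, and to give a linear\=/time reduction from triangle detection to all\=/testing its minimal partial answers. The guiding intuition is the contrast with Point~(2) of Theorem~\ref{thm:upperG}: for a free\=/connex acyclic query the complete answers, and hence every maximally informative (constant\=/only) tuple, can be tested in \cdlin, so any hardness must sit \emph{purely in the minimality condition}. Accordingly I would choose $q$ so that $q$ itself is harmless, but so that deciding whether a crafted wildcard tuple can be \emph{improved} (a wildcard replaced by a constant from $\mn{adom}(D)$) amounts to a common\=/neighbour / length\=/two\=/path test. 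This is exactly the combinatorial kernel already used for the triangle lower bounds in Theorems~\ref{thm:alltestinglower} and~\ref{lemma:single-testing-lower-bound}; the novelty here is to route it through minimality while keeping $q$ free\=/connex acyclic, so that complete\=/answer all\=/testing stays easy.

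Concretely, given an undirected graph $G=(V,E)$ I would build in linear time an \Sbf\=/database $D_G$ whose constants are the vertices of $G$ and which stores the edges in a binary symbol, padded with the uniform unary\-/ and edge\-/facts already introduced in the proof of Theorem~\ref{lemma:single-testing-lower-bound} so that \Omc fires identically at every constant. The ontology \Omc is then responsible for manufacturing, for each relevant tuple, a \emph{null witness} that makes the tuple a partial answer \emph{independently of $E$}; the crafted test tuples $\bar c_{u,v}$ carry a wildcard in the position occupied by this null and the fixed constants $u,v$ in the remaining positions. The crucial feature is that $\bar c_{u,v}$ is a partial answer via the null, so by the characterisation $Q(D)^{\ast}=q(\mn{ch}^q_\Omc(D))^{\ast}_\Nbf$ (Lemma~\ref{prop:restrictedchaseworks}) it is \emph{minimal} precisely when no constant can take the place of the wildcard, i.e.\ precisely when $u$ and $v$ have \emph{no} common neighbour in $G$. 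Running the minimality test on the $O(\|D_G\|)$ tuples $\bar c_{u,v}$ with $(u,v)$ ranging over the edges of $G$ thus reveals, edge by edge, whether the edge lies on a triangle, and $G$ contains a triangle iff some $\bar c_{u,v}$ is a partial answer but fails to be minimal.

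Since $D_G$ is produced in linear time and a \dlc all\=/testing algorithm answers each of the $O(\|D_G\|)$ tests in constant time after linear preprocessing, the entire reduction runs in linear time and decides triangle detection, contradicting the triangle conjecture; this yields the claimed lower bound. I expect the main obstacle to be the design of the \class{ELI} ontology that supplies the graph\=/independent null witness: unlike a guarded TGD such as $\mn{OfficeMate}(x,y)\rightarrow\exists z\,(\mn{HasOffice}(x,z)\wedge\mn{HasOffice}(y,z))$, an \class{ELI} TGD has a single frontier variable and therefore cannot generate a null shared between the two fixed constants $u$ and $v$, nor derive any edge between existing constants, so the witness cannot be placed naively at a join position. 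Reconciling this restriction with the requirement of a graph\=/independent witness, while keeping $q$ simultaneously acyclic and free\=/connex acyclic and $Q$ non\=/empty, is where the bulk of the work lies; the remaining checks (the acyclicity properties of $q$, non\=/emptiness, and linearity of the construction of $D_G$) are then routine. Finally, the multi\=/wildcard statement follows from the same reduction essentially verbatim: the tuples $\bar c_{u,v}$ are (not necessarily minimal) partial answers with multi\=/wildcards as well, and since each involves only one wildcard the minimality analysis is unchanged, so all\=/testing minimal partial answers with multi\=/wildcards is hard for the same $Q$ under the same conjecture.
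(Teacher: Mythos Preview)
Your high-level strategy is exactly the paper's: the ontology guarantees that certain wildcard tuples are partial answers regardless of the edge set, so that a minimality test degenerates to a graph-structural question which, asked $O(\|D_G\|)$ times after linear preprocessing, decides triangle detection. The divergence is precisely where you place it, and the obstacle you name is fatal to your particular instantiation rather than merely technical. In \ELI every null in $\mn{ch}_\Omc(D)$ sits in a tree hanging off a single database constant, so no null is adjacent to two distinct database constants; consequently there is no \ELI ontology that makes a tuple of the shape $(u,\ast,v)$ with $u\neq v$ a partial answer independently of~$E$. The per-edge common-neighbour route you sketch is blocked outright, not just hard.

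The paper does not overcome this obstacle; it sidesteps it by testing one tuple per \emph{vertex} with the \emph{same} constant at both ends. The CQ is $q(x_1,\dots,x_5)=R(x_1,x_2)\wedge R(x_2,x_3)\wedge R(x_4,x_3)\wedge R(x_5,x_4)$ (full, hence free-connex acyclicity is automatic), the ontology is the single \ELI TGD $R(x_1,x_2)\rightarrow\exists y_2\exists y_3\,R(x_1,y_2)\wedge R(y_2,y_3)$, and the test tuples are $(v,\ast,\ast,\ast,v)$. The two-step null path rooted at $v$ supplies a witness with $x_1{=}x_5{=}v$ and $x_2{=}x_4$ a null. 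It then remains to arrange that $q(D_G)$ contains a tuple with $c_1=c_5$ (which is what makes some $(v,\ast,\ast,\ast,v)$ non-minimal) iff $G$ has a triangle, and this is achieved not by the naive edge encoding but by a layered database on $V\times[4]$ with a carefully chosen pattern of $R$-facts; verifying that equivalence is the one genuinely non-routine claim. So the ``bulk of the work'' you anticipate is real, but it lives in the database encoding, not in the ontology design.

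One further correction: in the multi-wildcard clause the test tuple becomes $(v,\ast_1,\ast_2,\ast_1,v)$, three wildcards with a repetition. Your shortcut ``each involves only one wildcard'' does not survive the switch to the per-vertex construction that actually works in~\ELI.
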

Intuitively, all-testing of minimal partial answers is difficult
because % less well-behaved
% than all-testing of complete answers since
a single positive test for an answer that contains wildcards may imply
a negative test for polynomially many complete answers. This is not a
problem in enumeration where the `problematic' wildcard answers will
be output late and thus cannot be tested in
linear time.
% nor
% use them to implement triangle detection in linear time.

We now turn to the main result of this section. % enumeration of minimal partial answers (with a single wildcard). The following
% is a main result of this paper.
%
\begin{theorem}
  \label{thm:upperGpartial}
  Enumerating minimal partial answers is in \dlc for OMQs from
  $(\class{G},\class{CQ})$ that are acyclic and free\=/connex acyclic.
\end{theorem}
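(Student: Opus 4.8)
The plan is to reduce the task to the already-established \cdlin enumeration of complete answers (Theorem~\ref{thm:upperG}) by working over the query-directed chase and, for each possible ``shape'' of a minimal partial answer, compiling the part of the query that escapes into the null\=/part of the chase into fresh guard relations. First I would compute $\mn{ch}^q_\Omc(D)$ in linear time (Proposition~\ref{prop:chaseinlineartime}) and invoke Lemma~\ref{prop:restrictedchaseworks} to replace the problem by the enumeration of $q(\mn{ch}^q_\Omc(D))^{\ast}_\Nbf$; from now on I work with the finite instance $\mn{ch}^q_\Omc(D)$, whose null part consists of constantly many shapes of bounded cones attached to guarded sets of $D$. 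Every partial answer is $h(\bar x)^{\ast}_\Nbf$ for some homomorphism $h : q \to \mn{ch}^q_\Omc(D)$, and its $\ast$\=/positions are exactly the answer variables sent into the null part. I therefore fix a \emph{profile} $S \subseteq \{1,\dots,|\bar x|\}$, the positions that become $\ast$, together with a partition of $\mn{var}(q)$ into variables mapped to $\mn{adom}(D)$ and variables mapped to nulls; there are only constantly many such choices.

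For a fixed choice, the atoms touching a null variable form connected pieces --- the \emph{excursions} --- and each such piece must map into a single cone, meeting $D$ only in a guarded boundary set $B$. During preprocessing I read off from $\mn{ch}^q_\Omc(D)$, for each excursion and each guarded tuple $\bar c$ of $D$ of the matching arity, whether that excursion embeds into a cone anchored at $\bar c$; scanning the linearly many guarded tuples yields, in linear time, for each excursion a fresh guard relation whose facts are exactly the admissible anchors. Replacing every excursion in $q$ by its guard atom produces a reduced CQ $q_S$ over an extended database $D^{+}$ whose variables are precisely those mapped to constants and whose answer variables are the positions outside $S$ (the null\=/valued answer variables have been folded into the guards). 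By construction the complete answers of $q_S$ on $D^{+}$ are exactly the constant\=/parts of the profile\=/$S$ partial answers. A technical point to verify is that $q_S$ is again acyclic and free\=/connex acyclic: this I would establish from a join tree of $q$ by noting that each excursion occupies a subtree hanging off its boundary, so collapsing it to a single guard atom over $B$ preserves both properties. This lets me apply Theorem~\ref{thm:upperG} to enumerate, for each profile, its candidate constant\=/parts in \cdlin.

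What remains --- and this is the main obstacle --- is to emit only the \emph{minimal} partial answers. A profile\=/$S$ candidate $(\bar c,\ast^{S})$ is non\=/minimal exactly when its constant\=/part agrees, on the positions outside $S$, with some profile\=/$S'$ partial answer for a proper subset $S' \subsetneq S$; that is, some $\ast$ can be turned into a constant. The naive filter ``enumerate all candidates and skip the non\=/minimal ones'' destroys constant delay, since arbitrarily many consecutive candidates may be non\=/minimal. My plan is to process the profiles in an order refining $\subseteq$ (for instance by increasing $|S|$, so that all relevant $S' \subsetneq S$ are handled first) and, within each profile, to enumerate the candidate constant\=/parts in a fixed sorted order. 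As the smaller profiles are enumerated I store their constant\=/parts in hash/trie structures indexed by the shared positions, so that for a later profile $S$ the domination test against every $S' \subsetneq S$ is answered in constant time; the same structures suppress the duplicates that occur when one partial answer arises from several partitions of a single profile. The delicate part is to locate the \emph{next} minimal answer in constant time even when a long block of dominated candidates intervenes: here I would keep the sorted candidate enumerators and the (sorted projections of the) dominating sets synchronized and use precomputed successor/jump pointers to leap over a dominated block in one step. This is precisely where I expect the real work to lie, and it is also where the algorithm genuinely exploits that we are in \dlc rather than \cdlin, since the jump and lookup structures require unbounded (polynomial) memory --- which is also why the \cdlin variant of the statement is left open.
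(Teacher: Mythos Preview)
Your reduction-by-profile strategy is a plausible alternative to what the paper does, but it has two real gaps.

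First, the claim that ``each such piece must map into a single cone, meeting $D$ only in a guarded boundary set'' is false. Take $q(x_1,\ldots,x_5)=R(x_1,x_2)\wedge S(x_2,x_3)\wedge T(x_3,x_4)\wedge U(x_4,x_5)$ with guarded TGDs $P(x,y)\to\exists z\,R(x,z)\wedge S(z,y)$ and $Q(x,y)\to\exists z\,T(x,z)\wedge U(z,y)$ over $D=\{P(a,c),Q(c,e)\}$: all four atoms touch a null variable and are connected, so your single excursion has boundary $\{x_1,x_3,x_5\}$; but $\{a,c,e\}$ is not guarded and the excursion straddles two cones joined at the constant $c$, so your guard relation is empty and the minimal partial answer $(a,\ast,c,\ast,e)$ is missed. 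The right cut is not ``connected pieces of null-touching atoms'' but at every atom whose \emph{predecessor variables in the join tree} are all non-$\ast$; this is exactly Conditions~(1)--(2) of the paper's progress trees, and Condition~(4) then delivers the guarded boundary. With that correction your $q_S$ is full, hence free-connex acyclic, and the guard-relation idea works.

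Second, the ``jump pointer'' mechanism cannot be implemented against the black-box enumerator of Theorem~\ref{thm:upperG}, which has no skip interface and whose output cannot be materialised in linear preprocessing. Your hash-table minimality test is fine; what is missing is a delay argument. A banking argument does the job: processing profiles by increasing $|S|$, every dominated candidate is dominated by some minimal answer of strictly smaller $|S|$ already seen, and this correspondence is at most $(2^{|\bar x|}{-}1)$-to-one, so at every prefix of the stream the minimal answers form at least a $2^{-|\bar x|}$ fraction and buffering one output per $2^{|\bar x|}$ candidates suffices. The paper sidesteps all of this by a single integrated pass over the join tree, choosing at each node from the list $\mn{trees}(v,h)$ sorted in database-preferring order and, after every output, \emph{pruning} from these lists the progress trees that would yield dominated answers; Lemma~\ref{lem:neveremptylem} shows no list ever becomes empty, so the traversal never backtracks without output and constant delay is immediate.
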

In the rest of this section, we prove Theorem~\ref{thm:upperGpartial}
by developing an enumeration algorithm. {%\color{red}
  We provide
  an example that  illustrates important aspects of our algorithm in
  Appendix~\ref{sect:example}.}  Fix an OMQ
$Q(\bar x)=(\Omc,\Sbf,q_0) \in (\class{G},\text{CQ})$ with $q_0$
acyclic and free\=/connex acyclic, and let an \Sbf-database $D$ be
given as input. 

\paragraph{\bf Preprocessing phase.}
Recall from Section~\ref{sect:singletesting} that the query-directed
chase $\mn{ch}^{q_0}_\Omc(D)$ can be constructed in time linear in
$||D||$. This is the first step of the preprocessing phase.  By
Lemmas~\ref{lem:twosemantics} and~\ref{prop:restrictedchaseworks}, we
may enumerate $q_0(\mn{ch}^{q_0}_\Omc(D))^{\ast}_\Nbf$ in place of
$Q(D)^\ast$. For brevity, set
$D_0 :=\mn{ch}^{q_0}_\Omc(D)$.

% To
% describe the remaining steps, we introduce some notation.

% Let $D'$ be
% a database and assume that a set $N \subseteq \mn{adom}(D')$ of
% constants is to be viewed as nulls in $D'$. Further, let $q$ be a CQ.  For
% an answer $\bar a \in q(D')$, we use $\bar a^\ast_N$ to denote the
% wildcard tuple for $D'$ obtained by replacing all constants from $N$
% with~`$\ast$'. We call such an $\bar a^\ast_N$ a \emph{partial answer
% to $q$ on $D'$ w.r.t.~$N$} and say that it is a \emph{minimal} partial
% answer if there is no $\bar b \in q(D')$ with
% $\bar b^\ast_N \prec \bar a^\ast_N$. With $q(D')^{\ast}_N$, we denote
% the set of minimal partial answers to $q$ on $D'$ w.r.t.~$N$.

% In $\mn{ch}^{q_0}_\Omc(D)$, the set of nulls is
% $N=\mn{adom}(\mn{ch}^{q_0}_\Omc(D)) \setminus \mn{adom}(D)$ while we
% speak of the constants in $\mn{adom}(D)$ as the \emph{database
%   constants}.
% We observe in the appendix
% that
% $Q(D)^{\ast}=q_0(\mn{ch}^{q_0}_\Omc(D))^{\ast}_N$. It thus suffices to
% enumerate $q_0(\mn{ch}^{q_0}_\Omc(D))^{\ast}_N$, which is what we
% concentrate on in the
% following.  

We
\ifbool{arxive}{
  argue in the appendix that we}
can assume w.l.o.g.\ that the tuple
$\bar x$ has no repeated variables and that $q_0$ contains no
constants and is connected. \ifbool{arxive}{}{For connectedness, see
  Appendix~\ref{app:connected}, for the other properties see the long version.
}
As part of the preprocessing phase, we
preprocess $q_0$ and $D_0$ in a way that resembles the first phase of
the Yannakakis algorithm in which a join tree is traversed in a bottom-up fashion, computing a semi-join in each
step~\cite{yannakakis-algotrithm}.
%
%While doing this, we also make
% $q_0$ self\=/join free, remove all quantified variables, and achieve
% some additional normalisation.
The result is a CQ
$q_1(\bar x)$ and database $D_1$ that satisfy the following
conditions:
{
  \renewcommand{\theenumi}{\roman{enumi}}%
\begin{enumerate}

\item $q_1$ is self\=/join free, connected
  (since $q_0$ is), acyclic, and has no quantified variables
  (thus is free\=/connex
    acyclic); it therefore has a join tree $T_1=(V_1,E_1)$; we choose
  a root in $T_1$ %thus imposing a direction on $T_1$ and
  allowing us to speak about predecessors and successors in $T_1$;

% \item the atoms of $q_1$ have no repeated variables; moreover,
%   if $v =R(\bar y) \in V_1$ has predecessor $v'$ in $T_1$, then
%   the variables shared between $v$ and $v'$ occur in $\bar y$
%   before all other variables; {\color{blue}STILL NEEDED? WHERE?}

\item $\mn{adom}(D_1) \subseteq \mn{adom}(D_0)$ and for every fact
  $R(\bar a) \in D_1$, there is a fact $S(\bar b) \in D_0$ such that
  $\bar a$ and $\bar b$ contain exactly the same (database and null)
  constants;
% \footnote{It is neither guaranteed nor important that all
%     constants from $N$ occur in $D_1$.}

\item $q_0(D_0)=q_1(D_1)$, and thus
  $q_0(D_0)^{\ast}_\Nbf=q_1(D_1)^{\ast}_\Nbf$;
  
\item \label{progress}
  for all $v=R(\bar y) \in V_1$, facts $R(\bar a) \in D_1$, and
  successors $v'=S(\bar z)$ of $v$ in $T_1$, $D_1$ contains a fact
  $S(\bar b)$ such that if position $i$ of $\bar y$ has the same
  variable as position $j$ of $\bar z$, then position $i$ of $\bar a$
  has the same constant as position $j$ of $\bar b$.

\end{enumerate}
}
We refer to Condition~(\ref{progress}) as the \emph{progress
  condition}. Informally, it makes sure that an enumeration algorithm
that traverses $T_1$ in a pre-order tree walk never gets `stuck' in
the sense that it can always extend the partial answer produced so far
to a full answer.  The construction of $q_1$ and $D_1$ is possible in
time linear in $||D_0||$. It has been used many times in the context
of enumerating answers to conjunctive queries (without ontologies)
with constant delay.  We give an outline in the
\ifbool{arxive}{
  appendix}
{
  long version
}
and refer to
\cite{berkholz-enum-tutorial} for a very clear exposition of the full
details.
{%\color{purple}
  The construction of $q_1$ and $D_1$ also tells
  us whether $q_0(D_0)=\emptyset$. If this is the case, we stop
  without entering the enumeration phase.}

\smallskip

We also use the preprocessing phase to compute data structures that
are used in the enumeration phase. We start with some preliminaries.
With a \emph{predecessor variable} in an atom $v \in V_1$, we mean a
variable that $v$ shares with its predecessor in~$T_1$.
{%\color{blue}
By definition,
the root of $T_1$ does not have any predecessor variables.} A
CQ $q$ is a \emph{subtree} of $q_1$ if there is a subset
$V_q \subseteq V_1$ such that the subgraph
$T_q=(V_q,E_1|_{V_q \times V_q})$ of $T_1$ induced by $V_q$ is
connected. Note that $q$ must be connected since $q_1$ is and that
$T_q$ is a join tree for $q$. We assume that $T_q$ inherits the
direction imposed on $T_1$ and thus, for instance, may speak about its
root.
% A \emph{fringe variable} of a subtree $q$ of $q_2$ is a variable
% $x$ that occurs in a leaf $u$ of $T_q$ that has a successor $u'$ in
% $T_2$ in which $x$ also occurs.

% {\color{blue}Let $S$ be a non-empty \emph{no-null guarded set} in $D_2$, that is, a guarded
% set in $D_2$ such that $S \cap N = \emptyset$.}
% \footnote{We remark that
%   if $S$ is a no-null guarded set, then by definition of `chase-like'
%   $D_2$ must contain a fact that contains all constants in $S$, but no
%   null.}
A \emph{progress tree} is a pair $(q,g)$ with $q$ a subtree of $q_1$
and
$g:\mn{var}(q) \rightarrow (\mn{adom}(D_1) \setminus N) \cup \{ \ast
\}$ a map such that the following conditions are satisfied:
\begin{enumerate}

\item $g(x) \neq \ast$ for every predecessor variable $x$ in the root
  of $T_q$; 

% \item if $v \in V_q$ and $v' \notin V_q$ is a successor of $v$ in $T_2$,
%   then $g(x)\neq\ast$ for all predecessor variables $x$ in $v'$;

% \item if $v,v' \in V_q$ with $v'$ a successor of $v$ in $T_2$, then
%   $g(x) =\ast$ for some predecessor variable $x$ in $v'$;

   \item if $v \in V_q$ and $v'$ is a successor of $v$ in $T_1$, then
   $v' \in V_q$ if and only if $g(x) = \ast$ for some predecessor
   variable $x$ in $v'$;

\item there is a homomorphism $h$ from $q$ to $D_1$ such that for all
  $x \in \mn{var}(q)$, $h(x) \in N$ if $g(x) = \ast$ and
  $h(x)=g(x)$ otherwise;

\item the constants in the range of $g$ form a guarded set in $D_1$.
% {\color{red}I suspect that Condition~(4) of 
%   progress 
%   trees is actually implied by the other conditions?}
  
\end{enumerate}
%
% Note that Condition~(2) implies that $g(x) \neq \ast$ for all fringe
% variables $x$ of a progress tree $(q,g)$. 
To explain the intuition of
progress trees, consider a homomorphism $h$ from $q_1$ to $D_1$ and an
atom $v =R(\bar y)\in V_1$ with
% predecessor $v'$ and
predecessor variables $\bar z$.  If $h(\bar y) \cap N =\emptyset$,
then $(v,g)$ is a (single atom) progress tree, $g$ the restriction of
$h$ to the variables in~$\bar y$. More interesting is the case where
$h(\bar z) \cap N =\emptyset$, but $h(\bar y) \cap N \neq
\emptyset$. Informally, under homomorphism $h$ such an atom $v$
`crosses the boundary' between the `database part' of $D_1$ and the
`null part' of $D_1$.  Let $V_q \subseteq V_1$ be the smallest set
that contains $v$ and such that if $u \in V_q$ and $u'$ is a successor
of $u$ in $T_1$ such that $h(x) \in N$ for at least one predecessor
variable in $u'$, then $u' \in V_q$. This defines a subtree $q$ of
$q_1$ and $(q,g)$ is then a progress tree, where $g$ is the
restriction of $h$ to the variables in $q$ with constants from $N$
replaced by $\ast$.  Informally, $(q,g)$ thus describes an `excursion'
of the part $q$ of $q_1$ into the `null part' of $D_1$ and it turns
out that properly dealing with such excursions is key to enumerating
minimal partial answers.  Note that the constants in the range of $g$
form a guarded set in~$D_1$, as required. This relies on $q_1$ being
connected as otherwise, it would be possible to cross the boundary to
the null part of $D_1$ at some guarded set, but return to the database
part at a different guarded set.

% {\color{violet} do we care about order on constants? I mean we will be sorting lists,
% so we technically need one, but it does not matter so we may not want to complicate...}
% I agree, let's leave this implicit

Consider an atom $v$ in $q_1$ with predecessor variables~$\bar z$. A
\emph{predecessor map} for $v$ is a function
$h:\bar z \rightarrow \mn{adom}(D_1) \setminus N$ 
{%\color{purple}
that extends to a homomorphism from $v$ to $D_1$.
% whose range is a
% guarded set in~$D_1$.
We call such $v$ and $h$ \emph{relevant}. For all relevant $v$ and
$h$,} we compute a linked list $\mn{trees}(v,h)$ of all progress trees
$(q,g)$ with root $v$ such that $g(\bar z)=h(\bar z)$.
% for some fact $R(\bar c) \in D_1$,
% the following holds:
% %
% \begin{enumerate}
%
% %\item $(q,g)$ has root $v$;
%
% \item[(a)] $g(\bar z)=h(\bar z)$;
%   % $R(\bar c)$ unifies with
% %  $R(h(\bar z)\bar u)$;
%
% \item[(b)] all database constants in the range of $g$ are from
%   $\bar c \setminus N$;
%
% \item[(c)] $g(\bar z)=h(\bar z)$.
%
% \end{enumerate}
%
% Note that if $D_1$ contains a fact $R(\bar c)$
% that satisfies Condition~(a) and contains no nulls, then 
% $(v, \bar y \mapsto \bar c)$ is a (single node) candidate tree in 
% $\mn{trees}(v,h)$.
%
  We sort the list $\mn{trees}(v,h)$ so that it is in
  \emph{database\=/preferring order}. This means that
  progress tree $(q,g)$ is before progress tree $(q',g')$
  whenever $(q,g) \prec_{\mn{db}} (q',g')$, which is the case
  if
%
%   it is in
% \emph{database\=/preferring order} `$\prec_{\mn{db}}$', meaning that
% $(q,g) \prec_{\mn{db}} (q',g')$ if
$q$ and $q'$ have the same root and
$V_q \subsetneq V_{q'}$, or the following conditions are satisfied
for all $x \in \mn{var}(q)$: 
\begin{itemize}

\item[(a)]  $V_q=V_{q'}$;
  
\item[(b)] $g(x) = \ast$ implies $g'(x) = \ast$;
  
\item[(c)] $g(x) \neq \ast$ implies $g'(x) \in \{g(x),\ast\}$;
   
% \item for all $x \in \mn{var}(q)$:
%   %
%   \begin{itemize}

%   \item   $g(x)\neq \ast$ implies $g'(x) \in
%     \{ g(x), \ast\}$ and

%   \item $g(x)=\ast$ implies $g'(x)=\ast$;
  
%   \end{itemize}

\item[(d)]  for some $x \in \mn{var}(q)$, $g'(x)=\ast$ while 
  $g(x) \neq \ast$.
  
\end{itemize}
The algorithm uses these lists as a global data structure that is both
accessed and modified.
We show in the
\ifbool{arxive}{
  appendix}
{
  long version
}
that the lists $\mn{trees}(v,h)$ can indeed be computed
in linear time on a RAM.
\begin{restatable}{lemma}{lemprecomputetreeslistslinear}
\label{lem:precomputetreeslistslinear}
  The lists $\mn{trees}(v,h)$, for all relevant $v$ and $h$, can be
  computed in overall time linear in $||D_1||$. {%\color{purple}
  Moreover, all these lists are non-empty.}
\end{restatable}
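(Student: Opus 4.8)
The plan is to exploit the shallow, gadget-like shape of the query-directed chase. Recall that $D_0=\mn{ch}^{q_0}_\Omc(D)$ arises from $D$ by attaching, at each guarded set $\bar c$ of \emph{database} constants, bounded-size copies of the databases $D_p$ for $p\in\mn{cl}(Q)$ with fresh nulls; since the preprocessing leading to $D_1$ only deletes and renames facts (Condition~(ii)), $D_1$ inherits this shape. There are $O(\|D_1\|)$ such gadgets, each of size bounded by a constant depending only on $Q$, and nulls of distinct gadgets are disjoint. First I would record the two consequences that drive everything. (a)~\emph{Locality}: by Condition~2 of progress trees, every non-root atom of a progress tree is joined to its parent through a variable that $g$ sends to $\ast$, hence maps to a null; so the image of the excursion below the root lies within a single gadget. (b)~\emph{Counting}: a progress tree $(q,g)$ is determined by the subtree $q$ ($O(1)$ choices), the set of variables sent to $\ast$ ($O(1)$ choices), and the assignment of the remaining variables to database constants, which by Condition~4 form a guarded set; as there are $O(\|D_1\|)$ guarded sets and each admits only $O(1)$ such assignments (arities and $|\mn{var}(q_1)|$ are bounded by $\|Q\|$), there are only $O(\|D_1\|)$ progress trees in total.

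Non-emptiness I would obtain by an explicit construction. Given a relevant pair $(v,h)$, relevance supplies a fact $R(\bar a)\in D_1$ whose restriction to the predecessor variables $\bar z$ of $v$ equals $h$. Starting from this local homomorphism I grow an excursion exactly as in the intuition preceding the lemma: whenever the current atom maps a variable shared with a successor $v'$ to a null, I add $v'$ and, invoking the progress condition~(\ref{progress}), choose a fact for $v'$ agreeing with the fixed values on \emph{all} shared variables. This never gets stuck and terminates since $T_1$ is finite; because each variable's occurrences form a connected subtree of the join tree, the chosen facts are globally consistent and define a homomorphism $h'$ from the resulting subtree $q$ into $D_1$. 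Setting $g(x)=\ast$ when $h'(x)\in N$ and $g(x)=h'(x)$ otherwise yields a pair satisfying Conditions~1--3 by construction, while Condition~4 holds by the connectedness argument already given in the text. Hence $(q,g)\in\mn{trees}(v,h)$, so the list is non-empty.

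For the time bound I would compute all progress trees gadget by gadget rather than pair by pair. After building, in linear time, hash indices mapping each null to the facts of $D_1$ containing it and each guarded set to its gadget, I iterate over the $O(\|D_1\|)$ gadgets; for each I enumerate, in $O(1)$ time, all root atoms $v$ and all embeddings of subtrees of $q_1$ whose excursion enters that gadget, reading off the pairs $(q,g)$ — this is $O(1)$ per gadget since the gadget and $q_1$ both have constant size. The single-atom progress trees living entirely in the database part are produced analogously from the relevant facts. I then distribute the $O(\|D_1\|)$ computed trees into the lists $\mn{trees}(v,h)$ by radix-sorting them on the key $(v,g(\bar z))$, a bounded-length tuple of constants, using the machine-model lookup tables, deduplicating along the way. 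Finally I sort each list into database-preferring order: since $(q,g)\prec_{\mn{db}}(q',g')$ forces either $|V_q|<|V_{q'}|$, or $V_q=V_{q'}$ together with strictly more $\ast$'s in $g'$, any total order refining the bounded-range integer key $\big(|V_q|,\#\{x:g(x)=\ast\}\big)$ is a linear extension of $\prec_{\mn{db}}$ on each list (all of whose members share the root $v$), and such a key is bucket-sorted in linear time.

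The step I expect to be the main obstacle is making the locality argument rigorous — that each excursion is confined to one gadget and that the database constants in the range of $g$ really form a single guarded set of $D_1$ (including that the relevant guard fact survives the reduction to $D_1$). Both rest on the connectedness of $q_1$ together with the fact that gadgets share only database constants and never nulls; getting this precisely right is exactly what legitimizes the $O(1)$-per-gadget accounting and hence the overall linear bound. The grouping and sorting are, by contrast, routine once the number of progress trees is known to be $O(\|D_1\|)$.
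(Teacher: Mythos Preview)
Your proposal is correct and follows essentially the same route as the paper: iterate over the constant-size gadgets of the chase-like database, brute-force enumerate candidate pairs $(q,g)$ per gadget, use the locality argument (your point~(a), proved in the paper as an induction on depth in $T_q$) to justify that this captures every progress tree, deduplicate via a lookup table, and derive non-emptiness from the progress condition exactly as you describe. The one pleasant difference is your sorting step: the paper invokes the general fact that short lists under a strict weak order can be sorted in linear time on a RAM, whereas your observation that the bounded-range integer key $\big(|V_q|,\#\{x:g(x)=\ast\}\big)$ is strictly monotone under $\prec_{\mn{db}}$ gives a more elementary bucket-sort argument.
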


\medskip

% Note that Condition~(3) of $S$-candidate trees can be checked in
% linear time: first modify $q$ by replacing each variable $x$ with
% $g(x)$ if $g(x)$ is a constant and adding $P_N(x)$ if $g(x)=\ast$,
% $P_N$ a fresh unary relation symbol; extend $D_1$ by adding $P_N(c)$
% for all $c \in N$. It then remains to check the existence of a
% homomorphism from the modified $q$ viewed as a Boolean CQ to the
% modified $D_1$, which is possible in time linear in $||D_1||$ since
% $q$ is acyclic \cite{Yannakakis}. Also note that the number of
% $S$-candidate trees in each list $\mn{trees}(v,S)$ is bounded by a
% constant. While $S$-candidate trees $(q,g)$ use constants from $D_1$
% in the $g$-part, they only use constants from the guarded set $S$
% whose cardinality is bounded by the maximum arity of relation symbols
% and thus by a constant since $Q$ is fixed. We can thus put these lists
% onto the right order and fetch first and next elements from them, all
% in time independent of $||D_1||$ {\color{blue}first elements: how
%   exactly?} and even iterate over an entire such list in time
% independent of $||D_1||$.
%This finishes the description of the
%preprocessing phase. 

%
%As a special case, the empty set is always in $\mn{trees}(x,a)$.
%
Let $v_0,\dots,v_k$ be the ordering of the atoms in $V_1$ generated by
a pre-order traversal of $T_1$. For $v_i \in \{v_0,\dots,v_k\}$ and a
partial map
$h:\mn{var}(q_1) \rightarrow (\mn{adom}(D_1) \setminus N) \cup \{ \ast
\}$, we use $\mn{nextat}_h(v_i)$ to denote $v_j$ with $j>i$ smallest
such that $h(x)$ is undefined for some variable $x$ in $v_j$, if such
$j$ exists, and the special symbol $\mn{eoa}$ (\emph{end of atoms})
otherwise. Clearly, computing $\mn{nextat}$ is independent of
$||D_1||$ and can thus be done in constant time.

\paragraph{\bf Enumeration Phase.}
The enumeration phase of the algorithm is presented in
Figure~\ref{fig:enumalg}. In the {\bf forall} loop in Line~10, we
follow the database-preferring order imposed on the \mn{trees} lists.
{%\color{purple}
  It is straightforward to show the invariant that when
  a call $\mn{enum}(v,h)$ is made,  then $v,h|_{\bar z}$ used
  in Line~12 is relevant. The following is an important
  observation.
  \begin{restatable}{lemma}{lemneveremptylem}
    \label{lem:neveremptylem}
  None of the lists $\mn{trees}(v,h)$, with $v,h$ relevant, ever
  becomes empty. 
\end{restatable}
}
Lemma~\ref{lem:neveremptylem} is important to achieve constant delay
because it implies that that in each call $\mn{enum}(v,h)$,
%
% (i)~each list
% $\mn{trees}(v,h|_{\bar z})$ contains at least one progress tree (due
% to the progress condition) and (ii)~{\color{orange}the pruning of progress trees from
% such lists that takes place as part of the enumeration phase can never
% remove all trees from a list.}
% % \footnote{This is because every list
% %   contains
% %   a (single-node) progress tree without wildcards, and such trees are
% %   never removed.}
% %
% %
%   Thus,
the {\bf forall} loop in Line~10 %in \mn{enum}
makes at least one iteration and thus at least one recursive
call in Line~12. Consequently, while traversing $q_1$ we never backtrack without
producing an output. Note that given $v$ and $h|_{\bar z}$, we need to
find the (first element of the) list
$\mn{trees}(v,h|_{\bar z})$ in constant time. On a RAM, this can be
achieved by a straightforward lookup table.
% It is
% easy to verify that whenever we access some
% $\mn{trees}(v,h|_{\bar z})$, then the range of $h|_{\bar z}$ is a
% guarded set, as required.

 In the \mn{prune} subprocedure, there are only constantly
  many progress trees $(q,g)$ with
  $(q,g) \succ_{\mn{db}}(q, h|_{\mn{var}(q)})$ and these can be found
  in constant time by starting with $g=h|_{\mn{var}(q)}$ and then
  choosing one or more variables $x \in \mn{var}(q)$ with
  $g(x) \neq \ast$ and setting $g(x)=\ast$. 
    Note that the
    pair $(q, h|_{\mn{var}(q)})$ is neither required nor guaranteed to be a
    progress tree. % It is nevertheless well-defined to
    % use`$\prec_{\mn{db}}$' on such pairs.
    % We further remark that if
    % $(q,g) \succ_{\mn{db}}(q, h|_{\mn{var}(q)})$, then it can be
    % verified that $(q,g)$ is a progress tree except that the `only if'
    % direction of Condition~(2) of progress trees need not be
    % satisfied. 
    To remove $(q',g')$ from
  $\mn{trees}(v,h|_{\bar z})$, it is not possible to iterate over all
  progress trees in $\mn{trees}(v,h|_{\bar z})$ in search of $(q',g')$
  as there may be linearly many trees in the list. This problem is
  also solved by a lookup table. When generating the \mn{trees} lists
  in the preprocessing phase, we also generate a lookup table that takes as
  argument a progress tree and yields the memory location (register)
  where that tree is stored as part of a list $\mn{trees}(v,h)$. Note
  that every progress tree occurs in at most one such list. If the
  list is bidirectionally linked, it is then easy to locate and remove
  the tree in constant time.
%%<
%% In the two consecutive {\bf forall}
%% loops in the \mn{enum} function, we implicitly use the two access
%% lists computed in the preprocessing phase.

%\hideAlgorithm{
%\begin{figure}[t!]
% \centering
%\begin{algorithm}[H]
% \SetAlgoLined
% \SetKwProg{Fn}{Function}{}{end}
% % $h_0=\{x_0 \mapsto c_0\}$; \hfill \emph{\% $x_0, c_0$ as 
% %   in construction of $q_2$, $D_2$}\\
% $h_0 = \emptyset$;\\
% % \mn{enum}$(\mn{nextat}_{h_0}(P_0(x_0)),h_0)$\;
% % $~$\\[-2mm]
% \mn{enum}$(\mn{nextat}_{h_0}(v_0),h_0)$; \hfill
% \emph{\% $v_0$ root of $T_1$} \\[2mm]
% \Fn{\mn{enum}$(v,h)$}
% {
%  \If{$v=\mn{eoa}$}{
%    output $h(\bar x)$; \hfill \emph{\% $\bar x$ the variables in $q_1$}\\
%    \mn{prune}$(h)$\;
%    {\bf return} \\[1mm]
%  }
%  let $v=R(\bar y)$ with predecessor variables $\bar z$\;
%    \ForAll{$(q,g) \in \mn{trees}(v,h|_{\bar z})$}{
%    $h' = h \cup g$ \;
%      \mn{enum}$(\mn{nextat}_{h'}(v),h')$\; 
%  }
%{\bf return} 
%}
%~\\[2mm]
%
%\Fn{\mn{prune}$(h)$}{ \ForAll{subtrees $q$ of $q_1$} {
%      let $v$ be the root of $q$ with predecessor variables $\bar z$\;
%        \ForAll{progress trees $(q,g) \succ_{\mn{db}} (q, h|_{\mn{var}(q)})$ 
%     }{
%     remove $(q,g)$ from $\mn{trees}(v,h|_{\bar z})$\;
%   }
% }
%{\bf return} 
%}
%\end{algorithm}
% \caption{Enumeration of minimal partial answers}
% \label{fig:enumalg}
%\end{figure}
%}%hideAlgorithm

\begin{algorithm}[t]
    \caption{Enumeration of minimal partial answers.}
    \label{fig:enumalg}
    \begin{algorithmic}[5]
        \State $h_0 = \emptyset$;
        \State \mn{enum}$(\mn{nextat}_{h_0}(v_0),h_0)$;
        \\
        \Function {\mn{enum}}{$v,h$}
            \If{$v=\mn{eoa}$}
                \State output $h(\bar x)$; \hfill \emph{\% $\bar x$ the variables in $q_1$}
                \State \mn{prune}$(h)$;
                \State {\bf return}
            \EndIf
            \State let $v=R(\bar y)$ with predecessor variables $\bar z$;
            \ForAll{$(q,g) \in \mn{trees}(v,h|_{\bar z})$}
                \State  $h' = h \cup g$;
                \State \mn{enum}$(\mn{nextat}_{h'}(v),h')$;
            \EndFor
            \State {\bf return}
        \EndFunction
        \\
        \Function {\mn{prune}} {$h$}
            \ForAll{subtrees $q$ of $q_1$}
                \State let $v$ be the root of $q$ with predecessor variables $\bar z$;
                \ForAll{progress trees $(q,g) \succ_{\mn{db}} (q, h|_{\mn{var}(q)})$ }
                \State remove $(q,g)$ from $\mn{trees}(v,h|_{\bar z})$\;
                \EndFor
            \EndFor
            \State {\bf return}
        \EndFunction
    \end{algorithmic}
\end{algorithm}

% It follows from what was said above that preprocessing only takes time
% linear in $||D||$ while the enumeration delay is constant, that is, it
% only depends on $Q$, but not on $D$.
 %{\color{blue}we need to argue that the right \mn{trees} list
%   can be found in constant time on a RAM!} 
In the
\ifbool{arxive}{
  appendix, }
{
  long version,
}
we prove that the algorithm achieves its goal.
\begin{restatable}{proposition}{proppartialalgocorrect}
  \label{prop:partialalgocorrect}
  The algorithm outputs exactly the minimal partial answers to  $q_1$
  on~$D_1$, without repetition.
\end{restatable}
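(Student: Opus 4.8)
The plan is to establish four properties of the algorithm in Figure~\ref{fig:enumalg}: (soundness) every output is a partial answer to $q_1$ on $D_1$; (no repetition) no tuple is output twice; (minimality) every output is a minimal partial answer; and (completeness) every minimal partial answer is output. The structural fact I would isolate first is that every map $h:\mn{var}(q_1) \to (\mn{adom}(D_1)\setminus N)\cup\{\ast\}$ that is the $\ast$-reduction of a homomorphism from $q_1$ to $D_1$ has a \emph{unique} decomposition into progress trees: since $q_1$ has no quantified variables, $h$ determines $h(\bar x)$ and conversely, and condition~(2) of progress trees forces the excursion subtrees (the maximal connected parts of $T_1$ on which $h$ takes the value $\ast$, together with their entry atom) to be uniquely determined by $h$. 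Consequently the pre-order traversal driven by $\mn{nextat}$ visits, for each such $h$, exactly one sequence of progress trees, so that ignoring \mn{prune} the leaves of the recursion are in bijection with the partial answers.

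For soundness I would argue that whenever $\mn{enum}$ reaches $\mn{eoa}$, the accumulated map $h$ is total on $\mn{var}(q_1)$ and glues to a genuine homomorphism $\hat h:q_1 \to D_1$ sending each wildcard to a null: each progress tree contributes a homomorphism on its subtree (condition~(3)), the recursive calls only pick trees whose restriction to the predecessor variables agrees with the part of $h$ already fixed (the relevance invariant together with the indexing of $\mn{trees}(v,h|_{\bar z})$), and the progress condition~(iv) guarantees the traversal never stalls. Shared variables between a subtree and the rest of $q_1$ are confined to the cut edge of the join tree $T_1$, so the pieces glue consistently and $h(\bar x)=(\hat h(\bar x))^{\ast}_{\Nbf}\in q_1(D_1)^{\ast}_{\Nbf}$. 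No repetition then follows from uniqueness of the decomposition, since distinct leaves correspond to distinct maps $h$ and \mn{prune} only deletes, never duplicates.

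The heart of the argument is the interplay of the database-preferring order and \mn{prune}, which I would use to prove completeness and minimality together. For completeness I would show that no progress tree occurring in the decomposition of a minimal partial answer $\bar c$ is ever removed: if $\mn{prune}(h')$ removed such a tree $(q,g)$ rooted at $v$, then $(q,g)\succ_{\mn{db}}(q,h'|_{\mn{var}(q)})$, and I can glue the more informative $h'|_{\mn{var}(q)}$ into the homomorphism witnessing $\bar c$ in place of $g$; the boundary values on the predecessor variables $\bar z$ of $v$ agree, since conditions (b),(c) of $\prec_{\mn{db}}$ force $h'(\bar z)=g(\bar z)$, and the join-tree cut property makes the glued map a homomorphism. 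This yields a partial answer strictly below $\bar c$, contradicting minimality. Hence the trees of $\bar c$ survive and, since the traversal explores every surviving tree, $\bar c$ is output. For minimality, given an output $\bar c$ I would pick a minimal partial answer $\bar c^\ast \preceq \bar c$; if $\bar c^\ast \prec \bar c$, then by completeness $\bar c^\ast$ is output, and at the divergence atom of the two traversal paths the $\ast$-positions of $\bar c^\ast$ are contained in those of $\bar c$, so $\bar c^\ast$ carries a $\prec_{\mn{db}}$-smaller progress tree and is emitted first. At that moment $\mn{prune}(h^\ast)$ removes, from the list $\mn{trees}(v,h^\ast|_{\bar z})$ that $\bar c$'s branch would later query (the same list, as $h^\ast(\bar z)=g(\bar z)$), the excursion $(q,g)$ of $\bar c$ containing a position where $\bar c$ has $\ast$ but $\bar c^\ast$ a constant, since conditions (a)--(d) are met there. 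Thus $\bar c$ could not have been output, forcing $\bar c=\bar c^\ast$ to be minimal.

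The main obstacle I anticipate is exactly this local-to-global transfer. The \mn{prune} step compares progress trees against the pair $(q,h|_{\mn{var}(q)})$, which need not itself be a progress tree, and it mutates a single shared global data structure throughout the depth-first traversal. Making the two directions rigorous requires (i) the join-tree argument that every variable shared between an excursion subtree and its complement lies on the cut edge, so that replacing one excursion by a more informative one yields a valid homomorphism; and (ii) a careful timing argument that a dominating minimal answer, together with its pruning, is always completed strictly before the dominated answer's divergent branch is entered, relying on the database-preferring order within each $\mn{trees}$ list and on the fact that removals are correctly reflected in the ongoing \textbf{forall} loop.
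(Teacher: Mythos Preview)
Your plan is correct and matches the paper's proof essentially point for point: the paper packages your soundness, no-repetition, and ``emitted first'' claims into a preliminary Lemma~\ref{lem:nopruning} about the algorithm \emph{without} pruning, and then proves the proposition via exactly your two directions (pruned $\Rightarrow$ non-minimal, and non-minimal $\Rightarrow$ pruned by some dominating minimal answer that is output first). The one detail to make explicit in the gluing step is that the boundary of an excursion subtree consists of \emph{both} the root's predecessor variables \emph{and} the fringe variables at the leaves; in both cases $g(x)\neq\ast$ by Conditions~(1) and~(2) of progress trees, whence $h'(x)=g(x)$ follows from clauses~(b),(c) of $\prec_{\mn{db}}$ just as you argue for the root.
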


\section{Enumeration With Multi-Wildcards}
\label{sect:enummulti}

\newcommand{\SPA}{{\textit{Ans}}}

We show that Theorem~\ref{thm:upperGpartial} lifts from the case of a
single wildcard to the case of multi\=/wildcards. 
\begin{theorem}
    \label{thm:upperGspartial}
    Enumerating minimal partial answers with multi- wildcards is in
    \dlc for OMQs from $(\class{G},\class{CQ})$ that are
    acyclic and free\=/connex acyclic.
\end{theorem}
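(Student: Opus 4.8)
The plan is to reduce the multi-wildcard enumeration problem to the
combination of two tools already established in the paper: the
single-wildcard enumeration algorithm of
Theorem~\ref{thm:upperGpartial}, and an all-testing procedure for
(not necessarily minimal) multi-wildcard partial answers.  The key
observation is that a minimal partial answer with multi-wildcards
carries strictly more information than its single-wildcard
`projection': collapsing every wildcard $\ast_j$ to the single symbol
$\ast$ maps a multi-wildcard tuple to a single-wildcard tuple, and
under this map the $\preceq$-order on single-wildcard tuples is
coarser than the one on multi-wildcard tuples.  I would first make
precise how the two notions of minimality interact under this
projection, and then enumerate multi-wildcard answers by
``refining'' each enumerated single-wildcard answer into all the
finitely many admissible wildcard-equality patterns.

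\textbf{Step 1: all-testing for partial answers with multi-wildcards.}
As announced at the end of Section~\ref{sect:intro}, I would first
establish that all-testing of (not necessarily minimal) partial
answers with multi-wildcards is in \dlc for acyclic and free\=/connex
acyclic OMQs from $(\class{G},\class{CQ})$.  Working in the
query-directed chase $D_0 := \mn{ch}^{q_0}_\Omc(D)$ via
Lemmas~\ref{lem:twosemantics} and~\ref{prop:restrictedchaseworks}, a
candidate multi-wildcard tuple $\bar c$ prescribes, for each position,
either a concrete database constant or a wildcard, together with an
equality pattern among the wildcard positions.  Testing whether
$\bar c$ is a partial answer amounts to deciding the existence of a
homomorphism $h$ from $q_0$ into $D_0$ that maps the constant
positions to the prescribed constants, maps each wildcard position
into the null part $N$, and respects the prescribed equalities and
the implicit non-constraint on distinct wildcards.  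This is itself an
all-testing problem for an acyclic, free\=/connex acyclic query (after
substituting the constant positions, as in the proof of
Theorem~\ref{thm:upperG}), so I expect it to reduce to
Proposition~\ref{prop:allTestingCompleteUpper} applied to a family of
derived CQs, one per way the wildcard positions can be forced into the
null part.  Because the number of positions and hence the number of
such derived queries depends only on $||Q||$, this stays within \dlc.

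\textbf{Step 2: enumerate by refinement.}  I would then run the
single-wildcard enumeration algorithm of
Theorem~\ref{thm:upperGpartial}, which outputs each minimal
single-wildcard answer $\bar c^{\,\ast}$ with constant delay.  For
each such $\bar c^{\,\ast}$, there are only constantly many
multi-wildcard tuples $\bar d$ whose wildcard positions are exactly the
$\ast$-positions of $\bar c^{\,\ast}$ and that differ only in the
equality pattern imposed on those positions; I would iterate over all
such patterns (a number bounded by a Bell number in the arity, hence
constant) and use the Step~1 tester to keep those $\bar d$ that are
partial answers, then among the surviving $\bar d$ with the same set
of $\ast$-positions retain only the $\preceq$-maximal equality
patterns, i.e.\ those prescribing the fewest equalities, since coarser
patterns are $\prec$-preferred.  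Each multi-wildcard answer has a
well-defined single-wildcard projection that is itself minimal, so
every element of $Q(D)^{\Wmc}$ is produced exactly once by the
$\bar c^{\,\ast}$ it projects to; this yields enumeration without
repetition, and the per-answer work is constant, preserving constant
delay.

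\textbf{Main obstacle.}  The delicate point, and what I expect to be
the hard part, is the interaction of the two minimality notions:
minimality of a multi-wildcard answer is \emph{not} simply minimality
of its single-wildcard projection together with local maximality of
the equality pattern, because replacing a wildcard by a constant and
adjusting equalities can interact.  I would need to prove a
\emph{decomposition lemma} asserting that $\bar d \in Q(D)^{\Wmc}$ iff
(i) its single-wildcard projection lies in $Q(D)^{\ast}$ and (ii)
among all partial-answer multi-wildcard tuples sharing that projection
and that same $\ast$-support, $\bar d$ is $\preceq$-minimal.  The
forward direction is routine, but the converse requires showing that
no $\prec$-improvement of $\bar d$ can arise by turning a wildcard
into a constant \emph{other} than in a way captured by refining the
single-wildcard projection; this is where connectedness of $q_0$ and
the guarded-set structure of excursions (as exploited in
Section~\ref{sect:LPAsingleWildcardUpper}) are used to confine each
wildcard to a single null-excursion, so that the two kinds of
improvement are orthogonal.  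Once this lemma is in place, correctness
and the constant-delay bound follow from Step~1, Step~2, and
Theorem~\ref{thm:upperGpartial}.
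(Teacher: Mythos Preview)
Your decomposition lemma---that $\bar d \in Q(D)^{\Wmc}$ iff its single-wildcard projection lies in $Q(D)^{\ast}$ and $\bar d$ is $\preceq$-minimal among multi-wildcard partial answers with that same $\ast$-support---is false, and this breaks Step~2. The paper gives an explicit counterexample (Example~\ref{ex:multiwildproblem}): take
\[
\Omc = \{\, A(x) \rightarrow \exists y_1 \exists y_2\; R(x,y_1)\wedge T(x,y_1)\wedge S(x,y_2)\,\},
\quad
q_0(x_0,x_1,x_2,x_3) = R(x_0,x_1)\wedge S(x_0,x_2)\wedge T(x_0,x_3),
\]
and $D=\{A(c),R(c,c')\}$. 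Then $Q(D)^{\ast}=\{(c,c',\ast,\ast)\}$ while $Q(D)^{\Wmc}=\{(c,c',\ast_1,\ast_2),\,(c,\ast_1,\ast_2,\ast_1)\}$. The tuple $(c,\ast_1,\ast_2,\ast_1)$ is a minimal multi-wildcard answer whose single-wildcard projection is $(c,\ast,\ast,\ast)$, which is \emph{not} in $Q(D)^{\ast}$ since $(c,c',\ast,\ast)\prec(c,\ast,\ast,\ast)$. Your refinement procedure, which only looks at multi-wildcard tuples whose $\ast$-support exactly matches that of some minimal single-wildcard answer, would never consider $(c,\ast_1,\ast_2,\ast_1)$. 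The phenomenon is precisely that imposing an equality between two wildcard positions can make a tuple minimal even though relaxing that equality and then projecting gives a non-minimal single-wildcard tuple; the two kinds of improvement (replacing a wildcard by a constant versus merging two wildcards) are \emph{not} orthogonal.

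The paper's fix is to replace, for each $\bar a^\ast\in Q(D)^\ast$, the ball $B^\Wmc(\bar a^\ast)$ by the \emph{cone} $\mn{cone}^\Wmc(\bar a^\ast)=\bigcup_{\bar b^\ast\succeq\bar a^\ast}B^\Wmc(\bar b^\ast)$, which does contain $(c,\ast_1,\ast_2,\ast_1)$ in the example. This creates a new difficulty you did not anticipate: cones of distinct minimal single-wildcard answers can overlap, so the naive ``output what is new in the current cone'' scheme can stall (some $\bar a^\ast$ may contribute nothing new), violating constant delay. The paper resolves this with extra bookkeeping: it maintains a list $L$ of cone elements not yet output, uses the all-tester to filter and a lookup table $F$ to avoid duplication, and guarantees at least one fresh output per iteration by always choosing from $\min^{\prec}(B^\Wmc(\bar a^\ast)\cap q(D)^{\Wmc,\not\prec}_\Nbf)$, which is nonempty and disjoint across distinct $\bar a^\ast$ (Lemma~\ref{lemma:cone-1}). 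A smaller point: you also have the direction of preference reversed---tuples with \emph{more} wildcard equalities are $\prec$-smaller (more informative), not larger.
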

Fix an OMQ $Q(\bar x)=(\Omc,\Sbf,q_0) \in (\class{G},\text{CQ})$ with
$q_0$ acyclic and free\=/connex acyclic and let an \Sbf-database $D$
be given as input. By Lemmas~\ref{lem:twosemantics}
and~\ref{prop:restrictedchaseworks}, we may enumerate
$q_0(\mn{ch}^{q_0}_\Omc(D))^{\Wmc}_\Nbf$ in place of~$Q(D)^\Wmc$. For brevity, we from now on use $D$ to
denote $\mn{ch}^{q_0}_\Omc(D)$ (and we will never refer back to the
original $D$).

Our general approach to enumerating $Q^\Wmc(D)$ is to combine the
enumeration algorithm from Theorem~\ref{thm:upperGpartial}, here
called $A_1$, with a \dlc algorithm for all-testing (not necessarily
minimal) partial answers with multi-wildcards. In fact, we develop
such an algorithm $A_2$ in \ifbool{arxive}{ the appendix, }
{Appendix~\ref{app:alltestingmulti}, } which is
non-trivial.
{%\color{blue}
  The algorithm involves a multi-wildcard
  version of progress trees and running in parallel \dlc all-testing
  algorithms for complete answers to any subquery $q'$ of $q$, that
  is, to any CQ $q'$ that can be obtained from $q$ by dropping atoms.}

With
that algorithm in place, a first implementation of the general
approach could then be as follows.  Use $A_1$ to enumerate
$Q^*(D)$. For each obtained answer~$\bar a^\ast$, construct the
\emph{multi-wildcard ball of~$\bar a^\ast$}, that is, the set
$B^{\Wmc}(\bar{a}^{\ast})$ of multi-wildcard tuples $\bar a^\Wmc$ such
that replacing all occurrences of wildcards from \Wmc in $\bar a^\Wmc$
by the single-wildcard `$\ast$' results in~$\bar a^\ast$. Notice that
if the length of $\bar a^\ast$ is bounded by a constant, then so is
the cardinality of the multi-wildcard ball of~$\bar{a}^\ast$.  Discard
from $B^{\Wmc}(\bar{a}^{\ast})$ those tuples that are not partial
answers using $A_2$, and then output those among the remaining tuples
that are minimal w.r.t.~`$\prec$'. This first implementation is
incomplete.
\begin{example}
\label{ex:multiwildproblem}
  Let $Q=(\Omc,\Sbf,q_0)$ where
  $$
    \Omc = \{ A(x) \rightarrow \exists y_1 \exists y_2 \,
        R(x,y_1) \wedge T(x,y_1) \wedge S(x,y_2) \},
  $$
  \Sbf contains all relation symbols in \Qmc, and
  $$
     q_0(x_0,x_1,x_2,x_3)=R(x_0,x_1) \wedge S(x_0,x_2) \wedge T(x_0,x_3).
  $$
  Further let 
  $ D = \{ A(c), R(c,c')\}$.  Then $Q^*(D) = \{ (c,c',\ast,\ast) \}$
  and $Q^\Wmc(D)=\{(c,c',\ast_1,\ast_2),
  (c,\ast_1,\ast_2,\ast_1)\}$. But we
never consider (and thus do not output) the multi-wildcard tuple
  $(c,\ast_1,\ast_2,\ast_1)$.
\end{example}
The solution % for the problem illustrated in
% Example~\ref{ex:multiwildproblem}
involves replacing the multi-wildcard ball $B^{\Wmc}(\bar{a}^{\ast})$
with the \emph{multi-wildcard cone}
$$
\mn{cone}^\Wmc(\bar{a}^{\ast}) = \bigcup_{\bar{b}^{\ast} : \bar{a}^{\ast} \preceq \bar{b}^{\ast}} B^{\Wmc}(\bar{b}^{\ast}).
$$
Clearly, also the 
cardinality of $\mn{cone}^\Wmc(\bar{a}^{\ast})$ is bounded by a 
constant 
if the length of $\bar a^\ast$ is. 
Regarding Example~\ref{ex:multiwildproblem}, note that
$(c,\ast_1,\ast_2,\ast_1) \notin B^\Wmc(c,c',\ast,\ast)$, but
$(c,\ast_1,\ast_2,\ast_1) \in \mn{cone}^\Wmc(c,c',\ast,\ast)$.
However, the cones of different tuples $\bar a^\ast,\bar b^\ast \in
Q^*(D)$ might overlap and thus for some
$\bar a^\ast \in Q^*(D)$, there might be no tuple in
$\mn{cone}^\Wmc(\bar{a}^{\ast})$ that we haven't yet output, 
compromising constant delay. We address these issues by using a careful combination of balls, cones, and pruning.

% \subsection{Lower Bounds}
% \label{sect:enumlower}

% We prove lower bounds for the enumeration of (complete and least
% partial) answers to OMQs that are not acyclic or acyclic but not
% free-connex acyclic. We also prove that all-testing is less
% well-behaved than enumeration in the case of least partial answers in
% the sense that there is an OMQ from $(\class{ELI},\class{CQ})$ that is
% acyclic and free-connex acyclic, but for which all-testing is not in
% CD$\circ$Lin. As in Section~\ref{sect:singlelower}, we obtain our
% lower bounds only for ontologies formulates in $\class{ELI}$ and
% observe that an extension to $\class{G}$ is probably quite difficult.
% Also as in Section~\ref{sect:singlelower} and as in the case without
% ontologies \cite{berkholz-enum-tutorial}, we further have to assume
% that OMQs are self-join free. Our lower bounds are conditional on
% assumptions whose failure would imply a remarkable advance in
% algorithm theory.

%\input{lower-bounds-enum-eli-pods}

%\section{Functional Roles}

We now describe our algorithm in full detail.
The preprocessing phase consists of running the preprocessing phases
of $A_1$ and $A_2$. The enumeration phase is shown
in Figure~\ref{fig:enum-alg-strong}. 
\begin{algorithm}[t]
    \caption{Enumeration of minimal partial answers with multi-wildcards.}
    \label{fig:enum-alg-strong}
\begin{algorithmic}[3]
    \State $L = [];$
    \ForAll {$\bar{a}^{\ast} \in q(D)^*_\Nbf$}
        \ForAll {$\bar{a}^{\Wmc} \in \mn{cone}^\Wmc(\bar a^\ast) \cap q(D)_\Nbf^{\Wmc,\not\prec}$ with $F(\bar{a}^{\Wmc}) = 0$}
                    \State $F(\bar{a}^{\Wmc}) = 1$;
                    \State append $\bar{a}^{\Wmc}$ to $L$;
                    \State \mn{prune}($\bar{a}^{\Wmc}$) 
        \EndFor
        \State choose  $\bar{a}^{\Wmc} \in \min^{\prec}(B^{\Wmc}(\bar{a}^{\ast}) \cap q(D)_\Nbf^{\Wmc,\not\prec})$;
        \State output $\bar{a}^{\Wmc}$;
        \State remove $\bar{a}^{\Wmc}$ from $L$;
    \EndFor
    \State output all tuples in $L$;
    \State {\bf return}
        \\
    \Function {\mn{prune}}{$\bar{a}^\Wmc$}
        \ForAll{multi-wildcard tuples $\bar{b}^{\Wmc}$ such that $\bar{a}^{\Wmc} \prec \bar{b}^{\Wmc}$}
            \State $F(\bar{b}^{\Wmc}) =1$;
            \State remove $\bar{b}^{\Wmc}$ from $L$;
        \EndFor
        \State {\bf return}
    \EndFunction
\end{algorithmic}
\end{algorithm}
With $L$, we denote a bidirectionally linked
list in which we store multi-wildcard tuples and that is initialized
as the empty list.
In the {\bf forall} loop in Line~2, we use algorithm $A_1$ to iterate over all minimal partial answers in
$q(D)_\Nbf^\ast$. With $q(D)^{\Wmc,\not\prec}_\Nbf$, we denote the set
of (not necessarily minimal) partial answers with multi-wildcards to CQ
$q$ on database $D$.
 The intersections with $q(D)_\Nbf^{\Wmc,\not\prec}$
in Line~3 and~7 can be computed in
constant time using algorithm~$A_2$. $F$ is a lookup table that stores
a Boolean value for every multi-wildcard tuple of length $|\bar x|$,
initialized with~0; this is done implicitly as all memory is initialized with~0 in
our machine model. Informally, $F(\bar a^\Wmc)$ is set to~1 if
$\bar a^\Wmc$ has already been added to the list $L$ or is not in
$q(D)_\Nbf^\Wmc$ (and thus does not need to be added to $L$).
For a set of  multi-wildcard tuples $S$, we use 
$\mn{min}^{\!\prec}(S)$ 
to denote the tuples in $S$ that are minimal  
w.r.t.\ `$\prec$'. 
To remove
multi-wildcard tuples from $L$ in constant time, we use another lookup
table that stores, for every multi-wildcard tuple $\bar a^\Wmc$ that
we have added to $L$, the memory location of the list node
representing $\bar a^\Wmc$ on $L$. Since $L$ is bidirectionally
linked, this allows us to remove $\bar a^\Wmc$ from $L$ in constant
time. Since the arity of relation symbols is (implicitly) bounded by a
constant, so is the number of iterations of the {\bf forall} loop in
Line~14.  From what was said above, it follows that
the preprocessing phase runs in linear time while the enumeration
phase has only constant delay. Correctness is proved in the
\ifbool{arxive}{
  appendix.}
{
  long version.
}
\begin{restatable}{lemma}{lemmultiwildcorr}
\label{lem:multiwildcorr}
  The algorithm outputs exactly the minimal partial answers with
  multi-wildcards to $q$ on $D$, without repetition.
\end{restatable}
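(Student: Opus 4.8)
The plan is to verify \emph{soundness} (every emitted tuple is a minimal partial answer with multi-wildcards), \emph{completeness} (every such tuple is emitted), and \emph{no repetition}, by relating the run to the single-wildcard data enumerated by $A_1$. Write $\pi$ for the collapse map sending a multi-wildcard tuple to the single-wildcard tuple obtained by replacing every $\ast_i$ with $\ast$, and set $M^\ast := q(D)^\ast_\Nbf$, $P^\Wmc := q(D)^{\Wmc,\not\prec}_\Nbf$, and $M^\Wmc := q(D)^\Wmc_\Nbf = \min^{\prec}(P^\Wmc)$. Two elementary facts drive everything: $\pi$ is monotone ($\bar c^\Wmc \preceq \bar d^\Wmc$ implies $\pi(\bar c^\Wmc) \preceq \pi(\bar d^\Wmc)$, checked position-wise against the definition of $\preceq$), and $\pi$ maps $P^\Wmc$ onto the set $P^\ast$ of not-necessarily-minimal single-wildcard partial answers to $q$ on $D$ (both are images of the null-collapsing map on $q(D)$, and $\pi(\bar a^\Wmc_\Nbf)=\bar a^\ast_\Nbf$).

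The first real step is a \emph{coverage lemma}: the cones of the minimal single-wildcard answers exhaust all multi-wildcard partial answers, so that, writing $U$ for the set scanned by the algorithm,
$$
U \;:=\; \bigcup_{\bar a^\ast \in M^\ast}\big(\mn{cone}^\Wmc(\bar a^\ast)\cap P^\Wmc\big) \;=\; P^\Wmc, \qquad \min^{\prec}(U)=M^\Wmc .
$$
Indeed, for $\bar b^\Wmc \in P^\Wmc$ the tuple $\pi(\bar b^\Wmc)\in P^\ast$ lies above some $\bar a^\ast \in M^\ast$, whence $\bar b^\Wmc \in B^\Wmc(\pi(\bar b^\Wmc)) \subseteq \mn{cone}^\Wmc(\bar a^\ast)$. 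This is precisely the step that uses cones rather than balls and thereby repairs the naive algorithm of Example~\ref{ex:multiwildproblem}.

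Next I analyse the two sources of output. For the choose-and-output step (Lines~8--9) the key sub-claim is that \emph{ball-minimal already means globally minimal}: if $\bar a^\ast \in M^\ast$ and $\bar b^\Wmc \in \min^{\prec}\!\big(B^\Wmc(\bar a^\ast)\cap P^\Wmc\big)$, then $\bar b^\Wmc \in M^\Wmc$. For a putative $\bar c^\Wmc \prec \bar b^\Wmc$ in $P^\Wmc$, monotonicity gives $\pi(\bar c^\Wmc) \preceq \pi(\bar b^\Wmc)=\bar a^\ast$; the case $\pi(\bar c^\Wmc)=\bar a^\ast$ contradicts ball-minimality and the case $\pi(\bar c^\Wmc) \prec \bar a^\ast$ contradicts $\bar a^\ast \in M^\ast=\min^{\prec}(P^\ast)$. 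Non-emptiness of the set chosen from follows because $\bar a^\ast=\bar a^\ast_\Nbf$ for some answer $\bar a$, so $\bar a^\Wmc_\Nbf$ witnesses it. Since $A_1$ lists $M^\ast$ without repetition and balls over distinct single-wildcard tuples are disjoint, the Line-9 outputs are pairwise distinct members of $M^\Wmc$.

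For the list $L$, I would prove by induction on the outer loop the invariant that after the $i$-th iteration $L$ equals $\min^{\prec}(U_i)\setminus O_i$, where $U_i$ is the union of the first $i$ cones intersected with $P^\Wmc$ and $O_i$ collects the Line-9 outputs so far, while $F(\cdot)=1$ holds exactly on the upward $\preceq$-closure of $U_i$. The append/$F$-guard/prune discipline maintains this: a $\prec$-minimal tuple of $U_i$ has nothing strictly below it in $P^\Wmc$, so it is appended once and never pruned (hence sits in $L$ unless already emitted in Line~9 and removed in Line~10); a non-minimal tuple of $U_i$ is, possibly after a transient appearance, pruned by the minimal tuple beneath it and is never selected in Line~8. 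At the end $U_n=P^\Wmc$, so $L=M^\Wmc\setminus O_n$, and the final dump in Line~12 together with the $O_n$ already produced yields exactly $M^\Wmc$, with no tuple twice (distinct balls for Line~9; removal from $L$ prevents overlap with Line~12; the $F$-table blocks any re-insertion). The main obstacle I anticipate is exactly this bookkeeping invariant: because one multi-wildcard tuple can belong to the cones of several minimal single-wildcard answers, the cones scanned in different iterations overlap, and one must argue that the order in which tuples are \emph{reached} can neither cause a minimal tuple to be skipped nor let a non-minimal tuple survive in $L$---this is where monotonicity of $\pi$ and the pairwise $\preceq$-incomparability of the tuples in $M^\ast$ do the real work.
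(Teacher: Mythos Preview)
Your proposal is correct and essentially follows the paper's approach: your coverage lemma and the ball-minimal-implies-globally-minimal claim are exactly Points~(1) and~(2) of the paper's Lemma~\ref{lemma:cone-1}, and the ball disjointness you invoke for non-repetition of Line-9 outputs is immediate (each multi-wildcard tuple has a unique $\pi$-image). The paper does not name your invariant $L=\min^\prec(U_i)\setminus O_i$ explicitly but argues the same conclusions more directly---soundness of each output source, then completeness, then non-repetition via the $F$-table---so your invariant-based presentation is slightly more structured but rests on the same observations.
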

%
%
% \begin{lemma}
%     \label{lemma:LSPA-enum-in-cdlin}
%     The algorithm \emph{LSPA-enum} is in CD$\circ$Lin. 
% \end{lemma}

\section{Conclusions}

As future work, it would be interesting to consider as the ontology
language also description logics with functional roles such as
$\mathcal{ELIF}$; there should be a close connection to enumeration of
answers to CQs in the presence of functional dependencies
\cite{carmeli-enum-func}. A much more daring extension would be to
$(\class{G},\class{UCQ})$ or even to $(\class{FG},\class{(U)CQ})$
where $\class{UCQ}$ denotes unions of CQs and $\class{FG}$ denotes
frontier-guarded TGDs. Note, however, that enumeration in \cdlin
of answers to UCQs is not fully understood even in the case without
ontologies
\cite{carmeli-enum-ucqs}. % Also note that $(\class{FG},\class{CQ})$ is no
% less expressive than $(\class{FG},\class{UCQ})$ and thus admitting
% frontier-guarded TGDs implies admitting UCQs.
{%\color{blue}
Another interesting
question is whether the enumeration problems placed in \dlc in
the current paper actually fall within \cdlin, that is, whether the
use of a polynomial amount of memory in the enumeration phase can be
avoided.
}
\medskip 

\noindent
{\bf Acknowledgement.} We acknowledge support by the
DFG project LU 1417/3-1 `QTEC'.

%\newpage
%\cleardoublepage

\bibliographystyle{plainurl}
\bibliography{enum}

\ifbool{arxive}{
\cleardoublepage
}{
  \newpage
}

\appendix   

\ifbool{arxive}{
\section{Additional Preliminaries}

\subsection{The RAM model}

%  \section{The RAM model}

As our computational model, we assume
RAMs under the uniform cost model \cite{DBLP:journals/jcss/CookR73},
following \cite{Grandjean-RAM} in the concrete formalization. Such a
RAM has a one-way read-only input tape and a write-only output tape,
as well as an unbounded number of registers that store non-negative
integers of $O(\log n)$ bits, $n$ the input size; this is called a
DRAM %DLINRAM
in \cite{Grandjean-RAM}, used there to define the complexity class
DLINEAR. Adding, subtracting, and comparing the values
of two registers as well as bit shift takes time~$O(1)$. 
  This model is a standard assumption in the context of
  enumerating the answers to queries
  \cite{segoufin-enum, bagan-enum-cdlin,carmeli-enum-func,
    berkholz-enum-tutorial}, although sometimes smaller registers
  are assumed that can only hold integers up to $O(n / \log(n))$.
  An input database $D$ is given as a word on the input tape.  Since
  we are interested in data complexity, there are only $O(1)$ many
  relation symbols whose arity is $O(1)$.  We assume that constants in
  $\mn{adom}(D)$ are represented in binary, which requires at most
  $\mn{log}(k)$ bits, $k=|\mn{adom}(D)|$. We can thus store a constant in a
  single register, and the same is true for facts in $D$.
  
  We shall often be interested in lists of constants from
  $\mn{adom}(D)$ that are of length $O(1)$, let us call this a
  \emph{short list}. With every short list, we can associate a unique
  memory address (register) that can be computed from the short list
  in $O(1)$ time using bit shifting and addition. This means that we can implement
  lookup tables indexed by such lists that can be accessed and updated
  in $O(1)$ time. % , essentially assuming perfect
  % hashing without collisions.

  Another crucial property of this model is that sorting is possible
  in linear time \cite{Grandjean-RAM}. In fact, we shall be interested
  in sets of short lists equipped with a strict weak order. We
  summarize
  the approach from \cite{Grandjean-RAM}.
  % any list of short
  % lists can be sorted lexicographically in linear time,  treating the
  % constants in $\mn{adom}(D)$ as an ordered set of alphabet
  % symbols. The latter is justified by the fact that each such constant
  % can be stored in a single register and thus comparisons are possible
  % in $O(1)$ time. %  (despite the fact that the number of `alphabet
  % symbols' is not $O(1)$).
  To sort a list $L = l_1;l_2; \dots; l_m$ of $m$ short lists
  one first observes the following:
  \begin{itemize}
      \item we can sort a list of short words, by which we mean words of length $O(\log ||L||)$,
      in time $O(||L||)$ using \emph{counting sort}; indeed, there is no more than 
      $O(||L||)$ words of such length;
      \item we can use standard sorting algorithms to sort a list of long words, i.e.~words of length $\Omega(\log ||L||)$, using $O(n \log n)$ operations, where $n \in O(\frac{||L||}{\log ||L||})$ is the number of elements of length $\Omega(\log ||L||)$.
  \end{itemize}

  To sort a list, it is thus enough to divide it into two disjoint lists: a list of short words and a list of long words.
  Then we sort those lists independently and join them into a single sorted list.
  Since the division and the join can be easily done in linear time,
  sorting can be performed in time $O(||L||) + O(n \log n)$.
  Since $n \in O(\frac{||L||}{\log ||L||})$, this gives the overall running time $O(||L||)$.
%  We can use standard sorting algorithms to
%  sort a list of $m$ short lists with $O(m \log m)$ operations,
%  thus in time $O(m \log m)$ due to the uniform measure.
  % , $m$ the number of short lists in the
  % list.
%  Note that in our case $m \in k^{O(1)}$ with $k=|\mn{adom}(D)|$. Since the
%  representation of $D$ as a word on the input tape has length at
%  least $k \log k$ (with the $\log k$ factor contributed by the
%  representation of constants), this yields linear time
%  sorting.
  We also recall that sorting on a RAM in linear time
  is possible even under the less liberal \emph{logarithmic cost measure} and
  when registers can only hold integers up to $O(n /
  \log(n))$ \cite{Grandjean-RAM}.
  %
  % Some remarks on sorting on RAMs:
  % - under the uniform cost model, it seems to be trivial (see above)
  % - under the logarithmic cost model, it is not, no matter whether
  % one allows polynomial memory or linear memory
  % - 

  \subsection{More on the Chase}
\label{sect:thechase}

We provide some observations regarding chase procedure.
The following is the central property of the chase.
\begin{lemma}\label{pro:chase}
  % ~\\[-4mm]
  % \begin{enumerate}
%
  % \item 
  Let $\Omc$ be a finite set of TGDs and $I$ an instance. Then
  for every model $J$ of $\Omc$
  with $I \subseteq J$, there is a homomorphism $h$ from
  $\mn{ch}_\Omc(I)$ to $J$ that is the identity on $\mn{adom}(I)$. 
%
% \item $Q(D) = q(\mn{ch}_S(D))$ for every OMQ
%%   $Q = (\Omc,\Sbf,q) \in (\class{TGD},\class{UCQ})$.
  %
%   \end{enumerate}
\end{lemma}
We next establish a technical lemma about the chase that may be viewed
as a locality property. Let $I$
be an instance and $\Omc$ a set of TGDs.  With every guarded set $S$
of~$I$, we associate a subinstance
$\mn{ch}_\Omc(I)|^\downarrow_S \subseteq \mn{ch}_\Omc(I)$ such that
every fact in $\mn{ch}_\Omc(I)$ that contains at least one null is
contained in exactly one such subinstance. We first identify with
every fact $R(\bar c) \in \mn{ch}_\Omc(I)$ that contains at least one
null a unique `source' fact
$\mn{source}(R(\bar c)) \in I$. % Recall
% that $\mn{ch}_S(I)$ is constructed by a sequence of chase steps.
% For all facts \mbox{$R(\bar c) \in I$}, $\mn{guard}(R(\bar c)) = R(\bar
% c)$.
Assume that $R(\bar c)$ was introduced by a chase step that applies
a TGD $T$ at a tuple $(\bar d, \bar d')$, and let $R'$ be the relation symbol
in the guard atom in $\mn{body}(T)$. Then we set
$\mn{source}(R(\bar c))=R'(\bar d, \bar d')$ if
$\bar d \cup \bar d' \subseteq \mn{adom}(I)$ and
$\mn{source}(R(\bar c))=\mn{source}(R'(\bar d, \bar d'))$ otherwise. For
any guarded set $S$ of $I$, we now define
$\mn{ch}_\Omc(I)|^\downarrow_{S}$ 
to contain those facts $R(\bar c) \in \mn{ch}_\Omc(I)$ such that
\begin{enumerate}

\item $\bar c \subseteq S$ or
 
\item $\bar c$ contains at least one null and
  $\mn{source}(R(\bar c))=S$.

\end{enumerate}
We shall actually consider such subinterpretations not only of the
final result $\mn{ch}_\Omc(I)$ of the chase, but also of the instances
constructed as part of a chase sequence $I_0,I_1,\dots$ for $I$ with
$\Omc$. In fact, we can define $I_i|^\downarrow_S$ in exact analogy with
$\mn{ch}_\Omc(I)|^\downarrow_{S}$, for all $i \geq 0$.  We next observe
that all facts in a subinstance $I_i|^\downarrow_{S}$ of $I_i$ can be
obtained by starting from the (very small) subinstance $I_i|_{S}$ and
then chasing with $\Omc$. % It is straightforward to prove the
% following by induction on $i$. 
%
\begin{lemma}
  \label{lem:treesarenice}
  Let $I$ be an instance, \Omc a set of guarded TGDs, $I_0,I_1,\dots$ a chase
  sequence of $I$ with \Omc, $i \geq 0$, and $S$ a guarded set in $I$. Then there is a homomorphism from $I_i|^\downarrow_{S}$
  to $\mn{ch}_\Omc(I_i|_{S})$ that is the identity on all constants in
  $S$. 
  % Let $I$ be an instance, \Smc a set of TGDs, $I_0,I_1,\dots$ a chase
  % sequence of $I$ with \Smc, and $S$ a guarded set in $I$. Then for
  % every $i \geq 0$ there is a homomorphism from $I_i|^\downarrow_{S}$
  % to $\mn{ch}_S(I_i|_{S})$ that is the identity on all constants in
  % $S$. Moreover, $h_i(c)=h_{i+1}(c)$ for all $c$ with $h_i(c)$
  % defined.
\end{lemma}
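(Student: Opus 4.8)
The plan is to prove the statement by induction on $i$, establishing for every guarded set $S$ of $I$ a homomorphism $h_i$ from $I_i|^\downarrow_S$ to $\mn{ch}_\Omc(I_i|_S)$ that is the identity on $S$. Before starting the induction I would isolate the locality property of the guarded chase on which everything hinges: for every null $n$ occurring in the chase sequence, all facts containing $n$ have the same source, and every non-null constant that co-occurs with $n$ in some fact lies in that common source $S$ (the set with $\mn{source}(\cdot)=S$). This is proved by a short induction on the chase sequence, using that $\mn{source}$ passes from a fact to the guard of the step creating it, together with the defining property of guardedness that the guard atom contains all body variables; intuitively, it is the standard statement that nulls generated below a guarded set $S$ never ``see'' a constant outside $S$. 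I would also record that $S\subseteq\mn{adom}(I_i|_S)$ whenever $S$ actually arises as a source (its source fact ranges exactly over $S$ and hence lies in $I_i|_S$), so that ``identity on $S$'' always typechecks.

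For the base case $i=0$ we have $I_0=I$, and assuming $I$ is null-free (as in all applications, where $I$ is a database) condition~(2) in the definition of $|^\downarrow_S$ is vacuous, so $I_0|^\downarrow_S=I_0|_S\subseteq\mn{ch}_\Omc(I_0|_S)$ and the identity is the required homomorphism.

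For the inductive step, let $I_{i+1}$ arise from $I_i$ by applying a guarded TGD $T=\phi(\bar x,\bar y)\rightarrow\exists\bar z\,\psi(\bar x,\bar z)$ at a tuple $(\bar d,\bar d')$ with guard fact $g$, adding $\psi(\bar d,\bar e)$ for fresh nulls $\bar e$. Since $I_i|_S\subseteq I_{i+1}|_S$, Lemma~\ref{pro:chase} yields a homomorphism $\mu$ from $\mn{ch}_\Omc(I_i|_S)$ to $\mn{ch}_\Omc(I_{i+1}|_S)$ that is the identity on $\mn{adom}(I_i|_S)\supseteq S$; composing with $h_i$ gives $h_i'$ from $I_i|^\downarrow_S$ to $\mn{ch}_\Omc(I_{i+1}|_S)$, identity on $S$. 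As $\mn{source}$ is fixed at creation time, the facts of $I_{i+1}|^\downarrow_S$ already present in $I_i$ are exactly $I_i|^\downarrow_S$, handled by $h_i'$; it remains to extend $h_i'$ to the new atoms of $\psi(\bar d,\bar e)$ lying in $I_{i+1}|^\downarrow_S$. A new fact without a null can enter only via condition~(1), so its constants lie in $S$ and it already belongs to $I_{i+1}|_S\subseteq\mn{ch}_\Omc(I_{i+1}|_S)$, mapped by the identity. A new fact containing a null enters via condition~(2), which forces its source, hence the source of the whole chase step, to be $S$.

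The crux is therefore to re-derive this source-$S$ step inside $\mn{ch}_\Omc(I_{i+1}|_S)$. If $g$ is null-free with constant set $S$, every body atom of $\phi(\bar d,\bar d')$ ranges over $S$, lies in $I_i|_S\subseteq I_{i+1}|_S$, and $T$ is applicable at $(\bar d,\bar d')$ there. If $g$ contains a null, then $\mn{source}(g)=S$ and, by the locality property, every null of $g$ has source $S$ and every non-null constant of $g$ lies in $S$; hence each body atom of $\phi(\bar d,\bar d')$ is either null-free with constants in $S$ (so in $I_i|^\downarrow_S$ by condition~(1)) or contains a null of source $S$ (condition~(2)), so the whole body lies in $I_i|^\downarrow_S$ and $h_i'$ transports it into $\mn{ch}_\Omc(I_{i+1}|_S)$. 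In either case the body of $T$ is present at the tuple $h_i'(\bar d,\bar d')$, so by fairness of the chase $\mn{ch}_\Omc(I_{i+1}|_S)$ contains the result $\psi(h_i'(\bar d),\bar e')$ of applying $T$; extending $h_i'$ by sending the fresh nulls $\bar e$ to $\bar e'$ (these occur in no earlier fact, so no conflict arises and the identity on $S$ is preserved) maps every new atom correctly. I expect the main obstacle to be precisely this null-guard case, namely arguing via the locality property that the \emph{entire} body of the triggering step already lives in $I_i|^\downarrow_S$, so that the induction hypothesis can carry it into the target chase.
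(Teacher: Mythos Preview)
Your proof is correct and follows essentially the same inductive scheme as the paper: show that the body of the triggering step is available to the inductive homomorphism and re-derive the step inside the target chase by fairness. Your version is in fact more careful than the paper's on two points: you explicitly bridge the change of target from $\mn{ch}_\Omc(I_i|_S)$ to $\mn{ch}_\Omc(I_{i+1}|_S)$ via Lemma~\ref{pro:chase}, and your case split on whether the guard fact contains a null (together with the locality property you isolate) cleanly covers the situation where the guard itself involves a null with source $S$—a case the paper's dichotomy ``$S=\bar c\cup\bar c'$ versus not'' does not obviously handle.
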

\begin{proof} %[Proof (sketch).]
  The proof is by induction on $i$. The induction start holds as
  $I_0|^\downarrow_S=I_0|_S \subseteq \mn{ch}_\Omc(I|_S)$. For the
  induction step, assume that $I_{i+1}$ was obtained from $I_i$ by
  applying a TGD
  $T= \phi(\bar x,\bar y) \rightarrow \exists \bar z \, \psi(\bar
  x,\bar z)$  at a tuple $(\bar c, \bar c')$. Let
  $R$ be the relation symbol used in the guard atom of
  $\phi$.  Then $R(\bar c,\bar c') \in
  I_i$. Let $S$ be a guarded set in~$I$. If   $I_{i+1}|^\downarrow_{S}
  =I_{i}|^\downarrow_{S}$, it suffices to use the induction
  hypothesis. Thus assume   $I_{i+1}|^\downarrow_{S}\neq I_{i}|^\downarrow_{S}$.
  
  First assume that $S \neq \bar c \cup \bar c'$. Then all facts in
  $I_{i+1}|^\downarrow_{S}\setminus I_{i}|^\downarrow_{S}$ contain
  only constants in $\bar c$, but no nulls.  By induction
  hypothesis, there is a homomorphism $h_i$ from $I_i|^\downarrow_{S}$
  to $\mn{ch}_\Omc(I_i|_{S})$ that is the identity on all constants in
  $S$. Clearly, $h_i$ is also a homomorphism from
  $I_{i+1}|^\downarrow_{S}$ to $\mn{ch}_\Omc(I_{i+1}|_{S})$.

  Now assume that $S=\bar c \cup \bar c'$. Then
  $I_{i+1}|^\downarrow_{S}\neq I_{i}|^\downarrow_{S} = \psi(\bar c,
  \bar c'')$ where $\bar c''$ consists of constants that do not occur
  in $I_i$.  By induction hypothesis, there is a homomorphism $h_i$
  from $I_i|^\downarrow_{S}$ to $\mn{ch}_\Omc(I_i|_{S})$ that is the
  identity on all constants in $S$. Applicability of $T$ at
  $(\bar c, \bar c')$ implies $\phi(\bar c,\bar c') \subseteq I_i$ and
  thus
  $\phi(h_i(\bar c),h_i(\bar c')) \subseteq \mn{ch}_\Omc(I_i|_{S})$.
  It follows $T$ has been applied at $(h(\bar c),h(\bar c'))$ in (any
  fair chase sequence that produces) $\mn{ch}_\Omc(I_i|_{S})$.  As a
  consequence, there are constants $\bar d$ such that $\psi(h(\bar
  c),\bar d) \subseteq \mn{ch}_\Omc(I_i|_{S})$. We extend $h_i$ to $h_{i+1}$
  so that $h_{i+1}(\bar c'')=\bar d$. Clearly, $h_{i+1}$ is a homomorphism from
  $I_{i+1}|^\downarrow_{S}$ to $\mn{ch}_\Omc(I_{i+1}|_{S})$.
\end{proof}

\subsection{\ELI and Simulations}

We introduce some preliminaries that are specific to \ELI.
Recall that, in \ELI, relation symbols can only have arity~1
or~2. A CQ $q$ over such a schema gives rise to an   undirected graph \label{prelim:tree-shape}
$$G_q^{\mn{var}}=(\mn{var}(q),\{\{x,y\} \mid R(x,y) \in q \text{ with } x,y \in
\mn{var}(q) \text{ and } x \neq y \}).$$
Note that in contrast to variables, constants in $q$ do not serve as
nodes in $G_q^{\mn{var}}$.  It is easy to see that $q$ is acyclic if
$G_q^{\mn{var}}$ is a disjoint union of trees. Of course, this admits
  reflexive loops and multi-edges in $q$.

  We now introduce the notion of a simulation, which is closely linked
  to the expressive power of \ELI.  Let \Sbf be a schema that only
  contains relations of arity one and two, and let $I$ and $J$ be
  \Sbf-instances. A \emph{simulation} from $I$ to $J$ is a relation
  $S \subseteq \mn{adom}(I) \times \mn{adom}(J)$ such that
\begin{enumerate}

\item $A(c) \in I$ and $(c,c') \in S$ implies $A(c') \in J$,

\item $R(c_1,c_2) \in I$  and $(c_1,c'_1) \in S$ implies that
  there is a $c'_2 \in \mn{adom}(J)$ such that $R(c'_1,c'_2) \in J$
  and $(c_2,c'_2) \in S$, and

\item $R(c_2,c_1) \in I$  and $(c_1,c'_1) \in S$ implies that
there is a $c'_2 \in \mn{adom}(J)$ such that $R(c'_2,c'_1) \in J$
and $(c_2,c'_2) \in S$.
  
\end{enumerate}
If there is a simulation for $I$ to $J$ such that $(c,c') \in S$, then
we write $(I,c) \preceq (J,c')$.

A unary CQ $q(x)$ is an \emph{ELIQ} (which stands for \emph{\ELI
  query}) if it contains no constants and the undirected graph
$G^{\mn{var}}_q$ is a disjoint unions of trees and $q$ contains no
self-loops and multi-edges, where the latter means that for any
$x,y \in \mn{var}(q)$, $q$ contains at most a single atom that
mentions both $x$ and $y$.\footnote{In the literature, an ELIQ is
  often defined as a single tree, rather than a disjoint union
  thereof. We work with the more general definition as this turns our
  to be more convenient for our purposes.} We use $\class{ELIQ}$ to denote the class of
all ELIQs.
\begin{lemma}
  \label{lem:sim}
  Let $Q(x)=(\Omc,\Sbf,q)\in(\class{ELI},\class{ELIQ})$,
  $D_1,D_2$ \Sbf-data\-bases, and $c_i \in \mn{adom}(D_i)$ for
  $i \in \{1,2\}$. Then $(D_1,c_1) \preceq (D_2,c_2)$ and
  $c_1 \in Q(D_1)$ implies $c_2 \in Q(D_2)$.
\end{lemma}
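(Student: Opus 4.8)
The plan is to pass to the chase and then push the witnessing homomorphism across the simulation, using that both ELIQs and \ELI TGDs are tree\=/shaped. Since $c_1 \in Q(D_1)$ means $c_1 \in q(\mn{ch}_\Omc(D_1))$, it suffices to fix an arbitrary model $I$ of $\Omc$ with $D_2 \subseteq I$ and show $c_2 \in q(I)$; as $I$ is arbitrary, this gives $c_2 \in Q(D_2)$. First I would compose the given simulation from $D_1$ to $D_2$ with the inclusion $D_2 \subseteq I$ to obtain a simulation $S$ from $D_1$ to $I$ with $(c_1,c_2) \in S$.

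The first ingredient is a transfer lemma for tree queries: if $t$ is a CQ whose variable graph is a tree rooted at $r$, $S'$ is a simulation from an instance $K$ to an instance $L$, $g$ is a homomorphism from $t$ to $K$, and $(g(r),e) \in S'$, then there is a homomorphism $g'$ from $t$ to $L$ with $g'(r)=e$. This follows by a routine top\=/down induction along the tree: condition~(1) of a simulation carries the unary atoms, while conditions~(2) and~(3) carry forward and backward binary atoms and supply the image of each child, keeping the relevant pair in $S'$.

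The main work is to lift $S$ through the chase, that is, to show that $S$ extends to a simulation $\hat S$ from $\mn{ch}_\Omc(D_1)$ to $I$. I would construct $\hat S$ by induction along a fair chase sequence for $D_1$, maintaining the invariant that the relation built so far is a simulation into $I$. The crucial case is a chase step applying an \ELI TGD $T=\phi(x,\bar y) \rightarrow \exists \bar z\,\psi(x,\bar z)$ at the image $d$ of its single frontier variable $x$, where $(d,e)\in\hat S$ already holds. Because $T$ uses only unary and binary symbols and is guarded, its guard atom mentions all body variables, so $\phi$ ranges over at most two variables, one of them $x$; hence $\phi$ is a tree rooted at $x$, and the transfer lemma shows that $\phi$ is satisfied at $e$ in $I$. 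Since $I \models T$, the head $\psi$ is then witnessed at $e$ in $I$; as $\psi$ is acyclic and connected it is a tree rooted at $x$, so mapping the freshly generated nulls onto this witness extends $\hat S$ while respecting all simulation conditions. Atoms later attached to these nulls by further chase steps are handled in subsequent induction steps, so the union over the sequence is the desired $\hat S$. Here the single frontier variable is exactly what keeps the chase a forest and makes this purely local matching suffice; I expect this lifting to be the main obstacle.

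Finally I would conclude. From $c_1 \in q(\mn{ch}_\Omc(D_1))$ fix a homomorphism $h$ with $h(x)=c_1$, and apply the transfer lemma with $S'=\hat S$ to the connected component of $q$ containing $x$, rooted at $x$, using $(c_1,c_2)=(h(x),c_2)\in\hat S$; this produces a homomorphism of that component into $I$ sending $x$ to $c_2$. Composing $h$ with $\hat S$ in the same way carries the remaining (Boolean) tree components of $q$ into $I$, so that $c_2 \in q(I)$ and hence $c_2 \in Q(D_2)$. The one point to watch is that this last step needs $\hat S$ to be defined on the $h$\=/images of the Boolean components; this is automatic when the given simulation is total, in which case $\hat S$ is total by construction, and it is vacuous when $q$ is a single tree, which are the settings in which the lemma is applied.
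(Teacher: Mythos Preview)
Your proposal is correct and follows essentially the same approach as the paper: both lift the given simulation through the chase of $D_1$ by induction on chase steps, using the transfer lemma for tree queries (the paper's Lemma~\ref{lem:sim-cq}) on the body of each applied \ELI TGD to fire the head on the target side, and then apply the transfer lemma once more to~$q$. The only minor variation is that the paper lifts the simulation to $\mn{ch}_\Omc(D_2)$ and concludes via $Q(D_2)=q(\mn{ch}_\Omc(D_2))$, whereas you lift directly to an arbitrary model $I \supseteq D_2$; your version is slightly more direct but otherwise the arguments coincide.
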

We start with recalling the following well-known fact, proved e.g.\ as
Theorem~10 in \cite{Lutz-Wolter-JSC-10}.
\begin{lemma}
    \label{lem:sim-cq}
    Let $q(x)$ be an ELIQ, $D_1$, $D_2$ $\Sbf$\=/databases, and $c_i \in
    \mn{adom}(D_i)$ for $i \in \{1,2\}$. If $(D_1,c_1) \preceq
    (D_2,c_2)$ and $c_1 \in q(D_1)$ then $c_2 \in q(D_2)$.
\end{lemma}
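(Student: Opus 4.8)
The plan is to turn the given homomorphism witnessing $c_1 \in q(D_1)$ into one witnessing $c_2 \in q(D_2)$, transporting it across the simulation that realizes $(D_1,c_1) \preceq (D_2,c_2)$. So I would fix a homomorphism $h_1$ from $q$ to $D_1$ with $h_1(x)=c_1$ and a simulation $S$ from $D_1$ to $D_2$ with $(c_1,c_2) \in S$. Since $q$ is an ELIQ, the graph $G^{\mn{var}}_q$ is a disjoint union of trees and $q$ has no constants; let $T$ be the tree component containing the answer variable $x$, rooted at $x$. The goal is to build a homomorphism $h_2$ from $q$ to $D_2$ with $h_2(x)=c_2$ by defining $h_2$ on $\mn{var}(T)$ via a top-down traversal of $T$, while maintaining the invariant that $(h_1(y),h_2(y)) \in S$ for every variable $y$ assigned so far.

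The construction proceeds by induction on the distance from $x$ in $T$. For the base case I set $h_2(x)=c_2$; the invariant holds since $(h_1(x),h_2(x))=(c_1,c_2)\in S$. For the inductive step, let $y$ be a variable whose parent $y'$ has already been assigned with $(h_1(y'),h_2(y')) \in S$. Because $q$ has no multi-edges, the tree edge $\{y',y\}$ comes from a \emph{unique} binary atom, either $R(y',y)$ or $R(y,y')$. In the first case, $h_1$ being a homomorphism gives $R(h_1(y'),h_1(y)) \in D_1$, so simulation condition~(2), applied with the pair $(h_1(y'),h_2(y')) \in S$, yields some $e \in \mn{adom}(D_2)$ with $R(h_2(y'),e) \in D_2$ and $(h_1(y),e) \in S$; I set $h_2(y)=e$. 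The case $R(y,y')$ is symmetric, using condition~(3) to obtain $e$ with $R(e,h_2(y')) \in D_2$ and $(h_1(y),e) \in S$. Either way the relevant tree atom is satisfied and the invariant is preserved.

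It then remains to check that $h_2$ is a homomorphism on $T$. Every binary atom of $q$ with both variables in $T$ induces an edge of $G^{\mn{var}}_q$, which in a tree is a tree edge, and each such atom is satisfied by construction; a unary atom $A(y)$ is handled by condition~(1), since $A(h_1(y)) \in D_1$ and $(h_1(y),h_2(y)) \in S$ force $A(h_2(y)) \in D_2$. I expect the main obstacle to be \emph{well-definedness} of $h_2$, and this is exactly where the ELIQ restrictions are used: acyclicity of $T$ guarantees each variable has a unique parent and is assigned exactly once, so no two conflicting values ever have to be reconciled; the absence of multi-edges guarantees that the single atom on each tree edge pins down $h_2(y)$ through one application of condition~(2) or~(3), whereas parallel atoms could force incompatible images; and the absence of self-loops ensures that no atom of the form $R(y,y)$—which a simulation need not preserve—has to be satisfied.

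Finally, I would address the components of $q$ not containing $x$, which the disjoint-union definition of ELIQ permits. Here $h_1$ witnesses their satisfiability in $D_1$, but transporting them is the delicate point, since $S$ is anchored only at $(c_1,c_2)$; the clean statement and the cited result \cite{Lutz-Wolter-JSC-10} concern connected ELIQs, for which $T$ is all of $q$ and the construction above already suffices, and I would restrict to (or reduce to) that case in the surrounding development.
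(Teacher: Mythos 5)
Your argument is correct for connected ELIQs, and it is worth noting that the paper offers no proof of this lemma at all: it is quoted as a known fact with a pointer to Theorem~10 of \cite{Lutz-Wolter-JSC-10}, where queries are tree-shaped (connected). Your top-down transfer along the simulation is the standard way to establish it, and your accounting of where each ELIQ restriction is used is exactly right: tree-shape gives each variable a unique parent so $h_2$ is assigned once, the absence of multi-edges means a single application of simulation condition~(2) or~(3) suffices per tree edge, and the absence of self-loops matters because simulations need not preserve reflexive facts. Unary atoms via condition~(1) complete the check.

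Your hesitation about the components of $q$ not containing $x$ is not mere caution --- it identifies a genuine defect in the lemma \emph{as stated} under this paper's conventions. The paper's ELIQs may be disjoint unions of trees, and its simulations are only required to contain the single pair $(c_1,c_2)$; under these definitions the lemma is false for disconnected queries. Take $q(x) = A(x) \wedge B(y)$ with $y$ an isolated variable, $D_1 = \{A(c_1), B(d)\}$, and $D_2 = \{A(c_2)\}$: then $S = \{(c_1,c_2)\}$ is a simulation, so $(D_1,c_1) \preceq (D_2,c_2)$ and $c_1 \in q(D_1)$, yet $c_2 \notin q(D_2)$. So restricting to connected ELIQs, as you propose, is necessary rather than a convenience; the alternative repair is to demand that the simulation be total on $\mn{adom}(D_1)$, in which case your componentwise construction goes through by anchoring each $x$-free component at an arbitrary $S$-related pair. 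Neither repair harms the paper's applications: wherever the lemma is invoked, either the ELIQ at hand is connected (the subtree and component queries extracted in the lower-bound proofs), or the constructed database enjoys the completeness property, which gives $(D_0,c) \preceq (D,c')$ for \emph{every} anchor pair and hence lets each connected component be transported separately --- which is precisely how the paper's own proofs proceed.
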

%
% \begin{proof}
%   Since $c_1 \in q(D_1)$ then there is a homomorphism $h_1$ from $q$
%   to $D_1$ such that $h_1(x) = c_1$. All we need is to define a
%   homomorphism $h_2$ from $q$ to $D_2$ such that $h_2(x) = c_2$.
    
%   Since $q$ is an ELIQ, $q$ is connected and we can use the simulation
%   relation $S$ to define $h_2(y)$ for every $y \in \text{var}(q)$.
%   The idea is that we follow the simulation relation in both $D_1$ and
%   $D_2$ starting from $(c_1,c_2)$ to define the homomorphism $h_2$.
%   The precise construction is as follows.

%     We first define a child order on variables of $q$. We can do that since $q$ is an ELIQ and, thus, it has a~tree\=/like structure.
%     We define this order by stating that $x$ is the root of the tree.
    
%     Now we define the mapping inductively. Let $y$ be a variable for which $h_2(y)$ is not yet defined, but $h_2(z)$ is defined
%     for their unique parent variable $z$. Let $R(z,y)$ (resp. $R(y,z)$) be the unique atom in $q$ containing $y,z$.
%     We define $h_2(y) = c$ for some $c \in \dom(D_2)$ such that $R(h_1(z), c)$ (resp. $R(c, h_1(z))$) and $(h_1(y),c) \in S$.
%     The constant $c$ exists by Point 2. of the definition of a simulation.
    
%     It is straightforward to check that this inductive procedure terminates and defines a homomorphism $h_2$ from $q$ to $D_2$
%     with $h_2(x) = c_2$.
% \end{proof}
%
\begin{proof}[Proof of Lemma~\ref{lem:sim}]
  Let $D \supseteq D_2$ be a model of \Omc. We have to show that
  $c_2 \in q(D)$. The following claim shall be essential.
  \\[2mm]
  \emph{Claim.} $(\mn{ch}_\Omc(D_1),c_1) \preceq (\mn{ch}_\Omc(D_2),c_2)$.
  \\[2mm]
  The chase constructs a sequence
  $D_1 = I_0 \subseteq I_1 \subseteq \cdots$ such that
  $\mn{ch}_\Omc(D_1)=\bigcup_{i \geq 1} I_i$.  We construct a sequence
  of relations $S = S_0 \subseteq S_1 \subseteq \cdots \subseteq S_k$
  such that
  \begin{itemize}
      \item [$(\dagger)$] %for $0 \leq i \leq k$, $S_i \subseteq
        % \dom(I_i) \times \dom(J_i)$
        $S_i$
        is a simulation from $I_i$ to $\mn{ch}_\Omc(D_2)$.
  \end{itemize}
  Relation $S_0$ is already defined.  For the inductive step, let us
  assume that we have already defined $S_i$.
    
  Assume that $I_{i+1}$ was obtained from $I_i$ by applying the TGD
  $\varphi(x,\bar y) \rightarrow \psi(x,\bar z) \in \Omc$ in $I_i$ at
  $(c,\bar c')$. Since $\Omc$ is formulated in \ELI, $\varphi$ and
  $\psi$ are ELIQ, $c \in \varphi(I_i)$, and $I_{i+1}$ was obtained
  from $I_i$ by adding a copy $D'_\psi$ of $D_\psi$ using the
  constant $c$ in place of the answer variable of ELIQ $\psi$ and the
  fresh constants from $\bar c'$ in place of the quantified variables.
  Let us assume that for each quantified variable $z$ in $\psi$,
  the corresponding constant in $\bar c'$ is $c_z$.

  To construct $S_{i+1}$, start with setting $S_{i+1} = S_i$. Then
  consider all $(c,d) \in S_i$. Since $\varphi$ is an ELIQ,
  $c \in \varphi(I_i)$ and Lemma~\ref{lem:sim-cq} yield
  $d \in \varphi(\mn{ch}_\Omc(D_2))$. It follows that the TGD
  $\varphi(x,\bar y) \rightarrow \psi(x,\bar z)$ is applicable in
  $\mn{ch}_\Omc(D_2)$ at $d$ and was indeed applied during the
  construction of $\mn{ch}_\Omc(D_2)$. Consequently,
  $d \in \psi(\mn{ch}_\Omc(D_2))$ and thus we find a homomorphism
  $h$ from $\psi(x)$ to $\mn{ch}_\Omc(D_2)$ with $h(x)=d$.
  Extend $S_{i+1}$ with $(c_z,h(z))$ for each quantified variable $z$
  in $\psi$. It is easy to verify that $S_{i+1}$ satisfies
  $(\dagger)$, as required. This finishes the proof of the claim.

  \medskip
  
  Now back to the proof of Lemma~\ref{lem:sim}.  Since
  $c_1 \in Q(D_1)$, there exists a homomorphism $h_1$ from $D_q(x)$ to
  $\mn{ch}_{\Omc}(D_1)$ such that $h(x)=c_1$. Clearly, a homomorphism
  is also a simulation. Composing this simulation with a simulation
  $S$ from $\mn{ch}_\Omc(D_1)$ to $\mn{ch}_\Omc(D_2)$ with
  $(c_1,c_2) \in S$, whose existence is guaranteed by the claim, we
  obtain a simulation $S'$ from $D_q$ to $\mn{ch}_\Omc(D_2)$
  with $(x,c_2) \in S'$. Since $x \in q(D_q)$, Lemma~\ref{lem:sim-cq}
  yields $c_2 \in q(\mn{ch}_\Omc(D_2))$ and thus $c_2 \in Q(D_2)$,
  as required.
\end{proof}

\section{Proofs for Section~\ref{sect:prelims}}

\propcompletefirst*
\begin{proof}
  Let $A_{\mn{mpa}}$ be the algorithm for enumerating minimal partial
  answers to $Q$ in \dlc and let $A_\csf$ be the algorithm for
  enumerating complete answers to $Q$ in \dlc. To obtain the
  desired enumeration algorithm that produces the complete answers
  first, we run $A_\csf$ and $A_{\mn{mpa}}$ in parallel, starting with
  the preprocessing phase of both algorithms. In the enumeration
  phase, $A_{\mn{mpa}}$ clearly outputs at least as many answers as
  $A_{\csf}$. We run the enumeration phase of both algorithms in
  parallel. As long as $A_\csf$ still produces answers, we let both
  $A_\csf$ and $A_{\mn{mpa}}$ produce their next answer, but output
  only the answer of $A_\csf$. The answer of $A_{\mn{mpa}}$ is
  diregarded if it is complete, and stored in a linked list otherwise.
  Once that $A_{\csf}$ runs out of answers, we switch to only
  continueing the enumeration phase of $A_{\mn{mpa}}$. If
  $A_{\mn{mpa}}$ produces an answer with a wildcard, we output it.  If
  $A_{\mn{mpa}}$ produces a complete answer, we output one of the
  stored answers instead. Clearly, the number of complete answers to
  be replaced by a stored answer coincides exactly with the number of
  answers stored.
\end{proof}

\lemtwosemantics*
\begin{proof}
  We only consider $Q(D)^{\ast}$ and $q(\mn{ch}_\Omc(D))^{\ast}_\Nbf$,
  the case of $\cdot^\Wmc$ is similar. First assume that $\bar c \in
  Q(D)^\ast$. Then for every model $I$ of \Omc with $I \supseteq D$,
  there is a $\bar c' \in q(I)$ with $\bar c' \preceq \bar c$. In
  particular, this is true for $I=\mn{ch}_\Omc(D)$.  We observe the
  following:
  \begin{itemize}

  \item If $\bar c'$ has a constant from $N$ in some position, then
    $\bar c$ has `$\ast$' in the same position.

    This is because $\bar c' \preceq \bar c$ and $\bar c$ may only
    contain elements from $\mn{adom}(D) \cup \{ \ast \}$.

  \item If $\bar c'$ has a constant $c \in \mn{adom}(D)$ in some
    position, then $\bar c$ has $c$ in the same position.

    From $\bar c' \preceq \bar c$, it follows that $\bar c$ has $c$ or
    `$\ast$' in the same position. Assume that there is at least one
    position $i$ where $\bar c'$ has a constant $c \in \mn{adom}(D)$
    and $\bar c$ has `$\ast$'. Let $\bar c''$ be obtained from $\bar
    c$ by replacing `$\ast$' in position $i$ with $c$. Then $\bar c''
    \prec \bar c$ and it follows from Lemma~\ref{pro:chase} that $\bar
    c''$ is a partial answer to $Q$ on $D$, contradicting the fact
    that $\bar c$ is a minimal partial answer to $Q$ on $D$.

  \end{itemize}
  It follows that $\bar c = {(\bar c')}^\ast_\Nbf \in
  q(\mn{ch}_\Omc(D))^{\ast}_\Nbf$. 

  \smallskip

  Conversely, assume that $\bar c \in q(\mn{ch}_\Omc(D))^{\ast}_\Nbf$. By
  Lemma~\ref{pro:chase}, for every $I$ of \Omc with $I \supseteq D$,
  there is a $\bar c' \in q(I)$ with $\bar c' \preceq \bar c$. Thus,
  $\bar c$ is a partial answer to $Q$ on $D$. Assume to the contrary
  of what we want to show that it is not a minimal partial answer, that
  is, there is a $\bar c' \in Q(D)^\ast$ with $\bar c' \prec \bar
  c$. Then there is a $\bar c'' \in q(\mn{ch}_\Omc(D))$ with $\bar c''
  \preceq \bar c'$. We can show as above that $\bar c' = (\bar
  c'')^\ast_\Nbf \in q(\mn{ch}_\Omc(D))^{\ast}_\Nbf$. But now $\bar c' \prec
  \bar c$ contradicts $\bar c \in q(\mn{ch}_\Omc(D))^{\ast}_\Nbf$.
  
\end{proof}
%
% We remark that there is a subtlety here. In contrast to
% Lemma~\ref{lem:twosemantics}, it is not necessarily the case that the
% (not necessarily minimal) partial answers to $Q(\bar x)=(\Omc,\Sbf,q) $
% on an \Sbf-database $D$ coincide with the partial answers to $q$ on
% $\mn{ch}_\Omc(D)$ w.r.t.\ $N=\mn{adom}(\mn{ch}_\Omc(D)) \setminus
% \mn{adom}(D)$. In fact, $(\ast,\cdots,\ast)$ is a partial answer to
% $Q(\bar x)=(\Omc,\Sbf,q) $ on $D$ if there is a partial answer at all,
% but this needs not be the case for the partial answers to $q$ on
% $\mn{ch}_\Omc(D)$ w.r.t.\ $N$. As a trivial example, consider
% $Q(x)=(\emptyset,\{A\},A(x))$ and the database $\{ A(c) \}$.

\section{Proofs for Section~\ref{sect:singletesting}}

\subsection{Proof of Theorem~\ref{thm:singlelin}}
\label{app:sect:singletest}

% SHORT
% The
% query-directed chase also provides an important foundation for
% enumeration and all-testing later on.

% A set $S \subseteq \mn{adom}(D)$ is a \emph{guarded
%   set in $D$} if some fact in $D$ contains all constants in $S$, and
% possibly additional constants.
% Let $m$ be the maximum arity of
% relation symbols in $\mn{sch}(\Omc)$. 

%{\color{red}properly take care of constants in CQs / adapt notation}

% We first define the query-directed chase $\mn{ch}^q_\Omc(D)$ in
% full detail.
% We use $\mn{cl}(Q)$ to denote the set of all CQs that are connected
% and use only relation symbols that occur in \Omc, no constants, and
% only variables from a fixed set $V$ whose cardinality is the maximum
% of $|\mn{var}(q)|$ and the arities of relation symbols in \Omc. Note
% that the CQs in $\mn{cl}(Q)$ can have any arity, including arity zero,
% and that the cardinality of $\mn{cl}(Q)$ is independent of $D$. Then
% $\mn{ch}^q_\Omc(D)$ is obtained from $D$ by adding, for every CQ
% $p(\bar y) \in \mn{cl}(Q)$ and every
% $\bar c \in \mn{adom}(D)^{|\bar y|}$ such that
% $D \cup \Omc \models q(\bar c)$ and the constants in $\bar c$
% constitute a guarded set in $D$, a copy of $D_p$ that uses the
% constants in $\bar c$ in place of the answer variables $\bar y$ of $p$
% and only fresh constants otherwise. The following lemma states the
% main property of $\mn{ch}^q_\Omc(D)$.

Our first aim is to prove Lemma~\ref{prop:restrictedchaseworks}.
We start with observing the following.
\begin{lemma}
\label{lem:pseudochase} 
Let $Q(\bar x)=(\Omc,\Sbf,q) \in (\class{G},\class{CQ})$ and let $D$
be an \Sbf-database. Further let $q'(\bar x')$ be a CQ with
$|\mn{var}(q')| \leq |\mn{var}(q)|$ and $\bar c \in \mn{adom}(D)^{|\bar x'|}$. Then
$D \cup \Omc \models q'(\bar c)$ iff
$\bar c \in q'(\mn{ch}^q_\Omc(D))$.
\end{lemma}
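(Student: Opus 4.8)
The plan is to prove the two directions separately, each by producing a single homomorphism between the two chases and composing it with the given witness. Recall that $D \cup \Omc \models q'(\bar c)$ holds exactly when there is a homomorphism from $q'$ to $\mn{ch}_\Omc(D)$ sending the answer variables $\bar x'$ to $\bar c$, and that $\bar c \in q'(\mn{ch}^q_\Omc(D))$ says the same with $\mn{ch}_\Omc(D)$ replaced by $\mn{ch}^q_\Omc(D)$. Since $\bar c \subseteq \mn{adom}(D)$, any homomorphism between the two chases that is the identity on $\mn{adom}(D)$ will transport a witness for one side into a witness for the other, so the whole lemma reduces to exhibiting two such homomorphisms.

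For the ``if'' direction I would build a single folding homomorphism $e \colon \mn{ch}^q_\Omc(D) \to \mn{ch}_\Omc(D)$ that is the identity on $\mn{adom}(D)$. Recall that $\mn{ch}^q_\Omc(D)$ is $D$ together with, for each $p(\bar y) \in \mn{cl}(Q)$ and each guarded $\bar b$ with $D \cup \Omc \models p(\bar b)$, a fresh copy of $D_p$ in which $\bar y$ is instantiated by $\bar b$. For each such copy, $D \cup \Omc \models p(\bar b)$ yields a homomorphism $g_p$ from $D_p$ to $\mn{ch}_\Omc(D)$ with $g_p(\bar y) = \bar b$; letting $e$ be the identity on $\mn{adom}(D)$ and send the fresh null introduced for a quantified variable $z$ of $p$ to $g_p(z)$ gives a well-defined homomorphism, since the fresh nulls of distinct copies are disjoint and each copy agrees with the identity on its boundary $\bar b \subseteq \mn{adom}(D)$. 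Composing a witness $h \colon q' \to \mn{ch}^q_\Omc(D)$ with $e$ then gives $\bar c \in q'(\mn{ch}_\Omc(D))$.

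The ``only if'' direction is the main obstacle and is where guardedness is essential. Given a homomorphism $h \colon q' \to \mn{ch}_\Omc(D)$ with $h(\bar x') = \bar c$, I would decompose the atoms of $q'$ into \emph{blocks} by cutting at every variable and constant that $h$ maps into $\mn{adom}(D)$: a maximal set of atoms connected through variables sent to nulls forms one \emph{null-block}, and every atom whose image lies entirely in $\mn{adom}(D)$ forms a singleton \emph{database-block}. The two structural facts I need, both rooted in the locality of the guarded chase recorded in Lemma~\ref{lem:treesarenice} (together with the observation that each null occurs only in facts of a single excursion $\mn{ch}_\Omc(D)|^\downarrow_S$), are: (i) all atoms of a null-block map into one and the same excursion $\mn{ch}_\Omc(D)|^\downarrow_S$, and all database constants occurring in the block lie in the root guarded set $S$; and (ii) for every fact $R(\bar d) \in \mn{ch}_\Omc(D)$ with $\bar d \subseteq \mn{adom}(D)$, the set $\bar d$ is guarded in $D$, which I would establish by a short induction over a chase sequence using guardedness. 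Fact~(ii) handles the database-blocks: each contributes a fact $R(\bar d)$ that either already lies in $D$, or is derived and hence uses a relation symbol occurring in $\Omc$, in which case the single-atom CQ $R(\bar y) \in \mn{cl}(Q)$ together with the guardedness of $\bar d$ forces a copy equal to $\{R(\bar d)\}$ into $\mn{ch}^q_\Omc(D)$; either way $R(\bar d) \in \mn{ch}^q_\Omc(D)$.

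For a null-block, its preimage under $h$ is, by~(i), a \emph{connected} CQ $p$ using only relation symbols occurring in $\Omc$ (as they stem from TGD heads) and having at most $|\mn{var}(q')| \le |\mn{var}(q)|$ variables; taking the boundary variables (those sent into $S$) as answer variables $\bar y$, we get $D \cup \Omc \models p(\bar b)$ for the guarded $\bar b = h(\bar y) \subseteq S$, so after renaming variables into the fixed set $V$ we have $p \in \mn{cl}(Q)$ and a corresponding fresh copy of $D_p$ inside $\mn{ch}^q_\Omc(D)$. I would then define $h' \colon q' \to \mn{ch}^q_\Omc(D)$ block by block: on database-blocks and on all boundary variables set $h' = h$ (landing in $\mn{adom}(D)$), and send the interior, null-mapped variables of each null-block to the fresh constants of its associated copy. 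Consistency is automatic, since boundary elements are always mapped to themselves in $\mn{adom}(D)$ while each interior variable belongs to exactly one null-block; as $\bar x'$ consists of boundary elements, $h'(\bar x') = \bar c$, which finishes the proof. Constants of $q'$ are treated exactly like boundary variables (they are fixed by every homomorphism and lie in $\mn{adom}(D)$), so the requirement that CQs in $\mn{cl}(Q)$ carry no constants is harmless. The crux throughout is fact~(i): it is precisely the connectedness of the CQs in $\mn{cl}(Q)$, backed by the locality of the guarded chase, that guarantees a null-block cannot ``re-enter'' the database part at a different guarded set and thus can be reproduced by a single added copy.
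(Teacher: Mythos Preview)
Your proposal is correct and follows essentially the same approach as the paper: both decompose the atoms of $q'$ according to whether their $h$-image contains a null, handle the null-free atoms via the single-atom CQs in $\mn{cl}(Q)$, and handle the remaining atoms by showing that each group forms a CQ in $\mn{cl}(Q)$ whose boundary lies in a single guarded set of $D$. Your grouping by connectivity through null-mapped variables is slightly finer than the paper's grouping by $\mn{source}$, which has the pleasant side effect of making the required connectedness of CQs in $\mn{cl}(Q)$ (and the treatment of constants) explicit where the paper glosses over it.
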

\begin{proof}
  % Let $Q(\bar x)=(\Omc,\Sbf,q)$, $D$, $q'$, and $\bar c$ be as in
  % Lemma~\ref{lem:pseudochase}. We have to show that $D \cup \Omc
  % \models q'(\bar c)$ iff $\bar c \in q'(\mn{ch}^q_\Omc(D))$.
  The `if' direction is an immediate consequence of the definition of
  $\mn{ch}^q_\Omc(D)$.  For the `only if' direction, assume that $D
  \cup \Omc \models q'(\bar c)$. Then there is a homomorphism $h$ from
  $q'$ to $\mn{ch}_\Omc(D)$ such that $h(\bar x')=\bar c$. For every
  guarded set $S$ in $D$, let $A_S$ denote the set of atoms $R(\bar
  t)$ in $q'$ such that $h(\bar t)$ contains at least one null and
  $\mn{source}(R(h(\bar t)))=S$ (as defined in
  Appendix~\ref{sect:thechase}). Let \Smf be the set of guarded sets
  $S$ in $D$ with $A_S \neq \emptyset$. For each $S \in \Smf$, let
  $q'_S(x'_S)$ denote the CQ obtained by first restricting $q'$ to the
  atoms in $A_S$ and then making a variable $x$ an answer variable
  if $h(x) \in \mn{adom}(D)$ and a quantified variable otherwise. 

  To prove that $D \cup \Omc \models q'(\bar c)$, it suffices to show
  the following:
  \begin{enumerate}

  \item $h(\bar t) \in q'(\mn{ch}^q_\Omc(D))$ for every atom $R(\bar t)$
    in $q'$ such that $h(\bar t)$ contains no null;

  \item $h(\bar x'_S) \in q'_S(\mn{ch}^q_\Omc(D))$ for every $S \in \Smf$.

  \end{enumerate}
  In fact, composing homomorphism $h$ with the homomorphisms into
  $\mn{ch}^q_\Omc(D)$ that witness~(2), one can
  straightforwardly construct a homomorphism $h'$ from
  $q'$ to $\mn{ch}^q_\Omc(D)$ such that $h'(q')=\bar c$.

  For Point~(1), let $R(\bar t)$ be an atom in $q'$ such that $h(\bar
  t)$ contains no null. Then $D \cup \Omc \models p(\bar c')$ for
  $p=R(\bar t)$ and $\bar c'=h(\bar t)$, and $\bar c'$ is a guarded set
  in $D$. Thus, $R(h(\bar t))$ has been added to $\mn{ch}^q_\Omc(D)$
  during its construction.

  For Point~(2), let $S \in \Smf$.  Then $D \cup \Omc \models p(\bar
  c')$ for $p=q'_S(\bar x'_S)$ and $\bar c'=h(\bar x'_S)$, and $\bar c'$ is a
  guarded set in $D$. Thus, a copy of $D_{q'_S}$ that uses the
  constants
  in $\bar c'$ in place of the answer variable $\bar x'_S$ of $q'_S$ has 
  been added to $\mn{ch}^q_\Omc(D)$
  during its construction.
\end{proof}
%
% The next lemma implies that $\mn{ch}_\Omc(D)$ can be replaced with
% $\mn{ch}^q_\Omc(D)$ for the purposes of this paper. 
%
\proprestrictedchaseworks*

\begin{proof}
  Let $Q(\bar x) =(\Omc,\Sbf,q)$, $D$ and $N$ be as in
  Lemma~\ref{prop:restrictedchaseworks}.  It is an immediate
  consequence of Lemma~\ref{lem:pseudochase}, instantiated with
  $q'=q$, that $Q(D)=q(\mn{ch}^q_\Omc(D)) \cap \mn{adom}(D)^{|\bar
    x|}$.  The cases $Q(D)^{\ast}_\Nbf= q(\mn{ch}^q_\Omc(D))^{\ast}$ and
  $Q(D)^{\Wmc}_\Nbf= q(\mn{ch}^q_\Omc(D))^{\Wmc}$ are very similar, we
  concentrate on the latter. It clearly suffices to prove that a tuple
  $\bar a^\Wmc$ is a (not necessarily minimal) partial answer with
  multi-wildcards to $q$ on $\mn{ch}^q_\Omc(D)$ if and
  only if $\bar a^\Wmc$ is a partial answer with multi-wildcards to
  $Q$ on~$D$.

  Let $\bar x = x_1,\dots,x_n$, let $\bar a^\Wmc = a_1,\dots,a_n$, and
  let the wildcards from \Wmc that occur in $\bar a^\Wmc$ be
  $\ast_1,\dots,\ast_\ell$.  Consider the CQ $q'$ obtained from $q$ in
  the following way:
  \begin{itemize}

  \item introduce fresh quantified variable $z_1,\dots,z_\ell$;

  \item if $a_i=\ast_j$, then replace in $q'$ the answer variable
    $x_i$ with quantified variable $z_j$.
    
  \end{itemize}
  Further let $\bar c$ be obtained from $\bar a^\Wmc$ by removing all
  wildcards. It is easy to see that $\bar c \in q'(\mn{ch}^q_\Omc(D))$
  iff $\bar a^\Wmc$ is a partial answer with multi-wildcards to $q$ on
  $\mn{ch}^q_\Omc(D)$ and that $D\cup \Omc \models q'(c)$
  iff $\bar a^\Wmc$ is a partial answer with multi-wildcards to $Q$ on
  $D$, both by definition of partial answers with multi-wildcards and
  by construction of $q'$. It thus remains to invoke
  Lemma~\ref{lem:pseudochase}.
\end{proof}
We next establish Proposition~\ref{prop:chaseinlineartime}.

\propchaseinlineartime*

To prove Proposition~\ref{prop:chaseinlineartime}, we make use of the
fact that minimal models for propositional Horn formulas can be
computed in linear time~\cite{dowling-gallier-horn}. More precisely,
we derive a satisfiable propositional Horn formula $\theta$ from $D$
and $Q$, compute a minimal model of $\vp$ in linear time, and then
read off $\mn{ch}^q_\Omc(D)$ from that model.

Let $\mn{sch}(\Omc)$ denote the set of all relation symbols that occur
in $\Omc$.  We introduce a propositional variable $x_{p(\bar c)}$ for
every CQ $p(\bar y) \in \mn{cl}(Q)$ and every $\bar c \in
\mn{adom}(D)^{|\bar y|}$ such that the constants in $\bar c$
constitute a guarded set in $D$. Observe that the cardinality of
$\mn{cl}(Q)$ is bounded by $2^{2^{O(||Q||^2)}}$. Moreover, since
$Q$ is fixed and \Sbf only contains relation symbols that occur in
$\Omc$ or on $q$, the arity of relation symbols in $D$ is bounded by
the constant $||Q||$. Consequently, the number of guarded sets in $D$
is bounded by $2^{||Q||} \cdot ||D||$ and the number of variables
$x_{p(\bar c)}$ is bounded by $2^{2^{O(||Q||^2)}} \cdot ||D||$.

Consider the Horn formula $\theta$ that consists of the following conjuncts:
\begin{enumerate}

\item $x_{R(\bar c)}$ for every $R(\bar c) \in D$;

\item
  $\bigwedge_{S(\bar d) \in D'} x_{S(\bar d)} \rightarrow x_{p(\bar
    c)}$ for every $\mn{sch}(\Omc)$-database $D'$, every CQ
  $p(\bar y) \in \mn{cl}(Q)$, and every $\bar c \in
  \mn{adom}(D')^{|\bar y|}$ such that $D' \cup \Omc \models p(\bar c)$
  and
  $\mn{adom}(D')$ is a guarded set $S$ in $D$.

\end{enumerate}
The size of $\theta$ is bounded by $2^{2^{O(||Q||^2)}} \cdot
||D||$ and $\theta$ can be constructed in time
$2^{2^{O(||Q||^2)}} \cdot ||D||$. This again depends on the
arity of relation symbols being (implicitly) bounded by a constant.

Since $\theta$ contains no negative literals, it is clearly
satisfiable and thus has a unique minimal model. Let $V$ be the truth
assignment that represents this minimal model. We construct a database
$D_\theta$ as follows. Start with~$D$. Then iterate over all
propositional variables~$x_{p(\bar c)}$. If $V(x_{p(\bar c)})=1$, then
take a copy $D'_p$ of $D_p$ that uses the constants in $\bar c$ in
place of the answer variables $\bar y$ of $p$ and only fresh constants
otherwise, and take the union of the database constructed so far and
$D'_p$. It is clear that the construction of $D_\theta$ is in time
$2^{2^{O(||Q||^2)}} \cdot ||D||$. Thus, the following implies
Proposition~\ref{lem:pseudochase}.
\begin{lemma}
\label{lem:thetaispseudochase}
  % Let $q'(\bar x')$ be a CQ with
  % $|\mn{var}(q')| \leq |\mn{var}(q)|$ and 
  % $\bar c \in \mn{adom}(D)^{|\bar x'|}$. Then $D \cup \Omc \models
  % q'(\bar c)$ iff $\bar c \in q'(\mn{ch}^q_\Omc(D))$.
$D_\theta = \mn{ch}^q_\Omc(D)$.
\end{lemma}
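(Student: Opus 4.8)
The plan is to reduce the claimed identity to a statement about the index sets of the two constructions. Both $D_\theta$ and $\mn{ch}^q_\Omc(D)$ are obtained from $D$ by adding, for certain pairs $(p,\bar c)$ with $p(\bar y)\in\mn{cl}(Q)$ and $\bar c$ a guarded set in $D$, a fresh copy of $D_p$ in which $\bar y$ is replaced by $\bar c$ and all quantified variables by fresh nulls. Since the copies are built by exactly the same recipe, it suffices to show that the two constructions range over the same index set, that is, that for every $p(\bar y)\in\mn{cl}(Q)$ and every guarded $\bar c$,
$$ V(x_{p(\bar c)})=1 \quad\Longleftrightarrow\quad D\cup\Omc\models p(\bar c). $$
I would prove the two directions separately: the forward (soundness) direction yields $D_\theta\subseteq\mn{ch}^q_\Omc(D)$ and the backward (completeness) direction the converse inclusion.

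For soundness I would argue by induction on the stage at which $x_{p(\bar c)}$ is set to $1$ in the least-fixpoint computation that produces the minimal model $V$. In the base case $x_{p(\bar c)}$ is a conjunct of type~(1), so $p(\bar c)=R(\bar c)\in D$ and the entailment is trivial. In the inductive step $x_{p(\bar c)}$ is produced by a type-(2) rule $\bigwedge_{S(\bar d)\in D'}x_{S(\bar d)}\to x_{p(\bar c)}$ whose body is already satisfied. By the induction hypothesis each atomic entailment $D\cup\Omc\models S(\bar d)$ holds, hence $D'\subseteq\mn{ch}_\Omc(D)$. Since $\mn{ch}_\Omc(D)$ is a model of $\Omc$ containing $D'$, Lemma~\ref{pro:chase} gives a homomorphism from $\mn{ch}_\Omc(D')$ to $\mn{ch}_\Omc(D)$ that is the identity on $\mn{adom}(D')\supseteq\bar c$; composing it with a witness for $\bar c\in p(\mn{ch}_\Omc(D'))$ (which exists because the rule requires $D'\cup\Omc\models p(\bar c)$) yields $\bar c\in p(\mn{ch}_\Omc(D))$, i.e.\ $D\cup\Omc\models p(\bar c)$.

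The completeness direction is the main obstacle, and I would handle it with the locality of the guarded chase. Given $D\cup\Omc\models p(\bar c)$, fix a witnessing homomorphism $h$ from $D_p$ into $\mn{ch}_\Omc(D)$ with $h(\bar y)=\bar c$. Reusing the decomposition already carried out in the proof of Lemma~\ref{lem:pseudochase}, I would split the atoms of $p$ into those whose $h$-image is null-free (which, by guardedness of $\Omc$, sit over guarded sets of $D$) and those whose image contains a null, grouping the latter by their $\mn{source}$ guarded set. Lemma~\ref{lem:treesarenice} is the key tool: it lets me replay each null-excursion inside the chase of the tiny database $D|_S$ for the relevant guarded set $S$, so that each excursion piece is entailed over $D|_S$. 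The plan is then an inner induction mirroring soundness: the atomic facts of each relevant $D|_S$ are themselves $V$-true, so feeding them as the $D'$ of a type-(2) rule lets forward chaining first establish the pieces and finally assemble $x_{p(\bar c)}$.

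I expect the delicate point to be exactly the interaction between the `database part' and the `null part' of $h$. Connectedness of $p$ together with the requirement that $\bar c$ be a guarded set is what confines a boundary-crossing excursion to a single guarded set, and it is this confinement that makes each required entailment localizable to one $D'$ with $\mn{adom}(D')$ guarded, so that a single type-(2) rule can fire. Verifying that the finitely many type-(2) rules genuinely suffice to reconstruct $x_{p(\bar c)}$ from the derived atomic and excursion facts — without introducing spurious entailments — is the technical heart of the argument. The doubly exponential bound on $|\mn{cl}(Q)|$ then guarantees that the whole construction stays independent of $\|D\|$, as required for Proposition~\ref{prop:chaseinlineartime}.
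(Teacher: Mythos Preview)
Your soundness direction is correct and matches the paper's argument.

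For completeness, the gap is at the ``assemble'' step. The type-(2) clauses of $\theta$ have only \emph{atomic} variables $x_{S(\bar d)}$ in the body; there is no clause whose body contains variables $x_{p_j(\bar c_j)}$ for non-atomic sub-CQs $p_j$. So even after deriving each piece $x_{p_j(\bar c_j)}$ separately, no Horn rule combines them into $x_{p(\bar c)}$. Your last paragraph pivots to ``a single type-(2) rule can fire'', but then the decomposition into pieces over \emph{different} guarded sets is of no help: you still need one $D'$ with $\mn{adom}(D')$ guarded in $D$ and $D'\cup\Omc\models p(\bar c)$, and you have not said what it is.

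The paper's route avoids decomposing $p$. It first proves, by induction along a chase sequence $D=D_0\subseteq D_1\subseteq\cdots$, the auxiliary claim
\[
  (\ast)\qquad R(\bar c)\in D_i \text{ and } \bar c\in\mn{adom}(D)^{\mn{ar}(R)}\ \Longrightarrow\ x_{R(\bar c)}\in V,
\]
i.e.\ every \emph{derived} atomic fact over database constants is $V$-true; this is where Lemma~\ref{lem:treesarenice} is actually invoked, at each chase step. Only then is a single type-(2) rule fired for $p(\bar c)$: connectedness of $p$ and guardedness of $\bar c$ let the witnessing homomorphism be taken into some $D_i|^\downarrow_S$ for a single guarded set $S$, and Lemma~\ref{lem:treesarenice} yields $D_i|_S\cup\Omc\models p(\bar c)$. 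The rule body now ranges over $D_i|_S$, and its atoms are $V$-true by $(\ast)$. Your ``atomic facts of each relevant $D|_S$ are themselves $V$-true'' is only the type-(1) base case of $(\ast)$; the derived facts in $D_i|_S\setminus D$ are what make $(\ast)$ the non-trivial core of the completeness argument, and your proposal does not isolate it.
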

\begin{proof}
  First assume that for some CQ $p(\bar y) \in \mn{cl}(Q)$ and tuple
  $\bar c \in \mn{adom}(D)^{|q|}$ such that the constants in $\bar c$
  constitute a guarded set in $D$, a copy $D'_p$ of $D_p$ has been
  added to $D_\theta$ during the construction of this
  database. Then $V(x_{p(\bar c)})=1$. We argue that this implies $D
  \cup \Omc \models p(\bar c)$. By construction of
  $\mn{ch}^q_\Omc(D)$, this means that $D'_p$ has also been added
  during the construction of this database.
  
  Recall that the minimal model of a satisfiable propositional Horn
  formula can be computed (though not in linear time) through a
  straightforward algorithm that generates a least fixed
  point. Applied to $\theta$, the algorithm starts with the set of
  variables $V_0$ from Point~1 of the definition of $\theta$ and
  then exhaustively applies the rules from Point~2, which yields a
  sequence of variable sets $V_0 \subseteq V_1 \subseteq \cdots
  \subseteq V_k$ whose limit $V_k$ is the minimal model $V$
  of~$\theta$.  It suffices to prove that $x_{p'(\bar c')} \in V_i$,
  with $0 \leq i \leq k$, implies $D \cup \Omc \models p'(\bar c')$.
  This is straightforward be induction on $i$, using the definition
  of the Horn formula $\theta$.

  \smallskip

  Conversely, assume that for some CQ $p(\bar y) \in \mn{cl}(Q)$ and
  tuple $\bar c \in \mn{adom}(D)^{|q|}$ such that the constants in
  $\bar c$ constitute a guarded set in $D$, a copy $D'_p$ of $D_p$ has
  been added to $\mn{ch}^q_\Omc(D)$ during its construction. Then $D
  \cup \Omc \models p(\bar c)$. %  and thus $\bar c \in
  % p(\mn{ch}_\Omc(D))$.

  Let $D_0,D_1,\dots$ be a chase sequence of $D$ with~\Omc and let $V$
  be the minimal model of $\theta$. We first show that for all $i \geq
  0$,
    \begin{enumerate}

    \item[($*$)] if $R(\bar c) \in D_i$ and $\bar c \in
      \mn{adom}(D)^{\mn{ar}(R)}$, then $x_{R(\bar c)} \in V$. % ,
      % implying 
      % $R(\bar c) \in \mn{ch}^q_\Omc(D)$

    % \item[(ii)] if $p(\bar y) \in \mn{cl}(q)$, $S$ is a guarded set in
    %   $D$, and there is a homomorphism $h$ from $p$ to
    %   $D_i|^\downarrow_{S}$ with $h(\bar y)=\bar c$, then
    %   $\bar c \in p(\mn{ch}^q_\Omc(D))$.
      
    \end{enumerate}
    The proof is by induction on $i$ and the induction start holds since
    $D_0=D$ and by Point~1 of the definition of $\theta$.
  
    For the induction step, let $R(\bar c) \in D_i \setminus D_{i-1}$
    with $\bar c \in \mn{adom}(D)^{\mn{ar}(R)}$ and $i >0$. Then
    $R(\bar c)$ was added by the chase step that produced $D_i$ from
    $D_{i-1}$, by applying a TGD $T= \phi(\bar x,\bar y) \rightarrow
    \exists \bar z \, \psi(\bar x,\bar z)$ at a tuple $(\bar d, \bar
    d')$. Since the chase step has added $R(\bar c)$ and $\bar c \in
    \mn{adom}(D)^{\mn{ar}(R)}$, every constant that occurs in $\bar c$
    must also occur in $\bar d$. % Let the CQ $\phi(\bar u)$ be
    % obtained from $\phi(\bar d,\bar d')$ by making every constant in
    % $\bar c$ an answer variable and all other constants a quantified
    % variable. 
    Consider the fact $R'(\bar d,\bar d') \in \phi(\bar d,\bar d')$
    that corresponds to the guard atom and let $S=\bar d \cup \bar d'$
    if all constants in $\bar d \cup \bar d'$ are from $\mn{adom}(D)$
    and $S=\mn{source}(R'(\bar d, \bar d'))$ otherwise. Then $S$ is a
    guarded set in $D$ that contains all constants from $\bar c$, and
    $\phi(\bar d, \bar d') \subseteq
    D_i|^\downarrow_S$. By~Lemma~\ref{lem:treesarenice}, there is a
    homomorphism $h$ from $D_i|^\downarrow_S$ to
    $\mn{ch}_\Omc(D_i|_S)$ that is the identity on all constants in
    $S$. Thus $\phi(h(\bar d),h(\bar d')) \subseteq
    \mn{ch}_\Omc(D_i|_S)$ and thus $T$ was applied at $(h(\bar
    d),h(\bar d'))$ during the construction of $\mn{ch}_\Omc(D_i|_S)$,
    yielding $R(\bar c) \in \mn{ch}_\Omc(D_i|_S)$. Consequently
    $D_{I_i}|_S\cup\Omc \models R(\bar c)$. Thus Point~(2) from the
    definition of $\theta$ includes the implication
    $\bigwedge_{R'(\bar e) \in D_i|_S} x_{R'(\bar e)} \rightarrow
    x_{R(\bar c)}$. From the induction hypothesis, we know that
    $x_{R'(\bar e)} \in V$ for all $R'(\bar e) \in D_i|_S$, and thus
    $x_{R(\bar c)} \in V$ as desired.  This finishes the proof
    of~($*$).

    \smallskip

    We now show that $D'_p$ has also been added to $D_\theta$ during
    the construction of this database. Since $D \cup \Omc \models
    p(\bar c)$, there is a homomorphism $h$ from $p$ to
    $\mn{ch}_\Omc(D)|^\downarrow_{S}$ with $h(\bar y)=\bar c$. It
    follows that there is an $i \geq 0$ such that $h$ is also a
    homomorphism from $p$ to $D_i|^\downarrow_S$.
    Lemma~\ref{lem:treesarenice} implies that $D_i|_S \cup \Omc
    \models p(\bar c)$. Thus Point~(2) from the definition of $\theta$
    includes the implication $\bigwedge_{R'(\bar e) \in D_i|_S}
    x_{R'(\bar e)} \rightarrow x_{p(\bar c)}$. By ($*$), $x_{R'(\bar
      e)} \in V$ for all $R'(\bar e) \in D_i|_S$ and thus $x_{p(\bar
      c)} \in V$.  By construction of $D_\theta$, it follows that
    $D'_p$ has been added to this database.
\end{proof}
For later use, we make explicit the structure of $\mn{ch}^q_\Omc(D)$.
In fact, $\mn{ch}^q_\Omc(D)$ can be obtained from $D$ by grafting a
database onto every guarded set in $D$, but the grafted databases are
not connected other than through the guarded sets onto which they are
grafted.  Note that the size of a guarded set is bounded by a constant
when we assume the OMQ to be fixed.
% To make this formal, take a
% database $E$ and let $N \subseteq \mn{adom}(E)$. We say that $N$ is
% \emph{chase-like in} $E$ if there are (possibly empty) databases
% $C,D_{S_1},\dots,D_{S_n}$, with $S_1,\dots,S_n$ the guarded sets in
% $C$, such that $N \cap C = \emptyset$,  $\mn{adom}(D_{S_i}) \cap \mn{adom}(C) \subseteq S_i$
% for $1 \leq i \leq n$,
% $\mn{adom}(D_{S_i}) \cap \mn{adom}(D_{S_j}) \subseteq \mn{adom}(C)$
% whenever $i \neq j$, and
% $E=C \cup \bigcup_{1 \leq i \leq n} 
% D_{S_i}$. We call $C,D_{S_1},\dots,D_{S_n}$ a \emph{witness} for
% $E$ being chase-like.
To make this formal, we say that a database $E$ is \emph{chase-like}
if there are databases $D_{1},\dots,D_{n}$ such that
\begin{itemize}

\item $E=D_1 \cup \cdots \cup D_n$,

\item $D_i$ contains exactly one fact that uses no nulls,
  and that fact contains all constants in  $\mn{adom}(D_i) \setminus \Nbf$,
  
  % $\mn{adom}(D_i) \setminus N$ is a guarded 
  % set in $D_i$ for $1 \leq i \leq n$, and 

\item
 % $(\mn{adom}(D_i) \setminus N) \neq (\mn{adom}(D_j) \setminus N)$ 
%and
  $\mn{adom}(D_i) \cap \mn{adom}(D_j) \cap \Nbf = \emptyset$
    for
  $1 \leq i < j \leq n$.

\end{itemize}
We call $D_{1},\dots,D_{n}$ a
\emph{witness} for $E$ being chase-like.
\begin{lemma}
  \label{lem:pseudochasestructure}
  $\mn{ch}_\Omc^q(D)$ is chase-like and
  there is a witness $D_{1},\dots,D_{n}$ such that
  $|\mn{adom}(D_{i})|$ does not depend on $D$ for $1 \leq i \leq n$.
%
  % Let $\Smf$ be the set of all guarded sets in $D$.
  % There is a family of databases $(D_S)_{S \in \Smf}$,
  % such that $\mn{adom}(D_S) \cap \mn{adom}(D) \subseteq S$
  % and $\mn{adom}(D_S) \cap \mn{adom}(D_{S'}) \subseteq \mn{adom}(D)$
  % for all $S,S' \in \Smf$, and $\mn{ch}^q_\Omc(D) = D \cup \bigcup_{S \in
  %   \Smf} D_S$.
\end{lemma}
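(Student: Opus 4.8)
The plan is to read off a witness directly from the explicit grafting construction of $\mn{ch}^q_\Omc(D)$ given just before Lemma~\ref{prop:restrictedchaseworks}, without appealing to the locality machinery of Lemma~\ref{lem:treesarenice}. Recall that $\mn{ch}^q_\Omc(D)$ is $D$ together with, for each CQ $p(\bar y) \in \mn{cl}(Q)$ and each tuple $\bar c$ whose constants form a guarded set in $D$ with $D \cup \Omc \models p(\bar c)$, a fresh copy $D'_p$ of the canonical database $D_p$ in which the answer variables $\bar y$ are realized by the constants $\bar c$ and every quantified variable by a pairwise fresh null. The two structural facts I would extract are: (i) every null in $\mn{ch}^q_\Omc(D)$ is introduced by exactly one such graft, so the facts containing at least one null partition according to the graft they stem from; and (ii) within a single graft $D'_p$ the only non-null constants are those of $\bar c$, since every other position is filled by a fresh null.

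First I would describe the witness pieces. For every null-free fact $F \in \mn{ch}^q_\Omc(D)$ I take the singleton piece $\{F\}$; this trivially has exactly one null-free fact (namely $F$, which contains all its own constants) and no nulls. Then, for every graft $(p,\bar c)$ whose copy $D'_p$ contains at least one fact with a null, I take the piece $B_{(p,\bar c)} \cup \{G_{\bar c}\}$, where $B_{(p,\bar c)}$ is the set of facts of $D'_p$ containing at least one null and $G_{\bar c}$ is a fixed guard fact $R(\bar d) \in D$ for $\bar c$, i.e.\ one with all constants of $\bar c$ occurring in $\bar d$; such a fact exists because the constants of $\bar c$ form a guarded set in $D$, and it lies in $\mn{ch}^q_\Omc(D)$ since $D \subseteq \mn{ch}^q_\Omc(D)$. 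As $\mn{ch}^q_\Omc(D)$ is finite, this yields finitely many pieces $D_1,\dots,D_n$, whose union is all of $\mn{ch}^q_\Omc(D)$: the null-free facts are covered by the singletons and each null-containing fact lies in the piece of its unique graft.

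Next I would verify the three defining conditions together with the size bound. The union condition holds by the preceding sentence. For the ``exactly one null-free fact'' condition the singletons are clear, and in a graft piece $B_{(p,\bar c)}$ contributes only null-containing facts while $G_{\bar c}$ is the sole null-free fact; moreover the non-null constants of $B_{(p,\bar c)}$ lie in $\bar c$ by (ii), hence among the constants $\bar d$ of $G_{\bar c}$, so $G_{\bar c}$ contains all of $\mn{adom}(D_i)\setminus\Nbf$. For null-disjointness the singletons contain no nulls, and two distinct graft pieces contain only nulls introduced by their respective grafts, which are fresh and hence disjoint by (i). Finally, a singleton satisfies $|\mn{adom}(\{F\})| \le \max_R \mn{ar}(R)$, and a graft piece satisfies $|\mn{adom}(D_i)| \le |\mn{var}(p)| + \max_R \mn{ar}(R) \le |V| + \max_R \mn{ar}(R)$, where $V$ is the fixed variable set bounding all CQs in $\mn{cl}(Q)$; both bounds are independent of $D$.

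The one point that needs care — and which dictates the whole construction — is the requirement that each piece contain \emph{exactly one} null-free fact. A grafted copy $D'_p$ may itself contain several null-free facts, arising from atoms of $p$ over answer variables only, so it cannot serve as a piece as it stands. The remedy is precisely the split above: strip all null-free facts out of every graft and account for them separately by singletons, retaining in each graft piece only its null-containing facts plus a single guard fact as anchor. Everything else — existence of the guard fact and the constant bounds — I expect to be routine once this split is in place.
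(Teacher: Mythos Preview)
Your proposal is correct and is essentially a careful spelling-out of what the paper leaves implicit: the paper states the lemma without proof, merely remarking beforehand that $\mn{ch}^q_\Omc(D)$ is $D$ with a bounded-size database grafted onto each guarded set. Your observation that a single grafted copy $D'_p$ may contain several null-free facts (from atoms of $p$ over answer variables only), and your remedy of splitting these off as singleton pieces while anchoring the null-containing remainder with one guard fact from $D$, is exactly the right way to meet the ``exactly one null-free fact'' clause in the definition of chase-like; the paper does not make this point explicit.
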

We are now ready to prove Theorem~\ref{thm:singlelin}.

\thmsinglelin*

Assume that we are
given a weakly acyclic OMQ $Q(\bar x)=(\Omc,\Sbf,q) \in
(\class{G},\class{CQ})$, an \Sbf-database $D$, and a $\bar c \in
\mn{adom}(D)^{|\bar x|}$, and we have to decide whether $\bar c \in
Q(D)$ (complete answers case). We first compute $\mn{ch}^q_\Omc(D)$ in
time linear in $||D||$. Introduce a fresh unary relation symbol
$P_{\mn{db}}$. We next extend $\mn{ch}^q_\Omc(D)$ to a database $D'$
by adding the fact $P_{\mn{db}}(c)$ for every $c \in \mn{adom}(D)$ and
obtain the CQ $q'(\bar x)$ from $q$ by adding the atom
$P_{\mn{db}}(x)$ for every answer variable $x$. Note that since $q$ is
weakly acyclic, so is $q'$.  It follows from
Lemma~\ref{lem:pseudochase} that $Q(D)=q'(D')$ and thus it suffices to
check whether $\bar c \in q'(D')$. Construct the Boolean CQ $q''$
which is obtained from $q'$ by replacing the answer variables with the
constants from~$\bar c$. Clearly, $q''$ is acyclic and we have to
check whether $() \in q''(D')$.  This can be done in linear time using
existing procedures such as Yannakakis'
algorithm~\cite{yannakakis-algotrithm}.

\medskip

{%\color{orange} 
Now for the case of minimal partial answers with a single wildcard. We start with observing
that it suffices to show that, given an acyclic OMQ
$Q(\bar x)=(\Omc,\Sbf,q) \in (\class{G},\class{CQ})$, an \Sbf-database
$D$, and a
$\bar c^\ast \in (\mn{adom}(D) \cup \{ \ast \})^{|\bar x|}$, it can be
decided in linear time whether $\bar c^\ast$ is a (not necessarily
minimal) partial answer to $Q$ on $D$.

Assume that we have a linear time algorithm for this task. Given an
acyclic OMQ $Q(\bar x)=(\Omc,\Sbf,q) \in (\class{G},\class{CQ})$, an
\Sbf-database $D$, and a
$\bar c^\ast \in (\mn{adom}(D) \cup \{ \ast \})^{|\bar x|}$, we can then
decide in linear time whether $\bar c^\ast \in Q(D)^\ast$ in the
following way. First, we check whether $\bar c^\ast$ is a partial
answer to $Q$ on $D$ and return `no' if this is not the case.
Next, let $V$ be the set of all answer variables $z$ in $q$ such
that the positions in $\bar c^\ast$ that correspond to $z$ are
filled with `$\ast$'. Introduce a fresh unary relation
symbol $P_{\mn{db}}(c)$ and let $D'$ be obtained from $D$ by adding
the fact $P_{\mn{db}}(c)$ for every $c \in \mn{adom}(D)$. For every
$z \in V$, let $Q_z(\bar x')=(\Omc,\Sbf,q_z)$ where $q_z$ is obtained
from $q$ by adding the atom $P_{\mn{db}}(z)$. We then test whether
$\bar c^\ast$ is a partial answer to $Q_z$ on $D'$ and return `no' if
the check succeeds for any $z \in V$ and `yes' otherwise. To
see that this is correct note that if $\bar c^\ast$ is a partial
answer to $Q_z$ on $D'$, then we also find a partial answer
to $Q$ on $D$ in which all positions in $\bar c^\ast$ that correspond
to $z$ are replaced with a constant from $\mn{adom}(D)$, thus
$\bar c^\ast$ is not a minimal partial answer.

We now show that the linear time algorithm for single-testing (not
necessarily minimal) partial answers indeed exists.  Assume that we are
given an acyclic OMQ
$Q(\bar x)=(\Omc,\Sbf,q) \in (\class{G},\class{CQ})$, an \Sbf-database
$D$, and a
$\bar c^\ast \in (\mn{adom}(D) \cup \{ \ast \})^{|\bar x|}$, and we
want to decide whether $\bar c^\ast$ is a partial answer to $Q$
on $D$. This can be done as follows.  We first check whether
$\bar c^\ast$ is coherent with $\bar x$ in the sense that $x_i = x_j$
implies $c_i = c_j$, and return `no' if this is not the case.  Let
$Q'(\bar x')=(\Omc,\Sbf,q')$ where $q'(\bar x')$ is obtained from
$q(\bar x)$ by quantifying all answer variables $z$ such that $x_i=z$
implies $c_i=\ast$, and let $\bar c$ be obtained from $\bar c^\ast$ by
dropping all $c_i$ with $c_i=\ast$.  Note that $q'$ is ayclic since
$q$ is acyclic (whereas $q'$ would not be guaranteed to be weakly
acyclic if $q$ was only weakly acyclic). It then remains to check
whether $\bar c \in Q'(D)$, using the algorithm for complete answers
from above, which is the case if and only if $\bar c^\ast$ is a
partial answer to $Q$ on~$D$.

\medskip

We next consider the case of minimal partial answers with
multi-wildcards. We first make the following observation.  
\\[2mm]
\emph{Claim.}  Let
$Q(\bar x)=(\Omc,\Sbf,q) \in (\class{ELI},\class{CQ})$ be acyclic, $D$
be an \Sbf-database, and $\bar c^\Wmc \in Q(D)^\Wmc$.  Further let
$\widehat q$ be obtained from $q$ by identifying any two answer
variables $x_1,x_2$ such that the corresponding positions in $\bar c$
are filled with the same wildcard. Then
$\widehat q$ is weakly acyclic.
\\[2mm]
To prove the claim, assume that $Q$, $D$, $\bar c$, and $\widehat q$
are as in the claim. Let $\bar x = x_1 \cdots x_n$ and
$\bar c^\Wmc = c_1 \cdots c_n$. For every answer variable~$x_i$, we
use $c(x_i)$ to denote $c_i$. %  Let $\widehat q'$ be obtained from
% $\widehat q$ by dropping all atoms that contain an answer variable $x$
% with $c(x) \in \mn{adom}(D)$.
Since $\bar c^\Wmc$ is a partial answer
to $Q$ on~$D$, there is a homomorphism $h$ from $q$ to
$\mn{ch}_\Omc(D)$ such that for all answer variables $x,x'$,
$c(x) \in \mn{adom}(D)$ implies $h(x)=c(x)$, and
$c(x) = c(x') \in \Wmc$ implies $h(x)=h(x')$. We prove that for all
answer variables $x,x'$ with $c(x),c(x') \in \Wmc$,
  \begin{enumerate}

  \item $h(x) \notin \mn{adom}(D)$ and
    
  \item $h(x_1)=h(x_2)$ iff $c(x_1)=c(x_2)$.

  \end{enumerate}
  For Point~1, assume to the contrary that there is an answer variable
  $x$ with $h(x) \in \mn{adom}(D)$. Let ${\bar{c}}^{\prime\Wmc}$ be obtained
  from $\bar c^\Wmc$ by replacing $c_i$ with $h(x)$ whenever $x_i=x$.
  The homomorphism $h$ witnesses that ${\bar c}^{\prime\Wmc}$ is a partial
  answer to $Q$ on $D$, but ${\bar c}^{\prime\Wmc} \prec \bar c^\Wmc$ in
  contradiction to $\bar c^\Wmc$ being a minimal partial answer.  The
  `if' direction of Point~2 is clear by choice of $h$. For the `only
  if' direction, assume to the contrary that there are answer
  variables $x,x'$ with $c(x),c(x') \in \Wmc$, $c(x) \neq c(x')$, and
  $h(x)=h(x')$. Let ${\bar{c}}^{\prime\Wmc}$ be obtained from $\bar c^\Wmc$ by
  choosing a fresh wildcard $\ast_\ell$, replacing $c_i$ with
  $\ast_\ell$ whenever $c(x_i) \in \{ c(x), c(x')\}$, and then renaming
  wildcards to make them consecutive again. That is, the variables
  $x,x'$ have distinct wildcards in $\bar c^\Wmc$, but the same
  wildcard in ${\bar c}^{\prime\Wmc}$. The homomorphism $h$ witnesses that
  ${\bar c}^{\prime\Wmc}$ is a partial answer to $Q$ on $D$, but
  ${\bar c}^{\prime\Wmc} \prec \bar c^\Wmc$ in contradiction to $\bar c^\Wmc$
  being a minimal partial answer.

  To see that $\widehat q$ is acyclic, first consider the restriction
  $\widehat q_0$ of $\widehat q$ to the answer variables $x$ with
  $c(x) \in \Wmc$. By construction of $\widehat q_0$ and Point~(2), $h$ is a
  homomorphism from $\widehat q_0$ to $\mn{ch}_\Omc(D)$. By Point~(2)
and since no wildcard occurs twice in $\widehat q_0$ (which is
  due to the construction of $\widehat q$), $h$ as a homomorphism from
  $\widehat q_0$ to $\mn{ch}_\Omc(D)$ is injective. By Point~(1)
  above, there is no variable $x$ in $\widehat q_0$ with
  $h(x) \in \mn{adom}(D)$. Since \Omc is formulated in \ELI, however,
  the restriction of $\mn{ch}_\Omc(D)$ to
  $\mn{adom}(\mn{ch}_\Omc(D)) \setminus \mn{adom}(D)$ is acyclic.  The
  injectivity of $h$ thus implies that $\widehat q_0$ is acyclic. Now
  consider the restriction $\widehat q_1$ of $\widehat q$ to the
  variables in $\widehat q_0$ plus the quantified variables.
  Acyclicity of $\widehat q_0$ and of the original query $q$ implies
  that $\widehat q_1$ is acyclic. This, in turn, clearly implies that
  $\widehat q$ is weakly acyclic. This finishes the proof of the
  claim.

  \medskip

  Now assume that we are given an acyclic OMQ
  $Q(\bar x)=(\Omc,\Sbf,q) \in (\class{ELI},\class{CQ})$, an
  \Sbf-database $D$, and a
  $\bar c^\Wmc \in (\mn{adom}(D) \cup \Wmc)^{|\bar x|}$, and that we
  want to decide whether $\bar c^\Wmc \in Q(D)^\Wmc$.  We first verify
  that $\bar x = x_1 \cdots x_n$ is coherent with
  $\bar c^\Wmc = c_1 \cdots c_n$ in the sense that $x_i = x_j$ implies
  $c_i = c_j$, returning `no' if this is not the case. For every
  answer variable $x_i$, we again use $c(x_i)$ to denote $c_i$.  We
  then construct $\widehat q$ as in the claim and check whether it is
  weakly acyclic, returning `no' if this is not the case. Let
  $\bar c^*$ be constructed from $\bar c^\Wmc$ by mirroring the
  construction of~$\widehat q$, that is, whenever two answer variables
  $x_i,x_j$ are identified in $\widehat q$, then the corresponding
  positions (which carry the same wildcard) are identified in
  $\bar c^*$. Clearly, every wildcard $\ast_i$ occurs only once in~$\bar c^*$.
  Thus, we can assume that all these wildcards are `$\ast$' and
  $\bar c^*$ is, in fact, a single-wildcard tuple.  We check whether $\bar c^*$
  is a partial answer to $Q$ on $D$ using the procedure for
  single-testing partial answers with a single wildcard for
  $(\class{G},\class{CQ})$ given above. It can be verified that this
  is the case if and only if $\bar c^\Wmc$ is a partial answer to $Q$
  on $D$, and thus we return `no' if the check fails.
  
  It remains to check whether there is a partial answer
  $\bar a^\Wmc$ to $Q$ on $D$ with
  $\bar a^\Wmc \prec \bar c^\Wmc$.% \footnote{It would of course also be
    % sufficient to check whether there is a (not necessarily least)
    % partial answer of this kind, but it is not clear how to check that
    % as the claim only applies to least partial answers.} 
  Introduce a
  fresh unary relation symbol $P_{\mn{db}}(c)$ and let \Qmc be the
  set of all pairs $(p,\bar b^\Wmc)$ with $p$ a CQ and $\bar b^\Wmc$
  a tuple that can be obtained from $q$ and $\bar c^\Wmc$ in the
  following way:
  \begin{itemize}

  \item choose a set $V$ of answer variables $x$ such that
    the corresponding positions in $\bar c^\Wmc$ have a wildcard; then
    obtain $p$ from $q$ by adding $P_{\mn{db}}(x)$ for all $x \in V$
    and
    set $\bar b^\Wmc=\bar c^\Wmc$;

  \item choose a partition $S_1,\dots,S_k$ of the set of indices of
    wildcards that occur in $\bar c^\Wmc$; then set $p=q$ and obtain
    $\bar b^\Wmc$ from $\bar c^\Wmc$ by replacing, for
    $1 \leq i \leq k$, every wildcard `$\ast_\ell$' with
    $\ell \in S_i$ by `$\ast_{\widehat \ell}$' where
    $\widehat \ell \in S_i$ is a chosen representative; then rename
    wildcards to make them consecutive again.

  \end{itemize}
  This is subject to the condition that $V$ is non-empty or the
  partition $S_1,\dots,S_k$ contains at least one non-singleton set.
  Let $D'$ be obtained from $D$ by adding the fact $P_{\mn{db}}(c)$
  for every $c \in \mn{adom}(D)$. We then do the following for all
  $(p,\bar b^\Wmc) \in \Qmc$:
  \begin{enumerate}

  \item if $\widehat p$ is not weakly acyclic, then skip;

  \item otherwise, proceed as in the case of $q$ to check
    whether $\bar b^\Wmc$ is a partial answer to $(\Omc,\Sbf,p)$ on $D'$
    (by transitioning to $\widehat p$ and the corresponding
    single-wildcard tuple $\bar b^\ast$ obtained from $\bar b^\Wmc$
    and using the algorithm for single-testing with single wildcards).

  \end{enumerate}
  If any of the checks succeeds, then there is a partial answer
  $\bar a^\Wmc$ to $Q$ on $D$ such that
  $\bar a^\Wmc \prec \bar c^\Wmc$, and thus we answer `no'. If, for
  example, $(p,\bar b^\Wmc) \in \Qmc$ due to the choices $V$ and
  $S_1,\dots,S_k$ with $V$ non-empty, then we may obtain $\bar a^\Wmc$
  from $\bar c^\Wmc$ by replacing all wildcard tuples in $\bar c^\Wmc$
  that correspond to variables in $V$ with constants from
  $\mn{adom}(D)$. Conversely, assume that there is a partial answer
  $\bar a^\Wmc$ to $Q$ on $D$ such that
  $\bar a^\Wmc \prec \bar c^\Wmc$. Then we may assume that
  $\bar a^\Wmc$ is a \emph{minimal} partial answer. It should be clear
  that $\bar a^\Wmc$ induces a pair $(p,\bar b^\Wmc) \in \Qmc$ by
  `reading off' $V$ and $S_1,\dots,S_k$ from $\bar a^\Wmc$. By the
  claim, $(p,\bar b^\Wmc)$ is not skipped in Step~1 above and the
  check in Step~2 succeeds, so the algorithm returns `no', as desired.
}

\subsection{Proof of Theorem~\ref{lemma:single-testing-lower-bound}}

\lemmasingletestinglowerbound*

To prove Theorem~\ref{lemma:single-testing-lower-bound}, it suffices
to consider complete answers to Boolean OMQs that are not acyclic,
non-empty and self\=/join free. To see this, first assume that there
is an OMQ $Q(\bar x) =(\Omc,\Sbf,q) \in (\class{ELI}, \class{CQ})$
that satisfies the conditions from
Theorem~\ref{lemma:single-testing-lower-bound} and such that
single-testing complete answers to $Q$ is possible in linear time.
Since $Q$ is non-empty, there is an \Sbf-database $D_0$ and a tuple
$\bar a \in Q(D_0)$. Define the Boolean OMQ $Q_{\bar
  a}:=(\Omc,\Sbf,q[\bar a / \bar x])$.  It is easy to see that
$Q_{\bar a}$ is not acyclic, non-empty and self\=/join free.  In
particular, $D_0$ witnesses its non-emptiness.  Moreover,
single-testing for $Q_{\bar a}$ is then also possible in linear time
because $() \in Q_{\bar a}(D)$ iff
% all constants in $\bar a$ occur in $D$ and
$\bar{a} \in Q(D)$, for all \Sbf-databases $D$. If we have proved
Theorem~\ref{lemma:single-testing-lower-bound} for complete answers
and the class of OMQs described above, of which $Q_{\bar a}$ is a
member, it thus follows that the triangle conjecture fails. Regarding
minimal partial answers and minimal partial answers with multiple
wildcards, it suffices to observe that these agree with complete
answers for Boolean OMQs.

Let $Q() =(\Omc,\Sbf,q) \in (\class{ELI},\class{CQ})$ be not acyclic,
non-empty, and self\=/join free.\footnote{We remark that the proof
  still goes through if $q$ is self\=/join free only regarding the
  binary atoms, but not necessarily regarding the unary ones.} We show
how to construct, given an undirected graph $G=(V,E)$, an
\Sbf-database $D$ such that $D \models Q$ if and only if $G$ contains
a triangle. When speaking about undirected graphs, we generally mean
graphs without self loops and isolated vertices.  Since $Q$ is
non-empty, there is an \Sbf-database $D_0$ with $D_0 \models Q$ and a
homomorphism $h_0$ from $q$ to $\mn{ch}_\Omc(D_0)$. We are going to
use $D_0$ and $h_0$ throughout the subsequent proof.

Since $q$ is not acyclic, the undirected graph $G^{\mn{var}}_q$
contains a cycle $x_1,\dots,x_n$ of length at least three.   We may
 assume w.l.o.g.\ that the cycle is chordless.
To
ease notation, set $x_{n+1}:=x_1$. For $1 \leq i \leq n$, a binary
relation $R$ is an \emph{$i$-relation} if $q$ contains an atom
$R(x_i,x_{i+1})$ or $R(x_{i+1},x_i)$. Note that no $R$ can be an
$i_1$-relation and an $i_2$-relation with $i_1 \neq i_2$ since $q$ is
self\=/join free.
\begin{lemma}
  \label{lem:alliinS}
  If $R$ is an $i$-relation, $1 \leq i \leq n$, then
$R \in \Sbf$. % Moreover, if $R(x_i,x_i) \in q$, then
% $R \in \Sbf$.
\end{lemma}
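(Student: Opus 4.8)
The plan is to reduce the statement to a structural claim about where the cyclic part of $q$ can be mapped in the chase, and then exploit the tree-shaped nature of the \ELI chase together with self-join freeness. First I would recall the shape of $\mn{ch}_\Omc(D_0)$ for an \ELI-ontology. Because every \ELI TGD has a single frontier variable and an acyclic, connected, loop- and multi-edge-free head, each chase step grafts onto a single already-existing element $c$ (the image of the frontier variable) a fresh tree of nulls; no chase step ever introduces a binary fact between two elements that already exist, and between any null $u$ and its unique \emph{parent} (its neighbour on the path back to $\mn{adom}(D_0)$) there is exactly one binary fact. This locality is exactly what Lemma~\ref{lem:treesarenice} and the structural description in Lemma~\ref{lem:pseudochasestructure} provide. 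Two consequences that I would state explicitly: (a)~the restriction of $\mn{ch}_\Omc(D_0)$ to its nulls is a forest in which adjacent elements differ in depth by one; and (b)~every binary fact of $\mn{ch}_\Omc(D_0)$ whose two arguments both lie in $\mn{adom}(D_0)$ already belongs to $D_0$, and hence uses a relation symbol from \Sbf.

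Given (b), it suffices to show that the cycle $x_1,\dots,x_n$ is mapped by $h_0$ entirely into $\mn{adom}(D_0)$: then each cycle atom maps to a fact $R_i(h_0(x_i),h_0(x_{i+1}))$ between two database constants, which by (b) is in $D_0$, so the $i$-relation $R_i$ is in \Sbf. To establish this, I would argue by contradiction using a \emph{deepest-null} argument. Suppose some $h_0(x_j)$ is a null and let $u=h_0(x_\ell)$ be a cycle image of maximal depth in the forest~(a). Its two cycle-neighbours $h_0(x_{\ell-1})$ and $h_0(x_{\ell+1})$ are adjacent to $u$ and have depth at most that of $u$; since in the forest the only neighbour of $u$ at a strictly smaller depth is its unique parent $p$ (children have larger depth, and there are no same-depth edges), both cycle-neighbours must equal $p$. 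Consequently the two cycle edges incident to $u$ both realise the single binary fact connecting $u$ and $p$, so they carry the same relation symbol and orientation. But these two edges come from two \emph{distinct} atoms of $q$, namely the $(\ell-1)$-relation atom on the variable pair $\{x_{\ell-1},x_\ell\}$ and the $\ell$-relation atom on $\{x_\ell,x_{\ell+1}\}$ (distinct because $x_{\ell-1}\neq x_{\ell+1}$, as the cycle has length at least three). Two distinct atoms with the same relation symbol contradict self-join freeness of $q$.

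Hence no $h_0(x_j)$ is a null, the cycle maps into $\mn{adom}(D_0)$, and the conclusion $R_i\in\Sbf$ follows from~(b) as above (the degenerate case $h_0(x_i)=h_0(x_{i+1})$ only gives a self-loop on a database constant, which again must lie in $D_0$). The main obstacle is getting the forest structure and the uniqueness of the parent edge exactly right: I must be careful that a null's parent may itself be a database constant (the argument still forces both cycle-neighbours to that constant), that later chase steps treating $u$ as a frontier add only \emph{children} of $u$ and never a second $u$--$p$ fact, and that the boundary case $n=3$ is handled (there the two incident atoms are still distinct). Once these structural points are nailed down, the self-join-freeness contradiction is immediate, and invoking (b) completes the proof.
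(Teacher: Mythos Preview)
Your proposal is correct and follows essentially the same approach as the paper: establish that the cycle must map entirely into $\mn{adom}(D_0)$ using self-join freeness together with the tree shape of the null part of the \ELI chase, and then conclude via the fact that binary facts between database constants already lie in $D_0$ and hence use relations from~$\Sbf$. The paper compresses the first step into a single sentence (``the null part \dots\ consists of a disjoint union of trees without self loops and multi-edges''), whereas you spell it out via the deepest-null argument; this is exactly the argument implicitly behind the paper's assertion.
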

\begin{proof}
The lemma follows from $Q$ being non-empty and self\=/join free. In
fact, assume to the contrary of what is to be shown that there is an
$i$-relation $R$, $1 \leq i \leq n$, such that $R \notin \Sbf$. We
have $h_0(x_i) \in \mn{adom}(D_0)$ for $1 \leq i \leq n$ since
$x_1,\dots,x_n$ participate in a cycle in the self\=/join free CQ $q$,
and the null part of $\mn{ch}_\Omc(D_0)$ consists of a disjoint union
of trees without self loops and multi-edges (because \Omc is
formulated in \ELI). However, the restriction of $\mn{ch}_\Omc(D_0)$
to $\mn{adom}(D_0)$ contains no $R$-edges with $R \notin \Sbf$, and
thus all $i$-relations must be from \Sbf.
\end{proof}
%
% Consequently, there is a homomorphism $h$ from $q$ to
% $\mn{ch}_\Omc(D_0)$. Since $R \notin \Sbf$, $R$-edges in
% $\mn{ch}_\Omc(D_0)$ can only be inside the trees generated by the
% chase, and thus $h(x_i)$ and $h(x_{i+1})$ are inside such a tree.
% Assume that $h(x_{i+1})$ is further away from the database part of
% $\mn{ch}_\Omc(D_0)$ than $h(x_i)$; the case that $h(x_{i})$ is further
% away than $h(x_{i+1})$ is analogous.  Since $q$ is self\=/join free,
% $h(x_{i+2})$ must be further away from the database part of
% $\mn{ch}_\Omc(D_0)$ than $h(x_{i+1})$ and so on, round robbin' to $x_1$
% and proceeding to $h(x_{i})$ which is thus further away from the
% database part of $\mn{ch}_\Omc(D_0)$ than itself, a
% contradiction.
%
We now construct the \Sbf-database $D$.  Let $C=\{c_0,\dots,c_{k-1}\}$
be a set of $k=\max\{4,\mn{con}(q)\}$ constants with
$\mn{con}(q) \subseteq C$. The active
domain of the database $D$ is
$$\dom(D) = V \cup C
$$
and $D$ contains the following facts:
%For every unary relation $A$ in $\Sbf$ and constant $c \in \dom(D)$
%we add facts $A(c)$ do the database.
%
\begin{itemize}

\item $A(c)$ for every unary $A \in \Sbf$ and every $c \in \dom(D)$;

\item $R(c,c)$ for every $i$-relation $R$, $3 < i \leq
  n$, and every $c \in \dom(D)$;

\item for every $i$-relation $R$, $1 \leq i \leq
  3$:
  \begin{itemize}

  \item $R(u,v)$ for every edge $\{u,v\} \in E$;

 \item $R(c_i,c_{i + 1 \,\mn{mod}\, k})$ for $0 < i \leq k$.
  \end{itemize}

\item $R(c,c)$, $R(c,c_\ell)$, and $R(c_\ell,c)$
%  $R(c,c')$
  for every binary
  $R \in \Sbf$ that is not an $i$-relation for any $i$, all
  $c' \in \dom(D)$, and $1 \leq \ell < k$.
\end{itemize}
Due to Lemma~\ref{lem:alliinS}, $D$ is indeed an \Sbf-database. The
construction of $D$ strongly relies on self\=/join freeness as this
makes the interpretation of $i$-relations in $D$ independent of each
other for different $i$. It should be clear that $D$ can be
constructed in time linear in $|E|$.

We remark that $D$ has two important properties. First, every
$c \in \dom(D)$ has both an incoming $R$-edge and an outgoing $R$-edge
for every binary relation $R \in \Sbf$. And second, every triangle in
$D$ that only uses the relations $R_1,R_2,R_3$ is in $D|_V$; more
precisely: if $D' \subseteq D$ is an $\{ R_1,R_2,R_3\}$-database such
that $|\mn{adom}(D')|=3$ and the Gaifman graph of $D'$ is a triangle,
then $\mn{adom}(D') \subseteq V$.  We refer to this as the
\emph{completeness property} and the \emph{triangle property},
respectively. To complete the reduction, it suffices to show the
following.
\begin{lemma}
  $G$ contains a triangle iff $D \models Q$.
\end{lemma}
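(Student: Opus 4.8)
The plan is to prove the two implications separately, exploiting the completeness and triangle properties of $D$ together with the fact that, since $\Omc$ is an \ELI ontology, the null part of $\mn{ch}_\Omc(D)$ is a disjoint union of trees and no chase step ever creates a binary fact between two constants of $\mn{adom}(D)$ (every \ELI TGD has a single frontier variable, so any binary head atom must mention a fresh null).

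For the direction ``$D \models Q$ implies a triangle'', suppose $h$ is a homomorphism from $q$ to $\mn{ch}_\Omc(D)$. As in the proof of Lemma~\ref{lem:alliinS}, the cycle variables $x_1,\dots,x_n$ must be sent into $\mn{adom}(D)$, since they lie on a cycle of the self-join free $q$ while the null part is acyclic. By the remark above, each cycle atom is then already witnessed in $D$, i.e.\ $R_i(h(x_i),h(x_{i+1})) \in D$ (or its reverse). For $3 < i \le n$ the relation $R_i$ occurs in $D$ only in self-loops, forcing $h(x_1)=h(x_4)=\dots=h(x_n)=:a$; set $b:=h(x_2)$ and $c:=h(x_3)$. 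Then $R_1(a,b),R_2(b,c),R_3(c,a) \in D$, and since $R_1,R_2,R_3$ have no self-loops in $D$ these constants are pairwise distinct, so the Gaifman graph of $\{R_1(a,b),R_2(b,c),R_3(c,a)\}$ is a triangle. The triangle property yields $\{a,b,c\} \subseteq V$, and inside $V$ the only $R_1,R_2,R_3$-facts come from edges of $G$; hence $\{a,b\},\{b,c\},\{c,a\} \in E$ and $G$ has a triangle.

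For the converse, given a triangle on $a,b,c \in V$ I would construct a homomorphism $h$ from $q$ to $\mn{ch}_\Omc(D)$. On the cycle I reverse the previous argument: send $x_1,x_4,\dots,x_n \mapsto a$, $x_2 \mapsto b$, $x_3 \mapsto c$, which satisfies every cycle atom via the edge facts for $i \le 3$ and the self-loops for $i > 3$; note that \emph{every} relation between two adjacent cycle variables is by definition an $i$-relation, so multi-edges along the cycle cause no trouble. Each remaining variable $y$ is routed through the ``universal'' constants $c_1,\dots,c_{k-1}$: using the witness homomorphism $h_0$ from $q$ into $\mn{ch}_\Omc(D_0)$ guaranteed by non-emptiness, I send $y$ to $c_1$ when $h_0(y) \in \mn{adom}(D_0)$ and into the null tree hanging from $c_1$ otherwise.

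Verifying that this $h$ is indeed a homomorphism is the step I expect to be the main obstacle. Unary atoms are handled by the completeness property (all unary predicates of \Sbf hold everywhere), and ontology-introduced unary predicates transfer because a fully connected constant like $c_1$ dominates every element of $D_0$ in the simulation preorder and the chase respects simulations, exactly as in the proof of Lemma~\ref{lem:sim}. Binary atoms outside the cycle use a non-$i$-relation $R$ (an $i$-relation occurs only between adjacent cycle variables, and chords are excluded as the cycle is chordless); the completeness property then provides $R(c,c)$, $R(d,c_\ell)$ and $R(c_\ell,d)$ for all $d \in \mn{adom}(D)$ and $1 \le \ell < k$, which covers every case (both endpoints non-cyclic, one cyclic and one not, and self-loop atoms). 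The genuinely delicate point is the null-valued part: I would argue that the tree dangling from $c_1$ in $\mn{ch}_\Omc(D)$ is at least as rich as the tree $h_0$ uses, again because $c_1$ dominates every element of $D_0$ and the chase preserves simulations, and then compose this embedding with $h_0$. Finally, $k \ge 4$ is exactly what keeps $C$ free of $\{R_1,R_2,R_3\}$-triangles, which is what makes the triangle property hold and the whole reduction sound.
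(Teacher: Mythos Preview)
Your ``if'' direction is essentially the paper's argument and is correct.

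The ``only if'' direction has the right architecture---use the witness $h_0$ together with the completeness property and the simulation lemma---but your routing of the null-part variables is too coarse. You send \emph{every} such variable into the null tree hanging from $c_1$. This breaks as soon as a null-part variable $y$ is adjacent in $q$ to a cycle variable $x_m$: you have fixed $h(x_m)$ to be a graph vertex $v \in V$, and the binary atom between $x_m$ and $y$ then demands an edge from $v$ into the null tree of $c_1$, which does not exist (the trees grown by the \ELI chase are disjoint and each is attached at a single database constant). The same problem arises when $y$ is adjacent to a constant of $q$ whose index is not~$1$. Nothing in the setup excludes such atoms---for instance, an atom $S(x_1,y)$ with $S \notin \Sbf$ forces $h_0(y)$ to be a null while $x_1$ sits on the cycle.

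The fix, which is what the paper does, is to handle the null components locally rather than globally. After mapping all database-part variables (cycle variables to the triangle vertices, the remaining ones to some $c_\ell$, constants to themselves), drop from $q$ every atom all of whose variables are database-part under $h_0$, and identify variables with the same $h_0$-image. Each maximal connected component $p$ of the resulting query is then an ELIQ containing at most one database-part variable $x$ (its ``anchor''), precisely because the null part of $\mn{ch}_\Omc(D_0)$ is a forest. Now apply the simulation argument \emph{at the anchor's image}: by completeness $(D_0,h_0(x)) \preceq (D,h(x))$ whatever $h(x)$ is, so Lemma~\ref{lem:sim} gives a homomorphism from $p$ into the null tree below $h(x)$. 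Components with no anchor are handled by picking any constant of $D$ as the base point. Your composition-with-$h_0$ idea is exactly right; it just must be applied componentwise with the correct base point rather than uniformly at~$c_1$.
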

\begin{proof}
``if''. Assume that $D \models Q$. Then there is a homomorphism $h$
from $q$ to $\mn{ch}_\Omc(D)$. Due to the interpretation of the
$i$-relations in $D$ with $3 < i \leq n$, we must have
$h(x_4)=\cdots=h(x_n)=h(x_1)$. %  Consider the restriction of $\mn{ch}_\Omc(D)$
% to $h(x_1),h(x_2),h(x_3)$.
Consequently, the Gaifman graph of $\mn{ch}_\Omc(D)$ must contain the
edge $\{ h(x_i),h(x_j) \}$ for $1 \leq i,j \leq 3$.  By construction
and since $G$ has no self loops, $D$ contains no reflexive $R_i$-edges
for $1 \leq i \leq 3$. Since $\Omc \in \class{ELI}$, the same is then
true for $\mn{ch}_\Omc(D)$. It follows that $h(x_1),h(x_2),h(x_3)$ are
all distinct and thus the restriction of $\mn{ch}_\Omc(D)$ to
$h(x_1),h(x_2),h(x_3)$ is a triangle. Due to the triangle property and
since the chase with an \ELI-ontology only adds to $D$ trees without
self loops and multi-edges, this implies that $h(x_i) \in V$ for
$1 \leq i \leq 3$. Since the chase adds no binary facts $R(c_1,c_2)$
with $c_1,c_2 \in \mn{adom}(D)$, the restriction of $D$ to
$h(x_1),h(x_2),h(x_3)$ is also a triangle.  It now follows from the
interpretation of the $i$-relations in $D$ with $1 \leq i \leq 3$,
$h(x_1),h(x_2),h(x_3)$ is a triangle in $G$.

\medskip

``only if''. Assume that $G$ contains a triangle that consists of the
vertices $v_1,v_2,v_3$. Let $q^-$ be $q$ with all unary atoms dropped.
We first construct a homomorphism $h$ from $q^-$ to $\mn{ch}_\Omc(D)$
and then argue that it can be extended to a homomorphism from $q$ to
$\mn{ch}_\Omc(D)$. 

We construct $h$ by first mapping only some variables from $q$ and
then extending in several rounds. After each round, the constructed
$h$ will be a homomorphism from $q^-|_h$ to $\mn{ch}_\Omc(D)$ where
$q^-|_h$ is the restriction of $q^-$ to the domain of $h$.

Start with setting $h(x_1)=v_1$, $h(x_2)=v_2$, and
$h(x_3)=\cdots=h(x_n)=v_3$ and $h(c)=c$ for all $c \in \mn{con}(q)$.
It can be verified that $h$ is a homomorphism from $q^-|_h$ to $D$,
thus also to $\mn{ch}_\Omc(D)$. To see this, it helps to observe that
$q$ contains no atoms $R(x_i,x_j)$ with
$x_j \notin \{ x_{i-1},x_i,x_{i+1} \}$ and that all binary atoms in
$q^-|_h$ use a relation from \Sbf. In fact, the former is a
consequence
of $x_1,\dots,x_n$ being a chordless cycle in $G^{\mn{var}}_q$.
For the latter, recall that $h_0$ is
a homomorphism from $q$ to $\mn{ch}_\Omc(D_0)$ that maps all variables
$x_1,\dots,x_n$ to $\mn{adom}(D_0)$, established in the proof of
Claim~1.  But all binary facts in the restriction of
$\mn{ch}_\Omc(D_0)$ to $\mn{adom}(D_0)$ use relations from \Sbf.

To extend $h$, set $h(x)=c_0$ for all $x \in \mn{var}(q)$ with $h(x)$
not yet defined and $h_0(x) \in \mn{adom}(D_0)$ (any other constant
would also work). It can be verified
that $h$ is still a homomorphism from $q^-|_h$ to $D$. In particular,
we can argue as above that all binary atoms in $q^-|_h$ use a relation
from \Sbf.

For the next extension of $h$, let $q'$ be obtained from $q$ by first
dropping all atoms $R(\bar x)$ such that $h_0(x) \in \mn{adom}(D_0)$
for all $x \in \bar x$ and then identifying any variables $x_1,x_2$
such that $h_0(x_1)=h_0(x_2)$.  Consider all maximal connected
components $p$ of $q'$. We distinguish two cases.

First assume that $p$ contains a variable $x$ with $h(x)$ already
defined. It then only contains a single such variable, that is,
$h_0(x) \in \mn{adom}(D_0)$ and all $y \neq x$ in $p$ are mapped to
the tree that the chase has generated below $h_0(x)$. Since the
restriction of $\mn{ch}_\Omc(D_0)$ to the nulls is a disjoint union of
trees without reflexive loops and multi-edges, it follows that $p(x)$
is an ELIQ. Moreover, $D_0 \models (\Omc,\Sbf,p(x))(h_0(x))$. The
completeness property of $D$ implies that
$(D_0,h_0(x)) \preceq (D,h(x))$, no matter what $h(x)$ is. It thus
follows from Lemma~\ref{lem:sim} that
$D \models (\Omc,\Sbf,p(x))(h(x))$, and consequently there is a
homomorphism $g$ from $p(x)$ to $\mn{ch}_\Omc(D)$ with $g(x)=h(x)$. We
can extend $h$ to $p$ by setting $h := h \cup g$.

Now assume that $p$ contains no variable $x$ with $h(x)$ already
defined. Since $p$ is connected, there is a $c \in \mn{adom}(D_0)$
such that $h_0(x)$ is in the tree that the chase has generated below
$c$ for all $x \in \mn{var}(p)$. Let $\widehat p(x)$ be the minimal
prefix of that tree that contains $h_0(x)$ for all
$x \in \mn{var}(p)$, viewed as an ELIQ with $c$ being the answer
variable. There is a homomorphism $g$ from $p(x)$ to $\widehat p(x))$
with $g(x)=x$. We have $D_0 \models (\Omc,\Sbf,\widehat p(x))(c)$.
Take any constant $c' \in \mn{adom}(D)$.  The completeness property of
$D$ implies that $(D_0,c) \preceq (D,c')$. It thus follows from
Lemma~\ref{lem:sim} that $D \models (\Omc,\Sbf,\widehat p(x))(c')$,
and consequently there is a homomorphism $g'$ from $\widehat p(x)$ to
$\mn{ch}_\Omc(D)$ with $g'(x)=c'$. We can extend $h$ to $p$ by
setting $h := h \cup (g \circ g')$.

At this point, $h$ is defined for all terms in $\mn{var}(q)$ and thus
it is a homomorphism from $q^-$ to $\mn{ch}_\Omc(D)$. Further
extend $h$ by setting $h(x)$ to any constant in $\mn{adom}(D)$ if
$x$ does not occur in any binary atoms. It remains to argue that $h$
satisfies all unary atoms $A(x)$ in $a$. If $A \in \Sbf$, then this is
clear by the interepretation of such symbols in $D$. Otherwise,
we observe that 
$A(h_0(x)) \in \mn{ch}_\Omc(D_0)$. We can argue as above, using
the ELIQ $A(x)$ and Lemma~\ref{lem:sim}, that 
$A(h(x)) \in \mn{ch}_\Omc(D))$.
\end{proof}

The next example illustrates another reason for why we cannot easily
replace $\class{ELI}$ with $\class{G}$ in
Theorem~\ref{lemma:single-testing-lower-bound}, unrelated to
self-join freeness. 
\begin{example}
  \label{ex:GacyclicyetlinearFIRST}
  Let $Q(x)=(\Omc,\Sbf,q) \in (\class{G},\class{CQ})$ where
  $$
  \begin{array}{rcl}
  \Omc &=& \{ A(x) \rightarrow \exists y \exists z \, R(x,y)
           \wedge S(y,z) \wedge T(z,x) \} \\
    \Sbf &=& \{ A \} \\
    q(x) &=& R(x,y)
           \wedge S(y,z) \wedge T(z,x).
  \end{array}
  $$
  Then $Q$ is not acyclic and self-join free, yet single-testing
  $Q$ is in linear time as $Q \equiv (\emptyset,\Sbf,A(x))$.
\end{example}
For CQs without ontologies, we are not aware of any examples which
show that Theorem~\ref{lemma:single-testing-lower-bound} fails when
`self-join free' is replaced with `a homomorphism core'.\footnote{A CQ
  is a \emph{homomorphism core} if every homomorphism from $D_q$ to
  $D_q$ is surjective. For every CQ, there is an equivalent CQ that is
  a homomorphism core.} For $(\class{ELI},\class{CQ})$, however, such
examples are not hard to
find. %, varying Example~\ref{ex:Gacyclicyetlinear}.
\begin{example}
  Let $Q=(\Omc,\Sbf,q) \in (\class{ELI},\class{CQ})$ where
  $$
  \begin{array}{rcl}
  \Omc &=& \{ A(x) \rightarrow \exists y \exists z \, R(x,y)
           \wedge B_1(y) \wedge B_2(y) \wedge R(y,z) \} \\
    \Sbf &=& \{ A \} \\
    q(x) &=& R(x,y_1) \wedge R(x,y_2)
             \wedge B_1(y_1) \wedge B_2(y_2) \, \wedge \\
    && R(y_1,z) \wedge R(y_2,z).
  \end{array}
  $$
  Then $Q$ is not acyclic and $q$ is a homomorphism core, yet
  single-testing $Q$ is in linear time as
  $Q \equiv (\emptyset,\Sbf,A(x))$.
\end{example}

{%\color{orange}

\subsection{Proof of Theorem~\ref{thm:anotherpartiallower}}

\anotherpartiallower*
\begin{proof}
  First for weakly acyclic OMQs and minimal partial answers with a
  single wildcard. 
  We start with showing the result for $(\class{G},\class{CQ})$ in
  place of $(\class{ELI},\class{CQ})$. Let $R$ be a binary relation
  symbol and $\Sbf=\{R\}$. We may view an
  undirected graph $G=(V,E)$ as the $\Sbf$-database
  $$D_G=\{ R(v,v'), R(v',v) \mid \{v,v'\} \in E \}.$$  Consider the
  OMQ $Q(x) = (\Omc,\Sbf,q) \in (\class{G},\class{CQ})$ where \Omc
  contains the TGD
  $$
  \begin{array}{r@{\;}c@{\;}l}
    R(x_1,x_2) &\rightarrow& \exists y_1 \exists y_2 \exists y_3 \, R\{y_1,y_2\} \wedge 
                             R\{y_2,y_3\} \wedge R\{y_3,y_1\}
    % \\[0.5mm]
    % R(x_2,x_1) &\rightarrow& \exist y_1 \exists y_2 \exists y_3 \, R\{y_1,y_1\} \wedge 
    % R\{y_1,y_2\} \wedge R\{y_2,x_2\}
  \end{array}
  $$
  with $R\{x,y\}$ an abbreviation for  $R(x,y) \wedge R(y,x)$ and with
  $$
  q(x,y,z)=R\{x,y\} \wedge 
    R\{y,z\} \wedge R\{z,x\}.
  $$
  Note that $q$ is weakly acyclic.  Let $G=(V,E)$ be an undirected
  graph.  Then $(\ast,\ast,\ast)$ is a partial answer to $Q$ on $D_G$,
  but not necessarily a minimal partial answer. In fact, it is a minimal
  partial answer if and only if $G$ contains no triangle. It clearly
  follows that single-testing minimal partial answers for $Q$ is not in
  \dlc unless the triangle hypothesis fails.

  The challenge in improving the construction to $\class{ELI}$ is
  that \ELI TGDs cannot introduce a triangle that consists of nulls.
  The solution is to construct $q$, \Omc, and $D$  in a more careful
  way. Let us start with~$q$, which is now 
  $$
  \begin{array}{r@{\;}c@{\;}l}
  q(x_1,x_2,x_3,x_4) &=& R(x_1,x_2) \wedge 
                         R(x_2,x_3) \wedge R(x_4,x_3) \wedge R(x_1,x_4) \\[1mm]
    &&\wedge \, A(x_1) \wedge B(x_2) \wedge B(x_3) \wedge A(x_4)
  \end{array}
  $$
  where $A$, $B$, $C$ are additional unary relation symbols and
  the direction of the edges is chosen carefully. %  so that the cyclic
  % can be contracted into a path that can be generated by the ontology.
  %In fact,
  We choose \Omc to contain the \ELI TGD
  $$
  \begin{array}{r@{\;}c@{\;}l}
    R(x_1,x_2) &\rightarrow& \exists y_1 \exists y_2 \exists y_3 \, R(x_1,y_2) \wedge 
                             R(y_2,y_3) \\[1mm]
    && \qquad\qquad\ \ \ \wedge \, A(y_1) \wedge A(y_2) \wedge B(y_2) \wedge B(y_3).
  \end{array}
  $$
  Similarly to before, $(\ast,\ast,\ast,\ast)$ is a partial answer to
  $Q$ on any non-empty $\Sbf$-database, such as the databases
  $D_G$. It remains to modify $D_G$ so that  $q(D_G)$ is non-empty if
  and only if $G=(V,E)$ contains a triangle, as $q(D_G)$ is empty if
  and only if $(\ast,\ast,\ast,\ast)$ is a \emph{minimal} partial
  answer. This is achieved by constructing $D_G$ so that
  $\mn{adom}(D_G)=V \times [2]$ and $D_G$ contains the following
  facts:
  \begin{itemize}

  \item $R((v,1),(v',2)), R((v,2),(v',2))$ for all
    $\{v,v'\} \in E$;

  \item $R((v,1),(v,1)), A(v,1), B(v,2)$ for all $v \in V$.

  \end{itemize}
  It thus remains to argue that $q(D_G)$ is non-empty if and only if
  $G=(V,E)$ contains a triangle. For the ``if'' direction, it suffices
  to observe that any triangle $v_1,v_2,v_3$ in $G$ gives rise to
  the answer
  $
      ((v_1,1),(v_2,2),(v_3,2),(v_1,1))
  $
  in $q(G)$. For the ``only if'' direction, let
  $(c_1,c_2,c_3,c_4) \in q(G)$. Due to the atoms
  $A(x_1), R(x_1,x_4),A(x_4)$ in $q$, we must have $c_1=c_4=(v_1,1)$
  for some $v_1 \in V$. Due to the atoms $A(x_1),R(x_1,x_2),B(x_2)$
  and $A(x_4),R(x_4,x_3),B(x_3)$, respectively, we must have
  $c_2=(v_2,2)$ and $c_3=(v_3,2)$ for some $v_2, v_3 \in V$ such that
  $v_1 \neq v_2$, $v_1 \neq v_3$, and $\{v_1,v_2\},\{v_2,v_3\} \in G$.
  Finally, the atoms $B(x_2),R(x_2,x_3),B(x_3)$ ensure that
  $v_2 \neq v_3$ and $\{v_1,v_2\} \in G$. Thus, $v_1,v_2,v_3$
  form a triangle in $G$.

  \medskip We now turn towards acyclic OMQs and minimal partial answers
  with multi-wildcards. The general idea is similar to what was done
  above. In particular, with every undirected graph $G=(V,E)$ we
  associate an \Sbf-database
  $D_G=\{ R(v,v'), R(v',v) \mid \{v,v'\} \in E \}.$ Consider the OMQ
  $Q(x) = (\Omc,\Sbf,q) \in (\class{G},\class{CQ})$ where \Omc
  contains the TGD
  $$
  \begin{array}{r@{\;}c@{\;}l}
    R(x_1,x_2) &\rightarrow& \exists y_1 \exists y_2 \exists y_3 \, R\{y_1,y_2\} \wedge 
                             R\{y_2,y_3\} \wedge R\{y_3,y_1\}
    % \\[0.5mm]
    % R(x_2,x_1) &\rightarrow& \exist y_1 \exists y_2 \exists y_3 \, R\{y_1,y_1\} \wedge 
    % R\{y_1,y_2\} \wedge R\{y_2,x_2\}
  \end{array}
  $$
  with $R\{x,y\}$ an abbreviation for  $R(x,y) \wedge R(y,x)$ and with
  $$
  q(x_1,x'_1,x_2,x'_2,x_3,x'_3)=R(x_1,x'_1)\wedge 
    R(x_2,x'_2) \wedge R(x_3,x'_3).
  $$
  Note that $q$ is acyclic and in fact of a very restricted shape.
  Let $G=(V,E)$ be an undirected graph.  Then
  $(\ast_1,\ast_2,\ast_2,\ast_3,\ast_3,\ast_1)$ is a partial answer
  with multi-wildcards to $Q$ on $D_G$. Moreover, it is a minimal
  partial answer if and only if $D_G$ contains no triangle. It clearly
  follows that single-testing minimal partial answers with
  multi-wildcards for $Q$ is not in \dlc unless the triangle
  hypothesis fails.
\end{proof}
}

\section{Proofs for Section~\ref{sect:enumallcomplete}}

\propallTestingCompleteUpper*
\begin{proof}
  Let $q(\bar x)$ be a CQ that is free-connex acyclic, over some
  schema \Sbf. Since $q$ is free-connex acyclic, its extension
  $q^+(\bar x)$ with atom $R_0(\bar x)$ is acyclic, where $R_0$ is a
  fresh relation symbol of arity $|\bar x|$. Thus $q^+$ has a join
  tree $T=(V,E)$. When removing from $T$ the node $R_0(\bar x)$, we
  obtain a forest that consists of trees $(V_1,E_1),\dots,(V_k,E_k)$.
  Let $q_1(\bar x_1),\dots,q_k(\bar x_k)$ be the corresponding CQs,
  that is, $q_i$ contains exactly the atoms in $V_i$. It is clear that
  every $q_i$ is acyclic since $(V_i,E_i)$ is a join tree for
  $q_i$. It is also free-connex acyclic. In fact, let $q_i^+(\bar x_i)$
  be $q_i$ extended with atom $R_i(\bar x_i)$. We obtain a join
  tree for $q_i^+$ by starting with $(V_i,E_i)$, adding node
  $R_i(\bar x_i)$, and an edge between $R_i(\bar x_i)$ and the node
  in $V_i$ that is adjacent in $T$ to  $R_0(\bar x)$ (there must be a
  unique such node).

  It is known that all-testing is in \cdlin for all CQs that are
  acyclic and free-connex acyclic \cite{berkholz-enum-tutorial}.
  There are thus \cdlin all-testing algorithms $A_1,\dots,A_k$ for
  $q_1,\dots,q_k$.  We devise a \cdlin all-testing algorithm for~$q$
  by combining these. Let the \Sbf-database $D$ be given as input. In
  the preprocessing phase, we run the preprocessing phases of
  algorithms $A_1,\dots,A_k$ on $D$. In the testing phase, we are
  given a tuple $\bar c \in \mn{adom}(D)^{|\bar x|}$. Let
  $\bar x=x_1 \cdots x_n$ and $\bar c = c_1 \cdots c_n$. We first
  check whether $x_i=x_j$ implies $c_i = c_j$ and reject if this is
  not the case. We then use algorithms $A_1,\dots,A_k$ to test in
  constant time whether for $1 \leq i \leq k$, it holds that
  $\bar c_i \in q_i(D)$ where $\bar c_i$ is the `projection' of tuple
  $\bar c$ to the answer variables in $q_i$, that is, if
  $x_{i_1},\dots,x_{i_\ell}$ are the answer variables in $q_i$, then
  $c_i=c_{i_1},\dots,c_{i_\ell}$.  We answer `yes' if all checks
  succeed and `no' otherwise. Since $q_1,\dots,q_k$ is a partitioning
  of (the atoms of)~$q$ and distinct $q_i$ do not share any quantified
  variables, the answer is clearly correct.
\end{proof}

\thmupperG*
\begin{proof}
  Point~(1) of Theorem~\ref{thm:upperG} is easy to prove using the
  query-directed chase established in
  Section~\ref{app:sect:singletest} and the existing result stating
  that enumeration is in \cdlin for CQs that are acyclic and
  free\=/connex acyclic when no ontologies are present
  \cite{bagan-enum-cdlin}. In fact, assume that an OMQ
  $Q(\bar x)=(\Omc,\Sbf,q) \in (\class{G},\class{CQ})$ that is acyclic
  and free\=/connex acyclic is given, as well as an
  \Sbf-database~$D$. As a part of the preprocessing phase, we compute
  in linear time the query-directed chase $\mn{ch}^q(\Omc)$.  As in
  the proof of Theorem~\ref{thm:singlelin}, we introduce a fresh unary
  relation symbol $P_{\mn{db}}$, extend $\mn{ch}^q_\Omc(D)$ to a
  database $D'$ by adding the fact $P_{\mn{db}}(c)$ for every
  $c \in \mn{adom}(D)$, and obtain the CQ $q'(\bar x)$ from $q$ by
  adding the atom $P_{\mn{db}}(x)$ for every answer variable $x$. Note
  that since $q$ is acyclic and free\=/connex acyclic, so is~$q'$.  By
  Lemma~\ref{lem:pseudochase} $Q(D)=q'(D')$ and thus we can use an
  existing procedure as a black box for enumerating $q'(D')$ in
  \cdlin \cite{bagan-enum-cdlin}.

The argument for Point~(2) is identical, based on
Proposition~\ref{prop:allTestingCompleteUpper}.
\end{proof} 

\lemmaenumerationlowerbound*
\begin{proof}
  Let $Q \in (\class{ELI},\class{CQ})$ satisfy the conditions from
  Theorem~\ref{lemma:enumeration-lower-bound}. Assume that enumerating
  complete answers to $Q$ is in \dlc. Let the Boolean OMQ $Q'$
  be obtained from $Q$ by quantifying all answer variables. Then $Q'$
  satisfies the same conditions, that is, it is not acyclic, non-empty
  and self\=/join free. It is even weakly acyclic since acyclicity and
  weak acyclicity coincide for Boolean CQs.  Moreover, single-testing
  for $Q'$ is in linear time because given an \Sbf-database $D$, we
  can check whether $() \in Q'(D)$ by enumerating $Q$ on $D$, but
  accepting after the first ouput and rejecting if there was no
  output. It thus follows from
  Theorem~\ref{lemma:single-testing-lower-bound} that the triangle
  conjecture fails. The same argument works for minimal partial answers
  and for minimal partial answers with multiple wildcards.
\end{proof}

\subsection{Lower Bound for Proposition~\ref{prop:allTestingCompleteUpper}}
\label{app:lowerCQ}

We prove the following counterpart of
Proposition~\ref{prop:allTestingCompleteUpper}. Note that it is not
subsumed by Theorem~\ref{thm:alltestinglower} because, there, the
arity of relation symbols is at most~2.  As we are working without
ontologies here, our proof follows closely the lines of corresponding
lower bounds for enumeration given in
\cite{bagan-enum-cdlin,BraultBaron,berkholz-enum-tutorial}.
%In fact,
%the constructions of the databases are exactly identical, we only use
%them in a slightly different way as we are interested in all-testing
%instead of enumeration. Also note that our queries are only
%free-connex acyclic while the queries in
%\cite{bagan-enum-cdlin,BraultBaron,berkholz-enum-tutorial} are
%both acyclic and free-connex acyclic.
For the same reason,
we rely on the additional assumption
stating that $(k+1,k)$\=/hyperclique problem cannot be solved in time $O(n^k)$.

The $(k+1,k)$\=/\emph{hyperclique problem} is as follows.
Given a $k$\=/uniform hypergraph with $n$ vertices, i.e. every hyperedge consists of exactly $k$ vertices,
decide whether it contains a hyperclique of size $k{+}1$, i.e. a set of $k+1$ vertices where every subset of size $k$ is a~hyperedge.
The hyperclique conjecture
states that for all $k \geq 3$ solving  the
$(k+1,k)$\=/hyperclique problem requires $n^{k+1 - o(1)}$ time \cite{lincoln-soda-grain-complexity}.

\begin{lemma}
    \label{lem:all-testing-cq-lower}
    Let $q(\bar{x})$ be a self\=/join free CQ that is not free-connex acyclic.
    % such $\hat{q}(\bar{x}) \leftarrow q(\bar{x}) \land R(\bar{x})$, where $R$ is a fresh relation symbol,
    % be a self\=/join CQ that is not acyclic.
    %
    Then all-testing for $q$ is not in \dlc unless 
    % If there is an algorithm that takes a database $D$ and
    % after a preprocessing in time $O(||D||)$ allows for membership
    % testing in constant time then
    one of the following holds:
    \begin{enumerate}
      \item the triangle conjecture fails;
        %\item %a triangle in graph $G=([n],E)$ can be detected in time $O(||G||)$;
        %\item % the product of two $n\times n$ Boolean matrices can be
          % computed in time $O(n^2)$;
          \item Boolean $n \times n$ matrices can be multiplied
  in time $O(n^2)$;
        \item $(k+1,k)$\=/hyperclique problem can be solved in time $O(n^k)$.

    \end{enumerate}
\end{lemma}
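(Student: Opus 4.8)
The plan is to establish the contrapositive: assuming all\=/testing for $q$ is in \dlc, I derive one of the three stated consequences. The starting point is the elementary observation that a \dlc all\=/testing algorithm, having spent $O(||D||)$ time in preprocessing and $O(1)$ per test, can answer any single membership query $\bar c \in q(D)$ in total time $O(||D||)$, and can answer all $|\mn{adom}(D)|^{|\bar x|}$ queries in total time $O(|\mn{adom}(D)|^{|\bar x|} + ||D||)$. The whole argument then consists in encoding a hard problem into such queries. Throughout, self\=/join freeness is the key lever: distinct relation symbols can be populated independently, so the various atoms of $q$ do not interfere with one another.

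Since $q$ is not free\=/connex acyclic, the CQ $q^{+}$ obtained by adding the guard atom $R(\bar x)$ on the answer variables is not acyclic. I would invoke the structural characterization of (non\=/)acyclicity underlying the enumeration lower bounds of \cite{BraultBaron,bagan-enum-cdlin,berkholz-enum-tutorial} to split into two cases according to the type of minimal obstruction in $q^{+}$. In \emph{Case~I}, $q$ itself contains a cyclic obstruction (a chordless cycle, or more generally a non\=/conformal clique) that involves at least one quantified variable; this is exactly the situation in which the obstruction is not covered by the guard $R(\bar x)$ and hence survives in $q^{+}$. In \emph{Case~II}, no such obstruction exists, so the non\=/acyclicity of $q^{+}$ must be witnessed by a \emph{free\=/path}: two answer variables linked by a chordless path running through quantified variables, which the guard closes into a non\=/conformal cycle.

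In Case~I I would adapt the Brault\=/Baron reduction (as presented in \cite{berkholz-enum-tutorial}) from the $(k+1,k)$\=/hyperclique problem, where $k$ is the arity of the obstruction; for $k=2$ this is the triangle problem. From a $k$\=/uniform hypergraph I build, in linear time, a database $D$ in which the obstruction encodes hyperclique search, while a completeness gadget makes the off\=/obstruction atoms trivially satisfiable. Crucially, because the obstruction lives in the quantified part, I may fix every answer variable to a single dummy constant and obtain a fixed candidate $\bar c$ with $\bar c \in q(D)$ iff the hypergraph has a $(k+1)$\=/hyperclique. A single test thus decides triangle in linear time (resp.\ $(k+1,k)$\=/hyperclique in $O(n^{k})$, since $||D|| = O(n^{k})$), contradicting consequence~(1) (resp.~(3)) unless it holds. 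In Case~II I would adapt the Bagan\=/Durand\=/Grandjean reduction from dense Boolean matrix multiplication: given $n\times n$ Boolean matrices $M_{1},M_{2}$, I encode $M_{1}$ on the first edge of the free\=/path and $M_{2}$ on its remainder, collapsing the interior through the quantified path variables, fix the two endpoints to a row index $a$ and column index $b$, fix all remaining answer variables to a dummy, and use a gadget for the off\=/path atoms; then the corresponding candidate is in $q(D)$ iff $(M_{1}M_{2})(a,b)=1$. All\=/testing the $n^{2}$ candidates reconstructs the entire product in time $O(n^{2})$, contradicting consequence~(2) unless it holds.

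The main obstacle I anticipate is the purely combinatorial bookkeeping separating Case~I from Case~II: extracting the correct obstruction from ``$q^{+}$ not acyclic'', verifying that it meets the quantified/answer\=/variable partition as required (so that fixing the answer variables to dummies neither trivializes nor blocks the encoding), and designing the gadget databases so that the atoms outside the obstruction are always satisfiable without introducing spurious homomorphisms. Everything else is a careful but standard transcription of the known enumeration reductions into the single\=/test setting, using the reduction from single\=/test hardness to all\=/testing hardness noted in the first paragraph.
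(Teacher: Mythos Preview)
Your two-case scaffold and the reductions you name are the same as the paper's, but there is a genuine gap in your Case~I argument. You define Case~I as ``$q$ contains an obstruction that involves at least one quantified variable'' but then argue as if the obstruction were \emph{entirely} quantified, fixing every answer variable to a single dummy and appealing to a \emph{single} membership test. Take $q(x_1,x_2)=R(x_1,y)\wedge S(y,x_2)\wedge T(x_1,x_2)$: it is self\=/join free and not free\=/connex acyclic, the only obstruction in $q$ is the triangle $\{x_1,y,x_2\}$ (so Case~I applies), and there is no bad path since $x_1,x_2$ co-occur in~$T$ (so Case~II does not). If you set $x_1=x_2=d$, the encoding collapses to finding some $y$ with $R(d,y)\wedge S(y,d)$, which no longer detects triangles; a single test cannot carry the reduction. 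The same difficulty arises whenever the obstruction contains answer variables, in particular for chordless cycles in which two answer variables happen to be adjacent via some atom of~$q$.

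The paper closes exactly this gap by abandoning the single-test idea: since the $(k{+}1)$-element obstruction has at least one quantified vertex, at most $k$ of its vertices are answer variables; letting those range over the $n$-element domain while fixing the remaining answer variables to a dummy yields $O(n^{k})$ candidate tuples, and the database has size $O(n^{k})$, so all-testing in \dlc still solves $(k{+}1,k)$-hyperclique in time $O(n^{k})$. The paper also organises the split more finely---not chordal versus not conformal for $\hat q$, and within ``not chordal'' by the number of answer variables on the cycle (zero or one versus two, and in the latter case whether the two endpoints already share an atom of $q$)---which routes each sub-case cleanly to one of Points~(1)--(3). Your plan is salvageable along these lines, but as written the ``single test'' step in Case~I does not go through.
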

To prove Lemma~\ref{lem:all-testing-cq-lower},
we make use of the following well-known characterization
of acyclicity.
\begin{theorem}[\cite{beeri-acyclic}]
\label{thm:Beeri}
    A CQ $q$ is acyclic iff it satisfies the following properties:
    \begin{enumerate}
    \item $q$ is conformal, i. e. for every clique of the Gaifman
      graph $G_q$ of $q$ there exists an atom that contains all
      variables in the clique;
    \item $q$ is chordal, i. e. every cycle of length at least $4$ in
      $G_q$ has a chord.  That is, $G_q$ contains an edge
      that is not part of the cycle but connects two vertices of the
      cycle.
    \end{enumerate}
\end{theorem}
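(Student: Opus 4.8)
The plan is to establish both directions of the equivalence by passing to the hypergraph $H_q$ whose vertices are $\mn{var}(q)$ and whose hyperedges are the variable sets of the atoms of $q$; by definition $q$ is acyclic exactly when $H_q$ admits a join tree $T$, and $G_q$ is the $2$-section of $H_q$. For each variable $x$ write $T_x = \{ \alpha \mid x \text{ occurs in } \alpha \}$ for the set of atoms mentioning $x$; the defining condition of a join tree says that each $T_x$ induces a connected subtree of $T$. Note that for distinct $u,v$ one has $T_u \cap T_v \neq \emptyset$ iff $u,v$ co-occur in some atom, i.e.\ iff $\{u,v\}$ is an edge of $G_q$. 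The two classical tools I would invoke are the Helly property for subtrees of a tree (finitely many pairwise-intersecting subtrees of a tree share a common node) and the characterization of chordal graphs as exactly those admitting a perfect elimination ordering.

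For the forward direction, assume $q$ is acyclic with join tree $T$. To prove conformality, take any clique $C$ of $G_q$: for all $u,v \in C$ the edge $\{u,v\}$ gives $T_u \cap T_v \neq \emptyset$, so the subtrees $\{T_v\}_{v \in C}$ are pairwise intersecting, and the Helly property yields a common atom $\alpha \in \bigcap_{v \in C} T_v$, i.e.\ an atom containing every variable of $C$. To prove chordality, I would use that $G_q$ is precisely the intersection graph of the subtree family $\{T_x\}_{x \in \mn{var}(q)}$, by the observation above. Intersection graphs of subtrees of a tree are chordal; this can be shown by a short closed-walk argument, since a chordless cycle $T_{v_1},\dots,T_{v_k}$ with $k \geq 4$ (consecutive members intersecting, non-consecutive ones disjoint) would, using connectedness of each $T_{v_i}$, trace a genuine cycle inside the tree $T$, a contradiction. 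Hence $G_q$ is chordal.

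For the backward direction, assume $G_q$ is chordal and $q$ is conformal, and build a join tree by induction on $n = |\mn{var}(q)|$. Chordality gives a perfect elimination ordering $x_1,\dots,x_n$, so that the later neighbourhood $N^+(x_1)$ of $x_1$ is a clique; then $\{x_1\} \cup N^+(x_1)$ is a clique of $G_q$, and conformality supplies an atom $\alpha_1$ containing all of it. Removing $x_1$ from $G_q$ leaves a chordal graph that is the Gaifman graph of the CQ $q'$ obtained by deleting $x_1$ from every atom (and discarding atoms thereby subsumed); conformality is inherited, so by induction $q'$ has a join tree $T'$. Every atom of $q$ mentioning $x_1$ becomes, after deleting $x_1$, a subset of $N^+(x_1)$ and hence of the clique realized by $\alpha_1$; I would therefore reattach all such atoms as children of the node carrying $\alpha_1$ in $T'$, which preserves the connectedness condition because each of their remaining variables already occurs in $\alpha_1$.

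The step I expect to be the main obstacle is the bookkeeping in this backward induction: handling subsumed and duplicated hyperedges correctly, maintaining the exact correspondence between $q'$ and $G_q$ restricted to $\{x_2,\dots,x_n\}$ after deleting $x_1$, and verifying the connectedness condition \emph{simultaneously} for every variable once the $x_1$-atoms are reattached. An alternative that packages this more cleanly is to route through clique trees: a chordal $G_q$ admits a tree on its maximal cliques satisfying the running-intersection property, conformality lets each maximal clique be witnessed by a containing atom, and the remaining atoms (each a subset of some maximal clique) are hung as leaves; this concentrates the difficulty in the running-intersection property of clique trees, which may be cited. Either route reduces the entire argument to the single delicate point of matching the abstract chordal--conformal structure against the concrete atom multiset of $q$.
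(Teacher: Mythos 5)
This theorem is imported by the paper from \cite{beeri-acyclic} without proof, so there is no in-paper argument to compare against; what you have reconstructed is essentially the classical Beeri--Fagin--Maier--Yannakakis proof, and it is sound. Your forward direction is exactly the standard one: the sets $T_x$ of atoms containing $x$ are subtrees of the join tree, pairwise intersection of the $T_v$ for a clique plus the Helly property for subtrees yields conformality, and chordality follows because $G_q$ is the intersection graph of the family $\{T_x\}$ and subtree intersection graphs of a tree are chordal (your closed-walk sketch is the usual argument, though to extract the contradiction you should fix nodes $t_i \in T_{v_i} \cap T_{v_{i+1}}$ and argue that the concatenated paths force two non-consecutive subtrees to meet). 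Your backward induction via a perfect elimination ordering is also correct, and you rightly identify where the real content sits: after deleting the simplicial vertex $x_1$, the restricted atom $\alpha_1 \setminus \{x_1\}$ may not itself be an atom of $q$, may coincide with other restricted atoms, or may subsume them, so the lift of the join tree of $q'$ back to $q$ requires replacing or merging nodes rather than just hanging leaves; this is routine but must be done explicitly, and the clique-tree route you propose (running-intersection property of clique trees of chordal graphs, with conformality supplying a witnessing atom per maximal clique) does package it more cleanly and is closer to how the result is usually presented. One caveat worth recording: your proof, like the original theorem, concerns the variable hypergraph $H_q$, whereas the paper defines $G_q$ as the Gaifman graph of $D_q$, which includes constants as vertices; since constants are exempt from the connectedness condition of a join tree, a chordless cycle passing through a constant does not contradict acyclicity (e.g.\ $R(x,c), S(c,y), T(y,z), U(z,x)$ is acyclic), so the theorem must be read, as you implicitly do, on the variable-only graph or for constant-free CQs.
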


We now prove Lemma~\ref{lem:all-testing-cq-lower}.
Let $q$ be as in the lemma and let us assume that all-testing for $q$ is in \dlc.
We show that one of Points~(1)-(3) applies.

Recall that $\hat{q}(\bar{x})$ is the CQ $q(\bar{x})$ with the additional atom $\hat{R}(\bar{x})$
where $\hat{R}$ is a fresh relation symbol.
Since $q$ is not free\=/connex acyclic, $\hat{q}$ is not acyclic.
Hence, $\hat{q}$ is not conformal or not chordal.

\paragraph{Not chordal} If $\hat{q}$ is not chordal, then there is a cordless cycle $z_1, z_2, \dots, z_m$ of length $m \geq 4$ in the Gaifman graph of $\hat{q}$.
%such that if for some $i < j$ there is an edge $(z_i,z_j)$ in the
%Gaifman graph then $j = i+1$. {\color{blue}still don't understand this.}
Moreover, since there is an atom $\hat{R}(\bar{x})$ in $\hat{q}$, there are no more than two answer variables in this cycle.
%Let $Z = \{z_1, \dots, z_m\}$.

\paragraph{At most one answer variable in the cycle.}
If there is no more than one answer variable in the cycle, then
also CQ $q$ is not chordal, as the new edges
in the Gaifman graph of $\hat{q}$ are only those between answer variables.
Hence, we can use the construction for the ``not chordal'' case form Section 6.2. in \cite{berkholz-enum-tutorial}.
Given an undirected graph $G$, it constructs in time $O(n)$ a database $D$ such that $q(D) \neq \emptyset$
if and only if $G$ has a triangle.

Moreover, a careful analysis of the construction reveals that, in
fact, we only need to test $O(n)$ different tuples
$\bar{a} \in \dom(D)^{|\bar{x}|}$ to decide whether
$q(D) \neq \emptyset$, and we can compute the set of those tuples in
time $O(n)$.  Thus, assuming that all-testing is in \dlc, we can
decide whether $q(D) \neq \emptyset$ in time $O(n)$.  Indeed, the database
$D$ can be constructed in time $O(n)$.  Thus, the preprocessing phase
can be done in $O(n)$ time and testing can be carried out in
time $O(n)$.  This gives overall running time $O(n)$, and
disproves the triangle conjecture.

\paragraph{Two answer variables in the cycle.}
If there are two answer variables in the cycle, then without loss of
generality we can assume that they are $x_1 = z_1$ and $x_2 = z_m$.
Indeed, since there is an atom $\hat{R}(\bar{x})$ in $\hat{q}$ and the cycle is chordless,
the answer variables have to be two consecutive vertices in the cycle.
Thus, to obtain $x_1 = z_1$ and $x_2 = z_m$ we can simply rename some variables in $q$.
%{\color{blue}Formulate more carefully? It is not totally clear how/why this is
%  w.l.o.g. when $x_1$ and $x_2$ could also be non-adjacent.}
Now we have two possibilities. Either there is an atom %$R'(\bar{w})$
 in
$q$ that contains both $x_1$ and $x_2$
or there is no such atom in~$q$.

If there is no such atom, then we can use the construction for the
``acyclic but not free\=/connex acyclic query'' case form Section
6.1. in \cite{berkholz-enum-tutorial}. In the terminology of
  \cite{berkholz-enum-tutorial}, applying the construction requires
  that there is a `bad path' in $q$. This is the case when $x_1,x_2$ do
  not co-occur in an atom in $q$.  Given two $n{\times}n$ Boolean
matrices $M_1, M_2$, the construction creates in time $O(n^2)$ a
database $D$ such that the set $q(D)$ projected to the first two
coordinates is the set $M_1M_2$.

Notice that the construction in \cite{berkholz-enum-tutorial} is used
for queries that are not only free-connex acyclic, but also acyclic.
However, acyclicity is only used to guarantee the existence of a
bad path and not in the construction of the database $D$.
% nor the shape of tuples in
% $q(D)$. 
It is easy to verify that the following claim is valid also
in our setting.

\begin{claim}
  For every pair $(i,j) \in [n]^2$ we have $(i,j) \in M_1 M_2$ if and
  only if there is a corresponding answer $\bar{a}_{i,j}$ in
  $q(D)$ where the tuple $\bar{a}_{i,j}$ can be computed from $(i,j)$ 
  in constant time.
\end{claim}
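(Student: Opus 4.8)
The plan is to reuse, essentially verbatim, the database construction from Section~6.1 of \cite{berkholz-enum-tutorial}, and to verify that its correctness argument survives in our more general setting where $q$ need not be acyclic. Recall that we are in the subcase where $\hat q$ is not chordal, the chordless cycle $z_1,\dots,z_m$ (with $m \geq 4$) contains exactly the two answer variables $x_1=z_1$ and $x_2=z_m$, and no atom of $q$ mentions both $x_1$ and $x_2$; thus $z_1,\dots,z_m$ is a \emph{bad path}, i.e.\ a chordless path whose endpoints are answer variables that do not co-occur in any atom. The construction uses only this bad path: it fixes a middle vertex of the path, fills the atoms along the first half with tuples encoding $M_1$ and those along the second half with tuples encoding $M_2$, lets the image of the middle variable range over $[n]$ (the summation index), and fills every atom \emph{not} on the path with all tuples over a fixed set of dummy elements, so that off-path atoms impose no real constraint.

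First I would make $\bar a_{i,j}$ explicit: it assigns $i$ to $x_1$, $j$ to $x_2$, and a fixed dummy value to every other answer variable. Since $q$ is fixed, $|\bar x|$ is constant and the dummies do not depend on the input, so $\bar a_{i,j}$ is read off from $(i,j)$ in constant time. For the ``only if'' direction, given $(i,j) \in M_1 M_2$ there is a $c \in [n]$ with $M_1(i,c)=1$ and $M_2(c,j)=1$, and I would build a homomorphism $h : q \to D$ with $h(x_1)=i$, $h(x_2)=j$ by routing the bad path according to $c$ (first half realizing $M_1(i,c)$, the middle variable sent to the element coding $c$, second half realizing $M_2(c,j)$) and sending all off-path variables to dummies; the off-path atoms are satisfied because $D$ contains all dummy tuples for their relation symbols, so $\bar a_{i,j} \in q(D)$. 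Conversely, if $\bar a_{i,j} \in q(D)$ is witnessed by some $h$ with $h(x_1)=i$, $h(x_2)=j$, then chordlessness of the cycle together with self-join freeness force $h$ to respect the intended layout of the path atoms, pinning the middle variable to an element coding some $c \in [n]$ with $M_1(i,c)=1$ (first half) and $M_2(c,j)=1$ (second half); hence $(i,j) \in M_1 M_2$.

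The one place where our setting departs from \cite{berkholz-enum-tutorial}, and hence the main point to verify, is that the possibly non-acyclic remainder of $q$ does no harm. As the excerpt already notes, acyclicity is used there only to guarantee the existence of a bad path, which we have by assumption; it plays no role in building $D$. What must be checked is that the extra atoms of $q$ (which may form cycles elsewhere in $G_q$) neither constrain the path variables nor create a join linking $x_1$ and $x_2$. This is exactly what self-join freeness provides: each such atom uses a distinct relation symbol populated with all dummy tuples, so these atoms are always simultaneously satisfiable by the dummy assignment and never introduce a spurious dependency along the path. With this verified, the correspondence $(i,j) \mapsto \bar a_{i,j}$ is as required, completing the claim.
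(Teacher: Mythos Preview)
Your proposal is correct and aligns with the paper's approach: the paper itself does not give a detailed proof here but simply invokes the construction from Section~6.1 of \cite{berkholz-enum-tutorial} and remarks that acyclicity is used there only to guarantee a bad path, not in the database construction or its correctness. Your sketch spells this out in more detail than the paper does, correctly identifying self-join freeness as the reason off-path atoms (even if they form cycles) impose no additional constraints and do not interfere with the path relations.
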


Hence, to compute $M_1 M_2$ we only need to test
$\bar{a}_{i,j} \in q(D)$ for all $(i,j) \in [n]^2$.  If all-testing is
in \dlc, we can thus compute $M_1M_2$ in time $O(n^2)$.
Indeed, the database $D$ can be constructed in time $O(n^2)$.  Thus,
the preprocessing phase can be done in $O(n^2)$ time and testing can
be carried out in time $O(n^2)$.  This gives overall running time
$O(n^2)$.

In the case that $x_1$ and $x_2$ are in an atom $R'(\bar{w})$ in $q$,
{%\color{orange}
we slightly modify the construction by adding to the database $D$ also
the $R'$\=/facts that are total on $x_1,x_2$ and use the unique
constants for the remaining variables.  Since the cycle is chordless,
no other variables from the cycle appear in $R'(\bar{w})$.}  Thus,
every such set of facts is of size $n^2$ and the database can be
constructed in time $O(n^2)$.

\paragraph{Not conformal}
Now assume that $\hat{q}$ is not conformal.
Then there are $k$ and a clique $z_1, \dots, z_{k+1}$
such that the clique is not covered by and atom and every proper subset of $\{z_1, \dots, z_{k+1}\}$ is.
Indeed, it is enough to take a minimal clique that is not covered by an atom.
% Since $\hat{R}(\bar{x})$ is an atom in $\hat{q}$, we have exactly $0 \leq m' \leq k$ answer variables in the clique.
Let $Z = \{z_1, \dots, z_{k+1}\}$.
%Without loss of generality, let those be $x_1, \dots, x_{m'}$.

Again, we have two cases.  Either every proper subset of $Z$ can be
covered by an atom that is not $\hat{R}(\bar{x})$ or there is a proper
subset of $Z$ such that the only atom that covers it
is~$\hat{R}(\bar{x})$.

In the former case, $q$ is not conformal and we can apply the
construction for the ``not conformal'' case form Section 6.2. in
\cite{berkholz-enum-tutorial}.  Given a $k$\=/uniform
hypergraph $G$, it creates in time $O(n^k)$ a database~$D$ such that
$q(D) \neq \emptyset$ if and only if $G$ contains a hyperclique of
size $k{+}1$.
A careful analysis of the construction reveals that, in fact,
we only need to test $O(n^k)$ different tuples $\bar{a} \in \dom(D)^{|\bar{x}|}$
to decide whether $q(D) \neq \emptyset$ and that we can compute the set $W$ of those tuples in time $O(n^k)$.

If there is an algorithm for all-testing in \dlc, we can thus
solve the hyperclique problem $O(n^k)$.  Indeed, database $D$ can be
constructed in time $O(n^k)$, the preprocessing phase can be done in
$O(n^k)$ time, and testing all tuples from $W$ can be carried out in
time $O(n^k)$.  This gives the overall running time $O(n^k)$.

Finally, for the case where there is a proper subset of $Z$ such that
the only atom that covers it is~$\hat{R}(\bar{x})$ {%\color{orange}
we follow the same
construction but using $\hat{q}$ instead of $q$. Then we remove all
$\hat{R}$\=/facts from the database $D$ and adjust the set of tuples
$W$ so that it is consistent with the removed facts. } It is easy to
see that this adaptation can be done in time $O(n^k)$ and does not
change the results of the test.  This ends the proof.

\subsection{Proof of Theorem~\ref{thm:lower-bound-enum-eli}}

There are several characterizations of when an acyclic CQ is 
free\=/connex \cite{berkholz-enum-tutorial}. 
% SHORT
% Without giving
% details, we mention that this is the case if and only if
% it has a generalized hypertree decomposition of width one in which the
% variables from $\bar x$ form a connected subtree. 
A characterization
that we use in what follows is via bad paths.  A
\emph{bad path} in a CQ $q$ is a sequence of variables
$y_1, \dots, y_n$, $n \geq 3$, such that $y_1$ and $y_n$ are distinct
answer variables, $y_2, \dots, y_{n-1}$ are quantified variables,
and $\{y_i,y_{i+1}\}$ is an edge in the Gaifman graph of $q$ while
$\{y_1,y_n\}$ is not. It was shown in
\cite{bagan-enum-cdlin} that an acyclic CQ is free\=/connex if and
only if it has no bad path, see also \cite{berkholz-enum-tutorial}.

\newcommand{\ansSet}{\textit{Ans}}

\thmlowerboundenumeli*
%
% \begin{theorem}
%   \label{thm:lower-bound-enum-eli}
%   Let $Q = (\Omc, \Sbf, q) \in (\class{ELI},\class{CQ})$ be acyclic,
%   but not free-connex, non\=/empty, self-join free, and connected.
%   Then enumeration of complete answers to $Q$ is not in CD$\circ$Lin
%   unless spBMM is possible in time $O(|M_1| + |M_2| + |M_1M_2|)$, and
%   the same is true for least partial answers.
%\end{theorem}
%
%We start presenting the lower bounds with a lemma that connects the complexities of
%enumerating answers and of enumerating least partial answers.
%We say that a path in an OMQ $Q= (\Omc, \Sbf, q)$, i.e.~sequence of neighbouring variables $y_1, \dots, y_m$ in $q$, uses only relations
%from schema if for every $1 \leq i\leq m$ and every unary atom $U(y_i) \in q$ we have that $U \in \Sbf$
%and for every $1 \leq i,j\leq m$ and every binary atom $r(y_i,y_j) \in q$ we have that $r \in \Sbf$.
%
% \begin{theorem}
%     \label{thm:lower-bound-enum-eli}
%     Let $Q = (\Omc, \Sbf, q) \in (\class{ELI},\class{CQ})$ be acyclic,
%     but not free-connex, non\=/empty, self-join free, and connected.
%     Then enumerating (least partial) answers of
%     $Q$ %, nor enumerating least partial answers of $Q$,
%     is not in CD$\circ$Lin unless % one can multiply
%     there is an algorithm that given two Boolean matrices $M_1,M2$
%     returns $M_1M_2$ in time linear in $|M_1| + |M_2| + |M_1M_2|$.
% \end{theorem}
%
{ %\color{blue}
Before we prove the theorem, let us recall some basic relations
between the sets of complete answers, minimal partial answers, and
minimal partial answers with multiple wildcards.

\begin{claim}
    \label{lemma:comparring-sizes-of-answersets}
    For every OMQ $Q \in (\class{G}, \class{CQ})$ there is a constant $K$ such that 
    for every database $D$ holds
    \[
    |Q(D)| \leq |Q(D)^\ast| \leq |Q(D)^{\Wmc}| \leq K |Q(D)^\ast|.
    \]
\end{claim}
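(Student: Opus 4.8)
The plan is to pass to the chase and argue combinatorially about wildcard tuples. By Lemma~\ref{lem:twosemantics} it suffices to establish the chain of inequalities for $q(I)^\ast_\Nbf$ and $q(I)^\Wmc_\Nbf$, where $I=\mn{ch}_\Omc(D)$ and $q$ is the CQ of $Q$; write $n=|\bar x|$. Since the non-null entries of any answer $\bar a\in q(I)$ are database constants, the sets $q(I)^\ast_\Nbf\subseteq(\mn{adom}(D)\cup\{\ast\})^n$ and $q(I)^\Wmc_\Nbf$ are finite, even though $I$ may be infinite. The central tool is the projection $\pi$ collapsing every wildcard $\ast_j\in\Wmc$ to the single wildcard $\ast$; it satisfies $\pi(\bar a^\Wmc_\Nbf)=\bar a^\ast_\Nbf$ and, as a routine inspection of the two definitions of $\preceq$ shows, is \emph{monotone}, i.e.\ $\bar b^\Wmc\preceq\bar c^\Wmc$ implies $\pi(\bar b^\Wmc)\preceq\pi(\bar c^\Wmc)$. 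I will also use that both versions of $\preceq$ are transitive, which is a direct check of the definitions.

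The first inequality $|Q(D)|\le|Q(D)^\ast|$ is immediate from $Q(D)\subseteq Q(D)^\ast$, noted in Section~\ref{sect:prelims}. For $|Q(D)^\ast|\le|Q(D)^\Wmc|$ I would exhibit an injection $q(I)^\ast_\Nbf\hookrightarrow q(I)^\Wmc_\Nbf$ that is a section of $\pi$. Given $\bar a^\ast\in q(I)^\ast_\Nbf$, lift it to the multi-wildcard partial answer $\bar a^\Wmc_\Nbf$ and choose a $\prec$-minimal multi-wildcard partial answer $\bar b^\Wmc\preceq\bar a^\Wmc_\Nbf$ (one exists by finiteness, and a transitivity argument shows it is minimal in the whole set, so $\bar b^\Wmc\in q(I)^\Wmc_\Nbf$). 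Monotonicity gives $\pi(\bar b^\Wmc)\preceq\pi(\bar a^\Wmc_\Nbf)=\bar a^\ast$, and since $\pi(\bar b^\Wmc)$ is a single-wildcard partial answer and $\bar a^\ast$ is minimal, $\pi(\bar b^\Wmc)=\bar a^\ast$. The assignment $\bar a^\ast\mapsto\bar b^\Wmc$ is then injective because $\pi$ recovers $\bar a^\ast$, which gives the inequality.

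For the remaining inequality $|Q(D)^\Wmc|\le K\,|Q(D)^\ast|$ I would define $\phi\colon q(I)^\Wmc_\Nbf\to q(I)^\ast_\Nbf$ sending $\bar b^\Wmc$ to a $\prec$-minimal single-wildcard partial answer below $\pi(\bar b^\Wmc)$; this is well defined by the same finiteness-and-transitivity reasoning. To bound the fibres, suppose $\phi(\bar b^\Wmc)=\bar a^\ast$, so $\bar a^\ast\preceq\pi(\bar b^\Wmc)$. This forces $\bar b^\Wmc$ to carry a wildcard at every position where $\bar a^\ast$ has $\ast$, and either the matching constant or a wildcard at every position where $\bar a^\ast$ has a constant. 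Hence $\bar b^\Wmc$ is determined by the subset of the (at most $n$) constant positions of $\bar a^\ast$ that turn into wildcards, together with the partition of the wildcard positions into classes of equal wildcards; the number of such $\bar b^\Wmc$ is at most $2^{n}$ times the number of partitions of an $n$-element set, a constant $K$ depending only on $n=|\bar x|$. Summing $|\phi^{-1}(\bar a^\ast)|\le K$ over $\bar a^\ast\in q(I)^\ast_\Nbf$ yields the bound.

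The hard part is the last inequality, and the reason is that $\pi$ does \emph{not} send minimal multi-wildcard answers to minimal single-wildcard answers: Example~\ref{ex:multiwildproblem} gives a minimal $\bar b^\Wmc$ whose projection is \emph{strictly} above the unique minimal single-wildcard answer. This is exactly why I route the counting through $\phi$ (``a minimal one below the projection'') rather than through $\pi$, and why the fibre bound must permit $\bar b^\Wmc$ to introduce fresh wildcards even at positions where $\bar a^\ast$ carries a constant. The finiteness of each fibre then hinges only on $n=|\bar x|$ being a constant in data complexity, which is what makes $K$ independent of $D$.
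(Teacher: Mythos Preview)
Your proof is correct and takes essentially the same approach as the paper. The paper packages the argument via the balls $B^\Wmc(\bar a^\ast)$ and cones $\mn{cone}^\Wmc(\bar a^\ast)$ of Section~\ref{sect:enummulti} and invokes Lemma~\ref{lemma:cone-1}: your section map is exactly the content of Points~(2)--(3) (since $\pi(\bar b^\Wmc)=\bar a^\ast$ is the same as $\bar b^\Wmc\in B^\Wmc(\bar a^\ast)$), and your fibre $\phi^{-1}(\bar a^\ast)\subseteq\{\bar b^\Wmc:\bar a^\ast\preceq\pi(\bar b^\Wmc)\}$ is precisely $\mn{cone}^\Wmc(\bar a^\ast)$, whose constant size gives Point~(1) and hence the last inequality.
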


Every complete answer is also a minimal partial answer (with multiple wildcards),
thus  $Q(D) \subseteq Q(D)^{\ast}$
and $Q(D) \subseteq Q(D)^{\Wmc}$.
Hence, the first inequality.
%Thus, $Q(D) \subseteq \ansSet$.
The remaining inequalities are a consequence of Lemma~\ref{lemma:cone-1}.
By Point (1) of Lemma~\ref{lemma:cone-1}, we have that $|Q(D)^{\Wmc}| \leq K |Q(D)^{\ast}|$,
where $K$ is some constant that depends only on $Q$. % the number of answer variables.
Indeed, every minimal partial answer with multiple wildcards is in a cone of some minimal partial answer.
Since every cone has no more than $k^{2^k} $ elements, where $k$ is the number of answer variables, the last inequality holds.
Finally, by Point (2) and Point (3) of Lemma~\ref{lemma:cone-1} there is an injective function from $Q(D)^\ast$ to $Q(D)^\Wmc$
and, thus, the middle inequality holds. 
}

%\begin{proof}[Proof of Lemma~\ref{lemma:lower-bound-enum-eli}]
 Let
$Q = (\Omc, \Sbf, q) \in (\class{ELI},\class{CQ})$ be an OMQ
    %such that $q(x_1, \dots, x_k)$ is a connected, acyclic, but not free\=/connex conjunctive query.
as in the theorem. 
To prove Theorem~\ref{thm:lower-bound-enum-eli}  we show that if given a database $D$ we can enumerate 
% (complete or least partial) answers to $Q$%
%\ansSet
any of the sets $Q(D)$, $Q(D)^\ast$, or $Q(D)^{\Wmc}$ in \dlc, then spBMM is possible in
time $\Omc(|M_1|+|M_2|+|M_1M_2|)$.

Assume that we are given Boolean matrices $M_1$ and $M_2$ of size
$n \times n$. Recall that in sparse Boolean 
matrix multiplication (spBMM), $M_1$ and $M_2$ are given as lists 
of pairs $(a,b)$ with $M_i(a,b)=1$ and also $M_1M_2$ is output as such 
a list. 
For a pair $(a,b)$ by $\Pi_{\textit{col}}((a,b))$ we denote the column 
and by $\Pi_{\textit{row}}((a,b))$ the row.

We use $M_1$ and $M_2$ to construct an \Sbf-database $D$ such
that enumerating $Q(D)$ or $Q(D)^\ast$ in \dlc allows us to
construct $M_1M_2$ within the desired time bound.

As a preliminary, we argue that we can w.l.o.g.\ assume $M_1$ and
$M_2$ to satisfy a certain condition that shall prove to be useful in
what follows.  Let $P_M$ denote the set of \emph{productive indices} in
$n \times n$ matrix~$M$,
i.e.~$P_M = \{c \in [n] \mid \exists a\,  M(a,c)=1 \text{ or
} M(c,a)=1\}$. Now the condition is:
\begin{itemize}

\item[($*$)] for all $c \in P_M$, there are $a_1,a_2 \in [n]$ with
  $M(c,a_1)=1$ and $M(a_2,c)=1$.

\end{itemize}
In fact, we can construct from $M_1,M_2$ in time $\Omc(|M_1| + |M_2|)$
two $(n +2)\times (n+2)$ matrices $\hat{M}_1,\hat{M}_2$ that satisfy
($*$) and such that $M_k(a,b)=1$ if and only if $\hat{M}_k(a+2,b+2)=1$
for $1 < a,b \leq n$ and $k= {1,2,3}$ where
$\hat{M}_3 = \hat{M}_1 \cdot \hat{M}_2$.
To construct $\hat M_k$, set
$$
\begin{array}{r@{\;}c@{\;}ll}
\hat{M}_k(a+2,b+2) &=& {M}_k(a,b) & \text{ for } 0 < a,b \leq n\\
\hat{M}_k(1,1) &=&\hat{M}_k(2,2) =1 &\\
\hat{M}_1(i+2,2) &=& 0 & \text{ for } 1 \leq i \leq n  \\
\hat{M}_2(1,i+2)&=& 0 & \text{ for } 1 \leq i \leq n                 
\end{array}
$$
Note that  quite a few entries in the first and second column and
row are yet undefined. Independently of how we define them,
    $\begin{array}{r c l}
    \hat{M}_3(a+2,b+2) & = &  \hat{M}_1(a+2,1)\hat{M}_2(1,b+2) \\
    &+& \hat{M}_1(a+2,2)\hat{M}_2(2,b+2) \\
    &+& \sum_{c=0}^{n}\hat{M}_1(a+2,c+2)\hat{M}_2(c+2,b+2)\\
    & = & \sum_{c=0}^{n}{M}_1(a,c){M}_2(c,b)\\
    & = & M_1 M_2 (a,b) = M_3(a,b).\\
    \end{array}$
    
    \noindent
    Hence, we can use the remaining undefined positions to satisfy
    ($*$). It clearly suffices to add at most
    $2*|P_{M_1} \cup P_{M_2}|$ ones to each matrix and thus
    $|\hat M_1|+|\hat M_2| \in O(|M_1|+|M_2|)$. We can thus
    use $\hat M_1$ and $\hat M_2$ in place of $M_1$ and $M_2$.
    
    %Now, since $q$ is acyclic, but not free\=/connex

    Recall that $q$ is acyclic, but not free-connex acyclic and that, as
    discussed in the preliminaries, this implies that $q$ contains a
    bad path, i.e.~a sequence of variables a sequence of variables
    $x_1, y_1, \dots, y_\ell, x_2$, $\ell \geq 1$, with $x_1,x_2$
    answer variables and $y_1,\dots,y_\ell$ quantified variables such
    that any two consecutive variables form an edge in the Gaifman
    graph of $q$, but not two non-consecutive variables do. Since $q$
    is acyclic and connected, the undirected graph $G^{\mn{var}}_q$ is
    a disjoint union of trees (as the connectedness can be `via' a
    constant). We can impose a direction on these trees. For the tree
    that contains $x_1$, we do this by choosing $x_1$ as the root
    and call a variable in the tree a successor of another variable
    if it is further away from $x_1$, and likewise for descendants and
    ancestors. For the other trees we do the same, choosing a root
    as follows. If $q$ contains an atom that contains a variable $x$
    from the tree and a constant, then choose such an $x$ as the
    root. Otherwise choose the root arbitrarily.

    To present the reduction in a more transparent way, we make
    the following simplifying assumptions:
    \begin{itemize}

    \item[(Dir)] if $y$ is a successor of $x$ in $G^{\mn{var}}_q$,
      then any binary atom in $q$ that involves $x$ and $y$ is
      directed towards $y$, that is, it takes the form $R(x,y)$;
      moreover, any binary atom in $q$ that involves a variable $x$
      and a constant $c$ is directed towards $c$ if $x$ is reachable
      from $x_1$ in $G^{\mn{var}}_q$ and away from $c$ otherwise;
      
    \item[(Mult)] $q$ contain no non-reflexive multi-edges, that is,
      no two distinct non-reflexive binary atom that mention the same
      terms.
      
    \end{itemize}
    It is not difficult to get rid of these assumptions. In fact, this
    can be done as follows. Given $q$, first re-orient the edges and
    drop all edges but one from non-reflexive multi-edges, obtaining a
    CQ $q'$ that satisfies (Dir) and (Mult). Then construct the
    \Sbf-database $D$ as described below and obtain from it another
    \Sbf-database $\hat D$ by re-orienting edges back to their
    original direction and adding back multi-edges. Finally, show that
    every homomorphism from $q$ to $\mn{ch}_\Omc(\hat D)$ is
    also a homomorphism from $q'$ to $\mn{ch}_\Omc(D)$ and
    vice versa. We omit details.
        
    We now turn towards the announced construction of the
    \Sbf-database $D$.  Let
    $R_0(x_1,y_1), R_1(y_1,y_2), \dots, R_{\ell-1}(y_{\ell-1},
    y_{\ell}),R_\ell(y_\ell,x_2)$ be the (unique) atoms in $q$ that give rise
    to the bad path from $x_1$ to~$x_2$.
    \begin{lemma}
      \label{lem:headpathinS}
      $R_0,\dots,R_{\ell} \in \Sbf$.
    \end{lemma}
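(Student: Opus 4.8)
The plan is to mirror the strategy of the proof of Lemma~\ref{lem:alliinS}: use non-emptiness to fix an \Sbf-database $D_0$ with $D_0 \models Q$ together with a homomorphism $h_0$ from $q$ to $\mn{ch}_\Omc(D_0)$ witnessing some answer, so that the two answer variables $x_1,x_2$ satisfy $h_0(x_1),h_0(x_2) \in \mn{adom}(D_0)$. Writing the bad path as $w_0 = x_1, w_1 = y_1, \dots, w_\ell = y_\ell, w_{\ell+1} = x_2$ with images $c_k = h_0(w_k)$, the statement reduces to showing that every $c_k$ lies in $\mn{adom}(D_0)$: once this is known, each path atom $R_i(w_i,w_{i+1})$ maps to a binary fact whose two arguments are database constants, and it remains only to argue that such facts can use relation symbols from \Sbf.

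I would first record two structural properties of $\mn{ch}_\Omc(D_0)$ for $\Omc \in \class{ELI}$. \emph{(a)} A chase step grafts a tree of fresh nulls onto a single existing constant and never links two already-present constants; hence $\mn{ch}_\Omc(D_0)$ contains no binary fact, beyond those in $D_0$, both of whose arguments lie in $\mn{adom}(D_0)$, so all such facts use symbols from \Sbf. \emph{(b)} The nulls form a disjoint union of trees (\Omc has connected, acyclic, loop- and multi-edge-free heads); each such null-tree is attached to exactly one database constant, and, since it is assembled by grafting trees at single nodes, the null-tree together with its attachment constant is still acyclic. Most importantly, for any pair of elements at least one of which is a null there is \emph{at most one} binary fact relating them, as such a fact can only be created in the unique chase step that introduces the null child.

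The core of the argument is to exclude that the path ever leaves the database part. Suppose some $c_k$ is a null and consider a maximal run $c_p,\dots,c_q$ of consecutive nulls; by \emph{(b)} these lie in a single null-tree attached to a unique database constant $\hat c$, and since $c_p$ and $c_q$ are adjacent to the database constants $c_{p-1}$ and $c_{q+1}$ we obtain $c_{p-1} = c_{q+1} = \hat c$. Then $\hat c, c_p, \dots, c_q, \hat c$ is a non-trivial closed walk in the acyclic graph $\{\hat c\}\cup(\text{null-tree})$, so it must backtrack: there is an index $m$ with $c_{m-1} = c_{m+1}$ and $c_m$ a null. But now the consecutive atoms $R_{m-1}(w_{m-1},w_m)$ and $R_m(w_m,w_{m+1})$ yield two distinct binary facts on the same pair $\{c_{m-1},c_m\}$ (distinct because $q$ is self-join free, so $R_{m-1} \neq R_m$), contradicting the at-most-one-edge property \emph{(b)} for a pair containing the null $c_m$. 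Hence no $c_k$ is a null, and by \emph{(a)} every $R_i \in \Sbf$.

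The main obstacle is exactly this last step. Unlike in Lemma~\ref{lem:alliinS}, where the variables sit on a chordless cycle that simply cannot be realized inside the acyclic null part, a \emph{path} can in principle be embedded into a tree; ruling out excursions into the null part therefore requires combining the tree structure of the $\ELI$-chase with self-join freeness of $q$ to manufacture a forbidden multi-edge at the point where such an excursion would have to turn back.
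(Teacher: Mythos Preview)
Your proof is correct and follows essentially the same approach as the paper: use non-emptiness to obtain a homomorphism $h_0$ into $\mn{ch}_\Omc(D_0)$ with $h_0(x_1),h_0(x_2)\in\mn{adom}(D_0)$, exploit the $\ELI$-chase shape (trees of nulls without multi-edges attached to database constants), and argue that the bad-path image cannot make an excursion into the null part. The paper compresses the final contradiction to ``two distinct crossing atoms, impossible since $q$ is self-join free,'' whereas you spell out the underlying mechanism via the backtracking point of the closed walk in the null tree and the resulting forbidden multi-edge; this is a faithful and more explicit rendering of the same argument.
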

    \begin{proof}
      Assume to the contrary of what is to be shown that
      $R_i \not\in \Sbf$ for some $R_i$. Since $Q$ is non-empty, there
      is an \Sbf-database $D_0$ such that $Q(D_0) \neq \emptyset$ and
      thus there is a homomorphism $h_0$ from $q$ to
      $\mn{ch}_{\Omc}(D_0)$.  Since \Omc is an~$\ELI$ ontology, the
      restriction of $\mn{ch}_{\Omc}(D_0)$ to $\mn{adom}(D_0)$ does
      not contain any binary facts that use a relation symbol
      $R \notin \Sbf$. Since $R_i \notin \Sbf$, $h_0$ must thus map at
      least one of the variables on the bad path to a null.  Since
      \Omc is an \ELI ontology, $\mn{ch}_{\Omc}(D_0)$ takes the shape
      of $D_0$ with trees without multi-edges and self loops attached
      to every constant.  Since $h_0$ maps some variable on the bad
      path to a null and both $x_1$ and $x_2$ to $\mn{adom}(D_0)$,
      there must be two distinct atoms in the bad path that are
      mapped to a fact that crosses from the database part of
      $\mn{adom}(D_0)$, into the null part. This is impossible since
      $q$ is self-join free.
    \end{proof}

    Let $C=\mn{con}(q)$ if $\mn{con}(q)$ is non-empty and
    $C=\{ \bot \}$ otherwise, $\bot$ a fresh constant.
    Moreover, let $C_\Sbf=\{ \bot_R \mid R \in \Sbf \text{ binary} \}$
    where each $\bot_R$ is a fresh constant. The active domain of the
    database $D$ that we aim to construct
    is %$\mn{adom}(D) = \{1, \dots, n\} \cup \{c_1, \dots, c_m, \bot\} \cup \{\bot_r\}_{r \Sbf'}$ %of size $n+l+1$
        $$\mn{adom}(D) = P_{{M}_1} \cup P_{{M}_2} \cup C \cup C_\Sbf
        $$ %of size $n+l+1$
        and $D$ contains the following facts:
        \begin{itemize}

        \item $A(c)$ for every unary $A \in \Sbf$ and every $c \in \mn{adom}(D)$;

        \item For relation symbol $R_0$:
          \begin{itemize}
          \item 
          $R_0(a,b)$ for all $a,b \in [n]$ such that
          ${M}_1(a,b)=1$;
          
        \item $R_0(c,c)$ for all $c \in C \cup C_\Sbf$;
          \end{itemize}

        \item For relation symbol $R_\ell$:
          \begin{itemize}
          \item 
          $R_\ell(a,b)$ for all $a,b \in [n]$ such that
          ${M}_2(a,b)=1$;
          
        \item $R_\ell(c,c)$ for all
          $c \in C \cup C_\Sbf$;
          \end{itemize}

        \item For each relation symbol $R_i$, $0 < i < \ell$: \\
          $R_i(a,a)$ for all $a \in [n]$;

        \item For each relation symbol
          $R \in \Sbf \setminus \{ R_0,\dots,R_\ell \}$:
          \begin{itemize}

          \item $R(\bot_R,a)$ for all $a \in [n]$;

          \item $R(a,c)$, and $R(c,c')$ for all $a \in [n]$, and
            $c,c' \in C$.

          \item $R(a,a)$ for all $a \in [n]$ if $q$ contains reflexive atom
            $R(z,z)$. %\footnote{\color{blue}or remove during preprocessing?}

          \end{itemize}

        \end{itemize}
        It should be clear that $D$ can be constructed in time linear
        in $|{M}_1|+|{M}_2|$. Moreover, $D$ satisfies the
        \emph{completeness
          property} that every $c \in \mn{adom}(D)$ has an incoming and an
          outgoing edge for every $R \in \Sbf$ since
%
        % \item[(II)] the only ingoing edges to constants in the set
        %   $[n]$ originate from constants in the set $[n] \cup C_\Sbf$.
%
        % \end{itemize}
        %
        $M_1$ and $M_2$ satisfy
        Condition~($*$) above.  We next show how answers to $Q$ on $D$
        are related to one entries in the matrix product $M_1M_2$.
    \begin{lemma}\label{lemma:lower-bound-lemma-eli-answers-to-entires-correspondence}
      Let $(a,b) \in [n]^2$. Then ${M}_1{M}_2(a,b)=1$ if and only if
      there is a complete answer $(a_1,a_2, \dots, a_k) \in Q(D)$ such that
      $a_1 = a$ and $a_2 = b$. 
      %The same is true for least partial
      %answers
      %$(a_1,a_2, \dots, a_k) \in \ansSet$.%\in Q(D)^\ast$.
    \end{lemma}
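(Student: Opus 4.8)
The plan is to read off the matrix entry $M_1M_2(a,b)$ from the behaviour of a homomorphism on the bad path $R_0(x_1,y_1),R_1(y_1,y_2),\dots,R_\ell(y_\ell,x_2)$, exploiting that $R_0$ carries $M_1$, that $R_\ell$ carries $M_2$, and that the intermediate relations $R_i$ ($0<i<\ell$) occur in $D$ only as reflexive facts on $[n]$. The witness linking the two entries will be a common image $c\in[n]$ of all the quantified variables $y_1,\dots,y_\ell$ on the bad path. Recall that by Lemma~\ref{lem:headpathinS} all of $R_0,\dots,R_\ell$ lie in $\Sbf$.

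For the ``only if'' direction, suppose $M_1M_2(a,b)=1$ and fix $c\in[n]$ with $M_1(a,c)=1$ and $M_2(c,b)=1$. First I would define $h$ on the bad path by $h(x_1)=a$, $h(y_1)=\cdots=h(y_\ell)=c$, $h(x_2)=b$; by construction of $D$ the facts $R_0(a,c)$, $R_i(c,c)$ for $0<i<\ell$, and $R_\ell(c,b)$ all lie in $D$, so these atoms are satisfied. It then remains to extend $h$ to all of $q$, and here I would proceed exactly as in the proof of Theorem~\ref{lemma:single-testing-lower-bound}: map every remaining variable that $h_0$ sends into $\mn{adom}(D_0)$ to a suitable default constant of $D$, and for each maximal connected part of $q$ that $h_0$ sends into the null part, use that this part is an ELIQ satisfied by $D_0$ below some constant, invoke the completeness property to obtain a simulation $(D_0,h_0(x))\preceq(D,h(x))$, and transfer satisfaction into $\mn{ch}_\Omc(D)$ by Lemma~\ref{lem:sim}. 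This yields a homomorphism $h\colon q\to\mn{ch}_\Omc(D)$ with $h(x_1)=a$ and $h(x_2)=b$, i.e.\ an answer of the desired shape.

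For the ``if'' direction, let $h\colon q\to\mn{ch}_\Omc(D)$ be a homomorphism with $h(x_1)=a\in[n]$ and $h(x_2)=b\in[n]$. The crucial claim is that $h$ maps the whole bad path into $\mn{adom}(D)$. Granting this, the chase of an $\class{ELI}$ ontology adds no binary fact whose arguments are both database constants, so every $R_i(h(y_i),h(y_{i+1}))$ is already a fact of $D$. Since $a\in[n]$ and the only $R_0$-facts starting in $[n]$ are the $M_1$-facts, this gives $M_1(a,h(y_1))=1$ and $h(y_1)\in[n]$; since each middle relation $R_i$ occurs in $D$ only as a reflexive fact on $[n]$, it forces $h(y_1)=\cdots=h(y_\ell)=:c\in[n]$; and since $b\in[n]$ and the only $R_\ell$-facts ending in $[n]$ are the $M_2$-facts, it gives $M_2(c,b)=1$. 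Hence $M_1M_2(a,b)=1$.

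The main obstacle is the claim that the bad path stays in $\mn{adom}(D)$. I would argue as follows. Suppose some $h(y_i)$ is a null. Since $h(x_1),h(x_2)\in\mn{adom}(D)$ and, in an $\class{ELI}$ chase, the nulls form a disjoint union of trees each attached to $\mn{adom}(D)$ at a single root, a maximal null-subpath $h(y_i),\dots,h(y_j)$ lies inside one such tree and its two endpoints are first-level nulls (each adjacent to a database constant). Because distinct first-level nulls are not connected within the null part, the subpath connects a first-level null to itself, so $h(y_i)=h(y_j)=:d$, and the entering and leaving atoms force $h(y_{i-1})=h(y_{j+1})=r$, where $r$ is the unique database neighbour of $d$. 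Thus both $R_{i-1}$ and $R_j$ relate $r$ and $d$, and these are \emph{distinct} relation symbols by self-join freeness. But every fact between $r$ and the null $d$ stems from the single $\class{ELI}$ TGD application that created $d$, whose head contains no multi-edges; so $r$ and $d$ cannot be joined by two distinct relations, a contradiction. This no-multi-edge argument, specific to $\class{ELI}$, is precisely where the restriction to $\class{ELI}$ (rather than $\class{G}$) together with self-join freeness is essential.
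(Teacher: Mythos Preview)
Your proposal is correct and follows essentially the same approach as the paper. For the ``if'' direction, the paper argues slightly more directly: once the bad-path image enters the null part at the attachment constant $r$ and must return to $r$, this is a closed walk in a tree without multi-edges, so some edge is traversed twice, contradicting that $R_0,\dots,R_\ell$ are pairwise distinct. Your version factors this through the observation that the two boundary nulls coincide and then appeals to the no-multi-edge property on the single edge $\{r,d\}$; this is an equivalent (slightly more localized) way to reach the same contradiction. For the ``only if'' direction, the paper spells out the verification that the partial map on the bad path extends to a homomorphism into $\mn{ch}_\Omc(D)$, using the concrete default constant $c_\bot\in C$ together with the $(\mn{Dir})$ and $(\mn{Mult})$ simplifications and the construction of $D$; you correctly identify this as the same technique as in Theorem~\ref{lemma:single-testing-lower-bound} and defer the details, which is fine at the level of a sketch but note that the database here is built differently, so the atom-by-atom check does need to be redone rather than literally reused.
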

    \begin{proof}
%      Since $Q(D) \subseteq \ansSet$, it suffices to prove the ``if''
%      direction for least partial answers and the ``only'' if
%      direction for complete answers.
%
%      \smallskip
      
      ``if''.  Let $(a_1,a_2, \dots, a_k) \in Q(D)$ so that
      $a_1 = a$ and $a_2 = b$.  Then there is a homomorphism $h$ from
      $q$ to $\mn{ch}_\Omc(D)$ such that $h(x_1)=a$ and $h(x_2)=b$.
      Consider the image of the path
      $R_0(x_1,y_1), \dots,R_\ell(y_\ell,x_2)$ under $h$ in
      $\mn{ch}_\Omc(D)$. We first argue that this image is actually
      contained in $D$.

      Recall the \Omc is an \ELI ontology and $\mn{ch}_{\Omc}(D)$
      takes the shape of $D$ with trees without multi-edges and self
      loops attached to each constant. It follows that none of
      $h(y_1),\dots,h(y_\ell)$ is a null as then the image of the path
      would start at $a$ (in the database part), then cross to the
      null part and eventually cross back to end at $b$. But this is
      not possible because of the described form of
      $\mn{ch}_{\Omc}(D)$ and since the relation symbols
      $R_0,\dots,R_\ell$ are all distinct (as $q$ is self join free).
      Since the chase with an \ELI ontology does not add any
      facts $R(c_1,c_2)$ with $c_1,c_2 \in \mn{adom}(D)$, the
      image of the path must indeed by in $D$.

      Considering the construction of $D$, it is now easy to see that
      $h(y_1)= \cdots = h(y_\ell) =u$ for some $u \in \{1,\dots,n\}$.
      In particular, note that the only $R_0$- and $R_\ell$-edges that
      originate at a constant from $[n]$ also end at a constant from
      $n$, and that the relation symbols $R_i$, $1 \leq i < \ell$,
      only occur in reflexive facts. Thus, $D$ contains atoms
      $R_0(a,u),R_{\ell}(u,b)$, which by the definition of $D$ implies
      that ${M}_1(a,u)=1$ and ${M}_2(u,b)=1$. Consequently,
      $M_1M_2(a,b)=1$.

      \smallskip ``only if''. Let ${M}_1{M}_2(a,b)=1$.  Then we
      have ${M}_1(a,u)=1$ and ${M}_2(u,b)=1$ for some $u$ with
      $1 \leq u \leq n$.  Moreover, since $Q$ is not empty, there is a
      database $D_0$ and a homomorphism $h_0$ from $q$ to
      $\mn{ch}_{\Omc}(D_0)$ such that
      $h_0(x_1), h_0(x_2) \in \dom(D_0)$. 

      We construct a homomorphism $h$ from $q$ to $\mn{ch}_\Omc(D)$.
      Start with setting $h(c) = c$ for every constant $c$ in $q$,
      $h(x_1) = a$, $h(x_2) = b,$ and $h(y_v) = u$ for
      $1 \leq v \leq \ell$.  We also choose a $c_\bot \in C$ and set
      $h(x) = c_\bot$ for all $x$ such that $h(x)$ was not previously
      defined and $h_0(x) \in \dom(D_0)$.

      It can be verified that $h$ is a homomorphism from $q|_h$ to
      $\mn{ch}_\Omc(D)$ where $q|_h$ is the restriction of $q$ to the
      domain of $h$. We first argue that $h$ respects all binary atoms
      in $q|_h$ and the consider the unary atoms.

      First note that $h$ respects all binary atoms that involve any
      pair of variables $(x_1,y_1)$, $(y_i,y_{i+1})$ for
      $1 \leq i < \ell$, and $(y_\ell,x_2)$ by construction of $D$ and
      due to (\mn{Mult}). Reflexive binary atoms on a variable from
      the bad path are also defined by definition of $D$. There are
      no atoms that involve other combinations of variables on the
      bad path such as $(x,y_2)$ since $G^{\mn{var}}_q$ is a disjoint
      union of trees. By definition of $D$, it is also clear that all
      binary atoms are respected that involve only constants and
      non-path variables for which $h$ is defined. Finally, atoms that
      involve a path variable and a constant or a non-path variable
      are respected by (\mn{Dir}) and construction of $D$.

      Now for the unary atoms $A(x)$. If $A \in \Sbf$, then $A(x)$ is
      satisfied since $A(c) \in D$ for every $c \in \mn{adom}(D)$.
      Thus let $A \notin \Sbf$. The completeness property of $D$
      implies that $(D_0,h_0(x)) \preceq (D,h(x))$, no matter what
      $h(x)$ is. It thus follows from Lemma~\ref{lem:sim} that
      $D \models (\Omc,\Sbf,A(x))(h(x))$.
      
    We next extend $h$ to all remaining variables in $q$. Assume that
    $h(t)$ is defined, $q$ contains an atom that uses term $t$ and
    variable $y$, and $h(y)$ is not yet defined.  First assume that
    $t$ is a constant $c$. Since $h(y)$ is undefined, $h_0(y)$ is a
    null. Due to (\mn{Dir}) and (\mn{Mult}),
    $q$ contains a single atom $R(y,c)$ that mentions $y$ and $c$.
    Due to the self-join freeness of $q$, $h_0$ maps all variables in
    the subtree of $G^{\mn{var}}_q$ rooted at $y$ to a null, the exact
    argument for this is similar to the one used in the ``if''
    direction.  It follows that the CQ $p(z)$, which is the
    restriction of $q$ to the variables in the subtree of
    $G^{\mn{var}}_q$ rooted at $y$, extended with atom $R(y,z)$,
    contains no multi-edges and reflexive loops, thus is an ELIQ. The
    completeness property of $D$ implies that
    $(D_0,h_0(z)) \preceq (D,h(z))$, no matter what $h(z)$ is. It thus
    follows from Lemma~\ref{lem:sim} that
    $D \models (\Omc,\Sbf,p(x))(h(x))$, and consequently there is a
    homomorphism $g$ from $p(x)$ to $\mn{ch}_\Omc(D)$ with
    $g(x)=h(x)$. We can extend $h$ to the variables in $p$ by setting
    $h := h \cup g$.
    
    Now assume that $t$ is a variable $z$.  Then $h_0(z)$ is not a
    null and $h_0(y)$ is a null.  Moreover, $y$ is a successor of $z$
    in the direction that we have imposed on $G^{\mn{var}}_q$ because
    $h_0(x_1)$ is non-null (as $h_0$ witnesses non-emptiness w.r.t.\
    \emph{complete} answers).  Due to the self-join freeness of $q$,
    $h_0$ maps all variables in the subtree of $G^{\mn{var}}_q$ rooted
    at $y$ to a null. It follows that the restriction $p(z)$ of $q$ to
    $z$ and the variables in the subtree of $G^{\mn{var}}_q$ rooted at
    $y$ contains no multi-edges and reflexive loops, thus is an ELIQ.
    We can proceed as in the case where $t$ is a constant.
\end{proof}

     To finish the proof of Theorem~\ref{thm:lower-bound-enum-eli} we use the following lemma and
     Claim~\ref{claim:lower-bound-lemma-eli-it-bound-answers}.
%     a previous observation that $|\ansSet| \leq k^k |Q(D)^{\ast}|$.
     
%     If $(a_1, a_2, \dots, a_k)$ belongs to $Q(D)$ then for $3 \leq i \leq k$ we have that
%     \begin{itemize}
%         \item if $a_1 \in \{1, \dots n\}$ then $a_i \in \{a_1,a_2, \bot, \ast, c_1, \dots, c_m\}$,
%         \item if $a_1 = \ast$ then there is $a \in \{1, \dots, n\}$ such that $a_i \in \{a,a_2, \bot, \ast, c_1, \dots, c_m\}$.
%     \end{itemize}
     %
%     \\[2mm]
%     {\sc Claim.} If $(a_1, a_2, \dots, a_k) \in Q(D)^{\ast}$ then there is
%     an  $a \in \{1, \dots, n\}$ such that $a_i \in \{a, a_1, a_2, \bot, \ast, c_1, \dots, c_m\}$   for $3 \leq i \leq k$.
%     \\[2mm]
     %
     %
%    \begin{enumerate}
%        \setcounter{enumi}{2}
%        \item \label{item:lower-bound-lemma-eli-it-bound-answers}
%        The set $Q(D)^{\ast}$ has no more than $\Omc(|{M}_1| + |{M}_2| + |{M}_3|)$ elements.
%    \end{enumerate}
    %
    \begin{lemma}
        \label{claim:lower-bound-lemma-eli-it-bound-answers}
        The number of minimal partial answers $Q(D)^{\ast}$ is bounded by $\Omc(|M_1|+|M_2|+|M_1M_2|)$.
    \end{lemma}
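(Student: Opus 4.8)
The plan is to bound $|Q(D)^{\ast}|$ directly; recall that $Q(D)\subseteq Q(D)^{\ast}$, so this also subsumes the complete answers. By Lemmas~\ref{lem:twosemantics} and~\ref{prop:restrictedchaseworks}, every minimal partial answer has the form $h(\bar x)^{\ast}_\Nbf$ for a homomorphism $h$ from $q$ into $\mn{ch}_\Omc(D)$, and it is exactly the $\preceq$\=/minimal such tuples that we must count. First I would record the shape of $\mn{ch}_\Omc(D)$: it is $D$ with an \ELI tree grafted onto every constant, and by the completeness property of $D$ all these trees are isomorphic up to the identity of their root, while $q$ is finite, so only a bounded prefix of each tree is ever used. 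Thus a minimal partial answer is determined by its \emph{database part} (the answer variables sent into $\mn{adom}(D)$) together with an \emph{excursion pattern} (which answer variables are sent into a grafted tree and hence carry a wildcard), and there are only constantly many excursion patterns.

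I would then classify minimal partial answers by the pair $(a^{\ast}_1,a^{\ast}_2)$ of images of the bad\=/path endpoints $x_1,x_2$ and treat three cases. If both lie in $P_{M_1}\cup P_{M_2}$, I would adapt the ``if'' direction of Lemma~\ref{lemma:lower-bound-lemma-eli-answers-to-entires-correspondence}: since both endpoints sit in the database part, the relations $R_0,\dots,R_\ell$ along the bad path are pairwise distinct (self\=/join freeness), and the grafted trees carry no multi\=/edges or self\=/loops, the whole bad path must stay inside $D$, forcing $M_1M_2(a^{\ast}_1,a^{\ast}_2)=1$; hence there are at most $|M_1M_2|$ admissible pairs here. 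If exactly one endpoint lies in $P_{M_1}\cup P_{M_2}$, there are at most $|P_{M_1}\cup P_{M_2}|=O(|M_1|+|M_2|)$ choices for it and only $O(1)$ choices (a wildcard, or a constant from the bounded set $C\cup C_\Sbf$) for the other. If neither endpoint lies in $P_{M_1}\cup P_{M_2}$, each is a wildcard or a constant from $C\cup C_\Sbf$, giving only $O(1)$ pairs.

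The remaining and main step is a \emph{bounded multiplicity} bound: for each fixed pair $(a^{\ast}_1,a^{\ast}_2)$ I would show that only a constant number of minimal partial answers (the constant depending on $Q$ alone) share it. Together with the case count above this yields $|Q(D)^{\ast}|\le C_1|M_1M_2|+C_2(|M_1|+|M_2|)+C_3=O(|M_1|+|M_2|+|M_1M_2|)$, using $|P_{M_1}\cup P_{M_2}|=O(|M_1|+|M_2|)$ since each productive index witnesses a nonzero entry. The structural observation driving multiplicity is that in $D$ the only atoms whose interpretation fans out over many elements of $P_{M_1}\cup P_{M_2}$ are the unique (by self\=/join freeness) atoms using $R_0$ and $R_\ell$, and both attach an answer variable to a \emph{quantified} bad\=/path variable; at any other point of $P_{M_1}\cup P_{M_2}$ an outgoing edge of a non\=/path relation can only reach the bounded set $C$ or loop reflexively, and the special constants $\bot_R\in C_\Sbf$ emit $P_{M_1}\cup P_{M_2}$\=/valued edges only through their single relation~$R$.

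The hard part will be turning this observation into a rigorous multiplicity bound, i.e.\ ruling out that some non\=/endpoint answer variable ranges freely and independently over the productive indices, which would push the count past the budget. The delicate configurations are those in which a variable is sent to some $\bot_R$ and its $R$\=/child then ranges over $P_{M_1}\cup P_{M_2}$. Here I would invoke the imposed orientation~(\mn{Dir}), the absence of non\=/reflexive multi\=/edges~(\mn{Mult}), self\=/join freeness, and minimality (constants are preferred to wildcards, so excursion patterns are forced) to argue that any such free choice is either tied to the endpoint pair already counted or confined to quantified variables. Once $(a^{\ast}_1,a^{\ast}_2)$ and the (constantly many) excursion pattern are fixed, this pins down the images of $x_3,\dots,x_k$ up to $O(1)$ possibilities, completing the bound.
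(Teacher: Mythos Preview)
Your plan is in the right neighbourhood, but the pivotal ``bounded multiplicity'' claim---that for each fixed endpoint pair $(a_1^\ast,a_2^\ast)$ only $O(1)$ minimal partial answers share it---is both stronger than what is needed and not obviously true. In particular, in your Case~3 (neither endpoint in $P_{M_1}\cup P_{M_2}$), nothing you have said rules out that a \emph{single} non-endpoint answer variable $x_i$ with $i\ge 3$ lands in $[n]$; your own observation that $\bot_R$ emits $R$-edges to all of $[n]$ shows exactly how this can happen. For such a fixed endpoint pair the number of minimal partial answers would then be $\Theta(|M_1|+|M_2|)$, not $O(1)$. This does not actually break the overall $O(|M_1|+|M_2|+|M_1M_2|)$ bound, because there are only $O(1)$ endpoint pairs in Case~3---but it does mean your multiplicity statement as phrased is wrong, and that the argument you sketch (invoking (\mn{Dir}), (\mn{Mult}), self-join freeness, minimality) does not deliver it.

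The paper avoids this detour by proving a cleaner global statement first (Lemma~\ref{lem:sizetechnical}): for any homomorphism $h$ from $q$ to $\mn{ch}_\Omc(D)$, if two \emph{distinct} answer variables $x_i,x_j$ both land in $[n]$, then $\{x_i,x_j\}=\{x_1,x_2\}$. The proof uses precisely the ingredients you identified---the tree structure of $G_q^{\mn{var}}$, the orientation~(\mn{Dir}), self-join freeness, and the fact that in $D$ no path with pairwise distinct relation symbols can run from $[n]$ to a null (or to some $\bot_R$) and back to $[n]$. With this in hand the count is immediate: either at most one position of a minimal partial answer is in $[n]$ (contributing $O(|M_1|+|M_2|)$ answers, since the remaining positions take values in the constant-size set $C\cup C_\Sbf\cup\{\ast\}$), or exactly positions~1 and~2 are in $[n]$, in which case Lemma~\ref{lemma:lower-bound-lemma-eli-answers-to-entires-correspondence} forces $(a_1,a_2)\in M_1M_2$ and the other positions are again from a constant-size set (contributing $O(|M_1M_2|)$ answers). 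I recommend you restructure around that global statement rather than per-endpoint multiplicity; it is what your structural observations actually prove, and it makes the counting a two-line case split.
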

    %
%    Note that the same is then true for the complete answers as
%    $Q(D) \subseteq Q(D)^\ast$.
    We can now prove
    Theorem~\ref{thm:lower-bound-enum-eli} as follows.  Assume that
    there is an algorithm that enumerates
    % (complete or least partial) answers to $Q$
    set $Q(D)$ (or $Q(D)^\ast$ or $Q(D)^\Wmc$) in \dlc. Given matrices $M_1,M_2$, we can
    construct the database $D$ in time $\Omc(|{M}_1|+ |{M}_2|)$ and
    use and enumerate the answers in time  
%    for $Q(D)$ or for $Q^*(D)$ to 
 %  generate $Q(D)$.
    %to generate $Q(D)$ or $Q^\ast(D)$.
%    the set \ansSet.
      $\Omc(|M_1|+|M_2|+|M_1M_2|)$, cf.~Lemma~\ref{claim:lower-bound-lemma-eli-it-bound-answers}.
    Finally, by
    Lemma~\ref{lemma:lower-bound-lemma-eli-answers-to-entires-correspondence},
    the projection
    %$Q(D)$ or $Q^\ast(D)$
%    \ansSet
    to the first
    two positions of the enumerated complete answers gives us a list representation of
    $M_1M_2$ in total time $\Omc(|{M}_1| + |{M}_2| + |M_1M_2|)$.
     
    A central ingredient to the proof of
    Lemma~\ref{claim:lower-bound-lemma-eli-it-bound-answers} is
    the following.
    \begin{lemma}
      \label{lem:sizetechnical}
         Let $h$ be a homomorphism from $q$ to
         $\mn{ch}_\Omc(D)$. If there are two distinct answer variables
         $x_i$, $x_j$ such that $h(x_i),h(x_j) \in [n]$, then $\{x_i,x_j\}=\{x_1,x_2\}$.
       \end{lemma}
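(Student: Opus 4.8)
The plan is to study the unique path between $x_i$ and $x_j$ in $q$ and to show that a homomorphism sending both endpoints into $[n]$ must send the whole path into $[n]$ and realize it using only the relations $R_0,\dots,R_\ell$, which by self-join freeness are precisely the atoms of the bad path. Throughout I rely on two features of $\mn{ch}_\Omc(D)$ recorded earlier: the chase adds no binary fact between two elements of $\mn{adom}(D)$, and its null part is a disjoint union of trees without multi-edges or self-loops, one attached to each constant. I also use three structural properties of $D$, all immediate from its construction together with Lemma~\ref{lem:headpathinS} (writing $[n]$ for $P_{M_1}\cup P_{M_2}$): (T1) every non-reflexive fact between two elements of $[n]$ uses $R_0$ or $R_\ell$, while no filler relation or middle relation $R_i$ ($0<i<\ell$) has a non-reflexive $[n]$--$[n]$ fact; (T2) from a constant in $C$ one reaches, along a directed non-reflexive fact, only constants in $C$ (or nulls); (T3) every fact with target in $C_\Sbf$ is reflexive, and every non-reflexive fact with source $\bot_R\in C_\Sbf$ uses the relation $R$ indexing $\bot_R$ and has target in $[n]$.

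First I would reduce to the case that $x_i$ and $x_j$ lie in the same tree of $G^{\mn{var}}_q$. If $G^{\mn{var}}_q$ has several trees, then since $q$ is connected each tree other than the one containing $x_1$ is adjacent to a constant, hence is rooted, by our conventions, at a variable $w$ occurring with a constant $c$; by (Dir) the corresponding atom is oriented away from $c$. Orienting such a tree away from $w$ and using $h(c)=c\in C$, properties (T2)--(T3) together with the shape of the chase propagate down the tree by induction and show that none of its variables is mapped into $[n]$. Hence both $x_i$ and $x_j$ lie in the tree $T^{\ast}$ of $x_1$.

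Next I would show the $x_i$--$x_j$ path in $T^\ast$ is mapped into $\mn{adom}(D)$. Since its endpoints map into $[n]\subseteq\mn{adom}(D)$, a maximal excursion into the null part would enter and leave the tree attached to a single root $r$, yielding a non-trivial closed walk at $r$; such a walk in a tree backtracks at some step, so two consecutive path atoms are sent to the same (undirected) tree edge and, as that edge carries a unique relation symbol, use the same relation---contradicting self-join freeness. Orienting the path by (Dir) as an ascent from $x_i$ to the least common ancestor $z_p$ followed by a descent to $x_j$, the facts along each half form a directed walk ending in $[n]$; by (T2) no vertex can map into $C$ (a $C$-vertex could never reach $[n]$), so every vertex maps into $[n]\cup C_\Sbf$. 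To exclude $C_\Sbf$, I consider a maximal block of consecutive vertices mapped into $C_\Sbf$; by (T3) they all carry the same element $\bot_R$, and since no vertex maps into $C$ the block is bounded on both sides by $[n]$-vertices, so both of its boundary edges are non-reflexive $C_\Sbf$--$[n]$ facts and hence use the single relation $R$ by (T3); these are two distinct atoms using $R$, contradicting self-join freeness. Thus the whole path maps into $[n]$.

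Finally, each edge of the path joins two elements of $[n]$. Its relation lies in $\Sbf$ (as $D$ is an $\Sbf$-database and the chase adds no fact between two database constants), and by (T1) it cannot be a filler---a filler's only $[n]$-facts are self-loops, which in a self-join-free query stem from a reflexive atom $R(z,z)$ and cannot label an edge between two distinct variables. Hence every edge uses a relation from $\{R_0,\dots,R_\ell\}$. Since $q$ is self-join free, these relations occur only in $R_0(x_1,y_1),R_1(y_1,y_2),\dots,R_\ell(y_\ell,x_2)$, the atoms of the bad path, so the $x_i$--$x_j$ path is a subpath of it; as its endpoints are answer variables and the only answer variables on the bad path are $x_1$ and $x_2$, I conclude $\{x_i,x_j\}=\{x_1,x_2\}$. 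I expect the main obstacle to be the clean elimination of excursions through $C_\Sbf$: the reflexive $R_0$- and $R_\ell$-facts present on $C\cup C_\Sbf$ superficially seem to offer alternative ways to re-enter $[n]$, and the key point is that leaving $C_\Sbf$ is possible only through a filler edge that (T3) forces to repeat the relation $R$, so self-join freeness rules it out.
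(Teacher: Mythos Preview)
Your proof is correct and rests on the same ingredients as the paper's: the reduction to the tree of $x_1$, the $x_i$--$x_j$ path, the tree shape of the $\ELI$ chase, and self-join freeness against the structure of $D$. The organization differs. The paper argues \emph{locally}: it picks whichever endpoint is not the LCA, looks at the single adjacent edge (directed towards that endpoint), and asserts that the neighbor maps to a null or to some $\bot_R$; each subcase then yields a walk in $\mn{ch}_\Omc(D)$ that must repeat a relation symbol. You argue \emph{globally}: you eliminate nulls, then $C$, then $C_\Sbf$ from the whole path, conclude every vertex maps into $[n]$, and from (T1) together with self-join freeness identify the path's relations with $R_0,\dots,R_\ell$ and hence the path itself with the bad path. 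Your version is somewhat longer but makes the case analysis uniform and in particular handles transparently the corner where one of $x_i,x_j$ is already $x_2$ and the adjacent edge is $R_\ell(y_\ell,x_2)$---a situation in which the paper's sentence ``$h(y)$ can only be a null or of the form $\bot_R$'' does not literally hold. Both arguments ultimately hinge on the same two observations: any nontrivial excursion into the null part of the chase backtracks in a tree and so repeats a relation, and the only non-reflexive way to leave $\bot_R$ towards $[n]$ is via the single relation~$R$.
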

       \begin{proof}
        Assume that $x_i,x_j$ are distinct
       answer variables such that $h(x_i),h(x_j) \in [n]$. Due to the way
       in which we have oriented the edges in $G^{\mn{var}}_q$, it
       must be the case that 
       \begin{enumerate}

       \item[(1)] $x_i$ and $x_j$ are in the same tree in
         $G^{\mn{var}}_q$ as $x_1$ and $x_2$.

       % \item each of $x_i$ and $x_j$ is in a tree in $G^{\mn{var}}_q$
       %   that is not connected to a constant, that is, there is no
       %   atom in $q$ that contains a variable from the tree and a
       %   constant.
         
       \end{enumerate}
       In fact, assume to the contrary that $x_i$ (or $x_j$) is in a
       different tree in $G^{\mn{var}}_q$ than $x_1$ and $x_2$.  Then
       since $q$ is connected, $q$ must contain a binary atom that
       contains a constant $c$ and a variable $x$ from the tree.  By
       the way we have imposed a direction on the trees in
       $G^{\mn{var}}_q$ and by (\mn{Dir}), we even find such a $c$ and
       an $x$ such that all edges in the tree are oriented away from
       $x$. Also by (\mn{Dir}), the edge between $c$ and $x$ is
       oriented towards $x$. Thus, there is a directed path in $q$
       from $c$ to $x_i$.  But there is no such (directed!) path from
       $c$ to a constant in $[n]$ in $D$. Since \Omc is an \ELI
       ontology, the same holds for $\mn{ch}_\Omc(D)$. We have thus
       shown ($*$).

       We next observe that since $q$ is connected and since all
       edges in $G^{\mn{var}}_q$ are oriented away from the roots
       of the trees, one of the following must hold:
       \begin{enumerate}

       \item[(2)] there is a variable $y$ such that $x_i$ and $x_j$ are
         both reachable in $G^{\mn{var}}_q$ from $y$ on a directed
         path;

       \item[(3)] $x_i$ and $x_j$ are in different trees in
         $G^{\mn{var}}_q$, but there is a constant $c$ such that
         $x_i$ and $x_j$ are reachable in $q$ from $c$ on a
         (not necessarily directed) path.

       \end{enumerate}
       Point~(3), however, is ruled out by Point~(1). We are thus left
       with Point~(2).  Consider the combined path from $x_i$ to $y$
       to $x_j$.  Since $q$ is self-join free, all relation symbols on
       this path are distinct.  Now take the image under $h$ of the
       path in $\mn{ch}_\Omc(D)$.  It starts and ends at a constant in
       $[n]$. Moreover, (i)~the edge from $x_i$ to the next variable
       $y$ on the path must be directed towards $x_i$ or (ii)~the edge
       from the variable $z$ before $x_j$ on the path to $x_j$ must be
       directed towards $x_j$. We only consider Case~(i) as Case~(ii)
       is completely symmetric.

       In Case~(i), the construction of $D$ implies that $h(y)$ can
       only be a null or of the form $\bot_R$. First assume the
       former. Then the $h$-image of the path crosses from the
       database part of $\mn{ch}_\Omc(D)$ to the null part
       and eventually back to reach $h(x_j) \in [n]$. But since \Omc
       is an \ELI ontology, the shape of $\mn{ch}_\Omc(D)$ is that
       of $D$ with trees without multi-edges and reflexive loops
       attached to each constant from $\mn{adom}(D)$. Thus,
       there is no path in $\mn{ch}_\Omc(D)$ of the described form
       in which no relation symbol occurs twice.

       Now assume that $h(y)$ is of the form $\bot_R$. Then the
       $h$-image of the path starts at a constant of $[n]$, then
       reaches a constant of the form $\bot_R$, and eventually again a
       constant of the form $[n]$. But by construction $D$ contains no
       path of this form on which no relation symbol occurs twice.
       Since \Omc is an \ELI ontology, the same is true for
       $\mn{ch}_\Omc(D)$.
       \end{proof}
       
     \begin{proof}[Proof of 
       Lemma~\ref{claim:lower-bound-lemma-eli-it-bound-answers}.]
%       {\color{blue} Here we want retain $Q(D)^\ast$. Still this can be made simpler.
%       Just consider $\bar{a}$ where at most one answer is in $[n]$ and the rest. See commented out part below.}
       
       Let $\bar{a} = (a_1, a_2, \dots, a_k) \in Q(D)^{\ast}$ be a partial answer. By Lemma~\ref{lem:sizetechnical},
       either $\bar{a} \cap [n] = \{a_1,a_2\}$ or 
       there is at most one value from $[n]$ in $\bar{a}$, i.e.~$\{a\} = \bar{a} \cap [n]$ for some $a$.
       
       In the former case, we observe that for every variable $x_i$, $1\leq i \leq k$, such that  $a_i \neq a$, 
       the value $a_i$ can be chosen from at most $|C| + |C_\Sbf| +1$ possibilities.
       Since $k$ is a fixed constant depending only on the query, we have $O(|M_1|+|M_2|)$ answers of the former kind.
       Similarly, by Lemma~\ref{lemma:lower-bound-lemma-eli-answers-to-entires-correspondence}
       we can conclude that the number of answers satisfying the latter case is bounded by $O(|M_1M_2|)$.
       Indeed, the number of possible pairs $(a_1,a_2)$ is $|M_1M_2|$ and the remaining undefined values in $\bar{a}$
       are chosen from a set of size $O(|q|)$. This ends the proof the lemma.

%       We analyse separately the least partial answers
%       $(a_1, a_2, \dots, a_k) \in Q(D)^{\ast}$ such that
%       at most one $a_i \in [n]$ and those for which at least two different indices $i,j \in [k]$
%       satisfy $\{a_i,a_j \subseteq [n]\}$.
%       
%       For answers of the former kind, there are
%       only $|C| + |C_\Sbf|$ choices for each $a_j$ where $j \neq i$.
%       Since $k$ is a constant, the number of answers of this kind is bounded by $O(|M_1|+|M_2|)$.
%       
%       For the answers of the latter kind, Lemma~\ref{lem:sizetechnical} implies that
%       $\{i,j\} = \{1,2\}$ and for each $a_l$ where $l>2$ we have at most $|C| + |C_\Sbf|$.
%       Hence, by Lemma~\ref{lemma:lower-bound-lemma-eli-answers-to-entires-correspondence}
%       we can conclude that the number of such answers is bounded by $O(|M_1M_2|)$.
       
       We analyse separately the minimal partial answers
       $(a_1, a_2, \dots, a_k) \in Q(D)^{\ast}$ such that
       $a_i \in [n]$ for some $i > 2$ and those for which $a_i \in [n]$ only if $i \in \{1,2\}$.

       For answers of the former kind, Lemma~\ref{lem:sizetechnical} implies that
       $a_j \notin [n]$ for all $j \in
       \{1,\dots,i-1,i+1,k\}$. Consequently, the number of answers of
       this kind is bounded by $O(|M_1|+|M_2|)$.
       
       Now for answers of the latter kind. In this case, there are
       only constantly many choices for $a_3,\dots,a_k$ as we
       have at most $|C| + |C_\Sbf|$ different possibilities for indices in
       $i \in \{3,\dots,k\}$, and $k$ is a constant. We may further
       distinguish between answers where at most one of $a_1,a_2$ is
       from $[n]$ and where both are. Due to the limited number of
       possible choices for $a_3,\dots,a_k$, the number of answers of
       the former kind is bounded by $O(|M_1|+|M_2|)$ and 
       Lemma~\ref{lemma:lower-bound-lemma-eli-answers-to-entires-correspondence}
       implies that the number of answers of the latter kind is
       bounded by $O(|M_1M_2|)$. This ends the proof of the conditional lower bound.

    \end{proof}

\medskip

We give some remarks on (im)possible generalizations of
Theorem~\ref{thm:lower-bound-enum-eli}. Replacing $\class{ELI}$ with
$\class{G}$ in Theorem~\ref{thm:lower-bound-enum-eli} would allow us
to also remove `self-join free' from
Theorem~\ref{thm:lower-bound-enum-eli}, as in
Example~\ref{ex:sjf}. But there are CQs (with self-joins) that are
acyclic and not free-connex acyclic, yet their answers can be
enumerated in \cdlin (without ontologies)
\cite{berkholz-enum-tutorial}.

\connectedisneeded*
\begin{proof}
      Let
      \[
      \begin{array}{rcl}
          \Omc &=& \{A_1(x) \rightarrow A_2(x), B_1(x) \rightarrow B_2(x),\\
               & &   C_1(x) \rightarrow C_2(x)\} \\
          \Sbf &=& \{A_1,B_1, C_1, L, R\} \\
          q(x_1,z_1,x_2,y_2,z_2) &=& L(x_1,y_1) \land R(y_1,z_1) \, \land \\
               & & A_1(x_1)  \land B_1(y_1) \land C_1(z_1) \, \land \\
               & & A_2(x_2) \land B_2(y_2) \land C_2(z_2).
      \end{array}
      \]
      Note that $Q=(\Omc,\Sbf,q)$ satisfies all properties listed in
      Proposition~\ref{prop:connectedisneeded}. We argue that
    complete answers to $Q$ can be enumerated in \dlc.  The
    idea is that the additional answer variables $x_2,y_2,z_2$ enlarge
    the answer set, which gives additional computational power in the
    enumeration phase of the algorithm. 

    To make this precise, let $D$ be an $\Sbf$\=/database and, by
    slight abuse of notation, $A = \{c \in \dom(D) \mid A(c) \in D\}$
    for every unary relation symbol $A$, and $|A|$ the size of set
    $A$.  It is clear
    that % $\mn{ch}_{\Omc}(D)$ can be computed in linear
      % time since the chase only needs to add unary relation symbols
      % to
      % existing constants. Now, it is easy to see that
      $Q(D) = p(D) \times A_1 \times B_1 \times C_1$ where
      $$p(x_1, z_1)=L(x_1,y_1) \land R(y_1,z_1) \land A_1(x_1)  \land B_1(y_1) \land C_1(z_1).$$

      Since $q$ is acyclic, we can find an answer
      $(a_1,c_1,a_2,b_2,c_2) \in Q(D)$ in linear time during the
      preprocessing phase. In fact, we can compute $\mn{ch}_{\Omc}(D)$
      in linear time since the chase only needs to add unary relation
      symbols to existing constants, and then use the standard
      Yannakakis algorithm to find an answer to $q$ on
      $\mn{ch}_{\Omc}(D)$.
      This actually reveals the $|A_1|\cdot|B_1|\cdot |C_1|$ distinct
      answers in the set
      $U = \{ a_1 \}\times\{c_1\} \times A_1 \times B_1 \times C_1$.
      Now, while the algorithm enumerates the answers in $U$ with
      constant delay, it can
      in parallel compute the set $p(D)$ by simply
      checking all possible triples
      $(a_1,b_1,b_2) \in A_1 \times B_1 \times C_1$. After finishing
      the enumeration of $U$, it is then easy to enumerate the
      remaining answers with constant delay.
\end{proof}

\subsection{Proof of Theorem~\ref{thm:alltestinglower}}

\thmalltestinglower*

Let $Q(\bar x) = (\Omc, \Sbf, q)$, and let
%
% Recall that the CQ $q(\bar{x})$ in $Q = (\Omc, \Sbf, q)$ is a
% conjunctive query using only binary and unary relations.
$\hat{q}(\bar{x})$ be the CQ obtained from $q$ by adding an atom
$R(\bar{x})$ with $R$  a fresh relation symbol.  Since $Q$ is not
acyclic free\=/connex $\hat{q}$ is not acyclic.

% \begin{theorem}[BEERI ET AL. 1983]
%     \label{thm:not-acyclic-query-characterisation}
%     A CQ $q$ is acyclic iff it satisfies the following properties:
%     \begin{enumerate}
%     \item $q$ is conformal, i. e. for every clique of the Gaifman
%       graph $G_q$ of $q$ there exists an atom that contains all
%       variables in the clique;
%     \item $q$ is chordal, i. e. every cycle of length at least $4$ in
%       $G_q$ has a chord.  That is, $G_q$ contains an edge
%       that is not part of the cycle but connects two vertices of the
%       cycle.
%     \end{enumerate}
% \end{theorem}

We observe the following consequence of Theorem~\ref{thm:Beeri}.

\begin{claim}
    There is a chordless cycle $y_1, y_2, \dots, y_\ell$ in $G_{\hat{q}}$
    that has no repeated vertices and no more than two answer variables.
\end{claim}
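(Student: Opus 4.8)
The plan is to exploit the one structural feature that distinguishes $\hat q$ from $q$, namely that the fresh atom $R(\bar x)$ lists exactly the answer variables, so all answer variables pairwise co-occur in it and therefore induce a clique in $G_{\hat q}$. The immediate consequence I would record first is that any clique of $G_{\hat q}$ consisting solely of answer variables is covered by the single atom $R(\bar x)$; hence a clique that is \emph{not} covered by any atom must contain at least one quantified variable. Since $Q$ is not free\=/connex acyclic, $\hat q$ is not acyclic, so Theorem~\ref{thm:Beeri} splits the argument into two cases: either $\hat q$ is not chordal, or $\hat q$ is not conformal. I will produce the required cycle in each case.

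If $\hat q$ is not chordal, I would take a shortest cycle of length at least $4$ that has no chord. Being chordless, it is an induced cycle, hence simple with no repeated vertices, of some length $\ell \ge 4$. The key point is the adjacency-iff-consecutive property of induced cycles: two of its vertices are adjacent in $G_{\hat q}$ precisely when they are neighbours on the cycle. Since the answer variables are pairwise adjacent, any two answer variables on the cycle must be consecutive; but three vertices that are pairwise consecutive on a cycle would force a triangle, which is impossible for $\ell \ge 4$. Thus at most two answer variables lie on the cycle, giving exactly the claimed object.

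If $\hat q$ is not conformal, I would instead select a clique $Z$ of $G_{\hat q}$ that is minimal among cliques covered by no single atom. Every one\- and two\-element clique is trivially covered (an edge of the Gaifman graph arises from a shared atom), so $|Z| \ge 3$; and by the observation above $Z$ contains a quantified variable $z$. Choosing any two further distinct vertices $z', z'' \in Z \setminus \{z\}$, which is possible as $|Z| \ge 3$, the triple $z, z', z''$ forms a triangle in $G_{\hat q}$. A triangle is a chordless cycle of length $3$ with no repeated vertices, and since $z$ is quantified it contains at most two answer variables.

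The argument is elementary once the clique observation is established, so I do not anticipate a genuine obstacle. The one step deserving care is the not\-chordal case, where one must argue that a chordless cycle of length at least $4$ cannot carry three pairwise-adjacent vertices; this is precisely the adjacency-iff-consecutive property of induced cycles and is what pins the number of answer variables down to at most two. The not\-conformal case is then handled uniformly by degenerating the cycle to a triangle built around a guaranteed quantified vertex.
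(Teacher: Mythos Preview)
Your proposal is correct and follows essentially the same approach as the paper: both split via Theorem~\ref{thm:Beeri} into the not-chordal and not-conformal cases, use that the answer variables form a clique in $G_{\hat q}$ (courtesy of the atom $R(\bar x)$) to bound the number of answer variables on a chordless cycle of length $\ge 4$, and in the not-conformal case extract a triangle through a guaranteed quantified vertex of a minimal uncovered clique. Your write-up is slightly more explicit about why three pairwise-adjacent vertices cannot sit on an induced cycle of length $\ge 4$, but the argument is the same.
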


Since $\hat{q}$ is not acyclic, it is not conformal or not chordal.
If it is not conformal, then there is a clique not contained in an
atom.  This clique has to have at least 3 vertices, since every clique
of size 2 is an edge, and at least one quantified variable, as the
clique of all answer variables is induced by the atom $R(\bar{x})$.  To
obtain the cycle, choose a triangle that contains this quantified
variable.

If $\hat{q}$ is not chordal, then there is a chordless cycle of length at least $4$.
Every such cycle has no more than two answer variables, as there is an edge between every two answer variables.

\medskip

We distinguish three cases according to the number of answer
variables in the cycle.

\paragraph{No answer variables in the cycle}
If no variable in $y_1, \dots, y_\ell$ is an answer variable, 
then there is a cycle in $q$ and, thus, $q$ is not weakly acyclic.

Clearly, an algorithm for all-testing for $Q$ in \dlc can be
used to implement single-testing for $Q$ in linear time. Thus,
Theorem\ref{lemma:single-testing-lower-bound} implies that the
triangle conjecture fails.

\paragraph{One answer variable in the cycle}
Without loss of generality, we can assume that $x_1 = y_1$
is the single answer variable in the cycle $y_1, \dots, y_\ell$.
Hence the query $q_0$ obtained from $q$ by making $x_1$ a quantified 
variable is not weakly acyclic.

If all-testing for $Q$ is possible in \dlc, then
single-testing for $Q_0:=(\Omc, \Sbf, q_0)$ is possible in
linear time: given a candidate answer $\bar{a}$ to $Q_0$ on an
\Sbf-database $D$, we use the all-testing algorithm to decide whether
$b\bar{a} \in Q(D)$ for any $b \in \dom(D)$, in overall linear time.
By Theorem~\ref{lemma:single-testing-lower-bound}, this implies
that the triangle conjecture fails.

\paragraph{Two answer variables in the cycle}
We assume that $x_1 = y_1$ and $x_2 = y_\ell$ are the two answer
variables in the cycle $y_1, \dots, y_\ell$, it shall be clear how 
the proof can be adapted if the answer variables are located elsewhere
on the cycle.
We show that if all-testing for $Q$ is in \dlc, then given the
two $n\times n$ Boolean matrices $M_1,M_2$ we can compute $M_1M_2$
in time~$\Omc(n^2)$.

We use a construction similar to that in the proof of
Theorem~\ref{lemma:enumeration-lower-bound}.
However, since $Q$ may not be
acyclic the construction needs to be suitably adapted.

As shown in proof of Theorem~\ref{thm:lower-bound-enum-eli}
we can assume that both $M_1$ and $M_2$
have $1$ in every row and in every column.
Also, we make the following observation.

\begin{claim}
    If $R(x,y)$ is an atom in $q$ where $x$ and $y$ are variables in the cycle $y_1, \dots, y_\ell$
    then $R \in \Sbf$.
\end{claim}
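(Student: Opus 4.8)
The plan is to argue by contradiction, reusing almost verbatim the reasoning already developed for Lemma~\ref{lem:headpathinS} (and Lemma~\ref{lem:alliinS}), since the only structural facts needed are self\=/join freeness of $q$, the non\=/emptiness of $Q$, and the shape of the $\ELI$ chase. Suppose some atom $R(x,y)$ of $q$ has both $x$ and $y$ on the cycle yet $R \notin \Sbf$. Non\=/emptiness of $Q$ (w.r.t.\ complete answers) yields an \Sbf-database $D_0$ and a homomorphism $h_0$ from $q$ to $\mn{ch}_\Omc(D_0)$ with $h_0(x_1),h_0(x_2) \in \mn{adom}(D_0)$, since $x_1 = y_1$ and $x_2 = y_\ell$ are answer variables. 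Because $\Omc \in \ELI$, the chase $\mn{ch}_\Omc(D_0)$ is $D_0$ with a disjoint union of trees---free of self loops and multi\=/edges---grafted onto its constants, and its restriction to $\mn{adom}(D_0)$ contains no binary fact whose relation symbol lies outside $\Sbf$ (the chase never adds a binary fact between two database constants).

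First I would dispose of the case in which $R(x,y)$ realizes the closing edge $\{y_1,y_\ell\}=\{x_1,x_2\}$ of the cycle: there $h_0(x),h_0(y)\in\mn{adom}(D_0)$, so $R(h_0(x),h_0(y))$ is a binary fact on two database constants and must use an \Sbf-symbol, contradicting $R\notin\Sbf$. In the remaining case $R(x,y)$ realizes one of the path edges $\{y_i,y_{i+1}\}$ with $i<\ell$. Since the restriction of $\mn{ch}_\Omc(D_0)$ to $\mn{adom}(D_0)$ carries no $R$-edge for $R\notin\Sbf$, the fact $R(h_0(x),h_0(y))$ must have a null endpoint, so $h_0$ sends at least one cycle variable into the null part.

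The core step is then to derive a contradiction from self\=/join freeness, exactly along the lines of Lemma~\ref{lem:headpathinS}. The path $y_1,\dots,y_\ell$ is mapped by $h_0$ to a walk in $\mn{ch}_\Omc(D_0)$ that starts and ends at the database constants $h_0(x_1),h_0(x_2)$ but dips into a null\=/part tree. Since each such tree is attached to the database through a single root constant and contains no multi\=/edges, any maximal excursion of this walk into a tree must enter and leave through the very same boundary edge; realizing that single fact by two distinct path atoms would force two distinct relation symbols on one edge, which is impossible by self\=/join freeness together with the absence of multi\=/edges. This contradiction establishes $R\in\Sbf$.

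I expect the main obstacle to be stating the excursion argument cleanly---namely that a walk which leaves the database part for the acyclic, multi\=/edge\=/free null part and returns must traverse one boundary fact twice and hence reuse a relation symbol on two distinct atoms. Everything else (non\=/emptiness, the database/null decomposition of the $\ELI$ chase, and the absence of non\=/\Sbf binary facts among database constants) is already recorded in the proofs of Lemma~\ref{lem:alliinS} and Lemma~\ref{lem:headpathinS}, so the proof should amount to transporting that argument from the bad\=/path setting into the present cycle.
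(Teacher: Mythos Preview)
Your proposal is correct and takes precisely the route the paper intends: the paper's own proof consists of the single sentence ``The proof follows the same pattern as the proof of Lemma~\ref{lem:headpathinS},'' and your argument is exactly that transport, with the appropriate extra case distinction for the closing edge $\{y_1,y_\ell\}=\{x_1,x_2\}$ (which is absent in the bad\=/path setting but may occur here). Your concern about the excursion argument is also well handled: since the nulls $h_0(y_j),\dots,h_0(y_k)$ of a maximal excursion form a connected walk inside a single tree of the null forest, both $h_0(y_j)$ and $h_0(y_k)$ lie in the same subtree below $c$, and as each is adjacent to $c$ they must both equal the unique root of that subtree---so the entering and leaving boundary edges do coincide, and self\=/join freeness plus the absence of multi\=/edges yields the contradiction.
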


The proof follows the same pattern as the proof of Lemma~\ref{lem:headpathinS}.

\smallskip

%Let $C=\mn{con}(q)$. % if $\mn{con}(q)$ is non-empty and
%$C=\{1\}$ otherwise.
%Moreover, let $C_\Sbf=\{ \bot_R \mid R \in \Sbf \text{ binary} \}$
%where each $\bot_R$ is a fresh constant.
The active domain of the
database $D$ that we aim to construct
is %$\mn{adom}(D) = \{1, \dots, n\} \cup \{c_1, \dots, c_m, \bot\} \cup \{\bot_r\}_{r \Sbf'}$ %of size $n+l+1$
$\mn{adom}(D) = [n] \cup \mn{con}(q)
$ %of size $n+l+1$
and $D$ contains the following facts:
\begin{itemize}
    
    \item $A(c)$ for every unary $A \in \Sbf$ and every $c \in \mn{adom}(D)$;
    
    \item for every relation symbol $R \in \Sbf$ we add facts:
    \begin{itemize}
        \item $R(a,b)$ for all $a,b \in [n]$ such that
        ${M}_1(a,b)=1$,
        \item[] \qquad if $R(x_1, y_2)$ is an atom in $q$;
        
        \item $R(b,a)$ for all $a,b \in [n]$ such that
        ${M}_1(a,b)=1$,
        \item[] \qquad if $R(y_2, x_1)$ is an atom in $q$;
        
        \item $R(a,b)$ for all $a,b \in [n]$ such that
        ${M}_2(a,b)=1$,
        \item[] \qquad if $R(y_{\ell -1}, x_2)$ is an atom in $q$;
        
        \item $R(b,a)$ for all $a,b \in [n]$ such that
        ${M}_1(a,b)=1$,
        \item[] \qquad if $R(x_2, y_{\ell- 1})$ is an atom in $q$;
        
        \item $R(a,a)$ for all $a\in [n]$,
        \item[] \qquad if $R(y_i, y_{i+1})$ is an atom in $q$ with $2 \leq i \leq \ell-2$;    
        
        \item $R(a,b)$ for all $a,b\in \dom(D)$,
        \item[] \qquad if $R(x,y)$ is an atom in $q$ and $x$ and $y$ are not
          both in the cycle;
        
        \item $R(c,c)$ for every $c \in \mn{con}(q)$.
    \end{itemize}
\end{itemize}

The database $D$ can be constructed in time $\Omc(n^2)$.
Moreover, for every relation symbol $R \in \Sbf$ and every
element $c \in \dom(D)$ there are $a,b$ such that $R(a,c), R(c,b) \in D$.
Thus, as before, we will be able to extend a partial homomorphism
from $q$ to $D$
to the atoms whose relation symbols are not present in the schema.

To end the reduction it is enough to show the following.
\begin{claim}
    \label{claim:att-testin-answer-matrix-correspondence}
    Let $(a,b) \in [n]^2$.
Then    $(a,b, 1, \dots, 1) \in Q(D)$ iff ${M}_1M_2(a,b)=1$.
\end{claim}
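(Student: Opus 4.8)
The plan is to establish the two directions of the biconditional separately, following closely the proof of Lemma~\ref{lemma:lower-bound-lemma-eli-answers-to-entires-correspondence} from the enumeration lower bound. Write $x_1 = y_1$ and $x_2 = y_\ell$ for the two answer variables on the chordless cycle, and let $R_1(x_1,y_2), R_2(y_2,y_3), \dots, R_{\ell-1}(y_{\ell-1},x_2)$ be the atoms of $q$ along the corresponding bad path (up to orientation, which is accounted for by the four variants in the construction of $D$). By self-join freeness these relation symbols are pairwise distinct, and by the claim proved just above they all lie in $\Sbf$. Recall also that the closing edge $\{x_1,x_2\}$ of the cycle stems from the guard atom $\hat{R}(\bar{x})$ and is therefore \emph{not} present in $q$, and that $D$ satisfies the completeness property: every element of $\mn{adom}(D)$ has an incoming and an outgoing $R$-edge for every binary $R \in \Sbf$.

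For the ``only if'' direction, suppose $(a,b,1,\dots,1) \in Q(D)$, witnessed by a homomorphism $h$ from $q$ to $\mn{ch}_\Omc(D)$ with $h(x_1)=a$ and $h(x_2)=b$. First I would argue that $h$ maps every path variable into $\mn{adom}(D)$: since $\Omc \in \class{ELI}$, the instance $\mn{ch}_\Omc(D)$ is $D$ with multi-edge-free and loop-free trees of nulls grafted onto its constants, so an image of the bad path starting and ending at the database constants $a,b$ and visiting a null would have to cross the database/null boundary twice, which is impossible for a path whose relation symbols are all distinct (this is the argument of Lemma~\ref{lem:headpathinS}). The chase adds no binary fact between two database constants, so the image of the path lies entirely in $D$; the reflexive-only interpretation of $R_2,\dots,R_{\ell-2}$ then forces $h(y_2)=\dots=h(y_{\ell-1})=u$ for a single element $u$, and inspecting the $R_1$-facts (which, apart from reflexive facts on $\mn{con}(q)$, connect only elements of $[n]$ according to $M_1$) forces $u \in [n]$ with $M_1(a,u)=1$; symmetrically $M_2(u,b)=1$, whence $M_1M_2(a,b)=1$.

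For the ``if'' direction, assume $M_1M_2(a,b)=1$ and fix $u$ with $M_1(a,u)=M_2(u,b)=1$. I would define a partial homomorphism by $h(x_1)=a$, $h(x_2)=b$, $h(y_i)=u$ for $2 \le i \le \ell-1$, $h(c)=c$ for constants, and $h(x)=1$ for the remaining answer variables; the path atoms are then satisfied directly by the $M_1$-, $M_2$-, and reflexive facts in $D$. To extend $h$ to the rest of $q$, I would reuse the homomorphism $h_0$ witnessing non-emptiness together with Lemma~\ref{lem:sim}: binary atoms whose relation symbol is not a cycle symbol are interpreted totally on $\mn{adom}(D)$ and are thus satisfied for any assignment, while atoms with a symbol outside $\Sbf$ (which, as in Lemma~\ref{lem:headpathinS}, must have an $h_0$-null endpoint) together with the unary atoms give rise to ELIQ obligations that the completeness property and Lemma~\ref{lem:sim} transfer from $D_0$ to $D$, yielding the required homomorphic extension into $\mn{ch}_\Omc(D)$.

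The main obstacle, relative to the enumeration proof, is that $Q$ need not be acyclic here, so I cannot rely on $G^{\mn{var}}_q$ being a disjoint union of trees when extending $h$ to the non-path variables. The remedy is to observe that the ELIQ structure used in the extension originates not in $q$ but in the target: after dropping the atoms both of whose endpoints $h_0$ sends into $\mn{adom}(D_0)$ and identifying variables with equal $h_0$-image, each remaining connected component injects into the forest-shaped null part of $\mn{ch}_\Omc(D_0)$ and is therefore tree-shaped, i.e.\ an ELIQ, irrespective of any cycles in $q$. With this observation in place the simulation argument goes through as before, and combining the two directions completes the proof of the claim.
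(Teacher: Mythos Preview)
Your proposal is correct and takes essentially the same approach as the paper. For the forward direction the paper simply references Lemma~\ref{lemma:lower-bound-lemma-eli-answers-to-entires-correspondence}; for the backward direction it explicitly notes that this lemma's argument fails since $q$ may be cyclic and instead invokes the proof of Theorem~\ref{lemma:single-testing-lower-bound}, whose extension-via-ELIQ-components argument (drop atoms with both $h_0$-images in $\mn{adom}(D_0)$, identify variables with equal $h_0$-image, then use simulation) is exactly the remedy you spell out in your last paragraph.
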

    
Indeed, assume that all-testing complete answers for $Q$ is in
\dlc. Then given two matrices $M_1,M_2$, we can compute
$M_1M_2$ in time $\Omc(n^2)$ by first computing the database $D$ in
time $\Omc(n^2)$, then executing the preprocessing phase of the
all-testing algorithm in time linear in $||D||$, which is
$\Omc(n^2)$, and finally testing the tuple $(a,b, 1, \dots, 1)$ for
every pair $(a,b) \in [n]^2$.  $M_1M_2$ is the set of all pairs
$(a,b)$ for which the test
succeeded. % Claim~\ref{claim:att-testin-answer-matrix-correspondence}
% assures the correctness and completeness of the algorithm.
% The overall
% running time of the algorithm is $\Omc(n^2)$.

All that is left is to prove the claim.
The ``$\Rightarrow$'' is proven the same way as in Lemma~\ref{lemma:lower-bound-lemma-eli-answers-to-entires-correspondence}.
For the other direction we cannot use the proof of Lemma~\ref{lemma:lower-bound-lemma-eli-answers-to-entires-correspondence} as the CQ may not be acyclic.
We thus do the following. First, we define $h$ only on $\bar x$ so
that $h(\bar{x}) = (a,b, 1, \dots, 1)$.
Then we argue as in the proof of Theorem~\ref{lemma:single-testing-lower-bound}
that it can be extended to a homomorphism $h$ from $q$ to
$\mn{ch}_{\Omc}(D)$. Details are left to the reader.

For the cases of minimal partial answers and minimal partial answers with multiple wildcards,
recall that $Q(D) \subseteq Q(D)^{\ast}$ and $Q(D) \subseteq Q(D)^{\Wmc}$. 
Thus, both the construction and the algorithm work with no modifications.

\section{Proofs for Section~\ref{sect:LPAsingleWildcardUpper}}

\thmlowerGpartial*
\begin{proof}
  We start with showing the result for $(\class{G},\class{CQ})$ in
  place of $(\class{ELI},\class{CQ})$. Let $R$ be a binary relation
  symbol and $\Sbf=\{R\}$. We may view an
  undirected graph $G=(V,E)$ as the $\Sbf$-database
  $$D_G=\{ R(v,v'), R(v',v) \mid \{v,v'\} \in E \}.$$  Consider the
  OMQ $Q(x) = (\Omc,\Sbf,q) \in (\class{G},\class{CQ})$
  where \Omc contains the TGD
  $$
  \begin{array}{r@{\;}c@{\;}l}
    R(x_1,x_2) &\rightarrow& \exists y_1 \exists y_2 \, R\{x_1,y_1\} \wedge 
                             R\{y_1,y_2\} \wedge R\{y_2,x_1\}
    % \\[0.5mm]
    % R(x_2,x_1) &\rightarrow& \exist y_1 \exists y_2 \, R\{x_1,y_1\} \wedge 
    % R\{y_1,y_2\} \wedge R\{y_2,x_2\}
  \end{array}
  $$
  with $R\{x,y\}$ an abbreviation for  $R(x,y) \wedge R(y,x)$ and with
  $$
  q(x,y,z,u)=R\{x,y\} \wedge 
    R\{y,z\} \wedge R\{z,u\}.
  $$
  Let $G=(V,E)$ be an undirected graph.  Then $(v,\ast,\ast,v)$ is a
  partial answer to $Q$ on $D_G$ for every $v \in V$, but not
  necessarily a minimal partial answer. In fact, it is a minimal partial
  answer if and only if $v$ is not part of a triangle in $G$. It
  follows that all-testing for $Q$ is not in \dlc unless the
  triangle hypothesis fails: to decide whether a given graph $G=(V,E)$
  contains a triangle, we can construct $D_G$ in time linear in
  $||E||$, then execute the preprocessing phase of all-testing for
  $Q$, and then iterate over all $v \in V$ and test in constant time
  whether $(v,\ast,\ast, v) \in Q(D)^\ast$. We answer `no' if this is
  the case for all $v$ and `yes' otherwise. The same arguments work in
  the multi-wildcard case, with $(v,\ast,\ast, v)$ replaced by
  $(v,\ast_1,\ast_1,v)$.

  The challenge in improving the construction to $\class{ELI}$ is
  that \ELI TGDs cannot introduce a triangle that consists of nulls.
  The solution is to construct $q$, \Omc, and $D$  in a more careful
  way. Let us start with $q$, which is now 
  $$
  q(x_1,x_2,x_3,x_4,x_5)=R(x_1,x_2) \wedge R(x_2,x_3) \wedge
  R(x_4,x_3) \wedge R(x_5,x_4),
  $$
  that is, it is a path of length~4 rather than of length~3 and
  the direction of the edges is chosen carefully. %  so that the cyclic
  % can be contracted into a path that can be generated by the ontology.
  %In fact,
  We choose \Omc to contain the \ELI TGD
  $$
  \begin{array}{r@{\;}c@{\;}l}
    R(x_1,x_2) &\rightarrow& \exists y_2 \exists y_3 \, R(x_1,y_2) \wedge 
    R(y_2,y_3).
  \end{array}
  $$
  Similarly to before, $(v,\ast,\ast,\ast,v)$ is a partial answer to
  $Q$ on any $\Sbf$-database $D$ such that every constant in $D$ has
  an outgoing $R$-edge (in the multi-wildcard case, we use $(v,\ast_1,
  \ast_2,\ast_1,v)$. It remains to modify $D_G$ so that $q(D_G)$
  contains a tuple $(c_1,c_2,c_3,c_4,c_5)$ with $c_1=c_5$ if and only
  if $G=(V,E)$ contains a triangle (as this makes at least one of the
  partial answers be not a minimal partial answer). This is achieved by
  constructing $D_G$ so that $\mn{adom}(D_G)=V \times [4]$ and it
  contains the following facts,
  % :
  % %
  % \begin{itemize}
%
  % \item $R((v,2),(v,3))$ for each $v \in V$;
%
  % \item $R((v,1),(v',2)), R((v,2),(v,3)), R((v',4),(v,3)), R((v',1),(v,4)). 
    %
  % \end{itemize}
  %
 for each $\{v,v'\} \in E$:
 $$
 R((v,1),(v',2)), R((v,2),(v,3)), R((v',4),(v,3)), R((v',1),(v,4)). 
  $$
  To finish the proof, it suffices to observe the following.
  \\[2mm]
  {\sc Claim.} $q(D_G)$ contains a tuple $(c_1,c_2,c_3,c_4,c_5)$ with
  $c_1=c_5$ iff $G$ contains a triangle.
  \\[2mm]
  For the ``if'' direction, assume that $v_1,v_2,v_3$ is a triangle in
  $G$. Then the tuple
  $$
    ((v_1,1),(v_2,2),(v_2,3),(v_3,4),(v_1,1))
  $$
  is in $q(D_G)$.

  \smallskip ``only if''. Assume that
  $q(D_G)$ contains a tuple $(c_1,c_2,c_3,c_4,c_5)$ with
  $c_1=c_5$. Consider $c_3$.  Only constants of the form
  $(v_1,2)$ and $(v_1,4)$ have both incoming and outgoing
  $R$-edges in $D_G$, so
  $c_2$ must be of one of these forms. Assume that
  $c_2=(v_1,2)$, the case
  $c_2=(v_1,4)$ is symmetric.  The construction of
  $D_G$ yields the following. We must have
  $c_3=(v_1,3)$.  The fact %s $R((v,2),(v,3)) \in D_G$ and
  $R(c_4,(v_1,3))
  \in D_G$ was introduced due to some edge $\{ v_2,v_1\} \in
  E$ and $c_4$ must be of the form $(v_2,4)$. The fact $R(c_1,(v_1,2))
  \in D_G$ was introduced due to some edge $\{v_3,v_1\} \in E$ and
  $c_1$ must take the form
  $(v_3,1)$; likewise, the fact
  $R(c_5,(v_2,4))$ was introduced due to some edge $\{v_4,v_1\} \in
  E$ and $c_5$ must take the form $(v_4,1)$. Since
  $c_1=c_5$, it follows that $v_3=v_4$. Thus, the nodes
  $v_1,v_2,v_3$ constitute a triangle in $G$.
\end{proof}

\subsection{Missing Details for Proof of
  Theorem~\ref{thm:upperGpartial}}
\label{app:singlewild}

In the main part of the paper, we have declared the goal to be the
enumeration of
$q_0(\mn{ch}^{q_0}_\Omc(D))^{\ast}_\Nbf$. Here, we actually prove
something slightly more general (based on exactly the algorithm
presented in the main part), as follows.
\begin{proposition}
  \label{prop:enumwildcards}
  For every CQ $q(\bar x)$ that is acyclic and free\=/connex ayclic,
  enumerating the answers $q(D)^{\ast}_\Nbf$ is in \dlc for
  databases $D$ and sets of nulls $N \subseteq \mn{adom}(D)$ such that
  $D$ is chase-like with witness $D'_{1},\dots,D'_{n}$ where
  $|\mn{adom}(D'_{i})|$ does not depend on $D$ for $1 \leq i \leq n$.
\end{proposition}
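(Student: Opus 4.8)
The plan is to observe that the generalized statement is handled by exactly the algorithm of Figure~\ref{fig:enumalg}: its correctness and complexity only ever use that the input database is chase-like with a witness of $D$-independent active-domain size, and never the specific fact that it arose as $\mn{ch}^{q_0}_\Omc(D)$. So I would fix such a $D$ and $N \subseteq \mn{adom}(D)$ together with an acyclic and free\=/connex acyclic CQ $q(\bar x)$, assume w.l.o.g.\ (as in the main text) that $\bar x$ has no repeated variables and that $q$ is connected and constant\=/free, and run the algorithm with $D_0:=D$ and $q_0:=q$. The whole argument then splits into three parts: linear preprocessing, constant delay, and correctness.

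First I would treat preprocessing. A Yannakakis-style bottom-up semi-join pass turns $q_0$ and $D_0$ into a CQ $q_1(\bar x)$ and database $D_1$ satisfying Conditions~(i)--(iv), in particular the progress condition~(\ref{progress}), with $q_0(D_0)^{\ast}_\Nbf=q_1(D_1)^{\ast}_\Nbf$; this pass is linear in $||D_0||$ and also detects emptiness. The only place the hypothesis of the proposition is needed is the computation of the lists $\mn{trees}(v,h)$: because $D$ is chase-like with bounded witness, every guarded set of $D_1$ carries onto it only a null component of $D$-independent size, so any homomorphism from $q_1$ can excurse into it in only constantly many ways. Hence for each relevant pair $(v,h)$ there are only constantly many progress trees with root $v$ and $g(\bar z)=h(\bar z)$, and they can be read off locally. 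Lemma~\ref{lem:precomputetreeslistslinear} then gives that all these lists, sorted into database-preferring order, are computable in overall time linear in $||D_1||$, and that none is empty; together with the constant-time lookup tables for list locations this yields an overall linear preprocessing phase.

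Next I would establish constant delay. The enumeration phase is a pre-order traversal of the join tree $T_1$ that extends a partial homomorphism $h$ atom by atom, choosing at each atom a progress tree from $\mn{trees}(v,h|_{\bar z})$ and recursing via $\mn{nextat}$. The decisive fact is Lemma~\ref{lem:neveremptylem}: none of the lists $\mn{trees}(v,h)$ ever becomes empty during the run, so the \textbf{forall} loop in Line~10 of each call $\mn{enum}(v,h)$ makes at least one iteration, hence at least one recursive call in Line~12. Consequently the traversal never backtracks up the tree without first reaching a leaf ($v=\mn{eoa}$) and producing an output; since each segment between two consecutive outputs touches only $O(1)$ atoms and performs only $O(1)$ lookup-table operations (access and removal in bidirectionally linked lists), the delay is independent of $||D||$. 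I would also record the invariant, immediate by induction on the recursion, that $v,h|_{\bar z}$ is always relevant when $\mn{enum}(v,h)$ is called, so that the accessed list indeed exists.

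The hard part will be correctness, i.e.\ Proposition~\ref{prop:partialalgocorrect}: the algorithm outputs exactly the minimal partial answers to $q_1$ on $D_1$, without repetition. The obstacle is that minimality is a global optimality condition ($\bar c$ is minimal iff no $\bar c'\prec\bar c$ is a partial answer, i.e.\ no wildcard can be turned into a constant), yet the algorithm must decide it locally and under constant delay. I would argue this through the interplay of the database-preferring order and the $\mn{prune}$ subprocedure. The order guarantees that among progress trees sharing an excursion set $V_q$, those using more database constants (fewer wildcards) are visited first, so the \emph{first} full homomorphism assembled for a given tuple uses as many constants as possible. After each output, $\mn{prune}(h)$ removes from the relevant lists precisely the constantly many progress trees $\succ_{\mn{db}}$-dominated by the witnesses just used, which are exactly the trees that would later yield a $\prec$-larger (hence non-minimal) variant or a repeat. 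The three claims to verify are then: \emph{completeness} — every minimal partial answer is produced, since pruning never deletes a tree needed to assemble a genuinely minimal tuple; \emph{soundness} — no non-minimal tuple survives, since any wildcard that could be a constant corresponds to a $\succ_{\mn{db}}$-dominated, already pruned tree; and \emph{no repetition} — each minimal tuple is assembled along a unique surviving choice of progress trees. Showing that this local pruning is faithful to global $\prec$-minimality while keeping every still-needed tree alive is the technical core; the bounded-witness hypothesis is exactly what makes the set of trees to prune of constant size, so that $\mn{prune}$ runs in constant time and the delay bound is preserved.
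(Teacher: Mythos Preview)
Your proposal is correct and follows essentially the same approach as the paper: you run the algorithm of Figure~\ref{fig:enumalg} on the generalized input, use Lemma~\ref{lem:precomputetreeslistslinear} for linear preprocessing, Lemma~\ref{lem:neveremptylem} for constant delay, and argue correctness via the interplay of the database-preferring order with pruning. The paper's only structural refinement is to factor out an explicit intermediate lemma (Lemma~\ref{lem:nopruning}) showing that \emph{without} pruning the algorithm outputs all (not necessarily minimal) partial answers in a $\prec$-respecting order without repetition; this cleanly separates the traversal analysis from the pruning analysis and makes your completeness/soundness claims easier to pin down, but the underlying argument is the same as yours.
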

To prove Proposition~\ref{prop:enumwildcards}, let $q_0(\bar x)$
%\leftarrow \phi(\bar x,\bar
%y)$
be a CQ that is acyclic and free\=/connex acyclic,
$D_0$ a database over the same schema as
$q_0$ that is chase-like with witness
$D'_{1},\dots,D'_{n}$ where
$|\mn{adom}(D'_{i})|$ does not depend on $D$ for $1 \leq i \leq
n$.  Note that this notation is completely compatible with the one
used in the main part of the paper, only that there
$D_0$ is the concrete chase-like database $\mn{ch}^{q_0}_\Omc(D)$.

We first note that we can assume w.l.o.g.\ that the tuple $\bar x$ has
no repeated variables and that $q_0$ contains no constants. In fact,
answer enumeration in the general case can be reduced in time linear in
$||D_0||$ to answer enumeration in this restricted case. We give
the reduction for removing constants. For every atom $R(\bar t)$ in
$q_0$ where $\bar t$ contains at least one constant, introduce a fresh
relation symbol $R_{\bar t}$ whose arity is the number of positions in
$\bar t$ that have a variable. Then replace $R(\bar t)$ in $q_0$ with
$R_{\bar t}(\bar y)$ where $\bar y$ is obtained from $\bar t$ by removing
all constants. Furthermore, for each fact $R(\bar c)$ in $D_0$ such that each
constant $c$ that occurs in some position $i$ of $\bar t$ also occurs
in position $i$ of $\bar c$, add to $D_0$ the fact $R_{\bar t}(\bar c')$ where
$\bar c'$ is obtained from $\bar c$ by removing the constants in
the positions where $\bar t$ has a constant.

We can also assume $q_0$ to be connected.}
{
This is the appendix of the conference version of the paper, which is 
subject to space restrictions. {\color{red}Full proofs can be found 
  in \cite{arxive}.}

\section{Connected Queries in
  Section~\ref{sect:LPAsingleWildcardUpper}}
\label{app:connected}

% We provide some of the missing details for the enumeration algorithm 
% for minimal partial answers with a single wildcard. We start with 
% arguing that, as claimed, we can assume the query to be connected. 
We argue that, when enumerating minimal partial answers with a single
wildcard,
we can indeed assume the query to be connected. 
}
For assume  that we have
found an enumeration algorithm for connected CQs that runs in \dlc. We
can then enumerate $q_0(D_0)^\ast_\Nbf$ in \dlc when $q_0$ has maximal
connected components $p_0,\dots,p_k$, $k >0$, in the following way. We
first do preprocessing for all $p_0,\dots,p_k$. We then start an
algorithm that enumerates the answers to $p_0$. After the first answer
was found, it calls the enumeration algorithm for $p_1$, which upon
finding an answer calls the enumeration algorithm for $p_2$, and so
on. Only when the innermost algorithm found an answer to $p_k$, the
answers are combined and output as an answer to $Q$. Note the
algorithms for $p_1,\dots,p_k$ have to start from scratch multiple
times which is problematic since the data structures computed in the
preprocessing phase are modified in the enumeration phase and we
cannot repeat preprocessing because that would introduce a linear time
delay into the enumeration phase. An easy solution is as follows.
When first enumerating the answers to $p_k$, we store all of them in
the form of a linked list. When we need to enumerate the answers to
$p_k$ again, we can just use that list without any preprocessing.  We
do the same for the subqueries $p_{k-1} \cup p_k$,
$p_{k-2} \cup p_{k-1} \cup p_k$, and so on, which fixes the
problem. The above argument requires a polynomial amount of memory
during the enumeration phase. There is, however, an alternative
approach that avoids this. Our algorithm is such that the data structure $S$ computed in
the preprocessing phase is modified in the enumeration phase,
resulting in a data structure $S'$. However, $S'$ is such that it
could have been used in place of $S$ after the preprocessing phase
without affecting the output of enumeration.  This means that the
preprocessing can simply be skipped before restarting the enumeration
algorithm for a connected subquery.

\ifbool{arxive}{
Recall that from CQ $q_0(\bar x)$ and database $D_0$, we have to
construct a CQ $q_1(\bar x)$ and a database $D_1$ that satisfy
Conditions~(i) to~(iv) from the main part of the paper.
\ifbool{arxive}{Note that
Condition~(ii) implies that $D_1$ is chase-like with a
witness $D'_{1},\dots,D'_{n}$ (not necessarily the same as for $D_0$)
such that $|\mn{adom}(D'_{i})|$ does not depend on $D$ for
$1 \leq i \leq n$. } The construction of $q_1$ and $D_1$ has been used many times in the context of
enumerating answers to conjunctive queries (without ontologies)
with constant delay.  We give a rouch sketch and refer the
interested reader to \cite{berkholz-enum-tutorial} for a very clear
exposition of the details. Exploiting that $q_0$ is acyclic and
free\=/connex acyclic, it is possible to first construct a generalized
hypertree decomposition (GHD) of $q_0$ of width~1 in which the answer
variables constitute a connected (`connex') prefix. Then a bottom-up
pass over the GHD is made, manipulating both $q_0$ and $D_0$ in a
synchronized way.  In particular, one introduces a fresh relation
symbol for each node of the GHD and duplicates facts in the database
accordingly, thus achieving self-join freeness. Moreover, one achieves
the progress condition by dropping facts from the database that
violate it. This also turns the GHD into a join tree. Finally, the
quantified variables can simply be dropped because the progress
condition has already been achieved. If $q_0(D_0)$ is empty, then we
find this out during the construction of $q_1$ and $D_1$ as then all
facts from the database that use the relation symbol from the root of
the GHD have been dropped. We then return `end of enumeration'. The
construction only needs time linear in $||D_0||$.

\lemprecomputetreeslistslinear*
\begin{proof}
  Recall that $D_1$ is chase-like with witness
  $D'_{1},\dots,D'_{n}$. We iterate over $i=1,\dots,n$. For each $i$,
  set $S= \mn{adom}(D'_i) \setminus N$ and iterate over all pairs
  $(q,g)$ with $q$ a subtree of $q_1$ and
  $g:\mn{var}(q) \rightarrow S \cup \{\ast\}$.  Note
  that there are only constantly many such pairs: since the OMQ is
  fixed, there are only constantly many subtrees $q$ of $q_1$, and the
  arity of relations in schema \Sbf is bounded by a constant; since
  $S$ is a guarded set in $D'_i$ (by definition of
  chase-likeness), it follows that the
  cardinality of $S$ is bounded by a constant as well,
  implying that the same is true for the number of maps $g$.

  We disregard pairs % $(q,g)$ with $g(\bar z) \neq h(\bar z)$ and
  % pairs $(q,g)$
  such that Conditions~(1) or~(2) of progress trees is violated.
  Note that this can clearly be checked in constant time, and
  that Condition~(4)
  is in fact satisfied by choice of $g$. We then check whether 
  there is a homomorphism $\widehat g$ from $q$ to $D'_i$ such that for all
  $x,y \in \mn{var}(q)$,
  \begin{itemize}

  \item $\widehat g(x) \in N$ if $g(x) = \ast$ and $\widehat g(x)=g(x)$ otherwise;

  \item $\widehat g(x) = h(y)$ iff $g(x) = g(y)$.
    
  \end{itemize}
  If this is not the case, we disregard $(q,g)$. Otherwise,
  Condition~(3) of progress trees is satisfied. Note that we can check
  the existence of $h$ brute force as there are only constantly many
  potential targets because $|\mn{adom}(D'_{i})|$ does not depend on
  $D$. We then add $(q,g)$ to $\mn{trees}(v,h)$ where $v$ is the root
  of $q$ and $h$ is the restriction of $g$ to the predecessor
  variables in $v$.
  
  Regarding the correctness of this construction, there are two
  important observations. The first is that we really generate all
  progress trees despite considering only homomorphisms $\widehat g$
  into the databases $D'_i$ instead of into $D_1$ as a whole, as
  required by Condition~(3) of progress trees. This is guaranteed by
  the following claim.
%
  % of progress trees are not satisfied. Note
  % that Condition~(4) is in fact satisfied by construction of $(q,g)$.
  % Conditions~(1) and (2) can easily be checked in constant time. The
  % same is true for Condition~(3) since the size of $q$ is bounded by a
  % constant and so is the number of possible targets for the desired
  % homomorphism $h$. The latter follows from the fact that $D_2$ is
  % chase-like w.r.t.\ $N$ with witness $D'_{1}\dots,D'_{n}$ such that
  % $|\mn{adom}(D'_{i})|$ does not depend on $D$ for $1 \leq i \leq n$,
  % and the following claim.
%

  \medskip
  \noindent
  {\sc Claim.}
  Let $(q,g)$ be a progress tree and let $h$
  be a homomorphism from $q$ to $D_1$ such that for all
  $x \in \mn{var}(q)$, $h(x) \in N$ if $g(x) = \ast$ and $h(x)=g(x)$
  otherwise. Then $h$ is a homomorphism from $q$ to $D'_{i}$ for some
  $i \in \{1,\dots,n\}$.

  \medskip \emph{Proof of claim.}  Let $v_0=R_0(\bar y_0)$ be the root of
  $T_q$. Then clearly there is an $i$ such that
  $R_0(h(\bar y_0)) \in D'_i$. % Recall that the constants in
  % $\mn{adom}(D'_i) \setminus N$ are required to form a guarded set
  % in $D'_i$.
  We show that $h$ is a homomorphism from $q$ to $D'_i$.

  More precisely we prove by induction on the depth of $v$ in $T_q$
  that that for all $v=R(\bar y) \in V_q$, $R(h(\bar y)) \in D'_i$.
  The induction start is trivial by choice of $D'_i$. So let $v \in V_q$ be a non-root
  node. By Condition~(2) of progress trees, $v$ contains a
  predecessor variable $x$ with $g(x) = \ast$. Thus $h(x) \in N$. Let
  $v'=R'(\bar y')$ be the predecessor of $v$ in $T_q$.  By induction
  hypothesis, $R'(h(\bar y')) \in D'_i$. Since by definition of
  `chase-like'
  $\mn{adom}(D'_i) \cap \mn{adom}(D'_j) \cap N = \emptyset$, this and
  $h(x) \in N$ yields $R(h(\bar y)) \in D'_i$, as required. This
  finishes the proof of the claim.

  \medskip

  The second observation is that we must be careful to avoid
  duplicates as the same $(q,g)$ can be constructed for $D_i$, $D'_j$
  with $i \neq j$, when
  $\mn{adom}(D'_i) \setminus N$ and $\mn{adom}(D'_j) \setminus N$
  overlap. To identify duplicates, we use a lookup table that
  stores a Boolean value for every progress tree $(q,g)$, indicating
  whether we have already seen this tree or not. Note that a progress
  tree $(q,g)$ can essentially be represented by a list of constants
  from $\mn{adom}(D_1)$ of constant length, and by the remarks on RAMs
  under the uniform cost model in Section~\ref{sect:prelims} we can
  access and modify such a table in constant time.

  It remains to sort the lists $\mn{trees}(v,h)$, each of which is of
  length $O(|\mn{adom}(\Dmc_1)|)$, into the desired order in
  overall linear time. Recall from Section~\ref{sect:prelims} that
  sorting a list of short lists equipped with a strict weak order is
  possible on a RAM in linear time.  Now, a progress tree $(q,g)$ is
  essentially a short list.  We can use $q$ as the first element of
  the list (only constantly many choices), fix a total order
  $z_1,\dots,z_\ell$ on the variables in $q$, and use
  $g(z_1),\dots,g(z_\ell)$ as the remaining list. A strict weak order
  on progress trees is then defined by `$\prec_{\mn{db}}$'. If we now
  sort the progress trees in $\mn{trees}(v,h)$ viewed as short lists,
  then we attain the required order, that is,
  $(q,g) \prec_{\mn{db}} (q',g')$ implies that $(q,g)$ occurs before
  $(q',g')$ in the list.  There are linearly many lists
  $\mn{trees}(v,h)$ to be sorted, each in linear time, but we still
  attain overall linear time since each progress tree occurs in only
  one list and thus the total number of items to be sorted (across all
  lists) is bounded by $O(||D_1||)$.

\medskip

It remains to prove the
  `moreover' part. Let $v$ and $h$ be relevant.  We have to show that
  there is a progress tree $(q,g)$ with root $v$ such that
  $g(\bar z)=h(\bar z)$.  We build this tree together with a
  homomorphism $h'$ that witnesses Point~(3) of the definition of
  progress trees. We start with $q$ consisting only of the single node
  $v$. Since $v$ and $h$ are relevant, $h$ extends to a homomorphism
  from $v$ to $D_1$. We use this homomorphism both for (the initial)
  $g$ and $h'$. We then exhaustively extend $q$, $g$, and $h'$ as
  follows.  If $v'$ is an atom in $q$ such that $g$ maps at least one
  variable in $v'$ to `$\ast$' and $v''$ is a successor of $v$ in
  $T_1$, then include $v'$ in $q$. By the progress condition, we can
  extend the homomorphism $h'$ to the extended $q$.  Extend $g$
  accordingly, using `$\ast$' in place of null constants.  It can be
  verified that Conditions~(1) to~(4) of progress trees are
  satisfied. In particular, Condition~(4) is: due to the shape of
  chase-like
  databases and since $q_1$ is connected, all constants in the
  range of $g$ are from the guarded set $h(\bar x)$, with $\bar x$ the
  variables in~$v$.
\end{proof}

\lemneveremptylem*
\begin{proof}  Assume to the contrary that some
  list $\mn{trees}(v,h)$, with $v,h$ relevant, becomes empty when
  $\mn{prune}(h')$ is called. Assume that among the progress trees
  removed during this call, $(q,g)$ is a minimal progress tree
  regarding `$\prec_{\mn{db}}$'. Since $(q,g)$ was still on
  $\mn{trees}(v,h)$ before the call $\mn{prune}(h')$, no progress tree
  $(q',g')$ with $(q',g') \prec_{\mn{db}} (q,g)$ was removed in
  any previous pruning step.  To obtain a contradiction, it thus
  suffices to argue that the initial list $\mn{trees}(v,h)$ contains a
  progress tree $(q',g')$ with $(q',g') \prec_{\mn{db}} (q,g)$.

  We extract the tree $(q',g')$ from $h'$. Let $q' \subseteq q_1$ be
  smallest so that $v \in q'$ and if $u \in q'$ and $u'$ is a
  successor of $u$ in $T_1$ such that $h(x) \in N$ for at least one
  predecessor variable in $u'$, then $u' \in q'$. We define $g'$ to
  be $h'$ restricted to the variables in $q'$.

  Note that we must have $q' \subseteq q$. If this is not the case, in
  fact, then by definition of $q'$ and since $q$ is a progress tree,
  there is an atom $u$ in $q$ such that $g$ maps all predecessor
  variables in $u$ to $\mn{adom}(D_1)$ while $g'$ maps at least
  one predecessor variable in $u$ to `$\ast$'. But this implies
  that the test  `$(q,g) \succ_{\mn{db}} (q, h'|_{\mn{var}(q)})$' made
  in $\mn{enum}(h')$ fails, in contradiction to $(q,g)$ being removed
  during that call.

  If $q' \subsetneq q$, then $(q',g') \prec_{\mn{db}} (q,g)$ and we
  are done. Otherwise, however, the test
  `$(q,g) \succ_{\mn{db}} (q, h'|_{\mn{var}(q)})$' made in
  $\mn{enum}(h')$ guarantees that Conditions~(a)-(d) from the
  definition of `$\prec_{\mn{db}}$' are satisfied for $(q',g')$
  and $(q,g)$, thus again $(q',g') \prec_{\mn{db}} (q,g)$.
\end{proof}

We now work towards a proof of
Proposition~\ref{prop:partialalgocorrect}. Every partial answer
$\bar a^{\ast}$ to $q_1$ on $D_1$ can be seen as a map
$h_{\bar a^{\ast}}: \mn{var}(q_1) \rightarrow (\mn{adom}(D_1)
\setminus N ) \cup \{ \ast \}$ in an obvious way and vice versa. Let
$\bar a^{\ast}$ be a partial answer.  We say that a progress tree
$(q,g)$ is
%
% Let $\bar a^{\ast}$ be a
% partial answer to $q_1$ on $D_1$, $v$ % = R(\bar y)\in V_1$
% a non-root
% node, $S$ a non-empty non-null guarded set in $D_1$, and
% $(q,g) \in \mn{trees}(v,S)$. %\{ h_{\bar a^{\ast}}(x)\mid x
% % \in \bar y \text{ and } h_{\bar a^{\ast}}(x) \neq \ast \})$.
% Then we
% say that $(q,g)$ is
\emph{realized} in $\bar a ^{\ast}$ if
$h_{\bar a^{\ast}}(x)=g(x)$ for all $x \in \mn{var}(q)$.
\begin{restatable}{lemma}{lemnopruning}
  \label{lem:nopruning}
  Without pruning, the algorithm outputs the
  %{\color{violet} set of all}
% clu: I perceive this as redundant and would prefer not to have it
  (not
  necessarily minimal) partial answers to $q_1$ on $D_1$, without
    repetition. Moreover, the enumeration order respects `$\prec$', that is, if
  partial answer $\bar a^{\ast}$ is output before partial answer
  $\bar b^{\ast}$, then $\bar b^\ast \not\prec \bar a^{\ast}$.
\end{restatable}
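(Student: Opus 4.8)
The plan is to set up a bijection between the complete runs of the algorithm (the calls $\mn{enum}(v,h)$ with $v=\mn{eoa}$ that produce output) and the partial answers to $q_1$ on $D_1$, and then to read off all four assertions from this bijection together with the database\=/preferring order on the $\mn{trees}$ lists. Since pruning is forbidden, the lists $\mn{trees}(v,h)$ are never modified, so by their construction (and Lemma~\ref{lem:precomputetreeslistslinear} for nonemptiness) each such list, for relevant $v,h$, contains \emph{all} progress trees $(q,g)$ with root $v$ and $g(\bar z)=h(\bar z)$, sorted by $\prec_{\mn{db}}$.

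First I would establish a \emph{canonical decomposition} of each partial answer into progress trees. Given a partial answer $\bar a^\ast$, viewed as the map $h_{\bar a^\ast}\colon\mn{var}(q_1)\to(\mn{adom}(D_1)\setminus N)\cup\{\ast\}$ witnessed by a homomorphism $\hat h\colon q_1\to D_1$, call an atom $v$ a \emph{boundary root} if $h_{\bar a^\ast}$ maps all predecessor variables of $v$ to database constants; the tree root $v_0$ is always a boundary root. I would show that every atom lies in exactly one maximal subtree obtained from a boundary root by following successor edges whose shared predecessor variable is mapped to $\ast$, and that each such subtree $(q,g)$ with $g=h_{\bar a^\ast}|_{\mn{var}(q)}$ is a progress tree realized in $\bar a^\ast$: Condition~(1) holds by the boundary\=/root property, (2) by maximality, (3) by restricting $\hat h$, and (4) because $q_1$ is connected and $D_1$ is chase\=/like, so the range of $g$ lies in a single guarded set. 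A crucial structural point, proved from the join\=/tree condition, is that two distinct pieces can share only predecessor variables of a root, which are database\=/valued; hence no null variable is shared across pieces and no database variable is assigned inconsistently.

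Then I would match the traversal to this decomposition. By induction along the pre\=/order I would show that $\mn{nextat}$ always lands exactly on a boundary root: once a progress tree is chosen all its variables become defined, so the next atom with an undefined variable is precisely the next atom whose predecessor variables are already defined and database\=/valued. Thus the pieces chosen along any run partition the atoms of $q_1$ exactly as a decomposition does, and at each boundary root $v$ the corresponding piece lies in $\mn{trees}(v,h|_{\bar z})$ ($v,h|_{\bar z}$ being relevant by the stated invariant). This yields \emph{completeness} (following the decomposition of $\bar a^\ast$ drives the algorithm to $\mn{eoa}$ with $h=h_{\bar a^\ast}$, outputting $\bar a^\ast$) and \emph{soundness} (at $\mn{eoa}$ the accumulated $h$ is the union of the chosen pieces; combining their witnessing homomorphisms from Condition~(3) into a single $\hat h\colon q_1\to D_1$ is possible because the null parts of distinct pieces live in disjoint trees of the chase\=/like $D_1$ and the pieces agree on their shared database\=/valued variables, so $h(\bar x)$ is a partial answer). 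Since the decomposition is unique and the pre\=/order deterministic, distinct runs give distinct maps $h$, and as $\bar x$ lists all variables of $q_1$, distinct maps give distinct tuples; this gives enumeration \emph{without repetition}.

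Finally, for the order claim I would take distinct $\bar a^\ast,\bar b^\ast$ with $\bar b^\ast\prec\bar a^\ast$ and compare their runs, which share a common prefix of chosen pieces and first diverge at a common boundary root $v$, where the current map $h$, hence $h|_{\bar z}$, coincides. There the runs select the decomposition pieces $(q_b,g_b)$ and $(q_a,g_a)$. From $\bar b^\ast\prec\bar a^\ast$, i.e.\ $a_i\in\{b_i,\ast\}$ at every position, I would derive $(q_b,g_b)\prec_{\mn{db}}(q_a,g_a)$: wherever $\bar b^\ast$ keeps a database constant while $\bar a^\ast$ has $\ast$, the $\bar b^\ast$\=/excursion stops where the $\bar a^\ast$\=/excursion continues, so $V_{q_b}\subseteq V_{q_a}$; if the inclusion is strict this is the first clause of $\prec_{\mn{db}}$, and otherwise Conditions~(a)--(d) follow directly from $a_i\in\{b_i,\ast\}$ and $(q_a,g_a)\neq(q_b,g_b)$. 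As the \textbf{forall} loop in Line~10 follows $\prec_{\mn{db}}$, the branch of $\bar b^\ast$ is explored first, so $\bar b^\ast$ is output before $\bar a^\ast$; contrapositively, if $\bar a^\ast$ is output before $\bar b^\ast$ then $\bar b^\ast\not\prec\bar a^\ast$. The main obstacle I anticipate is exactly the decomposition bookkeeping: proving from the join\=/tree structure that cross\=/piece sharing occurs only at database\=/valued root\=/predecessor variables, since this is simultaneously what makes the combined homomorphism well defined (soundness) and what forces $V_{q_b}\subseteq V_{q_a}$ in the order argument.
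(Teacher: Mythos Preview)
Your proposal is correct and takes essentially the same approach as the paper: both establish soundness, completeness, and non-repetition via the decomposition of a partial answer into progress trees (your ``canonical decomposition''), and both derive the order claim by showing $(q_b,g_b)\prec_{\mn{db}}(q_a,g_a)$ at the point where the two runs diverge. Your framing is slightly more explicit and systematic---locating the divergence point directly rather than, as the paper does, via the first variable on which the two answers differ and then the root of its containing piece---but the underlying argument is the same.
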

\begin{proof}
  Let $\bar a^{\ast}$ be a partial answer to $q_1$ on $D_1$. % Then
  % there is a homomorphism $h_{\bar a^{\ast}}$ from $q_4$ to $D_4$
  % such that for every $x \in \mn{var}(q_4)$: if $h_{\bar a^{\ast}}(x)$
  % is a database constant, then $\bar a^{\ast}$ has
  % $h_{\bar a^{\ast}}(x)$ in the $x$-positions and if
  % $h_{\bar a^{\ast}}(x)$ is an existential constant, then
  % $\bar a^{\ast}$ has $\ast$ in the $x$-positions.
  We argue that there is a path in the recursion tree generated by the
  initial call $\mn{enum}(\mn{nextat}_{h_0}(v_0))$ that leads
  to $\bar a^{\ast}$ being output.
%  {\color{violet} the next sentence is not clear at all.}
  This path can be identified by choosing, for each call
  $\mn{enum}(v,h)$, a recursive call made during it that identifies
  the successor on the path. Let $\mn{enum}(v,h)$ be such a call, and
  let $h_{\bar a^{\ast}}(v)= R(\bar c)$.  % We distinguish two
  % cases. If $\bar c$ contains no null, then choose the recursive call
  % in the upper \mn{forall} loop associated with $R(\bar
  % c)$. Otherwise, $h_{\bar a^{\ast}}$ identifies a unique
  % $S$-candidate tree $(q,g)$, with $S$ the set of non-null constants
  % in $\bar c$, and we can choose the corresponding recursive call in
  % the lower \mn{forall} loop. Let us make this a bit more precise.
  Using Condition~(2) of progress trees, it can be shown in
  a straightforward way that the algorithm satisfies the following
  invariant:
  \begin{enumerate}

  \item[($*$)] In each call $\mn{enum}(v',h')$, all predecessor
    variables in $v'$ are mapped to database constants.

  \end{enumerate}
  Hence, the same is true for all predecessor variables $\bar z$ in
  $v$.  % There must be at least one such variable since $q_1$ is
  % connected.  Moreover,
  We identify a candidate tree
  $(q,g) \in \mn{trees}(v,h|_{\bar z})$ as outlined after the
  definition of candidate trees. More precisely, choose
  $V_q \subseteq V_1$ to be the smallest set that contains $v$ and
  such that if $u \in V_q$ and $u'$ is a successor of $u$ in $T_1$
  such that $h_{\bar a^\ast}(x) = \ast$ for at least one predecessor
  variable in $u'$, then $u' \in V_q$. This defines a subtree $q$ of
  $q_1$, and for $g$ the restriction of $h_{\bar a^\ast}$ to the
  variables in~$q$, it can be verified that
  $(q,g) \in \mn{trees}(v,h|_{\bar z})$. %  with $g(\bar z)=h(\bar
  % z)$.
  This identifies the recursive call made during $\mn{enum}(v,h)$
  that we follow on the path towards the output of $\bar a^\ast$. It
  can be verified that the map $h$ built up on this path is exactly
  $h_{\bar a^\ast}$, and thus $\bar a^\ast$ is indeed output at the
  end of the path.
  
  Now assume that the algorithm outputs a tuple $\bar a^{\ast}$ and
  let
  $h_{\bar a^{\ast}}: \mn{var}(q_1) \rightarrow (\mn{adom}(D_1)
  \setminus N ) \cup \{ \ast \}$ be the associated map. We argue that
  we can obtain from $h_{\bar a^{\ast}}$ a homomorphism $h$ from $q_1$
  to $D_1$ such that for all $x \in \bar x$,
  (i)~$h_{\bar a^{\ast}}(x) \in \mn{adom}(D_1)$ implies
  $h(x)=h_{\bar a^{\ast}}(x)$ and (ii)~$h_{\bar a^{\ast}}(x) = \ast$
  implies $h(x) \in N$. Consequently, $\bar a^{\ast}$ is a partial
  answer to $q_1$ on $D_1$. Let
  $\mn{enum}(v_0,h_0),\dots,\mn{enum}(v_{\ell-1},h_{\ell-1})$ be the
  sequence of recursive calls that led to the output of
  $\bar a^{\ast}$ (during the last call). Note that
  $v_0,\dots,v_{\ell-1}$ are not necessarily all nodes of $V_1$ due to
  the use of progress trees and \mn{nextat}. We define a sequence of
  partial mappings $h'_0 \subseteq \cdots \subseteq h'_\ell$ from
  $\mn{var}(q_1)$ to $\mn{adom}(D_1)$ that for all
  $x \in \mn{var}(q_1)$ on which they are defined satisfy
  Conditions~(i) and~(ii).  Then $h'_\ell$ will be a total function
  and
  thus the desired
  homomorphism $h$ from $q_1$ to $D_1$. Start with setting
  $h'_0=\emptyset$. Now assume that $h'_i$ is already defined and
  consider the recursive call $\mn{enum}(v_i,h_i)$.  By definition of
  the algorithm, there is a progress tree
  $(q,g) \in \mn{trees}(v_i,h_i|_{\bar z})$, with $\bar z$ the
  predecessor variables of $v_i$, such that $h_{\bar a^{\ast}}(z)=g(z)$ for all
  $z \in \mn{var}(q)$. % Also be definition of the algorithm
  % $h_i$ agrees with $h$ on the joint domain, and thus
  % $h(\bar z)=g(\bar z)$. 
  By Property~(3) of progress trees, there is a homomorphism $g'$ from
  $q$ to $D_1$ such that for all $x \in \mn{var}(q)$, $g'(x) \in N$ if
  $g(x) = \ast$ and $g'(x)=g(x)$ otherwise. We define
  $h'_{i+1} = h'_i \cup g'$. It can be verified that the map
  $h=h'_\ell$ is indeed the desired homomorphism from $q_1$ to $D_1$.
 
\smallskip 
  
We next argue that there are no repetitions. Consider again the
recursion tree generated by the initial call
$\mn{enum}(\mn{nextat}_{h_0}(v_0)$, and consider two paths
in the tree that lead to the output of partial answers $\bar a_1^\ast$
and $\bar a_2^\ast$. Let $\mn{enum}(v,h)$ be the call in which the two
paths diverge, that is, $h_{\bar a_1^*}$ and $h_{\bar a_2^*}$ both agree
with $h$ on the variables on which $h$ is defined, but during the
call $\mn{enum}(v,h)$ the two paths follow recursive calls made for
different $(p_1,g_1),(p_2,g_2)\in \mn{trees}(v,h|_{\bar z})$. We argue
that then, $h_{\bar a_1^*}$ and $h_{\bar a_2^*}$ must also be different,
and thus so are $\bar a_1^\ast$ and $\bar a_2^\ast$. This is clear if
$p_1 =p_2$ as this implies $g_1 \neq g_2$ and $h$ is extended by
setting $h=h \cup g_1$ and $h=h \cup g_2$, respectively. Now assume
that $p_1 \neq p_2$. Then there must be nodes $u,u'$ in $T_{q_1}$ with
$u'$ a successor of $u$ and such that $u$ is part of $T_{p_1}$ and
$T_{p_2}$, but $u'$ is part of $T_{p_1}$, but not part of $T_{p_2}$
(or vice versa). By Condition~(2) of progress trees, $g_1$ maps some
predecessor variable of $u'$ to $\ast$, but $g_2$ does
not. Consequently, the extensions $h=h \cup g_1$ and $h=h \cup g_2$
differ on some variable on which they are both defined.
  
\smallskip 
  
For the `moreover' part, let $\bar a^{\ast}, \bar b^{\ast}$ be partial
answers with $\bar b^{\ast} \prec \bar a^{\ast}$.  Since $\bar a^\ast$
is a partial answer, there must be a homomorphism $h_{\bar a}$ from
$q_1$ to $D_1$ such that for all $x \in \mn{var}(q_1)$,
$h_{\bar a}(x) = h_{\bar a^{\ast}}(x)$ if
$h_{\bar a^{\ast}}(x) \notin N$ and $h_{\bar a}(x) \in N$ if
$h_{\bar a^{\ast}}(x)=\ast$, and analogously for $\bar b^\ast$ and a
homomorphism $h_{\bar b}$.  Since $\bar b^{\ast} \prec \bar a^{\ast}$,
there must further be a variable $x_0$ with $h_{ \bar b^{\ast}}(x_0)$
a database constant and \mbox{$h_{\bar a^{\ast}}(x_0) = \ast$}.  Let
$v\in V_1$ be the first atom that contains such an $x_0$ in a
pre-order tree walk over~$T_1$. Let $(q_a,g_a)$ be the progress tree
realized in $\bar a^\ast$ with $v \in q_a$, and let $u$ be the root of
$q_a$. Then $h_{\bar a^\ast}(z) \neq \ast$ for all predecessor variables
$z$ in $u$. Since $\bar b^{\ast} \prec \bar a^{\ast}$, the same is
true for $h_{\bar b^\ast}$. Due to Condition~(2) of progress trees, it
follows that there is a progress tree $(q_b,g_b)$ realized in
$\bar b^\ast$ such that the root of $q'$ is~$u$. From
$\bar b^{\ast} \prec \bar a^{\ast}$, it follows that
$V_{q_a} \subsetneq V_{q_b}$ or $V_{q_a}=V_{q_b}$ and $g_b(x)=\ast$
implies $g_a(x)=\ast$ for all $x \in \mn{var}(q_a)$.  From
$h_{ \bar b^{\ast}}(x_0) \neq \ast$ and
$h_{\bar a^{\ast}}(x_0) = \ast$, it further follows that
$g_b(x_0) \neq \ast$ and \mbox{$g_a(x_0)=\ast$}.  Consequently, $(q_a,g_a)$
is before $(q_b,g_b)$ in the list $\mn{trees}(u,h_{\bar z})$ where
$h_{\bar z}=h_{ \bar b^{\ast}}|_{\bar z}=h_{ \bar a^{\ast}}|_{\bar z}$
and thus  the recursive call
that leads to the output of $h_{\bar b^{\ast}}$ takes place before the recursive
call that leads to the output of $h_{\bar a^{\ast}}$.
\end{proof}
Let $\bar a^{\ast}, \bar b^{\ast}$ be partial answers to $q_1$ on
$D_1$. We say that $\bar a^{\ast}$ \emph{prunes}~$\bar b^{\ast}$ if
there is a %non-root
node $v=R(\bar y) \in V_1$ with predecessor
variables $\bar z$ % such that
% $h_{\bar b^{\ast}}(\bar z) \subseteq \mn{adom}(D_1)$,
% $h_{\bar b^{\ast}}(\bar y) \cap \{ \ast \} \neq \emptyset$, and
% there is a
and a progress tree
$(q,g) \in \mn{trees}(v,h_{\bar b^{\ast}}|_{\bar z})$ that is realized
in $\bar b^\ast$ and removed from
$\mn{trees}(v,h_{\bar b^{\ast}}|_{\bar z})$ when
$\bar a^{\ast}$ is output.

\proppartialalgocorrect*
\begin{proof}
  We first argue that whenever a partial answer is pruned, then it is
  not a minimal partial answer. Assume that partial answer
  $\bar a^{\ast}$ prunes partial answer $\bar b^{\ast}$.  Then there
  is a $v \in V_1$ with predecessor variables $\bar z$ and a progress
  tree $(q,g) \in \mn{trees}(v,h_{\bar b^{\ast}}|_{\bar z})$ such
  that
  % $h_{\bar b^{\ast}}(\bar z) \subseteq \mn{adom}(D_1)$,
  % $h_{\bar b^{\ast}}(y) \cap \{ \ast \} \neq \emptyset$, and
  $(q,g)$ is realized in $\bar b^\ast$ and removed from
  $\mn{trees}(v,h_{\bar b^{\ast}}|_{\bar z})$ when $\bar a^{\ast}$ is
  output. Thus $(q,h_{\bar a^\ast}|_{\bar z}) \prec_{\mn{db}} (q,g)$.
  We call the variables in $\bar z$ \emph{root variables} of $q$.  A
  \emph{fringe variable} of $q$ of is a variable $x$ that occurs in a
  leaf $u$ of $T_q$ that has a successor $u'$ in $T_1$ in which $x$
  also occurs. We observe that
  \begin{itemize}

  \item[($*$)] for all variables $x$ that are root variables or fringe
    variables of $q$, $h_{\bar a^\ast}(x)=g(x)$.

  \end{itemize}
  In fact, this follows from
  $(q,h_{\bar a^\ast}|_{\bar z}) \prec_{\mn{db}} (q,g)$ and the fact
  that, by Conditions~1 and~2 of progress trees, $g(x) \neq
  \ast$ for all variables $x$ mentioned in ($*$).

  Let ${\bar c}^{\ast}$ be obtained from $\bar b^{\ast}$ by setting
  $h_{{\bar c}^{\ast}}(x)=h_{\bar a^{\ast}}(x)$ for all
  $x \in \mn{var}(q)$. It can be verified that ${\bar c}^{\ast}$ is a
  partial answer to $q_1$ on $D_1$. In particular, there must be a
  homomorphism $h_{\bar b}$ from $q_1$ to $D_1$ such that
  $h_{\bar b^{\ast}}(x)=\ast$ iff $h_{\bar b}(x) \in N$ for all
  $x \in \mn{var}(q_1)$. By Condition~3 of progress trees, there is a
  homomorphism $h$ from $q$ to $D_1$ such that $h(x) \in N$ if
  $h_{\bar b^{\ast}}=\ast$ and $h(x)=h_{\bar b^{\ast}}(x)$ otherwise.
  Let $h_{\bar c}$ be obtained from $h_{\bar b}$ by setting
  $h_{\bar c}(x)=h(x)$ for all $x \in \mn{var}(q)$. Due to~($*$),
  $h_{\bar c}$ is a homomorphism, and thus ${\bar c}^{\ast}$ is a
  partial answer to $q_1$ on $D_1$.  Moreover, the construction
  of $\bar c^*$ yields 
  ${\bar c}^{\ast} \prec_{\mn{db}} \bar b^{\ast}$, and thus we have shown that
  $\bar b^{\ast}$ is not a minimal partial answer.

We next show that if a partial answer is not a minimal partial answer,
then it is pruned. Assume that $ \bar b^{\ast}$ is not a minimal partial
answer. Then there is a minimal partial answer $\bar a^{\ast}$ with
$\bar a^{\ast} \prec \bar b^{\ast}$. We argue that $\bar a^{\ast}$
prunes $ \bar b^{\ast}$. By Lemma~\ref{lem:nopruning}, $\bar a^{\ast}$
is output before $\bar b^{\ast}$ when no pruning takes place.  Since
$\bar a^{\ast} \prec \bar b^{\ast}$ , there is an
$x_0 \in \mn{var}(q_1)$ with $h_{\bar a^{\ast}}(x_0)$ a database
constant %, say $b_x$,
and $h_{\bar b^{\ast}}(x_0)=\ast$. Let $v \in V_1$ be the first node
encountered in a pre-order tree walk over $T_1$ that contains $x_0$,
and let $(q,g)$ be the progress tree realized in $\bar b^\ast$ with
$v \in q$. Further let $u$ be the root of $q$ and $\bar z$ be the
predecessor variables of $u$.
% Define
% $g = h_{\bar a^\ast}|_{\mn{var}(q)}}$.
It follows from $\bar a^{\ast} \prec \bar b^{\ast}$ and the fact that
$x_0$ occurs in $v$ occurs in $q$ that
$(q,g) \succ_{\mn{db}} (q,h_{\bar a^\ast}|_{\mn{var}(q)})$. Thus
$(g,q)$ is removed from $(q,g) \in \mn{trees}(u, h)$, with
$h=h_{\bar a^{\ast}}|_{\bar z}= h_{\bar b^{\ast}}|_{\bar z}$, when
$\bar a^{\ast}$ is output (which is the case since $\bar a^{\ast}$ is
a minimal partial answer and thus not pruned), and therefore
$\bar a^{\ast}$ prunes $\bar b^{\ast}$.
\end{proof}

\section{Proofs for Section~\ref{sect:enummulti}}
\label{app:multiwild}

In the main part of the paper, we have declared the goal to be the
enumeration of $q_0(\mn{ch}^{q_0}_\Omc(D))^{\Wmc}_\Nbf$. Here, we
actually prove something slightly more general, as follows (this
parallels what is done in Appendix~\ref{app:singlewild}).
\begin{proposition}
  \label{prop:enummultiwildcards}
  For every CQ $q(\bar x)$ that is acyclic and free\=/connex acyclic, enumerating
  the answers $q(D)^{\Wmc}_\Nbf$ is in \dlc for databases $D$ and
  sets of nulls $N \subseteq \mn{adom}(D)$ such that $D$ is chase-like
  with witness $D_{1},\dots,D_{n}$ where
  $|\mn{adom}(D_{i})|$ does not depend on $D$ for $1 \leq i \leq n$.
\end{proposition}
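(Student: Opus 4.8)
The plan is to prove Proposition~\ref{prop:enummultiwildcards} by running exactly the algorithm of Figure~\ref{fig:enum-alg-strong}, observing that neither its correctness nor its resource analysis uses any property of the database $D$ beyond being chase-like with components of size independent of the input. Indeed, the proof of Theorem~\ref{thm:upperGspartial} in the main part is just the instantiation of this argument with $D=\mn{ch}^{q_0}_\Omc(D)$, which by Lemma~\ref{lem:pseudochasestructure} is chase-like with such a witness. The algorithm rests on two black-box subroutines: an algorithm $A_1$ that enumerates the minimal single-wildcard partial answers $q(D)^{\ast}_\Nbf$ in \dlc, and an algorithm $A_2$ that all-tests the (not necessarily minimal) multi-wildcard partial answers $q(D)^{\Wmc,\not\prec}_\Nbf$ in \dlc. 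For $A_1$ I would invoke Proposition~\ref{prop:enumwildcards}, which already provides single-wildcard enumeration in precisely the abstract chase-like setting required here. It therefore remains to provide $A_2$ in this setting and to rerun the correctness and complexity analysis.

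First I would establish $A_2$. Mirroring the machinery behind Lemma~\ref{lem:precomputetreeslistslinear}, I would precompute in time linear in $||D||$ a multi-wildcard analogue of the progress-tree data structures that record how a homomorphism from $q$ into $D$ can cross from the database part of $D$ into the null part attached at each guarded set; this precomputation is linear precisely because each witnessing component $D_i$ has size independent of $D$, so that at every guarded set there are only constantly many relevant excursions, each checkable by brute force. To test a candidate tuple $\bar a^\Wmc$ in constant time, I would split the check into a database part and a null part. The database part I would handle by running, in parallel, \dlc all-testing algorithms for complete answers to the relevant subqueries of $q$ obtained by dropping atoms; these are available because such algorithms exist for free-connex acyclic CQs by Proposition~\ref{prop:allTestingCompleteUpper} and lift to the chase-like setting by the standard device of adding a unary predicate $P_{\mn{db}}$ that marks the constants in $\mn{adom}(D)\setminus N$ and constraining the relevant answer variables accordingly. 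The null part I would handle by matching the equality-and-inequality pattern among the wildcards of $\bar a^\Wmc$ against the precomputed multi-wildcard progress trees. Only the tree shape of the null part and the constant size of the components are used, so the construction transfers verbatim from $\mn{ch}^{q_0}_\Omc(D)$ to an arbitrary chase-like $D$.

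With $A_1$ and $A_2$ in place, I would rerun the algorithm of Figure~\ref{fig:enum-alg-strong} and appeal to Lemma~\ref{lem:multiwildcorr} for correctness; that lemma's proof is generic and uses nothing about $D$ beyond the availability of $A_1$ and $A_2$ and the fact that $q(D)^{\Wmc}_\Nbf$ is obtained from $q(D)^{\ast}_\Nbf$ via multi-wildcard balls and cones. For the resource bounds, the preprocessing phase simply runs the linear preprocessing phases of $A_1$ and $A_2$. For the delay, since $|\bar x|$ is a fixed constant, both $B^{\Wmc}(\bar a^{\ast})$ and $\mn{cone}^\Wmc(\bar a^{\ast})$ have constantly bounded cardinality, the intersections with $q(D)^{\Wmc,\not\prec}_\Nbf$ in Lines~3 and~7 are computed in constant time via $A_2$, and the lookup table $F$ together with the bidirectionally linked list $L$ provide constant-time membership tests and removals in the RAM model; hence each outer iteration, including pruning, takes only constant time and yields exactly one output, giving constant delay.

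I expect the main obstacle to be the abstract reconstruction of $A_2$: one must set up the multi-wildcard progress trees so that a single test reduces to constant time while correctly distinguishing wildcards forced to denote the same null from those that may differ, and one must verify that running all-testing algorithms for the subqueries in parallel faithfully captures every way a homomorphism can place answer variables partly on database constants and partly on nulls. Everything else — the cone-versus-ball completeness issue highlighted by Example~\ref{ex:multiwildproblem}, the pruning that enforces minimality via `$\prec$', and the constant-delay bookkeeping through $F$ and $L$ — is handled generically and carries over from the main-part argument without change.
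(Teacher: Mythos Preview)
Your proposal is essentially the paper's own proof: it also invokes Proposition~\ref{prop:enumwildcards} for $A_1$, builds $A_2$ via multi-wildcard progress trees together with parallel \dlc all-testing for the subqueries of $q$ (this is Proposition~\ref{lemma:SPA-alltest} in the appendix), and then runs the algorithm of Figure~\ref{fig:enum-alg-strong} with correctness by Lemma~\ref{lem:multiwildcorr}. One small deviation: the $P_{\mn{db}}$ device you mention is not how the paper ensures that the subqueries are free-connex acyclic; instead the paper first applies the standard preprocessing that eliminates all quantified variables (the same transformation to $q_1$ used in Section~\ref{sect:LPAsingleWildcardUpper}), after which every subquery has only answer variables and is therefore trivially free-connex acyclic, so Proposition~\ref{prop:allTestingCompleteUpper} applies directly.
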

To prove Proposition~\ref{prop:enummultiwildcards}, we use %(almost)
as a blackbox the \dlc algorithm for the enumeration of minimal
partial answers with a single wildcard presented in the previous
section. In addition, we use a \dlc algorithm for all-testing
of (not necessarily minimal) partial answers with multi-wildcards. 
}

\ifbool{arxive}{
\subsection{All-Testing Partial Answers with Multi-Wildcards}

We show that all-testing of (not necessarily minimal) partial answers
with multi-wildcards is in \dlc. The following proposition
makes this precise.
}
{
\section{All-Testing Partial Answers with Multi-Wildcards}
\label{app:alltestingmulti}

The algorithm for enumerating minimal partial answers with 
multi-wildcards presented in Section~\ref{sect:enummulti} relies on a 
\dlc algorithm for all-testing (not necessarily minimal) partial answers 
with multi-wildcards. In this section, we develop such an algorithm. 
}
\ifbool{arxive}{
\begin{proposition}
    \label{lemma:SPA-alltest}
    % Let $Q \in (\class{G},\class{CQ})$ be acyclic and
    % free\=/connex. Then all-testing partial
    % answers to $Q$ with multi-wildcards is in \cdlin.
    For every CQ $q(\bar x)$ that is acyclic and free\=/connex acyclic,
    all-testing of the answers $q(D)^{\Wmc,\not\prec}_\Nbf$ is in
    \dlc for databases $D$ and sets of nulls
    $N \subseteq \mn{adom}(D)$ such that $D$ is chase-like
    with witness $D_{1},\dots,D_{n}$ where $|\mn{adom}(D_{i})|$ does
    not depend on $D$ for $1 \leq i \leq n$.
\end{proposition}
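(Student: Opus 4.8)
The plan is to reduce multi-wildcard all-testing to constant-time tests for \emph{complete} answers to subqueries of $q$, exploiting that in a chase-like database the null trees have bounded size. Fix the acyclic and free-connex acyclic CQ $q(\bar x)$ with $\bar x = x_1 \cdots x_k$; since $q$ is fixed, $k$ is a constant. Deciding $\bar a^\Wmc \in q(D)^{\Wmc,\not\prec}_\Nbf$ amounts to deciding whether there is a homomorphism $h$ from $q$ to $D$ with $h(x_i) = a_i$ at every position $i$ where $a_i \in \mn{adom}(D)\setminus N$, with $h(x_i) \in N$ at every position where $a_i$ is a wildcard, and with $h(x_i) = h(x_{i'})$ iff $a_i = a_{i'}$ whenever both positions carry wildcards. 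Each $\bar a^\Wmc$ induces a \emph{pattern}: the set of constant positions together with the partition of the wildcard positions into classes. The number of patterns depends only on $k$, so it is constant, and at test time the pattern and the constants are read off in $O(1)$.

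First I would establish the decomposition underlying the algorithm. By Lemma~\ref{lem:pseudochasestructure} the input $D$ is chase-like with a witness $D_1,\dots,D_n$ in which $|\mn{adom}(D_i)|$ is bounded independently of $D$; thus $D$ is a \emph{database part} (the facts without nulls) with pairwise null-disjoint \emph{null trees} of bounded size attached at guarded sets. Given a realizing homomorphism $h$, group the variables mapped to nulls by the null tree they enter; each such \emph{bundle} maps into a single $D_i$ and is attached to the database part only at the guarded set of $D_i$. Two variables carrying the same wildcard map to the same null, hence into the same bundle, and two wildcard classes landing in different bundles automatically map to distinct nulls, since the null trees are null-disjoint. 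Consequently the only wildcard equalities and inequalities that are not automatic are \emph{local} to one bounded null tree. I would then define a multi-wildcard analogue of the progress trees of Section~\ref{sect:LPAsingleWildcardUpper} recording, for a guarded set, how a bounded family of query fragments can be realized inside its null tree together with the induced wildcard-class pattern. Since null trees are bounded and $q$ is fixed, there are only constantly many such \emph{bundle templates}, and for each guarded set one can decide by brute force which templates it realizes; iterating over the $n$ null trees, this is computable in time linear in $||D||$. For each bundle template I would add a fresh marker fact on every guarded set realizing it.

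The remaining step handles the database part. For each of the constantly many \emph{global templates}---a choice of bundles (with their attachment guarded sets inside $q$) covering exactly the null-mapped part of $q$ and consistent with the pattern---I would form the database-part subquery $q^{\mn{db}}$ consisting of the atoms of $q$ not absorbed into a bundle, augmented by marker atoms at the attachment positions, with answer variables the constant positions and all attachment variables quantified. In the preprocessing phase I would run, in parallel, a \dlc all-testing algorithm for complete answers to each such $q^{\mn{db}}$ via Proposition~\ref{prop:allTestingCompleteUpper}. At test time, for the pattern of $\bar a^\Wmc$ I would iterate over the constantly many compatible global templates and, for each, test in $O(1)$ whether the tuple of given constants is a complete answer to the corresponding $q^{\mn{db}}$; I accept iff some test succeeds. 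Correctness follows by matching realizing homomorphisms with succeeding templates in both directions, using the decomposition above.

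The hard part will be twofold. First, the delicate structural lemma that every realizing homomorphism corresponds to exactly one global template, and conversely that a succeeding template yields a genuine homomorphism; this includes the uniform treatment of wildcard classes whose variables lie in several query fragments attached at the same guarded set and---if $q$ is disconnected---of classes shared across connected components, which must be funnelled into a common bundle. Second, and most importantly, I must verify that every $q^{\mn{db}}$ arising from a global template is free-connex acyclic, so that Proposition~\ref{prop:allTestingCompleteUpper} indeed applies in \dlc: this is not automatic, since dropping atoms can destroy acyclicity, and the argument has to use that each bundle is attached at a single guarded set and that the marker atoms sit on already-guarded sets, so that removing bundles and adding markers preserves the free-connex acyclicity inherited from $q$.
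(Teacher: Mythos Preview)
Your plan is essentially the paper's approach, but you are missing one preprocessing step that dissolves your second ``hard part'' entirely. Before doing anything else, the paper replaces $q$ and $D$ by a full CQ (no quantified variables) and a database with the same Gaifman graph and the same answer set; this is the standard transformation for acyclic free-connex acyclic CQs already invoked in Section~\ref{sect:LPAsingleWildcardUpper}. Once $q$ is full, \emph{every} subquery obtained by dropping atoms is again full and hence trivially free-connex acyclic, so Proposition~\ref{prop:allTestingCompleteUpper} applies immediately---no structural argument about bundles sitting on guarded sets is needed. Your worry that dropping atoms might destroy free-connex acyclicity is legitimate for general $q$ but vanishes after this reduction.

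The remaining differences are encoding choices. You push the null-side information into marker facts on guarded sets and marker atoms in $q^{\mn{db}}$; the paper instead precomputes a Boolean lookup table $\mn{nullhom}$ indexed by \emph{sets} of multi-progress trees (your bundle templates), and at test time partitions the multi-progress trees realized in $\bar a^\Wmc$ by the equivalence ``share a wildcard'' and checks each class against $\mn{nullhom}$. The database side is then a single subquery $q'$ consisting of all atoms whose variables are all constant-positions, tested against the tuple of constants in $\bar a^\Wmc$. Note in particular that the root variables of each multi-progress tree carry database constants in $\bar a^\Wmc$ (Condition~(1) of multi-progress trees), so they are part of the tested tuple rather than existentially quantified; your phrase ``all attachment variables quantified'' suggests a small confusion here that you should straighten out.
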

We remind the reader that when $Q(\bar x)=(\Omc,\Sbf,q) \in
(\class{G},\text{CQ})$, $D=\mn{ch}^q_\Omc(D_0)$, and $N=\mn{adom}(D)
\setminus \mn{adom}(D_0)$, then the set of partial answers with
multi-wildcards to $Q$ on $D_0$ is not necessarily identical to
$q(D)^{\Wmc,\not\prec}_\Nbf$. In fact, one can show that for the former
all-testing in \dlc is not possible unless BMM can be done in
quadratic time, and thus it is vital that we work with
$q(D)^{\Wmc,\not\prec}_\Nbf$.

\medskip
}{
\begin{proposition}
    \label{lemma:SPA-alltest}
    % Let $Q \in (\class{G},\class{CQ})$ be acyclic and
    % free\=/connex. Then all-testing partial
    % answers to $Q$ with multi-wildcards is in \cdlin.
    For every CQ $q(\bar x)$ that is acyclic and free\=/connex
    acyclic, all-testing of the answers $q(D)^{\Wmc,\not\prec}_\Nbf$ is
    in \dlc for databases $D$ of the form $\mn{ch}^q_\Omc(D_0)$ and
    sets of nulls $N = \mn{adom}(D) \setminus \mn{adom}(D_0)$.
\end{proposition}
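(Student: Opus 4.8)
The plan is to reduce all-testing of $q(D)^{\Wmc,\not\prec}_\Nbf$ to a combination of the progress-tree machinery from the single-wildcard case and the \cdlin all-testing of complete answers to acyclic, free-connex acyclic CQs (Proposition~\ref{prop:allTestingCompleteUpper}). First I would normalize exactly as for Theorem~\ref{thm:upperGpartial}: assume w.l.o.g.\ that $q$ is connected and pass to the pair $q_1,D_1$ with $q_1$ self-join free, acyclic, free-connex acyclic and \emph{without quantified variables}, equipped with a rooted join tree $T_1$, and with $D_1$ chase-like with a witness whose gadgets have size independent of $D$. Since $q_1$ has no quantified variables, a multi-wildcard tuple $\bar a^\Wmc$ is directly a candidate assignment, and (after a coherence check) testing membership in $q_1(D_1)^{\Wmc,\not\prec}_\Nbf$ amounts to deciding whether there is a homomorphism $h\colon q_1\to D_1$ that fixes every answer variable whose position carries a database constant, maps every $\Wmc$-class of positions to a single common element of $\mn{adom}(D_1)$, and maps two null-valued positions to the same null exactly when they share a wildcard. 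The last point yields the crucial structural observation that each $\Wmc$-class witnessed by nulls lives inside a single gadget, while distinct classes witnessed by nulls automatically receive distinct nulls once they lie in different gadgets.

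Second, I would introduce \emph{multi-wildcard progress trees}, the refinement of the progress trees of Section~\ref{sect:LPAsingleWildcardUpper} in which the single symbol `$\ast$' is replaced by symbols from $\Wmc$ that record the actual identity of the nulls reached inside a gadget (same null iff same wildcard). Because each excursion is confined to one gadget and gadgets have bounded size, there are only constantly many such trees per entry point, and the argument of Lemma~\ref{lem:precomputetreeslistslinear} computes all of them, indexed by the root atom and its boundary constants, in time linear in $\|D_1\|$. From these lists I would derive, for every fixed excursion \emph{shape} (a constant-size object: the subtree of $T_1$ together with the induced wildcard-equality pattern on the answer positions it covers), a relation over the guarded sets of $D_1$ recording precisely which boundary tuples support an excursion of that shape; these relations are again computable in linear time.

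Third, the test itself. For a given $\bar a^\Wmc$ there are only constantly many ways --- bounded by the arity of $q_1$ --- to split its positions into a \emph{database part} and a collection of null excursions, each excursion carrying one or several $\Wmc$-classes and attached at a guard atom of $T_1$. For each such decomposition I would reduce the question to a single all-testing query for complete answers on a subquery $q'$ of $q_1$ obtained by dropping the excursion atoms and attaching, at each entry atom, the precomputed support relation for the excursion hung there. Running the \cdlin all-testing algorithm of Proposition~\ref{prop:allTestingCompleteUpper} for all these constantly many augmented subqueries in parallel during preprocessing, the test fixes the constants coming from the constant positions of $\bar a^\Wmc$, feeds them to the appropriate oracle, and answers `yes' iff some decomposition succeeds. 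The support relations are attached over variable sets that are subsets of the variables of the entry atoms, so attaching them keeps the (full) subquery acyclic and hence free-connex acyclic, which is what makes Proposition~\ref{prop:allTestingCompleteUpper} applicable; preprocessing is then linear and each test runs in constant time.

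I expect the main obstacle to be the constant-time coordination between the database part and the excursions, because the constants at which an excursion is glued to the database part are themselves produced by the \emph{existential} database-part homomorphism and are not given by $\bar a^\Wmc$. Folding the excursion-support relations into the subquery, as above, is exactly the device that removes this circularity: a single complete-answer test on the augmented query simultaneously chooses the gluing constants and certifies that the excursions exist. The genuinely delicate point that remains is a $\Wmc$-class witnessed not by nulls but by a common \emph{database} constant: this imposes an equality constraint among answer variables that cannot be expressed by merely dropping atoms, and care is needed to incorporate it --- by choosing a representative position and arguing, using self-join freeness and the shape of $T_1$, that the resulting query can still be tested within the free-connex acyclic regime --- without destroying the acyclicity on which the complete-answer oracle relies.
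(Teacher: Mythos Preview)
Your approach---multi-wildcard progress trees plus black-box all-testing of complete answers on subqueries---is exactly the paper's. But you over-engineer it and then chase two difficulties that do not exist, while glossing over the one that does.

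The first phantom is the ``existential database-part homomorphism.'' After the normalization you yourself invoke, $q_1$ has \emph{no quantified variables}: every variable is an answer variable, so $\bar a^\Wmc$ is already a total assignment $h_{\bar a^\Wmc}\colon\mn{var}(q_1)\to(\mn{adom}(D_1)\setminus N)\cup\Wmc$. In particular, the constants at which an excursion is glued to the database part---the predecessor variables of the root of the corresponding progress tree---are \emph{read off directly from} $\bar a^\Wmc$. There is no circularity to break, and hence no need to fold support relations into the subquery. The paper simply tests the two parts independently: it lets $q'$ be the subquery of atoms all of whose variables are non-wildcard in $\bar a^\Wmc$, projects $\bar a^\Wmc$ to those variables to get a tuple $\bar a$, and asks whether $\bar a\in q'(D)$. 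Since $q'$ is full it is automatically free-connex acyclic (hang every atom below the guard $R(\bar x')$), so Proposition~\ref{prop:allTestingCompleteUpper} applies to \emph{every} subquery without any work on your side; you do not need to worry about preserving acyclicity.

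The second phantom is a $\Wmc$-class witnessed by a database constant. By the definition of $q(D)^{\Wmc,\not\prec}_\Nbf$, a partial answer arises as $\bar b^\Wmc_\Nbf$ for some $\bar b\in q(D)$ by replacing exactly the nulls with wildcards; positions carrying a wildcard \emph{must} be mapped to nulls. So this case never occurs.

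The genuine subtlety you underplay is that two \emph{different} excursion subtrees may share a wildcard: $\ast_i$ can appear in variables belonging to two disconnected pieces of the wildcard part of $T_1$, and then both pieces must embed into the \emph{same} gadget $D_j$ with the shared positions hitting the same null. Your ``attach one support relation per entry atom'' scheme, as written, tests each subtree in isolation and misses this constraint. The paper handles it by precomputing a lookup table $\mn{nullhom}$ indexed not by single multi-progress trees but by \emph{sets} of them (of size at most $|\mn{var}(q)|$), storing $1$ iff the whole set embeds into a single gadget compatibly; at test time one groups the realized trees by the equivalence relation ``share a wildcard'' and looks up each equivalence class.
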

}

To prove Proposition~\ref{lemma:SPA-alltest}, fix a CQ $q(\bar x)$
over schema \Sbf that is acyclic and free\=/connex acyclic, and let
$D$ be an \Sbf-database and $N$ a set of nulls
satisfying the conditions from Proposition~\ref{lemma:SPA-alltest}. In
time linear in $||D||$ we can convert $q$ and $D$ into a CQ $q'(\bar
x)$ without quantified variables and a database $D'$ such that $D$ and
$D'$ have the same Gaifman graph and $q(D)=q'(D')$, and thus also
$q(D)^{\Wmc,\not\prec}_\Nbf=q'(D')^{\Wmc,\not\prec}_\Nbf$. Note that we
achieve this as part of the preprocessing carried out in
\ifbool{arxive}{
Section~\ref{sect:LPAsingleWildcardUpper}, an outline of how to do
this is given in Appendix~\ref{thm:upperGpartial} and details are in
\cite{berkholz-enum-tutorial}. Since $D$ and $D'$ have the same 
Gaifman graph, $D'$ and $N$ also satisfy the conditions from 
Proposition~\ref{lemma:SPA-alltest} and we may in fact simply assume 
that $q$ contains no quantified variables. 
}
{
  Section~\ref{sect:LPAsingleWildcardUpper}. We may substitute
  $q$ with $q'$ and $D$ with $D'$, in effect simply assuming
  that $q$ contains no quantified variables. 
}

% \begin{proof}
%   As a preliminary, we observe that if $q(\bar x)$ is a CQ that is
%   acyclic and free-connex and $q'(\bar x')$ is obtained from $q$ by
%   dropping atoms, %while retaining all answer variables,
%   then $q'$ is
%   also acyclic and free-connex (even if some answer variables
%   `disappear' entirely by removing atoms) .
%   % The condition `while retaining all
%   % answer variables' is essentially w.l.o.g.\ since we may assume that
%   % there is a selected monadic relation symbol $P$ such that $q$
%   % contains $P(x)$ for all answer variables $x$ and all relevant
%   % databases $D$ contain $P(c)$ for all $c \in \mn{adom}(D)$, and
%   % considering only CQs $q'$ such that atoms $P(x)$ are never dropped
%   % from~$q$.

%   Fix a $Q(\bar x)=(\Omc,\Sbf,q) \in (\class{G},\class{CQ})$ that is
%   acyclic and free\=/connex, and let $D$ be an \Sbf-database.
%   The preprocessing phase of the algorithm works as follows. 
% \end{proof}
%

Let $T_q=(V_q,E_q)$ be a join tree for $q$. 
A \emph{multi-progress tree} is a pair $(q',g)$ with $q'$ a subtree of
$q$ (as defined in Section~\ref{sect:LPAsingleWildcardUpper}) and
$g:\mn{var}(q) \rightarrow (\mn{adom}(D) \setminus N) \cup \Wmc$ a map
such that the following conditions are satisfied:
\begin{enumerate}

\item $g(x) \notin \Wmc$ for every predecessor variable $x$ in the root
  of~$T_{q'}$; 

% \item if $v \in V_q$ and $v' \notin V_q$ is a successor of $v$ in $T_1$,
%   then $g(x)\neq\ast$ for all predecessor variables $x$ in $v'$;

% \item if $v,v' \in V_q$ with $v'$ a successor of $v$ in $T_1$, then
%   $g(x) =\ast$ for some predecessor variable $x$ in $v'$;

% \item there is a homomorphism $h$ from $q$ to $D_1$ such that for all
%   $x,y \in \mn{var}(q)$,
%   %
%   \begin{itemize}

%   \item $h(x) \in N$ if $g(x) \in \Wmc$ and $h(x)=g(x)$ otherwise;

%   \item $h(x) = h(y)$ iff $g(x) = g(y)$;
    
%   \end{itemize}

\item if $v \in V_{q'}$ and $v'$ is a successor of $v$ in $T_q$, then
 $v' \in V_{q'}$ if and only if $g(x) \in \Wmc$ for some predecessor
 variable $x$ in $v'$;
  
% \item there is a homomorphism $h$ from $q$ to $D_1$ such that for all
%   $x \in \mn{var}(q)$, $h(x) \in N$ if $g(x) = \ast$ and
%   $h(x)=g(x)$ otherwise;

\item the constants in the range of $g$ form a guarded set in $D$.
  
\end{enumerate}
%
% Note that the root of $T_q$ (which has no predecessor variables) may
% be part of a subtree of $q$. This is in contrast to the constructions
% in Section~\ref{sect:LPAsingleWildcardUpper} where we had made sure
% that all variables in the root of the CQ must map to (non-null)
% database constants.

\ifbool{arxive}{
Recall that $D$ is chase-like with witness
$D_1,\dots,D_n$. }
{
  Let $D_1,\dots,D_n$ be the tree-like structures in $D$ generated
  by the chase.
}
A set $S=\{(q_1,g_1),\dots,(q_\ell,g_\ell)\}$ of
multi-progress trees is \emph{valid}
if % the following conditions are satisfied:
% %
% \begin{itemize}
%
% \item The wildcards in the range of
% $g=g_1 \cup \cdots \cup g_\ell$ are a prefix of the ordered set
% $\Wmc = \{ \ast_1,\ast_1,\dots \}$ and respect the order of
% the answer variables in $\bar x$, that is, if the first occurrence
% of~$x$ is before the first occurrence of $x'$ in $\bar x$,
% $g(x)=\bar_i$, and $g(x')=\bar j$, then $i < j$.
%
% \item T
  there is a homomorphism $h$
  from $q_1 \cup \cdots \cup q_\ell$ to some database $D_i$, with
  $1 \leq i \leq n$, that is \emph{compatible}
with $g$, that is, for all
$x,y \in \mn{var}(q_1) \cup \cdots \cup \mn{var}(q_\ell)$,
\begin{enumerate}

  \item[(a)] $h(x) \in N$ if $g(x) \in \Wmc$ and $h(x)=g(x)$ otherwise;

  \item[(b)] $g(x) = g(y)$ implies $h(x) = h(y)$.
    
\end{enumerate}
%\end{itemize}

Our \dlc algorithm for all-testing $q(D)^{\Wmc,\not\prec}_\Nbf$
uses as a black box a \dlc algorithm $A_{q'}$ for all-testing
of $q'(D)$, for every subquery $q'(\bar x')$ of $q(\bar x)$, that is,
for every CQ $q'(\bar x')$ that can be obtained from $q$ by dropping
atoms. Note that all of these $q'$ contain no quantified variables and
are thus free-connex acyclic, implying that all-testing $q'(D)$ is
  possible in \dlc by
  Proposition~\ref{prop:allTestingCompleteUpper}.  There are clearly
only constantly many such subqueries.

In the preprocessing phase, we run the
preprocessing phases of all the algorithms $A_{q'}$, $q'$ a subquery
of $q$. In addition, we precompute a lookup table \mn{nullhom} that
stores a Boolean value for all sets $S$ of multi-progress trees that
contain at most $|\mn{var}(q)|$ such trees.  Let
$S=\{(q_1,g_1),\dots,(q_\ell,g_\ell)\}$.  The stored value is 1 if $S$
is valid and 0 otherwise.  Such a lookup table can be accessed and
updated in $O(1)$ time on a RAM. The proof of
the following is similar to that of
Lemma~\ref{lem:precomputetreeslistslinear}.
\begin{lemma}
  The lookup table $\mn{nullhom}$ can be computed in time linear in
  $||D||$.
\end{lemma}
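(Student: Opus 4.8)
The plan is to mimic the proof of Lemma~\ref{lem:precomputetreeslistslinear}, exploiting that $D$ is chase-like so that every witness for validity lives inside a single, constant-size piece $D_i$ of the witness $D_1,\dots,D_n$. First I would initialize $\mn{nullhom}$ to the all-zero table; in our machine model this is free, since all memory is initialized with~$0$. A set $S=\{(q_1,g_1),\dots,(q_\ell,g_\ell)\}$ of at most $\ell \le |\mn{var}(q)|$ multi-progress trees can be represented by a short list of constants from $\mn{adom}(D)$ — there are boundedly many trees, and each pairs a subtree of the fixed $q$ with a map $g_j$ whose range has constant size — so it indexes a register that can be read and written in $O(1)$ time on a RAM. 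Thus accessing and updating $\mn{nullhom}$ takes constant time per entry.

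The key observation, and what I expect to be the main obstacle to get right, is a \emph{localization} claim: any valid $S$ necessarily has all of its non-wildcard $g$-values inside $\mn{adom}(D_i)$ for some single~$i$. Indeed, by compatibility condition~(a) any witnessing homomorphism $h$ into $D_i$ satisfies $h(x)=g(x)$ for every non-wildcard variable $x$, while $h$ maps into $\mn{adom}(D_i)$; this is consistent with chase-likeness, under which the database constants of $D_i$ all occur in its unique null-free root fact and hence form a guarded set. Given this, it suffices to discover the valid sets by iterating $i=1,\dots,n$ and, for each $i$, enumerating only those sets $S$ whose database constants lie in $\mn{adom}(D_i)\setminus N$ and whose wildcards are to be realized by nulls of $D_i$. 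Because $|\mn{adom}(D_i)|$ is independent of $D$ and $q$ is fixed, the number of such candidate sets $S$ is bounded by a constant. For each candidate I would first discard those in which the individual maps $g_j$ disagree on a shared variable (such $S$ are trivially invalid, so their bit stays~$0$); otherwise I set $g=\bigcup_j g_j$ and brute-force whether there is a homomorphism $h$ from $q_1 \cup \cdots \cup q_\ell$ to $D_i$ compatible with $g$, by ranging over the constantly many assignments of the wildcard variables to nulls of $D_i$ that respect the equalities forced by condition~(b) and testing the homomorphism condition atom by atom. If such an $h$ exists, I set $\mn{nullhom}(S)=1$.

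Correctness then follows because a set is marked $1$ exactly when some $D_i$ witnesses its validity, which by the localization observation is equivalent to validity; sets whose database constants span more than one piece, or which are never homomorphically realizable, retain the value $0$ and are thereby correctly reported as invalid. For the running time, each $i$ contributes only constant work — constantly many candidate sets $S$, each checkable in $O(1)$ time since $\mn{adom}(D_i)$ has bounded size — and since every $D_i$ contains at least its null-free root fact we have $n \le ||D||$, giving the overall bound $O(||D||)$. In contrast to Lemma~\ref{lem:precomputetreeslistslinear}, neither sorting nor duplicate elimination is needed here: we merely set Boolean flags, so repeatedly encountering the same $S$ for different $i$ (when two pieces share database constants) is harmless. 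This establishes that $\mn{nullhom}$ can be computed in time linear in $||D||$, as required by Proposition~\ref{lemma:SPA-alltest}.
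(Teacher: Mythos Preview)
Your proposal is correct and follows essentially the same approach as the paper's proof: iterate over the pieces $D_1,\dots,D_n$ of the chase-like witness, enumerate the constantly many candidate sets $S$ whose non-wildcard values fall within $\mn{adom}(D_i)\setminus N$, and brute-force check for a compatible homomorphism into $D_i$. You are in fact more explicit than the paper about the localization argument and about why no duplicate elimination is needed, but the core strategy is identical.
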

\ifbool{arxive}{
\begin{proof}
  To compute the table, we iterate over the databases
  $D_1,\dots,D_n$. For each $D_i$, set $G= \mn{adom}(D_i) \setminus N$
  and iterate over all sets $S$ of pairs $(q',g)$ with $q'$ a subtree
  of $q$ and $g:\mn{var}(q') \rightarrow G \cup \Wmc$, such that there
  are at most $|q|$ pairs in $S$ and Conditions~(1) and~(2) of
  multi-progress trees is satisfied for all pairs in $S$.
  Condition~(3) is satisfied since $G$ is a guarded set in $D_i$.
  This also implies that its cardinality is bounded by a constant and
  thus there are only constantly many pairs $(q',g)$ of the described
  form and consequently also only constantly many sets $S$. Let
  $S=\{(q_1,g_1),\dots,(q_\ell,g_\ell)\}$.  We then check whether
  there is a homomorphism $h$ from $q_1 \cup \cdots \cup q_\ell$ to
  $D_i$ that is compatible with $g_1,\dots,g_\ell$.  If this is the
  case, we set $\mn{nullhom}(S)=1$. Otherwise, $\mn{nullhom}(S)=1$ as
  all memory is initialized with value~0 in our machine model. Note
  that we can check the existence of $h$ brute force: there are only
  constantly many potential targets because $|\mn{adom}(D_{i})|$ does
  not depend on $D$.
\end{proof}
}
We now describe the testing phase of our algorithm. Assume that a
multi-wildcard tuple $\bar a^\Wmc$ of length $|\bar x|$ is to be
tested.  We may first check whether wildcards are used in the required
way and answer `no' if this is not the case. More precisely, we check
that the wildcards in $\bar a^\Wmc$ are a prefix of the ordered set
$\Wmc = \{ \ast_1,\ast_2,\dots \}$ and that multiple occurrences of the
same variable in $\bar x$ are matched by multiple occurrences of the
same wildcard in $\bar a^\Wmc$. If this is the case, we may view
$\bar a^\Wmc$ as a map
$h_{\bar a^{\Wmc}}: \mn{var}(q) \rightarrow (\mn{adom}(D) \setminus N
) \cup \Wmc$ in the obvious way. We may then check that the wildcards
in $\bar a^\Wmc$ respect the order of the answer variables in
$\bar x$, that is, if the first occurrence of~$x$ is before the first
occurrence of $x'$ in $\bar x$, $h_{\bar a^{\Wmc}}(x)=\ast_i$, and
$h_{\bar a^{\Wmc}}(x')=\ast_j$, then $i < j$.
    
We say that a multi-progress tree $(q',g)$ is \emph{realized} in
$\bar a^\Wmc$ if $h_{\bar a^{\Wmc}}(x)=g(x)$ for all
$x \in \mn{var}(q')$. Let $T$ be the set of all multi-progress trees
realized in $\bar a^\Wmc$ and let $\sim$ be the smallest equivalence
relation on $T$ such that $(q_1,g_1) \sim (q_2,g_2)$ if there are
variables $x_1 \in \mn{var}(q_1)$ and $x_2 \in \mn{var}(q_2)$ such
that $g_1(x_1)=g_2(x_2) \in \Wmc$. We consider each equivalence class
$S \subseteq T$ of `$\sim$' and check whether $S$ is valid by testing
if $\mn{nullhom}(S)=1$. If any of the checks fails, we answer `no'.
Since at most $|\mn{var}(q)|$ (and thus only constantly many)
multi-progress trees may be realized in $\bar a^\Wmc$, the required
checks can be done in constant time.

We then do one last check. Let $q'$ be the subquery of $q$ that
consists of all atoms $\alpha$ such that for all variables $x$ in
$\alpha$, \mbox{$h_{\bar a^{\Wmc}}(x) \notin \Wmc$}. Further let $\bar a$ be
the tuple over $\mn{adom}(D) \setminus N$ obtained from
$\bar a^\Wmc=(a^\Wmc_1,\dots,a^\Wmc_{|\bar x|})$ by dropping
$a^\Wmc_i$ whenever the $i$-th position in $\bar x$ is an answer
variable that is not in $\mn{var}(q')$. We then use algorithm $A_{q'}$
to test whether $\bar a \in q'(D)$ and return the result.  The
following lemma asserts that the returned answer is correct, which
finishes the proof of Proposition~\ref{prop:allTestingCompleteUpper}.
\begin{lemma}
  $\bar a^\Wmc \in q(D)_\Nbf^\Wmc$ iff the testing phase returns
  `yes'.
  %every
%  equivalence class $S \subseteq T$ of~`$\sim$' is valid.
\end{lemma}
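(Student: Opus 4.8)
The plan is to read the statement as the correctness of the testing phase of $A_2$, i.e.\ to show it returns `yes' exactly when $\bar a^\Wmc$ is a (not necessarily minimal) partial answer with multi\=/wildcards to $q$ on $D$. I would first record the semantic characterization that makes the algorithm transparent: $\bar a^\Wmc \in q(D)^{\Wmc,\not\prec}_\Nbf$ iff there is a homomorphism $h\colon q \to D$ that \emph{realizes} $\bar a^\Wmc$, meaning $h(x_i)=a^\Wmc_i$ whenever $a^\Wmc_i$ is a database constant, $h(x_i)\in N$ whenever $a^\Wmc_i$ is a wildcard, and $h(x_i)=h(x_j)$ whenever $a^\Wmc_i=a^\Wmc_j$ — crucially with \emph{no} requirement that distinct wildcards be sent to distinct nulls. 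The syntactic prefilter (wildcards forming a prefix of $\Wmc$, repeated answer variables carrying equal entries, wildcards respecting the order of $\bar x$) collects necessary conditions for such an $h$ to exist and lets us view $\bar a^\Wmc$ as the map $h_{\bar a^\Wmc}$, so I would dispose of the case where the prefilter rejects and assume it passes.

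The core is a decomposition: any realizing $h$ splits into (a) its restriction to the atoms all of whose variables carry database constants, which is precisely the subquery $q'$, and (b) the restrictions to the maximal null\=/excursions of $q$ under $h_{\bar a^\Wmc}$, which are exactly the multi\=/progress trees realized in $\bar a^\Wmc$. For the ``only if'' direction I would take a realizing $h$, observe that $h|_{q'}$ witnesses $\bar a \in q'(D)$ (tested by $A_{q'}$), and that for each $\sim$\=/class $S$ the restriction of $h$ to the variables of the trees in $S$ is a homomorphism into a single chase component $D_i$ that is compatible with $g=h_{\bar a^\Wmc}$ in the sense of conditions~(a) and~(b); hence every $\mn{nullhom}(S)=1$. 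For the ``if'' direction I would reassemble: the database check supplies $h$ on the $q'$\=/variables, validity of each class $S$ supplies a compatible $h_S$ into some $D_i$, and gluing $\bar a$ with the $h_S$ yields a total map that I verify is a homomorphism realizing $\bar a^\Wmc$. Well\=/definedness of the gluing uses that every wildcard is confined to a single $\sim$\=/class (so each null\=/valued variable is governed by one $h_S$) while shared root constants are pinned by condition~(a) to agree with $\bar a$; that $h$ covers every atom uses the maximality condition~(2) of multi\=/progress trees together with the absence of quantified variables in $q$.

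The step I expect to be the main obstacle is showing that each $\sim$\=/class maps into a \emph{single} component $D_i$, both when extracting the local homomorphisms from a global $h$ and when checking global consistency of the reassembled map. This rests on the locality of nulls in the chase\=/like $D$, namely $\mn{adom}(D_i)\cap\mn{adom}(D_j)\cap N=\emptyset$: two trees in one class share a wildcard, hence via condition~(b) a common null, which lies in a unique $D_i$, pinning the whole class to that component; conversely, distinct classes use disjoint wildcards and therefore disjoint nulls and cannot interfere when glued. A secondary delicate point I would state and use throughout is that the operative semantics demands only ``same wildcard $\Rightarrow$ same null'' and not its converse — this is exactly why the validity predicate invokes only compatibility condition~(b), and why independent classes may legitimately reuse the same null without the test spuriously failing.
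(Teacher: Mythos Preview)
Your proposal is correct and mirrors the paper's proof almost exactly: the paper also argues both directions via the decomposition into the database subquery $q'$ and the $\sim$\=/classes of realized multi\=/progress trees, and its key step is precisely the claim that each class maps into a single component $D_i$, justified by the disjointness of nulls across the $D_i$ in the chase\=/like structure. If anything you are slightly more explicit than the paper about well\=/definedness of the gluing and about the one\=/sided nature of compatibility condition~(b).
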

\ifbool{arxive}{
\begin{proof}
  `if'. Assume that the testing phase returns `yes'. Then
  $\bar a \in q'(D)$ and thus we may view $\bar a$ as a homomorphism
  $h_{\bar a}$ from $q'$ to $D$ in the obvious way. Note that the
  range of $h_{\bar a}$ falls within $\mn{adom}(D) \setminus N$ since
  all constants in $\bar a$ are from this set. We next extend $h$ by
  considering one equivalence class $S \subseteq T$ of `$\sim$' at the
  time. Let $S=\{(q_1,g_1),\dots,(q_\ell,g_\ell)\}$.  Since the
  testing phase has returned `yes', $S$ is valid and thus there is a
  homomorphism $h_S$ from $q_1 \cup \cdots \cup q_\ell$ to some $D_i$,
  with $1 \leq i \leq n$, that is compatible with
  $g=g_1 \cup \cdots \cup g_n$. Taking the union of $h_{\bar a}$ and
  all the homomorphisms $h_S$ yields a homomorphism $h$ from $q$ to
  $D$ that yields an answer $\bar b \in q(D)$ such that $\bar a^\Wmc$
  is obtained from $\bar b$ by replacing nulls with wildcards from \Wmc.
  Consequently, $\bar a^\Wmc \in q(D)_\Nbf^\Wmc$.

  `only if'. Assume that $\bar a^\Wmc \in q(D)_\Nbf^\Wmc$. Then there is
  a homomorphism $h$ from $q$ to $D$ that yields an answer
  $\bar b \in q(D)$ such that $\bar a^\Wmc$ is obtained from $\bar b$
  by replacing nulls with wildcards from \Wmc. Clearly, $h$ is also a
  homomorphism from the subquery $q'$ of $q$ constructed during the
  testing phase to $D$. Consequently, $\bar a \in q(D)$ where $\bar a$
  is the tuple over $\mn{adom}(D) \setminus N$ constructed along with
  $q'$, and thus the test for $\bar a \in q(D)$ made in the testing
  phase succeeds. It remains to argue that every equivalence class
  $S \subseteq T$ w.r.t.\ `$\sim$' is valid, and thus also the checks
  associated with that succeed.  Let
  $S=\{(q_1,g_1),\dots,(q_\ell,g_\ell)\}$.  We first observe the
  following. The proof is identical to the proof of the analogous
  claim for single-wildcard progress trees in the proof of
  Lemma~\ref{lem:precomputetreeslistslinear}.  Details are omitted.
  \\[2mm]
  {\sc Claim.}
  Let $(q',g)$ be a multiple-progress tree and let $h$
  be a homomorphism from $q'$ to $D$ such that for all
  $x \in \mn{var}(q')$, $h(x) \in N$ if $g(x) \in \Wmc$ and $h(x)=g(x)$
  otherwise. Then $h$ is a homomorphism from $q$ to $D_{i}$ for some
  $i \in \{1,\dots,n\}$.
  \\[2mm]
  Recall the definition of `$\sim$' via shared wildcards and the fact
  that $\mn{adom}(D_i) \cap \mn{adom}(D_j) \cap N = \emptyset$ for
  $1 \leq i < j \leq n$, by definition of chase-like instances. From
  this and the claim it follows that there is a single $D_i$, with
  $1 \leq i \leq n$, such that $h$ is a homomorphism from
  $q_1 \cup \cdots \cup q_\ell$ to $D_i$. Moreover, $h$ is clearly be
  compatible with $g=g_1 \cup \cdots \cup g_\ell$ and thus $S$ is
  valid.
\end{proof}
}

\ifbool{arxive}{
\subsection{Enumeration with Multi-Wildcards}

We prove Proposition~\ref{prop:enummultiwildcards} using exactly the
algorithm described in the main part of the paper.  Balls and cones
play a crucial role in the algorithm. The following lemma
explains how they link the set $q(D)_\Nbf^\Wmc$ that we aim to enumerate
to the set $q(D)_\Nbf^\ast$ that we enumerate in the outer {\bf forall}
loop. Note that, by Point~(2), we can indeed choose an $\bar a^\Wmc$
with the required properties in the algorithm.
%
% The cones are of interest to us for two reasons: they are finite sets, thus
% computable in constant time, and every least strong partial answer $\bar{a}^{\Wmc} \in Q^{\Wmc}(D)$
% lies in a cone over some least partial answer $\bar{b}^{\ast} \in Q^{\ast}(D)$.
%
% Indeed, for a wildcard tuple $\bar{a}^{\ast}$
% we have $|\cone(\bar{a}^{\ast})| \leq (k+1)^k$ where $k = |\bar{x}|$,
% thus $\cone(\bar{a}^{\ast})$ can be computed in constant time.
% Moreover, cones over least partial answers cover the set of least strong partial answers.
%
\begin{lemma}
    \label{lemma:cone-1}
    ~\\[-4mm]
    \begin{enumerate}

    \item $\displaystyle q(D)_\Nbf^\Wmc \subseteq \bigcup_{\bar{a}^{\ast} \in q(D)_\Nbf^\ast}
      \mn{cone}^\Wmc(\bar{a}^{\ast})$;

    \item for all $\bar{a}^{\ast} \in q(D)^\ast_\Nbf$, \mbox{$\emptyset \neq 
      \mn{min}^{\!\prec}(B^\Wmc(\bar{a}^{\ast}) \cap
      q(D)_\Nbf^{\Wmc,\not\prec})  \subseteq q(D)_\Nbf^\Wmc$};

    \item for all distinct $\bar{a}^{\ast}, \bar{b}^{\ast} \in q(D)_\Nbf^{\ast}$,
      $B^{\Wmc}(\bar{a}^{\ast}) \cap  \mn{cone}^\Wmc(\bar{b}^{\ast}) =
      \emptyset$.

    \end{enumerate}
\end{lemma}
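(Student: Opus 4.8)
The plan is to reduce all three statements to the behaviour of the single map $\pi$ that replaces every wildcard from $\Wmc$ by the single symbol `$\ast$', sending multi-wildcard tuples to wildcard tuples. Two properties of $\pi$ will drive everything. First, $\pi$ is compatible with the valuation semantics: if $\bar b \in q(D)$ then $\pi(\bar b^\Wmc_\Nbf) = \bar b^\ast_\Nbf$, so $\pi$ restricts to a map from $q(D)^{\Wmc,\not\prec}_\Nbf$ into the set $q(D)^{\ast,\not\prec}_\Nbf$ of (not necessarily minimal) wildcard partial answers; moreover, directly from the definition of balls, $\bar c^\Wmc \in B^\Wmc(\bar a^\ast)$ holds exactly when $\pi(\bar c^\Wmc) = \bar a^\ast$. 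Second, $\pi$ is monotone: I would prove that $\bar d^\Wmc \preceq \bar c^\Wmc$ in the multi-wildcard order implies $\pi(\bar d^\Wmc) \preceq \pi(\bar c^\Wmc)$ in the wildcard order. This monotonicity is the one step requiring care, and I would establish it by unwinding the orders position-wise: condition~(1) of the multi-wildcard order forces $\bar d^\Wmc$ to carry the same constant as $\bar c^\Wmc$ wherever the latter is a constant, and leaves it arbitrary where $\bar c^\Wmc$ is a wildcard; applying $\pi$ then leaves the constant positions unchanged and maps every wildcard position of $\bar c^\Wmc$ to `$\ast$', which dominates whatever $\pi$ produces for $\bar d^\Wmc$ there.

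For Point~(1), I would take $\bar a^\Wmc \in q(D)^\Wmc_\Nbf$, witnessed by some $\bar b \in q(D)$ with $\bar b^\Wmc_\Nbf = \bar a^\Wmc$. Then $\pi(\bar a^\Wmc) = \bar b^\ast_\Nbf$ is a wildcard partial answer, so some minimal one $\bar a^\ast \in q(D)^\ast_\Nbf$ satisfies $\bar a^\ast \preceq \pi(\bar a^\Wmc)$ (minimal elements exist below any element of the finite poset $q(D)^{\ast,\not\prec}_\Nbf$). Since $\pi(\bar a^\Wmc) = \bar b^\ast_\Nbf$ gives $\bar a^\Wmc \in B^\Wmc(\pi(\bar a^\Wmc))$, and $\bar a^\ast \preceq \pi(\bar a^\Wmc)$ gives $B^\Wmc(\pi(\bar a^\Wmc)) \subseteq \mn{cone}^\Wmc(\bar a^\ast)$ by definition of the cone, I conclude $\bar a^\Wmc \in \mn{cone}^\Wmc(\bar a^\ast)$.

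For Point~(2), I would fix $\bar a^\ast \in q(D)^\ast_\Nbf$ with witness $\bar b \in q(D)$, $\bar b^\ast_\Nbf = \bar a^\ast$. Then $\bar b^\Wmc_\Nbf$ lies in $B^\Wmc(\bar a^\ast) \cap q(D)^{\Wmc,\not\prec}_\Nbf$, so this set is non-empty and finite, whence its set of $\prec$-minimal elements is non-empty. For the inclusion, let $\bar c^\Wmc$ be $\prec$-minimal in $B^\Wmc(\bar a^\ast) \cap q(D)^{\Wmc,\not\prec}_\Nbf$ and suppose toward a contradiction there is a partial answer $\bar d^\Wmc \prec \bar c^\Wmc$. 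By monotonicity $\pi(\bar d^\Wmc) \preceq \pi(\bar c^\Wmc) = \bar a^\ast$, and $\pi(\bar d^\Wmc) \in q(D)^{\ast,\not\prec}_\Nbf$; minimality of $\bar a^\ast$ then forces $\pi(\bar d^\Wmc) = \bar a^\ast$, i.e.\ $\bar d^\Wmc \in B^\Wmc(\bar a^\ast)$. Thus $\bar d^\Wmc$ is a strictly smaller element of $B^\Wmc(\bar a^\ast) \cap q(D)^{\Wmc,\not\prec}_\Nbf$, contradicting minimality of $\bar c^\Wmc$; hence $\bar c^\Wmc \in q(D)^\Wmc_\Nbf$.

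For Point~(3), I would suppose $\bar c^\Wmc \in B^\Wmc(\bar a^\ast) \cap \mn{cone}^\Wmc(\bar b^\ast)$ for distinct $\bar a^\ast, \bar b^\ast \in q(D)^\ast_\Nbf$. Ball membership gives $\pi(\bar c^\Wmc) = \bar a^\ast$, while membership in the cone yields a wildcard tuple $\bar e^\ast$ with $\bar b^\ast \preceq \bar e^\ast$ and $\bar c^\Wmc \in B^\Wmc(\bar e^\ast)$, i.e.\ $\pi(\bar c^\Wmc) = \bar e^\ast$. Therefore $\bar b^\ast \preceq \bar e^\ast = \bar a^\ast$, and since $\bar a^\ast$ and $\bar b^\ast$ are both minimal in $q(D)^\ast_\Nbf$ this forces $\bar a^\ast = \bar b^\ast$, a contradiction, so the intersection is empty. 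The only genuinely delicate part throughout is the position-wise monotonicity of $\pi$, which must be reconciled with the bookkeeping keeping multi-wildcard tuples in canonical wildcard order; once that lemma is in hand, all three claims are short.
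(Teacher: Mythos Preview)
Your proposal is correct and takes essentially the same approach as the paper: both arguments reduce to the projection $\pi$ that collapses named wildcards to `$\ast$', using that $\pi$ sends multi-wildcard partial answers to single-wildcard partial answers and is monotone with respect to the two orders. Your framing via $\pi$ makes the structure slightly more transparent, and your non-emptiness witness in Point~(2) (taking $\bar b^\Wmc_\Nbf$ for a witness $\bar b \in q(D)$ of $\bar a^\ast$) is in fact cleaner than the paper's ``spread out'' tuple, which need not lie in $q(D)^{\Wmc,\not\prec}_\Nbf$ when every witnessing answer repeats a null.
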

\begin{proof}
  For Point~(1), let $ \bar{b}^{\Wmc} \in q(D)_\Nbf^\Wmc$ and let
  $\bar{b}^{\ast}$ be obtained from $\bar{b}^{\Wmc}$ by replacing
  every wildcard from \Wmc by `$\ast$'.  Then $\bar{b}^{\ast}$ is a
  partial answer to $q$ on $D$, and thus there is an
  $\bar{a}^{\ast} \in q(D)_\Nbf^\ast$ such that
  $\bar{a}^{\ast} \preceq \bar{b}^{\ast}$. It is easy to verify that
  $\bar{b}^{\Wmc} \in \mn{cone}^\Wmc(\bar{a}^{\ast})$.

  For Point~(2), let $\bar{a}^{\ast} \in q(D)_\Nbf^\ast$. To prove that the
  set $\mn{min}^{\!\prec}(B^\Wmc(\bar{a}^{\ast}) \cap
  q(D)_\Nbf^{\Wmc,\not\prec})$ is non-empty, consider the multi-wildcard
  tuple $\bar a^\Wmc$ obtained from $\bar a^\ast$ by replacing every
  occurrence of `$\ast$' with a different wildcard from \Wmc. It is
  clear that $\bar a^\Wmc \in B^\Wmc(\bar{a}^{\ast}) \cap
  q(D)_\Nbf^{\Wmc,\not\prec}$ and thus this set is non-empty. Since it is
  finite (and in fact of constant size),
  $\mn{min}^{\!\prec}(B^\Wmc(\bar{a}^{\ast}) \cap
  q(D)_\Nbf^{\Wmc,\not\prec})$ is also non-empty. To show that this set is
  a subset of $q(D)_\Nbf^\Wmc$, first observe that any tuple $\bar a^\Wmc$
  in it is from $q(D)_\Nbf^{\Wmc,\not\prec}$. Now assume to the contrary of
  what is to be shown that there is a $\bar b^\Wmc \in
  q(D)_\Nbf^{\Wmc,\not\prec}$ with $\bar b^\Wmc \prec \bar a^\Wmc$.  Then
  $\bar b^\Wmc$ cannot be in $B^\Wmc(\bar{a}^{\ast})$ as otherwise
  $\bar a^\Wmc$ would not be minimal. This and $\bar b^\Wmc \prec \bar
  a^\Wmc$ means that for $\bar a^\Wmc = (a_1,\dots,a_{|\bar x|})$ and
  $\bar b^\Wmc = (b_1,\dots,b_{|\bar x|})$, 
  \begin{itemize}
  \item[($\dagger$)]
there is an $i \in
  \{1,\dots,{|\bar x|}\}$ with $b_i \in \mn{adom}(D)$ and $a_i \in
  \Wmc$.  
\end{itemize}
Let $\bar b^\ast$ be $\bar b^\Wmc$ with every wildcard from \Wmc
replaced by `$\ast$'. It is not hard to verify that $\bar b^\ast$ is a
partial answer to $q$ on $D$ and, using
$\bar b^\Wmc \prec \bar a^\Wmc$ and ($\dagger$), that
$\bar b^\ast \prec \bar a^\ast$, in contradiction to
$\bar{a}^{\ast} \in q(D)_\Nbf^\ast$.

We prove Point~(3) by contradiction.  Assume that $\bar{a}^{\ast},
\bar{b}^{\ast} \in q^{\ast}_\Nbf(D)$ are distinct and that there is a
multi-wildcard tuple $\bar{a}^{\Wmc} \in B^{\Wmc}(\bar{a}^{\ast}) \cap
\mn{cone}^\Wmc(\bar{b}^{\ast})$.  Then by definition of cones,
there is a wildcard tuple $\bar{c}^{\ast}$ such that $\bar{a}^{\Wmc}
\in B^{\ast}(\bar{c}^{\ast})$ and $\bar{b}^{\ast} \prec
\bar{c}^{\ast}$.  But, by definition of $B^{\Wmc}$, we have that
$B^{\ast}(\bar{a}^{\Wmc}) = \bar{a}^{\ast}$ and $
B^{\ast}(\bar{a}^{\Wmc}) = \bar{c}^{\ast}$. Thus, $\bar{a}^{\ast} =
\bar{c}^{\ast}$.

    Therefore $\bar{b}^{\ast} \prec \bar{a}^{\ast}$ which is impossible as $\bar{a}^{\Wmc}$ and $\bar{b}^{\ast}$
    are both minimal partial answers and, thus, incomparable.
\end{proof}
\lemmultiwildcorr*
\begin{proof}
  We first argue that all tuples output by the algorithm are from
  $q(D)_\Nbf^\Wmc$. Indeed, tuples output during the {\bf forall} loop are
  from $q(D)_\Nbf^\Wmc$ by Point~(2) of Lemma~\ref{lemma:cone-1}. It thus
  remains to consider tuples that were output because they remained on
  the list $L$ after the execution of the {\bf forall} loop. We first
  observe the following invariant, which follows from an easy analysis
  of the algorithm.
  \\[2mm]
  {\bf Claim.} Throughout the run of the algorithm, $F(\bar a^\Wmc)=1$
  implies that $\bar a^\Wmc$ was already added to $L$ or is not in $q(D)_\Nbf^\Wmc$.
  \\[2mm]
  Now consider a tuple $\bar a^\Wmc$ that was output after the
  execution of the {\bf forall} loop and assume to the contrary of
  what is to be shown that there is a $\bar b^\Wmc \in q(D)_\Nbf^{\Wmc}$
  such that $\bar b^\Wmc \prec \bar a^\Wmc$. Let $\bar b^\ast$ be
  obtained from $\bar b^\Wmc$ by replacing every wildcard from \Wmc
  with~`$\ast$'. Then $\bar b^\ast$ is a (not necessarily minimal)
  partial answer to $q$ on $D$ and thus there is a
  $\bar c^\ast \in q(D)_\Nbf^\ast$ with $\bar c^\ast \preceq \bar
  b^\ast$. Then $\bar b^\Wmc \in \mn{cone}^\Wmc(\bar c^\ast)$.  We
  next argue that at some point $\bar b^\Wmc$ is appended to the list
  $L$. In fact, consider the iteration of the outer {\bf forall} loop
  that processes~$\bar c^\ast$. If $F(\bar b^\Wmc)=0$ at that point,
  then $\bar b^\Wmc$ is appended to $L$. If $F(\bar
  b^\Wmc)=1$, then by the claim $\bar b^\Wmc$ was appended to
  $L$ in a previous iteration. In both cases, when $\bar
  b^\Wmc$ was added to $L$, $\mn{prune}(\bar
  b^\Wmc)$ was called and $\bar a^\Wmc$ was removed from
  $L$ and $F(\bar a^\Wmc)$ set to $1$, ensuring that $\bar
  a^\Wmc$ is never added back to $L$. This is a contradiction to $\bar
  a^\Wmc$ being output because it remained on $L$.

  \medskip We next argue that all tuples from
  $q(D)^\Wmc_\Nbf$ are output. Let $\bar a^\Wmc \in
  q(D)_\Nbf^\Wmc$. By Point~(1) of Lemma~\ref{lemma:cone-1}, there is a
  $\bar b^\ast \in q(D)_\Nbf^\ast$ with $\bar a^\Wmc \in
  \mn{cone}^\Wmc(\bar
  b^\ast)$. Consider the iteration of the outer {\bf forall} loop that
  processes~$\bar b^\ast$. If $F(\bar
  a^\Wmc)=0$ at that point, then $\bar a^\Wmc$ is appended to
  $L$. If $F(\bar a^\Wmc)=1$, then by the claim $\bar a^\Wmc$ was
  appended to $L$ in a previous iteration. Since
  $\mn{prune}(\bar c^\Wmc)$ is only ever called for tuples
  $\bar c^\Wmc \in q(D)_\Nbf^{\Wmc,\not\prec}$, the only way $\bar a^\Wmc$
  can be removed from $L$ is when it is chosen to be output in the
  outer {\bf forall} loop. If that never happens, it is still on $L$
  after that loop has terminated and thus also output.

  \medskip Finally, we argue that there are no repetitions. This,
  however, is an immediate consequence of the use of the lookup table
  $F$ to make sure that every multi-wildcard is appended to list
  $L$ at most once and of the fact that when a tuple is output in
  the outer  {\bf forall} loop, then it is removed from $L$.
\end{proof}
}

\section{Illustrating the Algorithm}
\label{sect:example}

%some tikz magic!
\def\sc{1}

\newcommand{\tshape}{
    \node (j1) at ({0*\sc}, {0*\sc}) {};
    \node (j2) at ({-1*\sc},{-2*\sc}) {};
    \node (j3) at ({0*\sc}, {-1*\sc}) {};
    \node (j4) at ({1*\sc}, {-2*\sc}) {};
    \node (j5) at ({1*\sc}, {-3*\sc}) {};
    \node (jC) at ({-1*\sc}, {-3*\sc}) {};
}

\newcommand{\qshape}{
    \node (i0) at ({0*\sc}, {0*\sc}) {};
    \node (i1) at ({-.5*\sc}, {-3*\sc}) {};
    \node (i2) at ({-.5*\sc},{-2*\sc}) {};
    \node (i3) at ({0*\sc}, {-1*\sc}) {};
    \node (i4) at ({.5*\sc}, {-2*\sc}) {};
    \node (i5) at ({.5*\sc}, {-3*\sc}) {};
}

\tikzset{
    pics/Piece/.style n args={6}{
        code = { %
            
            \tshape
            
%            \node (treeq1)     at ($#1 + (j1)$) {\footnotesize $c_0$};
            \node (treeq2)     at ($#1 + (j2)$) {\footnotesize $c_0 #2 #3$};
            \node (treeq3)     at ($#1 + (j3)$) {\footnotesize $c_0 #3 #4$};
            \node (treeq4)     at ($#1 + (j4)$) {\footnotesize $c_0 #5 #4$};
            \node (treeq5)     at ($#1 + (j5)$) {\footnotesize $c_0 #6 #5$};
            \node (treeqC)     at ($#1 + (jC)$) {\footnotesize $c_0 #2$};
            
            \draw (treeqC) -> (treeq2) -> (treeq3) -> (treeq4) ->(treeq5);
            \draw (treeq3) -> (treeq1);
        }
    }
}

\tikzset{
    pics/PiecePA/.style n args={6}{
        code = { %
            
            \tshape
            
%            \node (treeq1)     at ($#1 + (j1)$) {\footnotesize $c_0$};
            \node (treeq2)     at ($#1 + (j2)$) {\footnotesize $#2 #3$};
            \node (treeq3)     at ($#1 + (j3)$) {\footnotesize $#3 #4$};
            \node (treeq4)     at ($#1 + (j4)$) {\footnotesize $#5 #4$};
            \node (treeq5)     at ($#1 + (j5)$) {\footnotesize $#6 #5$};
            \node (treeqC)     at ($#1 + (jC)$) {\footnotesize $#2$};
            
%            \node (treeq1M)[blue]     at ($#1 + (j1) +(.2,-.5)$) {\footnotesize $c_0$};
            \node (treeq2M)[blue]     at ($#1 + (j2) + (.3,.55)$) {\footnotesize $#3$};
            \node (treeq3M)[blue]     at ($#1 + (j4) + (-.1,.55)$) {\footnotesize $#4$};
            \node (treeq4M)[blue]     at ($#1 + (j5) + (.2,.5)$) {\footnotesize $#5$};
            \node (treeqCM)[blue]     at ($#1 + (jC) + (.2,.5)$) {\footnotesize $#2$};
            
            \draw (treeqC) -> (treeq2) -> (treeq3) -> (treeq4) ->(treeq5);
%            \draw (treeq3) -> (treeq1);
        }
    }
}

\tikzset{
    pics/Query/.style n args={6}{
        code = { %
            
            \qshape
            
%            \node (treeq0)     at ($#1 + (i0)$) {\footnotesize $c_0$};
            \node (treeq1)     at ($#1 + (i1)$) {\footnotesize $#2$};
            \node (treeq2)     at ($#1 + (i2)$) {\footnotesize $#3$};
            \node (treeq3)     at ($#1 + (i3)$) {\footnotesize $#4$};
            \node (treeq4)     at ($#1 + (i4)$) {\footnotesize $#5$};
            \node (treeq5)     at ($#1 + (i5)$) {\footnotesize $#6$};
            
            \draw[->] (treeq1) -> (treeq2) -> (treeq3);
            \draw[->] (treeq5) -> (treeq4) ->(treeq3);
        }
    }
}

\tikzset{
    pics/Cx0/.style n args={2}{
        code = { %
            \tshape            
            \draw [rounded corners=2mm, #2, dashed] ($#1 + (.3,.3)$)--($#1 + (.3,-.3)$)--($#1 + (-.3,-.3)$)--($#1 + (-.3,.3)$)--cycle;
        }
    }
}

\tikzset{
    pics/Cx12/.style n args={2}{
        code = { %
            
            \tshape 
            \draw [rounded corners=5mm, #2, dashed] ($#1 + (j3) + (.5,.5)$)--($#1 + (j3) + (.5,-.5)$)--($#1 + (j2) + (-.5,-.5)$)--($#1 + (j2) + (-.5,.5)$)--cycle;
        }
    }
}

\tikzset{
    pics/Cx23/.style n args={2}{
        code = { %            
            \draw [rounded corners=5mm, #2, dashed] ($#1 + (j3) + (-.5,-.5)$)--($#1 + (j3) + (-.5,.5)$)--($#1 + (j4) + (.5,.5)$)--($#1 + (j4) + (.5,-.5)$)--cycle;
        }
    }
}

\tikzset{
    pics/Cx34/.style n args={2}{
        code = { %            
            \draw [rounded corners=5mm, #2, dashed] ($#1 + (j4) + (-.5,.5)$)--($#1 + (j4) + (.5,.5)$)--($#1 + (j5) + (.5,-.5)$)--($#1 + (j5) + (-.5,-.5)$)--cycle;
        }
    }
}

%%%%%wrong sets
\tikzset{
    pics/Cx34w/.style n args={2}{
        code = { %            
            \draw [rounded corners=5mm, #2, dotted] ($#1 + (j4) + (-.5,.5)$)--($#1 + (j4) + (.5,.5)$)--($#1 + (j5) + (.5,-.5)$)--($#1 + (j5) + (-.5,-.5)$)--cycle;
        }
    }
}
\tikzset{
    pics/Cx0w/.style n args={2}{
        code = { %
            \tshape            
            \draw [rounded corners=2mm, #2, dotted] ($#1 + (.3,.3)$)--($#1 + (.3,-.3)$)--($#1 + (-.3,-.3)$)--($#1 + (-.3,.3)$)--cycle;
        }
    }
}
%%%%%%%%%%%%%%%%%%%%%%%%
\tikzset{
    pics/CxC12w/.style n args={2}{
        code = { %            
            \draw [rounded corners=4mm, #2, dashed] ($#1 + (j3) + (.4,.4)$)--($#1 + (j3) + (.4,-.4)$)--($#1 + (j2) + (.4,-.3)$)--($#1 + (jC) + (+.4,-.4)$)
            --($#1 + (jC) + (-.4,-.4)$)--($#1 + (j2) + (-.4,.4)$)--cycle;
        }
    }
}

\tikzset{
    pics/CxC1w/.style n args={2}{
        code = { %            
            \draw [rounded corners=4mm, #2, dashed] ($#1 + (j2) + (-.4,.4)$)--($#1 + (j2) + (.4,.4)$)--($#1 + (jC) + (.4,-.4)$)--($#1 + (jC) + (-.4,-.4)$)--cycle;
        }
    }
}
%end of tikz magic

We give examples that showcase important aspects of the enumeration
algorithm for minimal partial answers with a single wildcard presented
in Section~\ref{sect:LPAsingleWildcardUpper}. % At the end of the
% section, we also make some remarks on the multi-wildcard case.

Assume that the enumeration algorithm is started on the OMQ
$Q(\bar x)=(\Omc,\Sbf,q) \in (\class{G},\class{CQ})$ where \Omc consists of
the TGDs
$$
\begin{array}{rcl}
  A(x) &\rightarrow& \exists y_1 \exists y_2\ R(y_1,y_2) \land
                     R(y_2,x) \land C(y_2) \\[1mm]
  B(x) &\rightarrow& \exists y_1 \exists y_2\ R(y_1, x) \land R(y_2,x)
                     \land C(y_1) \\[1mm]
E(x) &\rightarrow& \exists y_1\ R(x,y_1) \\[1mm]
R(x,y) &\rightarrow& L(x,x) \land L(y,y),
\end{array}
$$
the schema $\Sbf$ is $\{A, B, C, E, R\}$, % with $A$, $B$, $C$,
% $E$ unary relation symbols and $R$ a binary relation symbol
 and
where 
$q$ is the CQ
\[
\begin{array}{rcl}
q(\bar x) & {\gets} &\exists y_1 \exists y_5\ L(y_1, x_1),  R(x_1,x_2),  R(x_2,x_3), \\[1mm]
                                    &          &\ \ R(x_4,x_3), R(x_5,x_4), L(y_5,x_5), C(x_1).
\end{array}
\]
with $\bar{x} = (x_1,x_2,x_3,x_4,x_5)$.
The CQ $q$ is displayed in Figure~\ref{fig:ex-q}.
\begin{figure}
    %\begin{wrapfigure}{r}{0.45\textwidth}
    \centering
    \begin{tikzpicture}[scale=.66]

    \node (top) at (0,.2) {};
    \node (bottom) at (0,-4) {};

    %CQ q shape
    \node (b1) at (-1.5, -3) {};
    \node (a1) at (-1,-2) {};
    \node (a2) at (-.5,-1) {};
    \node (a3) at (0, 0) {};
    \node (a4) at (.5, -1) {};
    \node (a5) at (1, -2) {};
    \node (b5) at (1.5, -3) {};
    \node (c0) at (0, -3.8) {};
    
    %delimiters 
    \node (q0) at (0, 0) {};
%    \node (p0) at (4, 0) {};
%    \node (phat0) at (8, 0) {};
%    \node (of) at (2,0) {};

    % query q
    \node[quantified] (qb1)     at ($(q0) + (b1)$) {};
    \node[answer] (qa1)         at ($(q0) + (a1)$) {};
    \node (qa1A)                at ($(q0) + (a1) +(.4,-.0)$) {\footnotesize C};
    \node[answer] (qa2)         at ($(q0) + (a2)$) {};
    \node[answer] (qa3)         at ($(q0) + (a3)$) {};
    \node[answer] (qa4)         at ($(q0) + (a4)$) {};
    \node[answer] (qa5)         at ($(q0) + (a5)$) {};
    \node[quantified] (qb5)     at ($(q0) + (b5)$) {};
    %names of edges
    \node (qa1l)                at ($(q0) + (a1) + (-.0,.5)$) {\footnotesize $R$};
    \node (qb1l)                at ($(q0) + (b1) + (-.0,.5)$) {\footnotesize $L$};
    \node (qa2l)                at ($(q0) + (a2) + (-.0,.5)$) {\footnotesize $R$};
    \node (qa4l)                at ($(q0) + (a4) + (.0,.5)$) {\footnotesize $R$};
    \node (qa5l)                at ($(q0) + (a5) + (.0,.5)$) {\footnotesize $R$};
    \node (qb5l)                at ($(q0) + (b5) + (.0,.5)$) {\footnotesize $L$};
    %names of vars
    \node (qa1E)                at ($(q0) + (a1) + (.2,-.3)$) {\footnotesize $x_1$};
    \node (qb1E)                at ($(q0) + (b1) + (.2,-.3)$) {\footnotesize $y_1$};
    \node (qa2E)                at ($(q0) + (a2) + (.2,-.3)$) {\footnotesize $x_2$};
    \node (qa3E)                at ($(q0) + (a3) + (.4,0)$) {\footnotesize $x_3$};
    \node (qa4E)                at ($(q0) + (a4) + (-.2,-.3)$) {\footnotesize $x_4$};
    \node (qa5E)                at ($(q0) + (a5) + (-.2,-.3)$) {\footnotesize $x_5$};
    \node (qb5E)                at ($(q0) + (b5) + (-.2,-.3)$) {\footnotesize $y_5$};
    \node         (qc0)         at ($(q0) + (c0)$) {\footnotesize a) CQ $q$ };

    \draw[->] (qb1) -> (qa1);
    \draw[->] (qa1) -> (qa2);
    \draw[->] (qa2) -> (qa3);
    \draw[->] (qb5) -> (qa5);
    \draw[->] (qa5) -> (qa4);
    \draw[->] (qa4) -> (qa3);

    \node (cj0) at (0,-3.8) {};
\node (treeq0) at (6, 0) {};
%    \node (treeqhat0) at (4, 0) {};
\node (of) at (2,0) {};
%\node (bottom) at (0,-3.5) {};
%\node (top) at (0,.2) {};

% join tree shape
\node (j1) at (-1, -2) {};
\node (j2) at (-1,-1) {};
\node (j3) at (0, 0) {};
\node (j4) at (1, -1) {};
\node (j5) at (1, -2) {};
\node (j6) at (1, -3) {};
\node (j7) at (-1, -3) {};

%join tree of q
\node (treeq1)     at ($(treeq0) + (j1)$) {\footnotesize $C_1(x_1)$};
\node (treeq7)     at ($(treeq0) + (j7)$) {\footnotesize $L(y_1,x_1)$};
\node (treeq2)     at ($(treeq0) + (j2)$) {\footnotesize $R(x_1,x_2)$};
\node (treeq3)     at ($(treeq0) + (j3)$) {\footnotesize $R(x_2,x_3)$};
\node (treeq4)     at ($(treeq0) + (j4)$) {\footnotesize $R(x_4,x_3)$};
\node (treeq5)     at ($(treeq0) + (j5)$) {\footnotesize $R(x_5,x_4)$};
\node (treeq6)     at ($(treeq0) + (j6)$) {\footnotesize $L(y_5,x_5)$};
\node (treeqc0)   at ($(treeq0) + (cj0)$) {\footnotesize b) join tree of $q$ };

\draw (treeq7) -> (treeq1) -> (treeq2) -> (treeq3) -> (treeq4) ->(treeq5) -> (treeq6);

% join tree shape
\node (k1) at (-1, -2) {};
\node (k2) at (-1, -1) {};
\node (k3) at (-1, -0) {};
\node (k4) at (1, -0) {};
\node (k5) at (1, -1) {};
\node (k6) at (1, -2) {};
\node (k7) at (0, -1) {};
\node (k8) at (0, -2) {};

\draw[dashed]
    ($(3,0) + (top)$) -- ($(3,0) + (bottom)$);

    \end{tikzpicture}
   \caption{CQ $q$ and its join tree.}
    \label{fig:ex-q}
\end{figure}
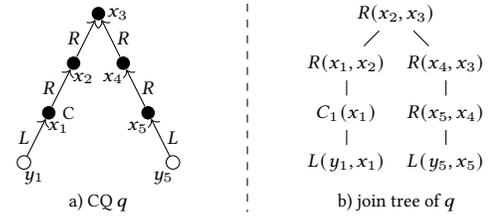
It is acyclic and free\=/connex acyclic, as witnessed by the join
trees for $q$ and its extension $\hat q$ with the atom
$\hat R(x_1,\dots,x_5)$.
The join tree for $q$ is given in Figure~\ref{fig:ex-q}. Note that the atoms that contain
only answer variables constitute a connected prefix of the join tree
of $q$. This can (almost\footnote{It can always be achieved
  when using a generalized hypertree decomposition of width~1 in place
  of a join tree, see \cite{berkholz-enum-tutorial}.}) always be achieved for CQs that are acyclic and
free-connex acyclic and is exploited in the  preprocessing phase.
%
% Note that the similar CQ
% \[
% \begin{array}{r c l}
% p(x_1 ,x_5) & {\gets} &  L(y_1, x_1),
%                         R(x_1,x_2), R(x_2,x_3), \\[1mm]
%   && \ \ R(x_4,x_3), R(x_5,x_4), L(y_5,x_5)
% \end{array}
% \]
% is acyclic, but not free\=/connex acyclic.  Indeed, the CQ $\hat{p}$
% obtained from $p$ by adding the atom $\hat{R}(x_1,x_5)$ is not acyclic.
% Figure~\ref{fig:ex-q} also depicts CQs $p$ and $\hat{p}$.
%

%\input{draws/tikzu/ex-q-join-tree}

Assume that the input database $D$ is as depicted on the left-hand
side of Figure~\ref{fig:ex-database}, where all edges represent the
relation symbol $R$.
% with active domain
% $\dom(D) = \{a, b, c, d, e\}$ and the following set of $\Sbf$\=/facts:
% \begin{itemize}
%     \item $A(a), B(b), B(d), C(b), E(a), E(e)$
%     \item $R(c, b), R(b, a), R(d, a), R(e, a)$.
% \end{itemize}

\begin{figure}
    %\begin{wrapfigure}{r}{0.45\textwidth}
    \centering
    \begin{tikzpicture}%[scale=.5]

    \node (top) at (0,2.2) {};
    \node (bottom) at (0,-3) {};

    %CQ q shape
    \node (aA) at (.4, -.3) {};
%    \node (aE) at (0,-4) {};
    \node (eE) at (1.2,-.2) {};
    \node (bB) at (.4,-.9) {};
    \node (dB) at (-.8, -.2) {};
    \node (a) at (0, 0) {};
    \node (b) at (0,-1) {};
    \node (c) at (0,-2) {};
    \node (d) at (-1, 0) {};
    \node (e) at (1, 0) {};
    \node (n1) at (-1, 1) {};
    \node (n2) at (-2, 2) {};
    \node (n2C) at (-2.3, 1.8) {};
    \node (n3) at (1, -2) {};
    \node (n3C) at (1.3, -2.3) {};
    \node (n4) at (-1, -2) {};
    \node (n5) at (-2, -1) {};
    \node (n5C) at (-1.7, -1.3) {};
    \node (n6) at (-2, 1) {};
    \node (n7) at (1, 1) {};
    \node (n8) at (2, 1) {};
    \node (c0) at (0, -2.2) {};
    
    %delimiters 
    \node (D) at (-2, 0) {};
    \node (DC) at (3, 0) {};
    \node (of) at (2, 0) {};
    
    % database D
    \node (Da)         at ($(D) + (a)$) {$a$};
    \node (Db)         at ($(D) + (b)$) {$b$};
    \node (Dc)         at ($(D) + (c)$) {$c$};
    \node (Dd)         at ($(D) + (d)$) {$d$};
    \node (De)         at ($(D) + (e)$) {$e$};
    \node (DaA)         at ($(D) + (aA)$) {$A$, $E$};
    \node (DeE)         at ($(D) + (eE)$) {$E$};
    \node (DbB)         at ($(D) + (bB)$) {$B,C$};
    \node (DdB)         at ($(D) + (dB)$) {$B$};

%    \node ( )         at ($(q0) + (a4)$) {$a$};
%    \node (qa5)         at ($(q0) + (a5)$) {};
%    \node (qb5)     at ($(q0) + (b5)$) {};
    \node (qc0)              at ($(D) + (0,-3)$) {\footnotesize database $D$ };

    \draw[->] (Db) -> (Da);
    \draw[->] (Dc) -> (Db);
    \draw[->] (Dd) -> (Da);
    \draw[->] (De) -> (Da);

        % query directed chase of D
    \node (Ca)         at ($(DC) + (a)$) {$a$};
    \node (Cb)         at ($(DC) + (b)$) {$b$};
    \node (Cc)         at ($(DC) + (c)$) {$c$};
    \node (Cd)         at ($(DC) + (d)$) {$d$};
    \node (Ce)         at ($(DC) + (e)$) {$e$};
    \node (Cn1)         at ($(DC) + (n1)$) {$n_1$};
    \node (Cn2)         at ($(DC) + (n2)$) {$n_2$};
    \node (Cn3)         at ($(DC) + (n3)$) {$n_3$};
    \node (Cn4)         at ($(DC) + (n4)$) {$n_4$};
    \node (Cn5)         at ($(DC) + (n5)$) {$n_5$};
    \node (Cn6)         at ($(DC) + (n6)$) {$n_6$};
    \node (Cn7)         at ($(DC) + (n7)$) {$n_7$};
    \node (Cn8)         at ($(DC) + (n8)$) {$n_8$};
    \node (DaA)         at ($(DC) + (aA)$) {$A$, $E$};
    \node (DeE)         at ($(DC) + (eE)$) {$E$};
    \node (DbB)         at ($(DC) + (bB)$) {$B,C$};
    \node (DdB)         at ($(DC) + (dB)$) {$B$};
    \node (DbB)         at ($(DC) + (n2C)$) {$C$};
    \node (DbB)         at ($(DC) + (n3C)$) {$C$};
    \node (DbB)         at ($(DC) + (n5C)$) {$C$};
    %    \node ( )         at ($(q0) + (a4)$) {$a$};
    %    \node (qa5)         at ($(q0) + (a5)$) {};
    %    \node (qb5)     at ($(q0) + (b5)$) {};
    %    \node (qc0)              at ($(q0) + (c0)$) {\footnotesize CQ $q$ };
    \node (qDc0)              at ($(DC) + (0,-3)$) {\footnotesize database $D_0$ };

    \draw[->] (Cb) -> (Ca);
    \draw[->] (Cc) -> (Cb);
    \draw[->] (Cd) -> (Ca);
    \draw[->] (Ce) -> (Ca);
    \draw[->] (Cn1) -> (Ca);
    \draw[->] (Cn2) -> (Cn1);
    \draw[->] (Cn3) -> (Cb);
    \draw[->] (Cn4) -> (Cb);
    \draw[->] (Cn5) -> (Cd);
    \draw[->] (Cn6) -> (Cd);
    \draw[->] (Ca) -> (Cn7);
    \draw[->] (Ce) -> (Cn8);

    \draw[dashed]
    ($(D) + (of) + (top)$) -- ($(D) + (of) + (bottom)$);

    \end{tikzpicture}
   \caption{Database $D$ and query-directed chase $D_0$.
     All edges represent relation $R$ and every constant has an
     $L$-self\=/loop
     that is not shown.}
    \label{fig:ex-database}
\end{figure}
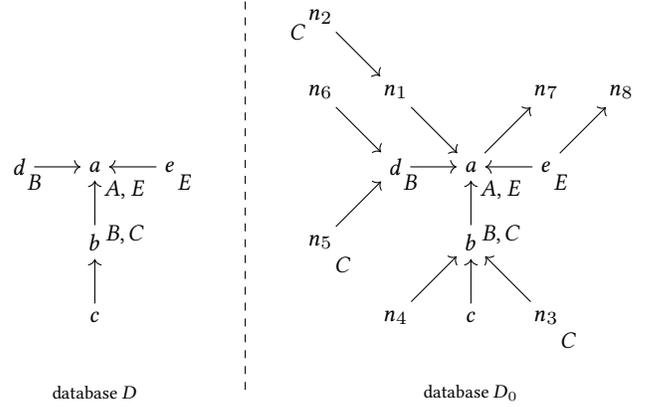

\paragraph{Preprocessing}

In the preprocessing phase, we modify the query $q$ and database
$D$ to obtain the CQ $q_2$ and database $D_2$ that are used in the 
enumeration phase. This is done in several steps. In the very first step, we
set $q_0=q$ and replace $D$ with the query-directed chase
$D_0=\mn{ch}_\Omc^q(D)$, displayed on the right-hand side of
Figure~\ref{fig:ex-database}.

\smallskip 

The next step is to construct from $q_0$ and $D_0$ a self-join free CQ
$q_1$ without quantified and a database $D_1$ that has been adjusted
accordingly. It is this step that exploits the special shape of the
join tree of $q_0$ mentioned above. In our case, $q_1$ is
$$q
_1(\bar x) \gets R_{1}(x_1,x_2),   \allowbreak
R_{2}(x_2,x_3), \allowbreak R_{4}(x_4,x_3), R_{5}(x_5,x_4), C_1(x_1).
$$
Informally, $q_1$ was obtained from $q$ by renaming relation symbols
to achieve self-join freeness and dropping atoms that involve a
quantified variable.  The join tree
of $q_1$ is the join tree of $q$ except that relation symbols in atoms
change and nodes/atoms that contain any of the variables $y_1,y_5$ are
removed. % We root the tree in the atom
% $R_{2}(x_2,x_3)$. Strange to say this HERE

\smallskip 

The database $D_1$ is shown in Figure~\ref{fig:ex-database} where, for
better readability, we only show the index $i$ of edge labels
$R_i$. Observe that the constant $n_8$ was removed and that edges
are now multi\=/edges. % For instance, between $a$ and $b$ there are the
% edges
% $$R_{1}(b,a), R_{2}(b,a), R_{4}(b,a),R_{5}(b,a).$$ %  and between $n_4$
% % and $b$ the edges $$R_{1}(n_4,b), R_{5}(n_4,b).
% % $$
To get an intuition of the construction of $D_1$, consider the fact
$R(n_2,n_1)$ in $D_0$. In principle, any of the four $R$-atoms in
$q_0$ can map to it, and in $q_1$ the relation symbol $R$ in those
atoms has been renamed to $R_1$, $R_2$, $R_4$, and $R_5$,
respectively. Thus, we should be prepared to include in $D_1$ the fact
$R_i(n_2,n_1)$ for all $i \in \{1,2,4,5\}$.  However, a closer
inspection shows that the atom $R(x_2,x_3)$ in $q_0$ cannot map to the
fact $R(n_2,n_1)$ in $D_0$ since then $x_1$ would have to be mapped to
an $R$-predecessor of $n_2$, which does not exist. A similar
observation holds for the atom $R(x_4,x_3)$ in $q_0$ and thus we only
include in $D_1$ the facts $R_1(n_2,n_1)$ and $R_5(n_2,n_1)$.  The
`right' facts to include are identified during a bottom-up walk over
the join-tree of $q_0$. Note that the relation symbols $A$, $B$, have
been dropped since they do not occur in $q_0$. 
%
% I found this too hard to understand
%
% Notice that, after
% this step, we can constrict the database so it contains only the facts
% that use relation symbols present in $q$.  Even further, we can remove
% all facts using unary relation symbols, as they now are implied by the
% binary relations.  
%
\begin{figure}[h]
    %\begin{wrapfigure}{r}{0.45\textwidth}
    \centering
    \begin{tikzpicture}[scale=.5]
    
    %delimiters 
    \node (DC) at (0, 0) {};
    
    %database D_2 shape
%    \node (aA) at (.7, -.5) {};
%    \node (aE) at (0,-4) {};
%    \node (eE) at (2.3,-.5) {};
%    \node (bB) at (.5,-1.7) {};
%    \node (dB) at (-1.7, -.5) {};
    \node (a) at (0, 0) {};
    \node (b) at (0,-2) {};
    \node (c) at (0,-4) {};
    \node (d) at (-4, 0) {};
    \node (e) at (4, 0) {};
    \node (n1) at (-2, 1.5) {};
    \node (n2) at (-4.5, 3) {};
    \node (n3) at (3, -3) {};
    \node (n4) at (-3, -3) {};
    \node (n5) at (-5, -1.5) {};
    \node (n6) at (-5, 1.5) {};
    \node (n7) at (2.7, 1.5) {};
    \node (n8) at (4.5, 1.5) {};
%    \node (c0) at (0, -4.2) {};

    % database D2
    \node (Ca)         at ($(DC) + (a)$) {$a$};
    \node (Cb)         at ($(DC) + (b)$) {$b$};
    \node (Cba)        at ($(DC) + (b) +(1,1)$) {\scriptsize 1,2,4,5};%{\tiny $R_1,R_2,R_4, R_5$};
    \node (Cc)         at ($(DC) + (c)$) {$c$};
    \node (Ccb)        at ($(DC) + (c) +(.3,.9)$) {\scriptsize 5}; %{\tiny $R_5$};
    \node (Cd)         at ($(DC) + (d)$) {$d$};
    \node (Cda)        at ($(DC) + (d) +(1.5,-.4)$) {\scriptsize 1,2,4,5}; %{\tiny $R_1{,}R_2{,}R_4{,}R_5$};
   %\node (Cda)        at ($(DC) + (d) +(1.2,-.4)$) {\tiny $1{,}2{,}4{,}5$}; %{\tiny $R_1{,}R_2{,}R_4{,}R_5$};
    \node (Ce)         at ($(DC) + (e)$) {$e$};
    \node (Cea)        at ($(DC) + (e) +(-1.5,-.4)$){\scriptsize 5}; %{\tiny $R_5$};
    \node (Cn1)         at ($(DC) + (n1)$) {$n_1$};
    \node (Cn1a)        at ($(DC) + (n1) +(1.8,-.6)$) {\scriptsize 2,4}; %{\tiny $R_2,R_4$};
    \node (Cn2)         at ($(DC) + (n2)$) {$n_2$};
    \node (Cn2n1)       at ($(DC) + (n2) +(1.5,-.6)$) {\scriptsize 1,5}; %{\tiny $R_1,R_5$};
    \node (Cn3)         at ($(DC) + (n3)$) {$n_3$};
    \node (Cn3b)        at ($(DC) + (n3) +(-.5,.6)$) {\scriptsize 1,5}; %{\tiny $R_1, R_5$};
    \node (Cn4)         at ($(DC) + (n4)$) {$n_4$};
    \node (Cn4b)        at ($(DC) + (n4) +(.8,.6)$) {\scriptsize 5}; %{\tiny $R_5$};
    \node (Cn5)         at ($(DC) + (n5)$) {$n_5$};
    \node (Cn5d)        at ($(DC) + (n5) +(1,.5)$) {\scriptsize 1,5}; %{\tiny $R_1,R_5$};
    \node (Cn6)         at ($(DC) + (n6)$) {$n_6$};
    \node (Cn6d)        at ($(DC) + (n6) +(.9,-.4)$) {\scriptsize 5}; %{\tiny $R_5$};
    \node (Cn7)         at ($(DC) + (n7)$) {$n_7$};
    \node (Can7)        at ($(DC) + (n7) +(-.6,-.9)$) {\scriptsize 2,4}; %{\tiny $R_2, R_4$};
    \node (DbB)         at ($(DC) + (b) + (.4,.3)$) {\scriptsize $C_1$};
    \node (Dn2)         at ($(DC) + (n2) + (-.35,-.4)$) {\scriptsize $C_1$};
    \node (Dn3)         at ($(DC) + (n3) + (.4,-.4)$) {\scriptsize $C_1$};
    \node (Dn4)         at ($(DC) + (n5) + (-.35,-.4)$) {\scriptsize $C_1$};
%    \node (Cn8)         at ($(DC) + (n8)$) {$n_8$};
%    \node (DaA)         at ($(DC) + (aA)$) {$A$, $E$};
%    \node (DeE)         at ($(DC) + (eE)$) {$E$};
%    \node (DbB)         at ($(DC) + (bB)$) {$B$};
%    \node (DdB)         at ($(DC) + (dB)$) {$B$};
    %    \node ( )         at ($(q0) + (a4)$) {$a$};
    %    \node (qa5)         at ($(q0) + (a5)$) {};
    %    \node (qb5)     at ($(q0) + (b5)$) {};
    %    \node (qc0)              at ($(q0) + (c0)$) {\footnotesize CQ $q$ };
%    \node (Cc0)   at ($(DC) + (c0)$) {\footnotesize database  $D_1$ };
    
    \draw[->] (Cb) -> (Ca);
    \draw[->] (Cc) -> (Cb);
    \draw[->] (Cd) -> (Ca);
    \draw[->] (Ce) -> (Ca);
    \draw[->] (Cn1) -> (Ca);
    \draw[->] (Cn2) -> (Cn1);
    \draw[->] (Cn3) -> (Cb);
    \draw[->] (Cn4) -> (Cb);
    \draw[->] (Cn5) -> (Cd);
    \draw[->] (Cn6) -> (Cd);
    \draw[->] (Ca) -> (Cn7);
%    \draw[->] (Ce) -> (Cn8);
    \end{tikzpicture}
   \caption{Database $D_1$. Edges are labeled with indices of the relation symbols $R_1,R_2,R_4,R_5$
        that constitute the edge. %Unary relations in $D_1$ are ommited.
    }
    \label{fig:ex-D1}
\end{figure}
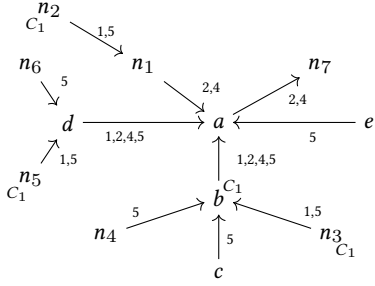

\smallskip

%In the last step of preprocessing, we introduce a fresh unary relation
%symbol $P$ and increase the arity of each relation symbol (except
%$P$) by one. We obtain $q_2$ from $q_1$ by adding the atom $P_0(x_0)$
%where $x_0$ is a fresh answer variable and putting $x_0$ into the
%additional (first) position in each atom. $D_2$ is obtained from $D_1$
%by adding a fact $P_0(c_0)$ where $c_0$ is a fresh constant and
%putting $c_0$ into the additional (first) position in each fact. Thus
%we obtain
%%
%$$
%\begin{array}{rcl}
%  q_2(x_0\bar x)&\gets& R_{1}(x_0,x_1,x_2),   
%                        R_{2}(x_0,x_2,x_3),\\[1mm]
%&&R_{4}(x_0,x_4,x_3), R_{5}(x_0,x_5,x_4), C_1(x_0,x_1).
%\end{array}
%$$
%%
%A join tree of $q_2$ is shown in Figure~\ref{fig:ex-q2-join-tree}.
%We refrain from depicting $D_2$, it is structurally still very similar
%to $D_1$ except that the fresh constant $c_0$ is now shared by
%all facts.
%
% From now on,
% we shall work with CQ $q_2$ and database $D_2$, and the notions
% of join trees, constants, and guarded set refer to $q_2$ or $D_2$.
% {\color{blue}vague. which is which?}

\begin{figure}[h]
    %\begin{wrapfigure}{r}{0.45\textwidth}
    \centering
    \begin{tikzpicture}[scale=.5]
    
    %delimiters 
    \node (treeq0) at (0,0) {};
    \node (c0) at (0,-5) {};
    
    % join tree shape
%    \node (j1) at (0, 2) {};
    \node (j2) at (-2,-2) {};
    \node (j3) at (0, 0) {};
    \node (j4) at (2, -2) {};
    \node (j5) at (2, -4) {};
    \node (jC) at (-2, -4) {};
%    \node (j6) at (1, -3) {};
    
    %join tree of q
%    \node (treeq1)     at ($(treeq0) + (j1)$) {\footnotesize $P(x_0)$};
    \node (treeq2)     at ($(treeq0) + (j2)$) {\footnotesize $R_1(x_1,x_2)$};
    \node (treeq3)     at ($(treeq0) + (j3)$) {\footnotesize $R_2(x_2,x_3)$};
    \node (treeq4)     at ($(treeq0) + (j4)$) {\footnotesize $R_4(x_4,x_3)$};
    \node (treeq5)     at ($(treeq0) + (j5)$) {\footnotesize $R_5(x_5,x_4)$};
    \node (treeqC)     at ($(treeq0) + (jC)$) {\footnotesize $C_1(x_1)$};
%    \node (av3)        at ($(treeq0) + (j3) + (.4,1)$) {\footnotesize {\color{blue} $x_0$}};
    \node (av2)        at ($(treeq0) + (j2) + (-.3,1)$) {\footnotesize {\color{blue} $x_3$}};
    \node (av4)        at ($(treeq0) + (j4) + (.2,1)$) {\footnotesize {\color{blue} $x_3$}};
    \node (av5)        at ($(treeq0) + (j5) + (.6,1)$) {\footnotesize {\color{blue} $x_4$}};
    \node (avC)        at ($(treeq0) + (jC) + (-.6,1)$) {\footnotesize {\color{blue} $x_1$}};
%    \node (treeq6)     at ($(treeq0) + (j6)$) {\footnotesize $y_5x_5$};
%    \node (treeqc0)   at ($(treeq0) + (c0)$) {\footnotesize join tree of $q_2$ };
    
    \draw  (treeqC) -> (treeq2) -> (treeq3) -> (treeq4) ->(treeq5);
%    \draw (treeq3) -> (treeq1);
    \end{tikzpicture}
    \caption{Join tree of $q_2$. The predecessor variables of each atom
      are shown on the incoming edge of the atom.
   }
    \label{fig:ex-q2-join-tree}
\end{figure}
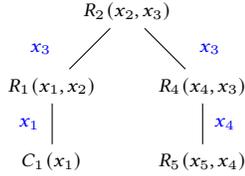

\paragraph{Lists of  progress trees.}

The last step of the preprocessing phase is to create the lists
$\mn{trees}(v,h)$ of progress trees for each atom $v$ in $q_2$ and
each predecessor map $h$ for $v$. Recall that by the latter we mean a
function $h:\bar z \rightarrow \mn{adom}(D_2) \setminus N$ whose range
is a guarded set in~$D_2$, and where $\bar z$ are the predecessor
variables in $v$. For brevity, we represent $h$ in the form
$z_1\cdots z_n \mapsto c_1\cdots c_n$ when $z_1,\dots,z_n$ are the
variables in $\bar z$ and $h(z_i)=c_i$ for $1 \leq i \leq n$; this
becomes $\varepsilon \mapsto \varepsilon$ when $\bar z$ is the empty
tuple. For the join tree of $q_1$ with marked predecessor variables, see~Figure~\ref{fig:ex-q2-join-tree}.

Also recall that a progress tree is a pair $(q,g)$ with CQ $q$ a
subtree of $q_2$ and $g$ a function from $\mn{var}(q)$ to
$(\mn{adom}(D_2)\setminus N) \cup \{ \ast \}$ that must satisfy
Conditions (1)-(4) given in Section~\ref{sect:LPAsingleWildcardUpper}.
We represent the function $g$ in the same way as predecessor maps.
Examples of progress trees include
$$
  (R_1(x_1,x_2), x_1x_1x_2 \mapsto ba)
$$
and
$$
(R_2(x_2,x_3) \wedge R_1(x_1,x_2) \wedge C_1(x_1) ,x_1x_2x_3 \mapsto
{\ast}{\ast}a).
$$
The reader is invited to verify that the relevant Conditions~(1)-(4)
are all satisfied for these progress trees. Intuitively, the second
progress tree $(q,g)$ above describes an `excursion' of the part $q$
of $q_2$ into the `null part' of $D_2$. This excursion consists of
mapping $x_1$ to $n_2$, $x_2$ to $n_1$, and
$x_3$ to $a$.

Let us review two non-examples progress trees, starting
with
%
% $$(x_0x_2x_3 \mapsto c_0{\ast}{\ast},
% R_2(x_0,x_2,x_3)).$$ This is not a progress tree as the predecessor
% variables $x_2$ in its root atom (which is of course also the only
% atom) is mapped to $\ast$, and thus Condition~(1) is violated. As
% another example,
%
$$
(R_4(x_4,x_3)\wedge R_5(x_5,x_4),
x_3x_4x_5 \mapsto abc)
$$
which is not a progress tree as the predecessor variable
$x_4$ of atom $R_5(x_5,x_4)$ is mapped to
`$\ast$' and thus Condition~(2) is violated.
Next consider
$$
(R_4(x_4,x_3) \wedge R_5(x_5,x_4),
x_3x_4x_5 \mapsto a{\ast}c)
$$
which is not a progress tree because there is no guarded set in $D_2$ that
contains $a$ and $c$, and thus Condition~(4) is violated.

\smallskip

We now give all the lists $\mn{trees}(v,h)$ that are computed in the
preprocessing phase.  For brevity, we represent progress trees $(q,g)$
as the CQ $q$ in which every variable $x$ was replaced with
$g(x)$. List items are separated by `;'. The lists are:
\begin{itemize}
%    \item atom $v=P(x_0)$, no predecessor variables {
%        \begin{itemize}
%            \item $\mn{trees}(v,\varepsilon \mapsto \varepsilon) = [ P(c_0)]$
%        \end{itemize}
%    }
    \item atom $v=R_1(x_1, x_2)$ with a predecessor variable $x_2$ {
        \begin{itemize}
%            \item for $x_0 x_2 \mapsto c_0 a $: $\mn{list} = [\{R_1(c_0,b,a)\}, \{R_1(c_0,d,a)\}]$
            \item $\mn{trees}(v, x_2 \mapsto a) = [R_1(b,a); R_1(d,a)]$
            \item $\mn{trees}(v, x_2 \mapsto b) = [R_1(\ast, b) \wedge C_1(\ast)]$
            \item $\mn{trees}(v, x_2 \mapsto c) = []$
            \item $\mn{trees}(v, x_2 \mapsto d) = [R_1(\ast, d) \wedge C_1(\ast)]$
            \item $\mn{trees}(v, x_2 \mapsto e) = []$
        \end{itemize}
    }
    \item atom $v=R_2(x_2, x_3)$ no predecessor variables {
        \begin{itemize}
            \item $\mn{trees}(v,\emptyset \mapsto \emptyset) = [ R_2(b,a); R_2(d,a);$\\
            $R_2(\ast,a) \wedge R_1(\ast,\ast)  \wedge C_1(\ast);
            R_2(a,\ast) \wedge R_4(a,\ast)
            ]$
        \end{itemize}
    }
    \item atom $v=R_4(x_4, x_3)$ with a predecessor variable $x_3$ {
    \begin{itemize}
        \item $\mn{trees}(v,x_3 \mapsto \beta) = [] $ for $\beta \in \{b,c,d,e\}$
        \item $\mn{trees}(v,x_3 \mapsto c_0 a) = [R_4(b,a); R_4(d,a);$\\
                                        $R_4(\ast,a) \wedge R_5(\ast,\ast)]$
    \end{itemize}
}
    \item atom $v=R_5(x_5, x_4)$ with a predecessor variable $x_4$ {
    \begin{itemize}
        \item $\mn{trees}(v,x_2 \mapsto c) = []$
        \item $\mn{trees}(v,x_2 \mapsto a) = [R_5(b,a); R_5(d,a); R_5(e,a)]$
        \item $\mn{trees}(v,x_2 \mapsto b) = [R_5(c,b); R_5(\ast, b)]$
        \item $\mn{trees}(v,x_2 \mapsto d) = [R_5(\ast, d)]$
        \item $\mn{trees}(v,x_2 \mapsto e) = []$
    \end{itemize}

    \item atom $v=C_1(x_1)$ with a predecessor variable $x_1$ {
        \begin{itemize}
            \item $\mn{trees}(v,x_1 \mapsto \beta) = [] $ for $\beta \in \{a,c,d,e\}$
            \item $\mn{trees}(v,x_1 \mapsto b) = [C_1(b)]$
        \end{itemize}
    }
}
\end{itemize}
All the remaining lists are empty.  The lists above are sorted in
database preferring order, as required, and thus we are ready for the
enumeration phase.

\paragraph{Enumeration and pruning}

In the enumeration
phase, % for minimal partial answers with a single wildcard
we traverse the join tree of $q_1$ in a depth-first
fashion, assembling a minimal partial answer to $q_1$ on $D_1$.  Once
such an answer is found, we output it and execute pruning, then
backtrack in a systematic way and re-start answer assemblage to
produce the next answer, and so on.

In our example, there are no complete answers. The first partial answer generated is
$\bar{c}^{\ast} = {\ast} b a b c$. The
answer $\bar{c}^{\ast}$ is displayed on the left-hand side of
Figure~\ref{fig:ex-answers-join-trees}, inside the join tree for
$q_1$. The blue boxes indicate the progress trees that have been used
in assembling the answer $\bar{c}^{\ast}$.

\begin{figure}
  \scalebox{1}{
    %\begin{wrapfigure}{r}{0.45\textwidth}
    \centering
    \begin{tikzpicture}
    
    \node (w1) at (-1,0){};
    \node (w2) at (3,0) {};
%    \node (w3) at (7,0) {};
    \node (caption1) at ($(w1) + (0,-4)$) {Answer ${\ast} b a b c$};
    \node (caption2) at ($(w2) + (0,-4)$) {Answer ${\ast} {\ast} a b c$};
%    \node (caption3) at ($(w3) + (0,-4)$) {Answer $c_0 {\ast} {\ast} a {\ast} c$};

    \pic { PiecePA={(w1)}{{\ast}}{b}{a}{b}{c}};
%    \pic { Cx0={(w1)}{blue}};
%    \pic { Cx0={($(w1) + (j2)$)}{blue}};
    \pic { Cx0={($(w1) + (j3)$)}{blue}};
    \pic { Cx0={($(w1) + (j4)$)}{blue}};
    \pic { Cx0={($(w1) + (j5)$)}{blue}};
    \pic { CxC1w={(w1)}{blue}};

    \pic { PiecePA={(w2)}{{\ast}}{{\ast}}{a}{b}{c}};
%    \pic { Cx0={(w2)}{blue}};
    \pic { Cx0={($(w2)+ (j4)$)}{blue}};
    \pic { Cx0={($(w2)+ (j5)$)}{blue}};
%    \pic { Cx12={(w2)}{blue}};
    \pic { CxC12w={(w2)}{blue}};
%    \pic { Cx12={(w2)}{blue}};
%    \pic { Cx0w={($(w2) + (j2)$)}{red}};
%    \pic { Cx34w={(w2)}{red}};
%    

%    \pic { PiecePA={(w3)}{{\ast}}{{\ast}}{a}{{\ast}}{c}};
%    \pic { Cx0={(w3)}{blue}};
%    \pic { Cx12={(w3)}{blue}};
%    \pic { Cx34w={(w3)}{red}};

    \draw[dashed] (1,0)--(1, -4.5);
%    \draw[dashed] (5,1)--(5, -4.5);
%
    \end{tikzpicture}
}
\caption{Three least partial answers, inside join tree of~$q_2$.}
\label{fig:ex-answers-join-trees}
\end{figure}
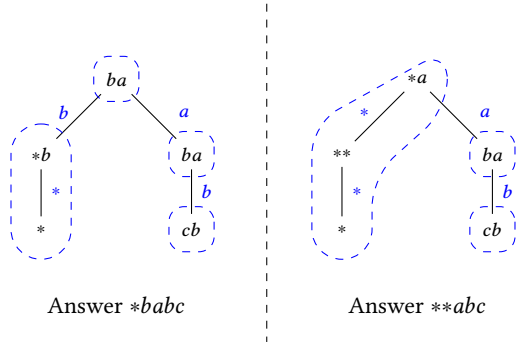

Let us now consider pruning with $\bar{c}^{\ast}$. Informally, we
consider all progress trees $(q,g)$ such that $q$ is some subtree
of $q_2$ and $g$ can be obtained by starting with $\bar x
\mapsto \bar c^\ast$, then restricting to the variables in
$\mn{var}(q)$, and then switching at least one variable from a
non-wildcard to a wildcard. One example of such a progress
tree is 
\[
R_2(\ast,a) \wedge R_1(\ast, \ast)  \wedge C_1(\ast).
\]
Pruning removes this tree from
$\mn{trees}(R_2(x_2,x_3),\emptyset \mapsto \emptyset)$.  One consequence of
this pruning that the partial answer ${\ast}{\ast}abc$ displayed
on the right of Figure~\ref{fig:ex-answers-join-trees}, which is not
a minimal partial answer, is not output in the enumeration
phase.

% After this pruning subroutine ends, the greedy algorithm will alternate between generating
% the next (minimal) partial answer and running a new pruning subroutine, until no new answers can be generated.

\ifbool{arxive}{
\paragraph{Minimal partial answers with multi-wildcards.}

We also briefly consider the enumeration algorithm for minimal partial
answers with multi-wildcards from Section~\ref{sect:enummulti},
illustrating in particular the necessity of using \emph{cones}.
Recall that a cone of a wildcard tuple $\bar{a}^{\ast}$ is the set of
all multi\=/wildcard tuples $\bar{b}^{\Wmc}$ such that the wildcard
tuple $\bar{b}^{\ast}$ obtained from $\bar{b}^{\Wmc}$ by replacing all
named wildcards by $\ast$ satisfies
$\bar{a}^{\ast} \preceq \bar{b}^{\ast}$.

Intuitively, the enumeration algorithm for multiple wildcards starts
the enumeration algorithm for a single wildcard as a black box
procedure and whenever the black box generates a minimal partial answer
$\bar{a}^{\ast}$ then it outputs the multi\=/wildcard minimal partial answers
from $\text{cone}(\bar{a}^{\ast})$, with some bookkeeping to
prevent repetition.

Consider again the CQ $q_1$ and database $D_1$.  As mentioned before,
the first generated answer is ${\ast} b a b c$ and pruning removes
from $\mn{trees}(R_2(x_2,x_3),\emptyset \mapsto \emptyset)$ the progress tree
$R_2(\ast,a) \wedge R_1(\ast, \ast)  \wedge C_1(\ast)$. Apart from preventing
the partial answer ${\ast}{\ast}abc$ to be output as noted above,
this also suppresses the partial answer ${\ast} b a b {\ast}$
(which is correct, as it is not a minimal partial answer). In contrast,
it is not hard to check that ${\ast_1} b a b {\ast_1}$ is a minimal
partial answer and thus must be output by the enumeration procedure
for multi-wildcards.

A naive version of the procedure without cones would simply look at
each minimal partial answer with a single wildcard $\bar c^\ast$ output
by the black box and then output all multi-wildcard answers
$\bar c^\Wmc$ obtained from $\bar c^\ast$ by replacing each occurrence
of the wildcard with some wildcard from \Wmc. Clearly, such a naive
version would miss the minimal partial answer
${\ast_1} b a b {\ast_1}$. However,
${\ast} b a b c \prec {\ast} b a b {\ast}$ and thus
${\ast_1} b a b {\ast_1} \in \mn{cone}({\ast} b a b c)$.  Our
more refined algorithm therefore adds ${\ast_1} b a b {\ast_1}$
to the list $L$ when processing the tuple ${\ast} b a b c$
and outputs it at the end of its run along with the other tuples in $L$.
}

\end{document}